\newif
\newcommand{\acli}[1]{\emph{\acl{#1}}}	
\newcommand{\acdef}[1]{\define{\acl{#1}} \textup{(\acs{#1})}\acused{#1}}	
\definecolor{CardinalRed}{HTML}{C41E3A}
\definecolor{Dartmouth}{HTML}{00693E}
\colorlet{MyRed}{CardinalRed}
\colorlet{MyGreen}{Dartmouth}
\colorlet{MyBlue}{DodgerBlue}
\colorlet{MyViolet}{DarkMagenta}
\colorlet{MyLightRed}{MyRed!25}
\colorlet{MyLightGreen}{MyGreen!25}
\colorlet{MyLightBlue}{MyBlue!25}
\colorlet{PrimalColor}{MyBlue}
\colorlet{PrimalFill}{PrimalColor!25}
\colorlet{DualColor}{MyRed}
\colorlet{AlertColor}{MyRed}	
\colorlet{BadColor}{MyRed}	
\colorlet{GoodColor}{MyGreen}	
\colorlet{LinkColor}{MediumBlue}	
\colorlet{RevColor}{blue}	
	\colorlet{MacroColor}{MyViolet}	
	\colorlet{MacroColor}{black}	
\titlespacing{\paragraph}{0em}{\medskipamount}{1em}
\titlespacing{\subparagraph}{0em}{0em}{0.5em}
\newcommand{\afterhead}{.\;}	
\newcommand{\para}[1]{\medskip\paragraph{\textbf{#1\afterhead}}}	
\setlist[1]{topsep=\smallskipamount,itemsep=\smallskipamount,left=\parindent}
\setlist[2]{left=0pt}
	\colorlet{refkey}{MacroColor}
	\colorlet{labelkey}{MacroColor}
\crefname{algo}{Algorithm}{Algorithms}
\crefname{assumption}{Assumption}{Assumptions}
	\def\ltx@label#1{\cref@label{#1}}	
	\def\label@in@display@noarg#1{\cref@old@label@in@display{#1}}	
\newenvironment{hilite}{\smallskip}{\smallskip}
\theoremstyle{plain}
\newtheorem{theorem}{Theorem}	
\newtheorem{corollary}{Corollary}	
\newtheorem{lemma}{Lemma}	
\newtheorem{proposition}{Proposition}	
\newtheorem{claim}{Claim}	
\newtheorem*{theorem*}{Theorem}	
\newtheorem*{corollary*}{Corollary}	
\theoremstyle{definition}
\newtheorem{definition}{Definition}	
\newtheorem{assumption}{Assumption}	
\newtheorem{example}{Example}	
\newtheorem*{definition*}{Definition}	
\newtheorem*{assumption*}{Assumptions}	
\newtheorem*{example*}{Example}	
\theoremstyle{remark}
\newtheorem{remark}{Remark}	
\newtheorem*{remark*}{Remark}	
\newtheorem*{notation*}{Notation}	
\def\endenv{\hfill\ding{166}}	
\newcounter{proofstep}
\newenvironment{proofstep}[1]
{\vspace{3pt}
\refstepcounter{proofstep}%
\par\textit{Step~\arabic{proofstep}:~#1}.\,}
{\smallskip}
	\newcommand{\draft}[1]{{\color{MacroColor}#1}}	
	\newcommand{\draft}[1]{#1}	
\newcommand{\define}[1]{\emph{\draft{#1}}}	
\newcommand{\explain}[1]{\tag*{\sf\small\commentsymbol\,#1}}
\newcommand{\newmacro}[2]{\newcommand{#1}{\draft{#2}}}	
\newcommand{\newop}[2]{\DeclareMathOperator{#1}{\draft{#2}}}	
\newcommand{\newoplims}[2]{\DeclareMathOperator*{#1}{\draft{#2}}}	
\newcommand{\eps}{\varepsilon}	
\newcommand{\pd}{\partial}	
\DeclarePairedDelimiter{\braces}{\{}{\}}	
\DeclarePairedDelimiter{\bracks}{[}{]}	
\DeclarePairedDelimiter{\parens}{(}{)}	
\DeclarePairedDelimiter{\of}{(}{)}	
\DeclarePairedDelimiter{\abs}{\lvert}{\rvert}	
\DeclarePairedDelimiter{\setof}{\{}{\}}	
\DeclarePairedDelimiterX{\setdef}[2]{\{}{\}}{#1:#2}	
\DeclarePairedDelimiterXPP{\exclude}[1]{\mathopen{}\setminus}{\{}{\}}{}{#1}	
\DeclarePairedDelimiterX{\braket}[2]{\langle}{\rangle}{#1,#2}	
\DeclarePairedDelimiterX{\inner}[2]{\langle}{\rangle}{#1,#2}	
\DeclarePairedDelimiter{\norm}{\lVert}{\rVert}	
\DeclarePairedDelimiterXPP{\dnorm}[1]{}{\lVert}{\rVert}{_{\draft{\ast}}}{#1}	
\DeclarePairedDelimiterXPP{\onenorm}[1]{}{\lVert}{\rVert}{_{\draft{1}}}{#1}	
\DeclarePairedDelimiterXPP{\twonorm}[1]{}{\lVert}{\rVert}{_{\draft{2}}}{#1}	
\DeclarePairedDelimiterXPP{\pnorm}[1]{}{\lVert}{\rVert}{_{\draft{p}}}{#1}	
\DeclarePairedDelimiterXPP{\qnorm}[1]{}{\lVert}{\rVert}{_{\draft{q}}}{#1}	
\DeclarePairedDelimiterXPP{\supnorm}[1]{}{\lVert}{\rVert}{_{\draft{\infty}}}{#1}	
\newop{\defeq}{\coloneqq}	
\newop{\eqdef}{\eqqcolon}	
\newmacro{\from}{\colon}	
\newop{\too}{\rightrightarrows}	
\newop{\injects}{\hookrightarrow}	
\newop{\surjects}{\twoheadrightarrow}	
\newmacro{\F}{\mathbb{F}}	
\newmacro{\N}{\mathbb{N}}	
\newmacro{\Z}{\mathbb{Z}}	
\newmacro{\Q}{\mathbb{Q}}	
\newmacro{\R}{\mathbb{R}}	
\newmacro{\C}{\mathbb{C}}	
\newmacro{\real}{x}	
\newmacro{\reals}{\R}	
\newmacro{\complex}{z}	
\newmacro{\complexes}{\C}	
\newoplims{\argmax}{arg\,max}	
\newoplims{\argmin}{arg\,min}	
\newoplims{\intersect}{\bigcap}	
\newoplims{\union}{\bigcup}	
\newop{\aff}{aff}	
\newop{\bd}{bd}	
\newop{\bigoh}{\mathcal{O}}	
\newop{\card}{card}	
\newop{\cl}{cl}	
\newop{\conv}{conv}	
\newop{\crit}{crit}	
\newop{\curl}{curl}	
\newop{\diag}{diag}	
\newop{\diam}{diam}	
\newop{\dist}{dist}	
\newop{\diver}{div}	
\newop{\dom}{dom}	
\newop{\eig}{eig}	
\newop{\ess}{ess}	
\newop{\grad}{grad}	
\newop{\Hess}{Hess}	
\newop{\ind}{ind}	
\newop{\im}{im}	
\newop{\intr}{int}	
\newop{\Jac}{Jac}	
\newop{\one}{\mathds{1}}	
\newop{\proj}{proj}	
\newop{\prox}{prox}	
\newop{\rank}{rank}	
\newop{\relint}{ri}	
\newop{\sign}{sgn}	
\newop{\supp}{supp}	
\newop{\Sym}{Sym}	
\newop{\tr}{tr}	
\newop{\unif}{unif}	
\newop{\vol}{vol}	
\newcommand{\cf}{cf.\xspace}	
\newcommand{\eg}{e.g.,\xspace}	
\newcommand{\ie}{i.e.,\xspace}	
\newcommand{\viz}{viz.\xspace}	
\newcommand{\textpar}[1]{\textup(#1\textup)}	
\newmacro{\commentsymbol}{\#}	
\newcommand{\dis}{\displaystyle}	
\newcommand{\eqcomma}{\,,}	
\newcommand{\eqstop}{\,.}	
\newcommand{\alt}[1]{#1'}		
\newcommand{\altalt}[1]{#1''}		
\newmacro{\ball}{\mathbb{B}}	
\newmacro{\sphere}{\mathbb{S}}	
\newmacro{\argdot}{\boldsymbol{\cdot}}	
\newmacro{\dd}{\:d}	
\newmacro{\ddt}{\frac{d}{dt}}	
\newmacro{\del}{\partial}	
\newcommand{\insum}{\sum\nolimits}	
\newcommand{\inprod}{\prod\nolimits}	
\newmacro{\const}{c}	
\newmacro{\Const}{C}	
\newmacro{\param}{\theta}	
\newmacro{\params}{\Theta}	
\newmacro{\coef}{\lambda}	
\newmacro{\fn}{f} 
\newmacro{\pexp}{p}	
\newmacro{\qexp}{q}	
\newmacro{\rexp}{r}	
\newmacro{\idx}{k}	
\newmacro{\idxalt}{\ell}	
\newmacro{\nIdx}{m}	
\newmacro{\point}{x}	
\newmacro{\pointalt}{\alt\point}	
\newmacro{\pointaltalt}{\altalt\point}	
\newmacro{\points}{\mathcal{X}}	
\newmacro{\intpoints}{\relint\points}	
\newmacro{\base}{p}	
\newmacro{\basealt}{q}	
\newmacro{\basealtalt}{u}	
\newmacro{\set}{\mathcal{S}}	
\newmacro{\borel}{\mathcal{B}}	
\newmacro{\closed}{\mathcal{C}}	
\newmacro{\cpt}{\mathcal{K}}	
\newmacro{\nhd}{\mathcal{U}}	
\newmacro{\open}{\mathcal{U}}	
\newmacro{\domain}{\mathcal{D}}	
\newmacro{\region}{\mathcal{R}}	
\newmacro{\interval}{\mathcal{I}}	
\newmacro{\rectangle}{\mathcal{R}}	
\newmacro{\vecspace}{\mathcal{V}}	
\newmacro{\subspace}{\mathcal{Z}}	
\newmacro{\dualspace}{\pspace^{\ast}}	
\newmacro{\vdim}{d}	
\newmacro{\unitvec}{u}	
\newmacro{\bvec}{e}	
\newmacro{\bvecs}{\mathcal{E}}	
\newmacro{\pvec}{z}	
\newmacro{\pvecalt}{\alt\pvec}	
\newmacro{\pvecaltalt}{\altalt\pvec}	
\newmacro{\pvecs}{\vecspace}	
\newmacro{\pspace}{\pvecs}	
\newmacro{\dvec}{w}	
\newmacro{\dvecalt}{\alt\pvec}	
\newmacro{\dvecaltalt}{\altalt\pvec}	
\newmacro{\dvecs}{\dualspace}	
\newmacro{\dspace}{\dvecs}	
\newmacro{\coord}{i}	
\newmacro{\coordalt}{j}	
\newmacro{\coordaltalt}{k}	
\newmacro{\nCoords}{d}	
\newmacro{\vecfield}{v}	
\newmacro{\vbound}{V}	
\newmacro{\cvx}{\mathcal{C}}	
\newmacro{\subd}{\partial}	
\newmacro{\subsel}{\nabla}	
\newop{\tcone}{TC}	
\newop{\dcone}{\tcone^{\ast}}	
\newop{\ncone}{NC}	
\newop{\pcone}{PC}	
\newop{\hull}{\Delta}	
\newop{\Opt}{\mathsf{Opt}}	
\newop{\Sol}{\mathsf{Sol}}	
\newop{\gap}{\mathsf{Gap}}	
\newop{\orcl}{\mathsf{G}}	
\newmacro{\obj}{f}	
\newmacro{\sobj}{F}	
\newcommand{\sol}[1][\point]{#1^{\ast}}	
\newmacro{\gvec}{g}	
\newmacro{\oper}{A}	
\newmacro{\lips}{L}	
\newmacro{\gbound}{G}	
\newmacro{\strong}{\alpha}	
\newmacro{\smooth}{\beta}	
\newmacro{\radius}{r}
\newmacro{\Radius}{R}
\newmacro{\mfld}{\mathcal{M}}	
\newmacro{\gmat}{g}	
\newmacro{\gdist}{\dist_{\gmat}}	
\newmacro{\tanvec}{z}	
\newmacro{\form}{\omega}	
\newop{\ex}{\mathbb{E}}	
\newop{\prob}{\mathbb{P}}	
\newop{\Var}{\mathbb{V}}	
\newop{\simplex}{\Delta}	
\DeclarePairedDelimiterXPP{\exof}[1]{\ex}{[}{]}{}{
 #1}
\DeclarePairedDelimiterXPP{\exwrt}[2]{\ex_{#1}}{[}{]}{}{
 #2}
\DeclarePairedDelimiterXPP{\probof}[1]{\prob}{(}{)}{}{
 #1}
\DeclarePairedDelimiterXPP{\probwrt}[2]{\prob_{\!#1}}{(}{)}{}{
 #2}
\DeclarePairedDelimiterXPP{\oneof}[1]{\one}{\{}{\}}{}{#1}	
\newmacro{\rv}{X}	
\newmacro{\rvalt}{Y}	
\newmacro{\event}{E}       
\newmacro{\eventalt}{H}       
\newmacro{\seed}{\theta}	
\newmacro{\seeds}{\Theta}	
\newmacro{\pdist}{P}	
\newmacro{\history}{\mathcal{H}}	
\newmacro{\sample}{\omega}	
\newmacro{\samples}{\Omega}	
\newmacro{\filter}{\mathcal{F}}	
\newmacro{\probspace}{(\samples,\filter,\prob)}	
\newcommand{\as}{\draft{\textpar{a.s.}}\xspace}	
\newmacro{\mean}{\mu}	
\newmacro{\sdev}{\sigma}	
\newmacro{\variance}{\sdev^{2}}	
\newmacro{\covmat}{\Sigma}	
\newmacro{\seq}{a}	
\newmacro{\seqalt}{b}	
\newmacro{\seqaltalt}{c}	
\newmacro{\beforestart}{0}	
\newmacro{\start}{1}	
\newmacro{\afterstart}{2}	
\newmacro{\running}{\start,\afterstart,\dotsc}	
\newmacro{\run}{n}	
\newmacro{\runalt}{k}	
\newmacro{\runaltalt}{\ell}	
\newmacro{\nRuns}{T}	
\newmacro{\runs}{\mathcal{\nRuns}}	
\newmacro{\curve}{\gamma}	
\DeclarePairedDelimiterXPP{\curveof}[1]{\curve}{(}{)}{}{#1}	
\DeclarePairedDelimiterXPP{\dotcurveof}[1]{\dot\curve}{(}{)}{}{#1}	
\newmacro{\tstart}{0}	
\renewcommand{\time}{\draft{t}}	
\newmacro{\timealt}{s}	
\newmacro{\timealtalt}{\tau}	
\newmacro{\horizon}{T}	
\newmacro{\tend}{\horizon}	
\newmacro{\window}{[\tstart,\tend]}	
\newmacro{\mat}{M}	
\newmacro{\hmat}{H}	
\newmacro{\ones}{\mathbf{1}}	
\newmacro{\eye}{I}	
\newmacro{\zer}{\mathbf{0}}	
\newmacro{\eigval}{\lambda}	
\newmacro{\eigvec}{u}	
\newop{\Nash}{NE}	
\newop{\CE}{CE}	
\newop{\CCE}{CCE}	
\newop{\NI}{NI}	
\newop{\brep}{br}	
\newop{\reg}{Reg}	
\newop{\preg}{\overline{Reg}}	
\newop{\val}{val}	
\newmacro{\play}{i}	
\newmacro{\playalt}{j}	
\newmacro{\playaltalt}{k}	
\newmacro{\nPlayers}{N}	
\newmacro{\players}{\mathcal{\nPlayers}}	
\newmacro{\pure}{\alpha}	
\newmacro{\purealt}{\beta}	
\newmacro{\purealtalt}{\gamma}	
\newmacro{\nPures}{n}	
\newmacro{\pures}{\mathcal{A}}	
\newmacro{\strat}{x}	
\newmacro{\stratalt}{\alt\strat}	
\newmacro{\strataltalt}{\altalt\strat}	
\newmacro{\strats}{\mathcal{X}}	
\newmacro{\intstrats}{\strats^{\circle}}	
\newcommand{\eq}{\sol[\strat]}	
\newmacro{\pay}{u}	
\newmacro{\loss}{\ell}	
\newmacro{\payv}{v}	
\newmacro{\payfield}{v}	
\newmacro{\game}{\mathcal{G}}	
\newmacro{\gamefull}{\game(\players,\points,\pay)}	
\newmacro{\fingame}{\Gamma}	
\newmacro{\fingamefull}{\Gamma(\players,\pures,\pay)}	
\newmacro{\mixgame}{\Delta(\fingame)}	
\newmacro{\minmax}{L}	
\newmacro{\minvar}{\point_{1}}	
\newmacro{\minvaralt}{\alt\minvar}	
\newmacro{\minvars}{\points_{1}}	
\newmacro{\maxvar}{\point_{2}}	
\newmacro{\maxvaralt}{\alt\maxvar}	
\newmacro{\maxvars}{\points_{2}}	
\newmacro{\pot}{f}	
\newmacro{\hreg}{h}	
\newmacro{\hker}{\theta}	
\newmacro{\breg}{D}	
\newmacro{\mprox}{P}	
\newmacro{\mirror}{Q}	
\newmacro{\fench}{F}	
\newmacro{\hstr}{K}	
\newmacro{\hrange}{H}	
\newmacro{\proxdom}{\points_{\hreg}}	
\DeclarePairedDelimiterXPP{\bregof}[2]{\breg}{(}{)}{}{#1,#2}	
\DeclarePairedDelimiterXPP{\proxof}[2]{\mprox_{#1}}{(}{)}{}{#2}	
\newmacro{\zone}{\mathbb{D}}	
\newop{\Eucl}{\Pi}	
\newop{\logit}{\Lambda}	
\newop{\dkl}{KL}	
\newmacro{\dpoint}{y}	
\newmacro{\dpointalt}{\alt\dpoint}	
\newmacro{\dpointaltalt}{\altalt\dpoint}	
\newmacro{\dpoints}{\mathcal{Y}}	
\newmacro{\dstate}{Y}	
\newmacro{\flowmap}{\Theta}	
\DeclarePairedDelimiterXPP{\flowof}[2]{\flowmap_{#1}}{(}{)}{}{#2}	
\newmacro{\signal}{\hat\vecfield}	
\newmacro{\step}{\gamma}	
\newmacro{\learn}{\eta}	
\newmacro{\runtime}{\tau}	
\newmacro{\error}{Z}	
\newmacro{\noise}{U}	
\newmacro{\bias}{b}	
\newmacro{\sbound}{M}	
\newmacro{\nbound}{\sdev}	
\newmacro{\bbound}{B}	
\newmacro{\snoise}{\xi}	
\newmacro{\sbias}{\chi}	
\newmacro{\mix}{\delta}	
\newmacro{\perturb}{z}	
\newmacro{\pivot}{\point}	
\newmacro{\vertex}{v}	
\newmacro{\vertexalt}{w}	
\newmacro{\vertexaltalt}{u}	
\newmacro{\nVertices}{V}	
\newmacro{\vertices}{\mathcal{V}}	
\newmacro{\edge}{e}	
\newmacro{\edgealt}{\alt\edge}	
\newmacro{\edgealtalt}{\altalt\edge}	
\newmacro{\nEdges}{E}	
\newmacro{\edges}{\mathcal{\nEdges}}	
\newmacro{\graph}{\mathcal{G}}	
\newmacro{\graphfull}{\graph(\vertices,\edges)}	
\newmacro{\source}{O}	
\newmacro{\sink}{D}	
\newmacro{\pair}{i}	
\newmacro{\pairalt}{j}	
\newmacro{\pairaltalt}{k}	
\newmacro{\nPairs}{N}	
\newmacro{\pairs}{\mathcal{\nPairs}}	
\newmacro{\route}{p}	
\newmacro{\routealt}{\alt\route}	
\newmacro{\routealtalt}{\altalt\route}	
\newmacro{\nRoutes}{P}	
\newmacro{\routes}{\mathcal{\nRoutes}}	
\newmacro{\flow}{f}	
\newmacro{\flowalt}{\alt\flow}	
\newmacro{\flowaltalt}{\altalt\flow}	
\newmacro{\flows}{\mathcal{F}}	
\newmacro{\load}{x}	
\newmacro{\loadalt}{\alt\load}	
\newmacro{\loadaltalt}{\altalt\load}	
\newmacro{\loads}{\mathcal{X}}	
\newmacro{\drift}{b}	
\newmacro{\diffmat}{\sigma}	
\newmacro{\ito}{M}	
\newmacro{\brown}{W}	
\newcommand{\brownof}[2][]{\brown_{#1}(#2)}	
\newcommand{\negspace}{\!\!\!}	
\newmacro{\bench}{p}	
\newmacro{\ambient}{\R^{\nPures}}	
\newmacro{\ambienti}{\R^{\nPures_{\play}}}	
\newmacro{\score}{y}	
\newmacro{\scores}{\mathcal{Y}}	
\newmacro{\gcoef}{g}	
\newmacro{\gtot}{G}	
\newmacro{\hstrat}{\chi}	
\newcommand{\stratof}[2][]{\strat_{#1}(#2)}	
\newmacro{\Strat}{X}	
\newcommand{\Stratof}[2][]{\Strat_{#1}(#2)}	
\newcommand{\scoreof}[2][]{\score_{#1}(#2)}	
\newcommand{\dscoreof}[2][]{\dot\score_{#1}(#2)}	
\newmacro{\Score}{Y}	
\newcommand{\Scoreof}[2][]{\Score_{#1}(#2)}	
\newmacro{\mart}{M}	
\newcommand{\martof}[2][]{\mart_{#1}(#2)}	
\newmacro{\diffproc}{A}	
\newmacro{\diffcoef}{c}	
\newmacro{\qcoef}{\psi}	
\newmacro{\thres}{\eps}	
\newmacro{\toler}{\delta}	
\newmacro{\mbound}{M}	
\newmacro{\stoptime}{\tau}	
\newmacro{\limpoint}{\hat\strat}	
\newmacro{\limset}{\mathcal{L}}	
\newmacro{\subpures}{\mathcal{B}}	
\newmacro{\devstrat}{q}	
\newmacro{\rate}{\Phi}	
\newmacro{\energy}{H}	
\newmacro{\level}{M}	
\newmacro{\scenter}{q} 
\newmacro{\meas}{\mu} 
\newmacro{\findex}{\play \pure_{\play}} 
\newmacro{\findexalt}{\play \purealt_{\play}} 
\newmacro{\others}{-\play} 
\newmacro{\potPay}{\pay_{\text{p}}} 
\newmacro{\hPay}{\pay_{\text{h}}} 
\newmacro{\potGame}{(\players, \pures, \potPay)} 
\newmacro{\hGame}{(\players, \pures, \hPay)} 
\newmacro{\pureset}{\subpures} 
\newmacro{\face}{\set} 
\newmacro{\mass}{m} 
\newmacro{\massplayfull}{\sum_{\purealt_{\play}}\meas_{\findexalt}} 
\newop{\btr}{\mathtt{btr}}    
\newmacro{\devs}{\devset} 
\newmacro{\isdev}{\leftrightarrow} 
\newmacro{\indevs}{\devs} 
\newmacro{\outdevs}{\devs^{\ast}} 
\newmacro{\paydev}{D\pay} 
\newmacro{\pureplay}{\pure_{\play}} 
\newmacro{\puresplay}{\pures_{\play}} 
\newmacro{\topa}{\small{\texttt{T}}} 
\newmacro{\bota}{\small{\texttt{B}}} 
\newmacro{\lefta}{\small{\texttt{L}}} 
\newmacro{\righta}{\small{\texttt{R}}} 
\newmacro{\anta}{\small{\texttt{A}}} 
\newmacro{\posta}{\small{\texttt{P}}} 
\newmacro{\anda}{} 
\newcommand{\mixed}[1]{#1} 
\newmacro{\act}{\play} 
\newmacro{\actor}{\text{play}} 
\newmacro{\noact}{-\act} 
\newmacro{\dir}{\partial} 
\newmacro{\hconj}{\hreg^{\ast}} 
\newmacro{\Rinfty}{\R\cup\{\infty\}} 
\newmacro{\kervar}{z} 
\newmacro{\kerbase}{\kervar'} 
\DeclarePairedDelimiterX{\dualp}[2]{\langle}{\rangle}{#1,#2}	
\newmacro{\startx}{x}	
\newmacro{\starty}{y} 
\newmacro{\devset}{\mathcal{Z}} 
\newmacro{\mbrown}{\widetilde{\brown}} 
\newmacro{\gen}{\mathcal{L}} 
\newmacro{\dpspace}{\mathcal{Z}} 
\newmacro{\benchz}{\hat{\pure}_\play} 
\newmacro{\projz}{\Pi} 
\newmacro{\mirrorz}{\hat{\mirror}} 
\newmacro{\Diffmat}{\diffmat} 
\newmacro{\Diffmatmin}{\diffmat_{\min}} 
\newmacro{\Diffmatminsq}{\diffmat^2_{\min}} 
\newmacro{\Diffmatmaxsq}{\diffmat^2_{\max}} 
\newmacro{\subsubpures}{\mathcal{C}}	
\begin{document}


\title
[The impact of uncertainty on regularized learning]	
{The impact of uncertainty on regularized learning in games}	

\author
[P.~L.~Cauvin]
{Pierre-Louis Cauvin$^{c,\ast}$}
\address{$^{c}$\,%
Corresponding author.}
\address{$^{\ast}$\,%
Univ. Grenoble Alpes, CNRS, Inria, Grenoble INP, LIG, 38000 Grenoble, France.}
\email{pierre-louis.cauvin@univ-grenoble-alpes.fr}
\author
[D.~Legacci]
{Davide Legacci$^{\ast}$}
\email{davide.legacci@univ-grenoble-alpes.fr}
\author
[P.~Mertikopoulos]
{Panayotis Mertikopoulos$^{\ast}$}
\email{panayotis.mertikopoulos@imag.fr}


\subjclass[2020]{Primary 91A26, 60H10, 37N40; Secondary 91A22, 60H30, 60J70.}
\keywords{%
Follow-the-regularized-leader;
Nash equilibrium;
stochastic game dynamics;
uncertainty.}

\newacro{LHS}{left-hand side}
\newacro{RHS}{right-hand side}
\newacro{iid}[i.i.d.]{independent and identically distributed}
\newacro{lsc}[l.s.c.]{lower semi-continuous}
\newacro{usc}[u.s.c.]{upper semi-continuous}
\newacro{rv}[r.v.]{random variable}
\newacro{wp1}[w.p.$1$]{with probability $1$}

\newacro{NE}{Nash equilibrium}
\newacroplural{NE}[NE]{Nash equilibria}
\newacro{VI}{variational inequality}
\newacroplural{VI}[VI]{variational inequalities}

\newacro{FTRL}{``follow-the-regularized-leader''}
\newacro{RD}{replicator dynamics}
\newacro{EW}{exponential\,/\,multiplicative weights}

\newacro{curb}{closed under rational behavior}
\newacro{closed}[club]{closed under better replies}
\newacroplural{closed}[club]{closed under better replies}
\newacro{closedness}[clubness]{closedness under better replies}
\newacro{minimal}[m-club]{minimally \acs{closed}}

\newacro{ODE}{ordinary differential equation}
\newacro{SDE}{stochastic differential equation}

\newacro{SRD}{stochastic replicator dynamics}
\newacro{SRD-EW}{stochastic replicator dynamics of exponential weights}
\newacro{SRD-AS}{stochastic replicator dynamics with aggregate shocks}
\newacro{SRD-PI}{stochastic replicator dynamics of pairwise imitation}


\begin{abstract}
%
%
In this paper, we investigate how randomness and uncertainty influence learning in games.
Specifically, we examine a perturbed variant of the dynamics of \acf{FTRL}, where the players' payoff observations and strategy updates are continually impacted by random shocks.
Our findings reveal that, in a fairly precise sense, ``uncertainty favors extremes'':
in \emph{any} game, regardless of the noise level, every player's trajectory of play reaches an arbitrarily small neighborhood of a pure strategy in finite time (which we estimate).
Moreover, even if the player does not ultimately settle at this strategy, they return arbitrarily close to some (possibly different) pure strategy infinitely often.
This prompts the question of which sets of pure strategies emerge as robust predictions of learning under uncertainty.
We show that
\begin{enumerate*}
[(\itshape a\upshape)]
\item
the only possible limits of the \ac{FTRL} dynamics under uncertainty are pure \aclp{NE};
and
\item
a span of pure strategies is stable and attracting if and only if it is \acli{closed}\acused{closed}.
\end{enumerate*}
Finally, we turn to games where the deterministic dynamics are recurrent---such as zero-sum games with interior equilibria---and show that randomness disrupts this behavior, causing the stochastic dynamics to drift toward the boundary on average.
\end{abstract}

\allowdisplaybreaks	
\acresetall	
\acused{iid}
\acused{LHS}
\acused{RHS}
\maketitle

\section{Introduction}
\label{sec:introduction}

The surge of breakthrough applications of game theory to machine learning and data science---from multi-agent reinforcement learning to online ad auctions and recommender systems---has brought to the forefront the fundamental question of ``as if'' rationality:
whether selfishly-minded, myopic agents may eventually learn to behave ``as if'' they were fully rational.
This question dates back to \citet[p.~21]{NashThesis} who observed that, in the presence of uncertainty,
``the players of the game may not have full knowledge of [the game's] structure, or the ability and inclination to go through any complex reasoning process to calculate an equilibrium. But the participants are still supposed to adapt by accumulating empirical information on the relative advantages of the various pure strategies at their disposal.''

A natural context to study this question is that of no-regret learning algorithms and dynamics---and, in particular, the family of learning dynamics known as \acdef{FTRL}.
The leading example of this class is the \acl{RD} of \citet{TJ78}, the continuous-time limit of the \ac{EW} algorithm \citep{Vov90,LW94,ACBFS95,ACBFS02}, and a central object of study in a rich literature at the intersection of learning theory, games, and optimization, \cf the textbook treatments of \citet{Wei95,HS98} and \citet{San10}.
The \ac{FTRL} dynamics enjoy a broad range of appealing properties---from (near-)optimal regret guarantees \cite{Sor09,KM17} and convergence to equilibrium in potential games \cite{MerSan18,HCM17}, to a version of the folk theorem of evolutionary game theory \cite{HS03,MS16}---so, together with their many variants and extensions, they have become virtually synonymous with sequential learning in games.

At the same time, many of the most powerful and widely used results of \ac{FTRL} rely---often implicitly---on having access to perfect, deterministic information and exact knowledge of the players' payoffs.
This level of precision is often difficult to justify in real-world decision-making scenarios, whether due to imperfect observations, stochastic fluctuations in the players' environment, or intrinsic variabilities in the observed outcomes.
In view of this,
we ask the question:
\begin{quote}
\centering
\itshape
What is the impact of noise, randomness,\\
and uncertainty on the \ac{FTRL} dynamics?
\end{quote}

\para{Our contributions in the context of related work}

Our findings can be summarized by a disarmingly simple mantra:
\begin{quote}
\centering
``\emph{Uncertainty favors extremes}.''
\end{quote}
By this, we mean that, irrespective of the structure of the game or the level of noise and uncertainty, the \ac{FTRL} dynamics tend to favor pure over mixed strategies.

In more detail, our model for noise and uncertainty in the dynamics hinges on the introduction of a catch-all, martingale noise term, which is flexible enough to capture all sources of observational uncertainty, stochastic disturbances, and/or any other elements of randomness in the players' learning dynamics.
This leads to a \ac{SDE} driven by possibly correlated noise, which we refer to as the \define{stochastic \ac{FTRL} dynamics}.
Earlier dynamics of this type have been considered by \citet{FY90} (in the context of the \acl{RD} as a model of pairwise proportional imitation),
\citet{FH92} (in the context of population dynamics),
and
\citet{MM10} (for learning in games with \acl{EW}), with follow-up works by \citet{Cab00,Imh05,EP23,BM17,HI09}, and many others.
We review the relevant literature in \cref{sec:dynamics,app:recurrence}.
 
Our first main result establishes a universal property of the stochastic \ac{FTRL} dynamics, marking a clear departure from the noiseless, deterministic regime.
To state it informally:
\begin{hilite}
\begin{quote}
\centering
\itshape
In any game, for any noise level,\\
every player approaches a pure strategy in finite time.
\end{quote}
\end{hilite}
Even though the dynamics may not settle at this strategy, they will return arbitrarily close to some (possibly different) pure strategy infinitely often, ultimately leading to the following important consequence:
\begin{hilite}
\begin{quote}
\centering
\itshape
The only possible limits of the stochastic\\
\ac{FTRL} dynamics are pure \aclp{NE}.
\end{quote}
\end{hilite}

The contrapositive of this statement is also significant, because it shows that, unless the underlying game admits a pure \acl{NE}, the stochastic \ac{FTRL} dynamics cannot converge.
In turn, this prompts the question of which pure strategies are present in sets that are stable and attracting under the stochastic \ac{FTRL} dynamics.
As it turns out, there is a third universal principle at play here:
\begin{hilite}
\begin{quote}
\centering
\itshape
A span of pure strategies is stable and attracting\\
iff it is \acl{closed}.
\end{quote}
\end{hilite}
This echoes earlier results for the \acl{RD} in continuous \cite{RW95} or discrete time \cite{BM23}, but it otherwise sets the stochastic \ac{FTRL} dynamics apart from other models of randomness and uncertainty---namely the stochastic dynamics of \citet{FY90} and \citet{FH92} which \emph{do not} satisfy this principle.
In particular, it encompasses as a special case the following equivalence:
\emph{a state is stochastically asymptotically stable if and only if it is a strict \acl{NE}.}
This principle is sometimes referred to as the ``folk theorem'' of evolutionary game theory, and it mirrors the deterministic version of \cite{FVGL+20}, and the discrete-time analogue of \cite{GVM21}.
Still, it is important to underline that this equivalence fails outright in other stochastic models of learning, indicating that the way that uncertainty enters the process has a critical impact on the dynamics' rationality properties.

Finally, we analyze the impact of uncertainty in cases where the deterministic \ac{FTRL} dynamics are recurrent---that is, they return infinitely often arbitrarily close to where they started.
This class of games includes two-player zero-sum games with a fully mixed equilibrium \citep{PS14,MPP18} and, more broadly, the class of harmonic games \cite{CMOP11,APSV22,LMPP+24}.
In stark contrast to the deterministic case, uncertainty causes the dynamics to drift on average toward the boundary, escaping from any compact set of fully mixed strategies in finite time, and taking infinite time to return to it once outside.

All these results underline the central mantra that ``uncertainty favors extremes''.
Specifically, in the presence of randomness and uncertainty,
players are much more likely to alternate indefinitely---and randomly---between their pure strategies instead of staying close to a mixed equilibrium.
This conclusion has important consequences for predicting the outcome of a learning process as it underscores the fragility of mixed equilibria in an unequivocal manner.

\section{Preliminaries}
\label{sec:prelims}

\subsection{Basic definitions from game theory}
\label{sec:basics}

Throughout our paper, we will work with \define{finite games in normal form}.
A game of this type consists of the following primitives:
\begin{enumerate}
\item
A finite set of \define{players} indexed by $\play \in \players = \setof{1,\dotsc,\nPlayers}$.
\item
For each player $\play\in\players$, a finite set of \define{actions}---or \define{pure strategies}---indexed by $\pure_{\play} \in \pures_{\play} = \setof{1,\dotsc,\nPures_{\play}}$.
\item
For each player $\play\in\players$, an associated \define{payoff function} $\pay_{\play}\from\pures\to\R$, where $\pures = \prod_{\playalt}\pures_{\playalt}$ denotes the game's space of action profiles $\pure = (\pure_{1},\dotsc,\pure_{\nPlayers})$.
\end{enumerate}
A finite game as above will be denoted by $\fingame\equiv\fingamefull$.

When playing the game, each player $\play\in\players$ may randomize their choice of action by means of a \define{mixed strategy}, that is, a probability distribution $\strat_{\play}$ over $\pures_{\play}$.
In terms of notation, we will write $\strat_{\play\pure_{\play}}$ for the probability with which player $\play\in\players$ selects $\pure_{\play}\in\pures_{\play}$, and $\strats_{\play} \defeq \simplex(\pures_{\play})$ for the strategy space of player $\play\in\players$ (that is, the probability simplex over $\pures_{\play}$).
Also, in a slight---but useful---abuse of notation, we will identify $\pure_{\play}\in\pures_{\play}$ with the mixed strategy that assigns all probability weight to $\pure_{\play}$, thus justifying the terminology ``pure strategies'' for the elements of $\pures_{\play}$.

Moving forward, we will write $\strat = (\strat_{1},\dotsc,\strat_{\nPlayers})$ for the players' \define{mixed strategy profile} and $\strats \equiv \prod_{\play} \strats_{\play}$ for the game's \define{strategy space}.
We will also employ the standard game-theoretic shorthand $(\strat_{\play};\strat_{-\play}) = (\strat_{1},\dotsc,\strat_{\play},\dotsc,\strat_{\nPlayers})$ for the strategy profile where player $\play$ plays $\strat_{\play}\in\strats_{\play}$ against the mixed strategy $\strat_{-\play} \in \prod_{\playalt\neq\play} \strats_{\playalt}$ of all other players.
Accordingly, the associated \define{mixed payoff} of player $\play\in\players$ in the mixed strategy profile $\strat\in\strats$ is given by
\begin{equation}
\label{eq:pay-mix}
\pay_{\play}(\strat)
	\defeq \exwrt{\pure\sim\strat}{\pay_{\play}(\pure)}
	= \insum_{\pure\in\pures} \pay_{\play}(\pure) \, \strat_{\pure}
\end{equation}
where $\strat_{\pure}$ denotes the joint probability of playing $\pure\in\pures$ under $\strat\in\strats$.
To distinguish between ``strategy-like'' and ``payoff-like'' variables, we will write $\scores_{\play} \defeq \R^{\pures_{\play}}$ and $\scores = \prod_{\play}\scores_{\play}$ for the game's ``\define{payoff space}'', in direct analogy with the game's strategy space $\strats = \prod_{\play}\strats_{\play}$.

More concisely, a game in normal form can be fully described by its \define{payoff field} $\payfield(\strat) = (\payfield_{\play}(\strat))_{\play\in\players}$ where
\begin{equation}
\label{eq:pay-mixed}
\payfield_{\play}(\strat)
	\defeq (\pay_{\play}(\pure_{\play};\strat_{-\play}))_{\pure_{\play}\in\pures_{\play}}
	= \nabla_{\strat_{\play}} \pay_{\play}(\strat)
\end{equation}
denotes the \define{mixed payoff vector} of player $\play\in\players$, that is, the vector of payoffs to pure strategies of player $\play$ against the mixed strategy profile $\strat_{-\play}$ of $\play$'s opponents.
The game's mixed payoffs may then be recovered from \eqref{eq:pay-mixed} as
\begin{equation}
\label{eq:pay-lin}
\pay_{\play}(\strat)
	= \insum_{\pure_{\play}\in\pures_{\play}}
		\pay_{\play}(\pure_{\play};\strat_{-\play}) \, \strat_{\play\pure_{\play}}
	= \braket{\payfield_{\play}(\strat)}{\strat_{\play}}
\end{equation}
so $\payfield(\strat)$ captures all relevant payoff information in the game.

\subsection{\Aclp{NE}}
\label{sec:Nash}

The leading solution concept in game theory is that of a \acdef{NE}.
Formally, $\eq\in\strats$ is said to be a \acl{NE} if it is unilaterally stable in the sense that
\begin{equation}
\label{eq:Nash}
\tag{NE}
\pay_{\play}(\eq)
	\geq \pay_{\play}(\strat_{\play};\eq_{-\play})
	\quad
	\text{for all $\strat_{\play}\in\strats_{\play}$, $\play\in\players$}.
\end{equation}
Writing $\supp(\eq_{\play}) = \setdef{\pure_{\play}\in\pures_{\play}}{\eq_{\play\pure_{\play}}>0}$ for the support of $\eq_{\play}$, it follows that $\eq$ is a \acl{NE} if and only if
\begin{equation}
\label{eq:Nash-supp}
\pay_{\play}(\pure_{\play};\eq_{-\play})
	\geq \pay_{\play}(\purealt_{\play};\eq_{-\play})
\end{equation}
for all $\pure_{\play}\in\supp(\eq_{\play})$ and all $\purealt_{\play} \in \pures_{\play}$.
In turn, this characterization leads to the following taxonomy for \aclp{NE}:
\begin{enumerate}
\item
$\eq$ is called \define{strict} if inequality \eqref{eq:Nash-supp} is strict for all $\purealt_{\play}\neq\pure_{\play}$. 
\item
$\eq$ is called \define{pure} if $\supp(\eq)$ is a singleton.
\item
If $\eq$ is not pure, we say that it is \define{mixed};
and if $\supp(\eq) = \pures$, we say that $\eq$ is \define{fully mixed}.
\end{enumerate}
We will revisit this classification several times in the sequel.

\subsection{Regret and regularized learning}
\label{sec:FTRL}

A fundamental requirement in online learning is the minimization of the player's \define{regret}, \ie the aggregate payoff difference between a player's chosen strategy over time and the best fixed strategy in hindsight.
Formally, assuming that play evolves in continuous time, the (external) \define{regret} of player $\play\in\players$ relative to a trajectory of play $\stratof{\time}$, $\time\in[0,\infty)$ is defined as
\begin{equation}
\reg_{\play}(\horizon)
	= \max_{\bench_{\play}\in\strats_{\play}}
		\int_{\tstart}^{\horizon} \bracks{\pay_{\play}(\bench_{\play};\stratof[-\play]{\time}) - \pay_{\play}(\stratof{\time})}
		\dd\time
\end{equation}
and we say that the player has \define{no regret} under $\stratof{\time}$ if $\reg_{\play}(\horizon) = o(\horizon)$ as $\horizon\to\infty$.

The most widely used method to attain no regret is the family of policies known as \acdef{FTRL} \cite{SSS07,SS11,MS16}.
The main idea behind \ac{FTRL} is that, at all times $\time\geq0$, each player $\play\in\players$ plays
a ``regularized'' best response to their cumulative payoff vector up to time $\time$, thus leading to the dynamics
\begin{equation}
\label{eq:FTRL}
\tag{FTRL}
\underbrace{%
\dscoreof[\play]{\time}
	= \payfield_{\play}(\stratof{\time})
	}_{\text{aggregate payoffs}}
	\qquad
\underbrace{%
\stratof[\play]{\time}
	= \mirror_{\play}(\scoreof[\play]{\time})
	}_{\text{strategy update}}
\end{equation}
where
\begin{equation}
\label{eq:mirror}
\mirror_{\play}(\score_{\play})
	= \argmax\nolimits_{\strat_{\play}\in\strats_{\play}}
		\setof{\braket{\score_{\play}}{\strat_{\play}} - \hreg_{\play}(\strat_{\play})}
\end{equation}
denotes the \define{regularized best response map} of player $\play\in\players$.

In the above, $\score_{\play} \in \scores_{\play} = \R^{\pures_{\play}}$ plays the role of an auxiliary ``score vector'' that encodes the empirical performance of each strategy of player $\play\in\players$ over time.
As for the function $\hreg_{\play}$ that appears in the definition of $\mirror_{\play}$, it is known as the method's \define{regularizer}, and it aims to ``temper'' the best-response correspondence $\score_{\play} \mapsto \argmax_{\bench_{\play}\in\strats_{\play}} \braket{\score_{\play}}{\strat_{\play}}$ in a way that incentivizes exploration.

The precise assumptions for $\hreg_{\play}$ in the literature can be somewhat varied.
For our purposes---and to streamline our presentation---we will assume that $\hreg_{\play}$ decomposes as
\begin{equation}
\label{eq:hker}
\hreg_{\play}(\strat_{\play})
	= \insum_{\pure_{\play}\in\pures_{\play}} \hker_{\play}(\strat_{\play\pure_{\play}})
\end{equation}
for some smooth kernel function $\hker_{\play}\from(0,1) \to \R$ such that $\lim_{z\to0^{+}} \hker_{\play}'(z) = -\infty$ and $\inf_{z} \hker_{\play}''(z) > 0$ (so $\hker_{\play}$ is strongly convex and steep at $0$).
Finally, to simplify some expressions in the sequel, we will assume that $\hker_{\play}'''(z) < 0$ for all $z\in(0,1)$;
however, we stress that this assumption is only made to render some expressions more transparent, and it does not affect the gist of our results.

For concreteness, we present below two prime examples of \eqref{eq:FTRL}.

\begin{example}
[Entropic regularization]
\label{ex:logit}
The go-to setup for \eqref{eq:FTRL} is the entropic kernel $\hker_{\play}(z) = z\log z$.
By standard results, \eqref{eq:mirror} yields the so-called \define{logit choice map}
\begin{equation}
\label{eq:logit}
\logit_{\play}(\score_{\play})
	\defeq \frac
	{(\exp(\score_{\play\pure_{\play}}))_{\pure_{\play}\in\pures_{\play}}}
	{\sum_{\pure_{\play}\in\pures_{\play}} \exp(\score_{\play\pure_{\play}})}
	,
\end{equation}
and the resulting instance of \eqref{eq:FTRL} is known as the \emph{exponential}---or \emph{multiplicative}---\emph{weights algorithm}, \cf \cite{Vov90,LW94,ACBFS95,ACBFS02,AHK12} and references therein.
By another standard calculation, the evolution of $\stratof{\time}$ under \eqref{eq:FTRL} follows the \acl{RD} of \citet{TJ78}, \viz
\begin{equation}
\label{eq:RD}
\tag{RD}
\dot\strat_{\play\pure_{\play}}
	= \strat_{\play\pure_{\play}}
		\bracks{ \payfield_{\play\pure_{\play}}(\strat) - \pay_{\play}(\strat) }
	\eqstop
\end{equation}
This is one of the most widely studied models for learning in games, and it will play a major role in our analysis.
\endenv
\end{example}

\begin{example}
[Log-barrier regularization]
\label{ex:log-barrier}
Another standard example is the log-barrier kernel $\hker_{\play}(z) = -\log z$.
In this case, a direct derivation (which we omit) yields the \emph{affine scaling dynamics}
\begin{equation}
\label{eq:AS}
\tag{AS}
\dot\strat_{\play\pure_{\play}}
	= \strat_{\play\pure_{\play}}^{2}
		\bracks*{ \payfield_{\play\pure_{\play}}(\strat) - \frac{1}{\twonorm{\strat_{\play}}^{2}} \, \sum_{\purealt_{\play}\in\pures_{\play}} \strat_{\play\purealt_{\play}}^{2} \payfield_{\play\purealt_{\play}}(\strat) }
	\eqstop
\end{equation}
These dynamics have a long and celebrated history in optimization going back to the interior-point algorithms of \citet{Dik67} and \citet{Kar84,Kar90};
for a series of recent applications to online learning and multi-armed bandit problems, see \cite{WL18} and references therein.
\endenv
\end{example}

Going back to the general case, \citet{KM17} showed that \eqref{eq:FTRL} enjoys the constant regret bound
\begin{equation}
\reg_{\play}(\horizon)
	\leq \max\hreg_{\play} - \min\hreg_{\play}
	\eqstop
\end{equation}
This guarantee justifies the popularity of \eqref{eq:FTRL} as a no-regret policy and, in view of this, it will be our dynamics of choice for the sequel.

\section{The stochastic \acs{FTRL} dynamics}
\label{sec:dynamics}

The dynamics \eqref{eq:FTRL} have been studied extensively in the literature, but their applicability to real-world decision-making is constrained by the fact that they explicitly rely on perfect, deterministic information and exact knowledge of the players' payoffs.
This level of precision is often difficult to attain in practice, whether due to imperfect observations, stochastic fluctuations in the players' environment, or intrinsic variabilities in the observed outcomes.
To address this limitation, we provide below a stochastic version of \eqref{eq:FTRL} which explicitly incorporates the effects of randomness and uncertainty, leading in this way to a more robust and realistic model for learning under noisy, uncertain conditions.

\subsection{Learning under uncertainty}
\label{sec:FTRL-stoch}

To put all this on a solid footing, we will consider the \define{stochastic \ac{FTRL} dynamics}
\begin{equation}
\label{eq:FTRL-stoch}
\tag{S-FTRL}
\begin{aligned}
d\Scoreof[\play]{\time}
	&= \payfield_{\play}(\Stratof{\time}) \dd\time
		+ d\martof[\play]{\time}
	\\
\Stratof[\play]{\time}
	&= \mirror_{\play}(\Scoreof[\play]{\time}),
\end{aligned}
\end{equation}
which should be seen as a rigorous formulation of the informal model
\begin{equation}
\label{eq:FTRL-informal}
\dscoreof[\play]{\time}
	= \payfield_{\play}(\stratof{\time})
		+ \textrm{``noise''}
	\eqstop
\end{equation}
In more detail, $\martof[\play]{\time}$, $\time\geq\tstart$, denotes a continuous square-integrable martingale with values in the payoff space $\scores_{\play} = \R^{\pures_{\play}}$ of player $\play\in\players$, so \eqref{eq:FTRL-stoch} itself represents an ordinary (Itô) \acl{SDE}.%
\footnote{For the requisite background to \aclp{SDE} and stochastic analysis, we refer the reader to \citet{Oks13}.}
In this regard, $\martof[\play]{\time}$ plays the role of a catch-all, ``colored noise'' term intended to capture all sources of observational uncertainty, stochastic disturbances, and/or any other elements of randomness in the players' learning model.

More concretely, by the martingale representation theorem \citep[Theorem 4.3.4]{Oks13}, $\martof[\play]{\time}$ can be written as
\(
d\martof[\play]{\time}
	= \diffproc_{\play}(\time) \cdot d\brownof{\time}
\)
where $\brownof{\time} = (\brownof[\idx]{\time})_{1 \leq \idx \leq \nIdx}$ is an $\nIdx$-dimensional Brownian motion and $\diffproc_{\play}(\time)$ is a matrix-valued process in $\R^{\nPures_{\play}\times\nIdx}$.%
\footnote{Formally, $\brownof{\time}$, $\diffproc_{\play}(\time)$ and $\martof[\play]{\time}$ are all assumed to be adapted to some common, underlying filtration $\filter_{\bullet} = (\filter_{\time})_{\time\geq0}$.}
Our only assumption here will be that the diffusion matrices $\diffproc_{\play}$ are \emph{state-dependent}, that is, they only depend on time through $\Stratof{\time}$ as
\(
\diffproc_{\play}(\time)
	\equiv \diffmat_{\play}(\Stratof{\time})
\)
for some Lipschitz function $\diffmat_{\play}\from\strats\to\R^{\nPures_{\play}\times\nIdx}$.
On that account, \eqref{eq:FTRL-stoch} can be expressed in components as
\begin{equation}
\label{eq:error}
d\Scoreof[\play\pure_{\play}]{\time}
	= \payfield_{\play\pure_{\play}}(\Stratof{\time}) \dd\time
		+ \sum_{\idx=1}^{\nIdx} \diffmat_{\play\pure_{\play}\idx}(\Stratof{\time}) \dd\brownof[\idx]{\time}
\end{equation}
or, more compactly, as
\begin{equation}
\label{eq:error-nocoords}
d\Scoreof{\time}
	= \payfield(\Stratof{\time}) \dd\time
		+ \diffmat(\Stratof{\time}) \cdot \dd\brownof{\time}
\end{equation}
where we set $\diffmat \equiv (\diffmat_{1}, \dots, \diffmat_{\nIdx})^{\top} \in \R^{\nPures\times\nIdx}$ for the overarching diffusion matrix of the process.
This system will serve as our main model for learning in the presence of uncertainty, so some remarks are in order.

The first concerns the structure of the diffusion matrices $\diffmat_{\play}$, which, in their simplest, diagonal form, yield the system
\begin{equation}
\label{eq:FTRL-stoch-uncorr}
d\Scoreof[\play\pure_{\play}]{\time}
	= \payfield_{\play\pure_{\play}}(\Stratof{\time}) \dd\time
		+ \diffmat_{\play\pure_{\play}}(\Stratof{\time}) \dd\brownof[\play\pure_{\play}]{\time}
\end{equation}
where each $\brownof[\play\pure_{\play}]{\time}$, $\pure_{\play}\in\pures_{\play}$, $\play\in\players$, is a Brownian motion in $\R$, assumed independent across all $\pure_{\play}\in\pures_{\play}$ and all $\play\in\players$.
This uncorrelated model of uncertainty was first considered by \citet{BM17} and it can be derived as a special case of our general framework by taking $\mart_{\play} = (\diffmat_{\play\pure\play} \brown_{\play\pure_{\play}})_{\pure_{\play}\in\pures_{\play}}$ in \eqref{eq:FTRL-stoch};
see also \cite{MerSta18,MerSta18b}.

From a modeling standpoint, while relevant in a wide range of applications, the uncorrelated model \eqref{eq:FTRL-stoch-uncorr} overlooks important cases where the players' payoffs are influenced by a common source of randomness \textendash\ for example, the outcome of the coin toss in Matching Pennies, or the choice of routing path in a congestion game (where the delays along overlapping routes are inherently correlated over their common edges), etc.
Capturing such scenarios requires the full extent of our framework, so we will work throughout with general diffusion matrices, and we will only zoom in on the uncorrelated case when it helps make some quantitative estimates more transparent and easier to state.

Finally, from a technical perspective, we should note that the stated assumptions guarantee that the system \eqref{eq:FTRL-stoch} is \define{well-posed}, that is, for any initial condition $\Scoreof{\tstart} \in \scores$, it admits a unique strong solution that exists for all time.
We state and prove this result formally in \cref{app:dynamics} using the property that the players' regularized best response maps $\mirror_{\play}\from\scores_{\play}\to\strats_{\play}$ are Lipschitz continuous.

\subsection{Strategy dynamics and other stochastic models}
\label{sec:FTRL-strat}

Before moving forward with our analysis, we proceed to describe the exact way in which the players' strategies evolve over time under \eqref{eq:FTRL-stoch}.
The relevant result is as follows:

\begin{restatable}{proposition}{EvolX}
\label{prop:FTRL-strat}
The solutions of \eqref{eq:FTRL-stoch} satisfy the \acl{SDE}
\begin{subequations}
\label{eq:FTRL-strat}
\begin{flalign}
\label{eq:FTRL-strat-det}
d\Strat_{\play\pure_{\play}}
	&= \gcoef_{\play\pure_{\play}}
		\bracks*{
			\payfield_{\play\pure_{\play}}
			- \sum_{\purealt_{\play}\in\pures_{\play}} \hstrat_{\play\purealt_{\play}} \payfield_{\play\purealt_{\play}}} \dd\time
	\\
\label{eq:FTRL-strat-noise}
	&+ \gcoef_{\play\pure_{\play}}
		\bracks*{
			\dd\mart_{\play\pure_{\play}}
			- \sum_{\purealt_{\play}\in\pures_{\play}} \hstrat_{\play\purealt_{\play}} \dd\mart_{\play\purealt_{\play}}}
	\\
\label{eq:FTRL-strat-Ito}
	&+ \gcoef_{\play\pure_{\play}}
		\sum_{\idx=1}^{\nIdx}
		\bracks*{
			\qcoef_{\play\pure_{\play}\idx}^{2}
			- \sum_{\purealt_{\play}\in\pures_{\play}} \hstrat_{\play\purealt_{\play}} \qcoef_{\play\purealt_{\play}\idx}^{2}
			} \dd\time
\end{flalign}
\end{subequations}
where we set
$\gcoef_{\play\pure_{\play}} = 1/\hker_{\play}''(\strat_{\play\pure_{\play}})$,
$\hstrat_{\play\pure_{\play}} = \gcoef_{\play\pure_{\play}} \big/ \sum_{\purealt_{\play}\in\pures_{\play}} \gcoef_{\play\purealt_{\play}}$,
and
\(
\qcoef_{\play\pure_{\play}\idx}^{2}
	= -\frac{1}{2} \hker_{\play}'''(\strat_{\play\pure_{\play}}) \gcoef_{\play\pure_{\play}}^{2}
		\bracks[\big]{
			\diffmat_{\play\pure_{\play}\idx}
			- \sum_{\purealt_{\play}\in\pures_{\play}} \hstrat_{\play\purealt_{\play}} \diffmat_{\play\purealt_{\play}\idx}
			}^{2}
\).
\end{restatable}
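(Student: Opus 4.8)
The plan is to read off \eqref{eq:FTRL-strat} by applying Itô's formula to the smooth map $\mirror_{\play}$, fed by the score dynamics \eqref{eq:error}. The only substantive ingredient beyond a mechanical application of Itô's rule is an explicit description of the first two derivatives of $\mirror_{\play}$, which I would obtain by implicit differentiation of its first-order optimality conditions; once these are in hand, the drift and martingale terms fall out immediately and the Itô correction is a matter of bookkeeping.

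First I would record the regularity of the mirror map. Since each $\hker_{\play}$ is strongly convex and steep at the origin ($\hker_{\play}'(z)\to-\infty$ as $z\to0^{+}$), the maximizer defining $\mirror_{\play}(\score_{\play})$ in \eqref{eq:mirror} is interior to $\strats_{\play}$ and smooth in $\score_{\play}$. Writing $\strat_{\play\pure_{\play}} = \mirror_{\play}(\score_{\play})_{\pure_{\play}}$, the interior maximizer is characterized by the Lagrangian stationarity relations $\hker_{\play}'(\strat_{\play\pure_{\play}}) = \score_{\play\pure_{\play}} - \mu_{\play}$ together with the simplex constraint $\sum_{\pure_{\play}}\strat_{\play\pure_{\play}}=1$, where $\mu_{\play}\equiv\mu_{\play}(\score_{\play})$ is the associated multiplier.

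Second, I would differentiate these two relations implicitly. Differentiating the stationarity relation once and using $1/\hker_{\play}''(\strat_{\play\pure_{\play}}) = \gcoef_{\play\pure_{\play}}$ gives $\partial\strat_{\play\pure_{\play}}/\partial\score_{\play\purealt_{\play}} = \gcoef_{\play\pure_{\play}}(\delta_{\pure_{\play}\purealt_{\play}} - \partial\mu_{\play}/\partial\score_{\play\purealt_{\play}})$; differentiating the constraint then forces $\partial\mu_{\play}/\partial\score_{\play\purealt_{\play}} = \hstrat_{\play\purealt_{\play}}$, so that
\[
\frac{\partial \strat_{\play\pure_{\play}}}{\partial \score_{\play\purealt_{\play}}}
	= \gcoef_{\play\pure_{\play}}\bigl(\delta_{\pure_{\play}\purealt_{\play}} - \hstrat_{\play\purealt_{\play}}\bigr).
\]
Substituting this Jacobian into Itô's formula and inserting $d\Score_{\play\purealt_{\play}} = \payfield_{\play\purealt_{\play}}\dd\time + \dd\mart_{\play\purealt_{\play}}$ reproduces the drift \eqref{eq:FTRL-strat-det} and the martingale term \eqref{eq:FTRL-strat-noise} verbatim.

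The real work---and the step I expect to be the main obstacle---is the Itô correction \eqref{eq:FTRL-strat-Ito}. Differentiating the stationarity relation a second time introduces the factor $\hker_{\play}'''(\strat_{\play\pure_{\play}})$, and differentiating the constraint a second time pins down $\partial^{2}\mu_{\play}/\partial\score_{\play\purealt_{\play}}\partial\score_{\play\purealtalt_{\play}}$ as a $\gcoef$-weighted average (over $\pure_{\play}$) of $\hker_{\play}'''(\strat_{\play\pure_{\play}})\gcoef_{\play\pure_{\play}}^{2}(\delta_{\pure_{\play}\purealt_{\play}}-\hstrat_{\play\purealt_{\play}})(\delta_{\pure_{\play}\purealtalt_{\play}}-\hstrat_{\play\purealtalt_{\play}})$, normalized by $\sum_{\pure_{\play}}\gcoef_{\play\pure_{\play}}$. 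The resulting Hessian of $\mirror_{\play}$ splits into a diagonal rank-one piece and this mean-field multiplier correction, both proportional to $\hker_{\play}'''$. Since the noises are the independent Brownian components $\brown_{\idx}$, the covariation is $d\langle \Score_{\play\purealt_{\play}}, \Score_{\play\purealtalt_{\play}}\rangle = \sum_{\idx}\diffmat_{\play\purealt_{\play}\idx}\diffmat_{\play\purealtalt_{\play}\idx}\dd\time$, and contracting the Hessian against it is where the simplification happens: summing $(\delta_{\pure_{\play}\purealt_{\play}}-\hstrat_{\play\purealt_{\play}})$ against $\diffmat_{\play\purealt_{\play}\idx}$ produces exactly the centered diffusion $\diffmat_{\play\pure_{\play}\idx} - \sum_{\purealt_{\play}}\hstrat_{\play\purealt_{\play}}\diffmat_{\play\purealt_{\play}\idx}$, whose square is the bracket defining $\qcoef_{\play\pure_{\play}\idx}^{2}$. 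The diagonal piece then collapses to $\gcoef_{\play\pure_{\play}}\qcoef_{\play\pure_{\play}\idx}^{2}$ and the multiplier correction to $-\gcoef_{\play\pure_{\play}}\sum_{\purealt_{\play}}\hstrat_{\play\purealt_{\play}}\qcoef_{\play\purealt_{\play}\idx}^{2}$ (using $\gcoef_{\play\purealt_{\play}}/\sum_{\pure_{\play}}\gcoef_{\play\pure_{\play}} = \hstrat_{\play\purealt_{\play}}$), giving \eqref{eq:FTRL-strat-Ito}. Verifying that the tangle of factors $\gcoef$, $\hstrat$, $\hker'''$ and $1/\sum\gcoef$ recombines precisely into the stated $\qcoef^{2}$ is the only delicate part; everything else is routine once the Jacobian and Hessian are in hand.
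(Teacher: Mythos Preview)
Your approach is correct and reaches the same conclusion as the paper, but the route differs in a way worth noting. You apply It\^o's formula in the \emph{forward} direction to $\Strat_{\play}=\mirror_{\play}(\Score_{\play})$, which forces you to compute the full Hessian of $\mirror_{\play}$ by twice differentiating the KKT relations and then contract it against $d\langle\Score_{\play\purealt_{\play}},\Score_{\play\purealtalt_{\play}}\rangle$. The paper instead applies It\^o to the \emph{inverse} relation $\Score_{\play\pure_{\play}}=\hker_{\play}'(\Strat_{\play\pure_{\play}})+\lambda$, obtaining $d\lambda = d\Score_{\play\pure_{\play}} - \hker_{\play}''(\Strat_{\play\pure_{\play}})\,d\Strat_{\play\pure_{\play}} - \tfrac{1}{2}\hker_{\play}'''(\Strat_{\play\pure_{\play}})\,d[\Strat_{\play\pure_{\play}}]$; summing this over $\pure_{\play}$ and using $\sum_{\pure_{\play}} d\Strat_{\play\pure_{\play}}=0$ eliminates $d\lambda$ and expresses $d\Strat_{\play\pure_{\play}}$ in terms of $d\Score_{\play}$ and the quadratic variations $d[\Strat_{\play\purealt_{\play}}]$. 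The latter are then read off from the already-computed martingale part of $d\Strat_{\play\pure_{\play}}$, bypassing the Hessian entirely. Your method is the more systematic one (and makes the structure of $\Jac\mirror_{\play}$ and $\Hess\mirror_{\play}$ explicit, which can be useful elsewhere); the paper's shortcut trades that transparency for a lighter calculation, since the second-order content is packaged into $d[\Strat_{\play\pure_{\play}}]$ rather than spread across a three-index Hessian. Either way the algebra closes on the stated $\qcoef^{2}$, as you correctly anticipate.
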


The proof of \cref{prop:FTRL-strat} is an arduous combination of Itô's lemma with elements of convex analysis in the spirit of \cite{BM17}, so we defer it to \cref{app:dynamics}.
From a conceptual viewpoint, what is worth noting is that, despite the complicated form of \eqref{eq:FTRL-strat}, each term admits a relatively simple interpretation:
\begin{enumerate}
\item
The term \eqref{eq:FTRL-strat-det} represents the evolution of the players' strategies under \eqref{eq:FTRL}, \ie the noiseless regime $\diffmat=0$.
\item
The martingale term \eqref{eq:FTRL-strat-noise} captures the direct impact of the noise on the evolution of $\Stratof{\time}$ under \eqref{eq:FTRL-stoch}.
\item
Finally, the term \eqref{eq:FTRL-strat-Ito} represents the second-order Itô correction that is propagated to $\Stratof{\time}$ through the players' regularized best response maps $\mirror_{\play} \from \scores_{\play} \to \strats_{\play}$.
This term also vanishes when $\diffmat=0$ but, in contrast to the martingale term \eqref{eq:FTRL-strat-noise}, it directly affects the drift of \eqref{eq:FTRL-strat} and induces a measurable bias component in the dynamics.%
\end{enumerate}

To facilitate the comparison of these dynamics with other models in the literature, we will specialize to the \acl{RD} induced by the entropic setup of \cref{ex:logit}, with uncorrelated noise of the form \eqref{eq:FTRL-stoch-uncorr}.
Here, a straightforward computation gives
\begin{equation}
\gcoef_{\play\pure_{\play}}
	= \hstrat_{\play\pure_{\play}} = \strat_{\play\pure_{\play}}
\end{equation}
and, in a slight abuse of notation
\begin{equation}
\qcoef_{\play\pure_{\play}}^{2}
	= \bracks*{\diffmat_{\play\pure_{\play}} - \sum_{\purealt_{\play}\in\pures_{\play}} \strat_{\play\purealt_{\play}} \diffmat_{\play\purealt_{\play}}}^{2}
	\eqstop
\end{equation}
We thus obtain the \acli{SRD-EW}
\begin{equation}
\label{eq:SRD-EW}
\tag{SRD-EW}
\begin{aligned}
d&\Strat_{\play\pure_{\play}}
	= \Strat_{\play\pure_{\play}}
		\bracks*{
			\payfield_{\play\pure_{\play}}
			- \insum_{\purealt_{\play}\in\pures_{\play}} \Strat_{\play\purealt_{\play}}\,\payfield_{\play\purealt_{\play}}
			} \dd\time
	\\
	&+ \Strat_{\play\pure_{\play}}
	\bracks*{
		\diffmat_{\play\pure_{\play}} \dd \brown_{\play\pure_{\play}}
		- \sum_{\purealt_{\play}\in\pures_{\play}} \diffmat_{\play\purealt_{\play}} \Strat_{\play\purealt_{\play}} \dd \brown_{\play\purealt_{\play}}
		}
	\\
	&+ \frac{\Strat_{\play\pure_{\play}}}{2}
	 \bracks*{
		\diffmat_{\play\pure_{\play}}^{2} (1 - 2\Strat_{\play\pure_{\play}})
		- \!\!\sum_{\purealt_{\play}\in\pures_{\play}}\!\!
			\diffmat_{\play\purealt_{\play}}^{2} \Strat_{\play\purealt_{\play}} (1 - 2 \Strat_{\play\purealt_{\play}})
		} \,d\time
	\eqstop
\end{aligned}
\end{equation}
These dynamics were first studied by \citet{MM10} and they should be contrasted to the biological model of the \acli{SRD-AS} of \citet{FH92}, \viz
\begin{equation}
\label{eq:SRD-AS}
\tag{SRD-AS}
\begin{aligned}
d\Strat_{\play\pure_{\play}}
	&= \Strat_{\play\pure_{\play}}
		\bracks*{
			\payfield_{\play\pure_{\play}}
			- \sum_{\purealt_{\play}\in\pures_{\play}} \Strat_{\play\purealt_{\play}} \payfield_{\play\purealt_{\play}}
		} \dd\time
	\\
	&+ \Strat_{\play\pure_{\play}}
		\bracks*{
			\diffmat_{\play\pure_{\play}} d\brown_{\play\pure_{\play}}
			- \sum_{\purealt_{\play}\in\pures_{\play}} \diffmat_{\play\purealt_{\play}} \Strat_{\play\purealt_{\play}} \dd \brown_{\play\purealt_{\play}}
			}
	\\
	&- \Strat_{\play\pure_{\play}}
		\bracks*{
			\diffmat_{\play\pure_{\play}}^{2} \Strat_{\play\pure_{\play}}
			- \sum_{\purealt_{\play}\in\pures_{\play}} \diffmat_{\play\purealt_{\play}}^{2} \Strat_{\play\purealt_{\play}}^{2}
			} \dd\time
	\eqstop
\end{aligned}
\end{equation}
Finally, a third related model is the \acli{SRD-PI} of \citet{FY90}:
\begin{equation}
\label{eq:SRD-PI}
\tag{SRD-PI}
\begin{aligned}
d\Strat_{\play\pure_{\play}}
	&= \Strat_{\play\pure_{\play}}
		\bracks*{
			\payfield_{\play\pure_{\play}}
			- \sum_{\purealt_{\play}\in\pures_{\play}} \Strat_{\play\purealt_{\play}} \payfield_{\play\purealt_{\play}}
		} \dd\time
	\\
	&+ \Strat_{\play\pure_{\play}}
		\bracks*{
			\diffmat_{\play\pure_{\play}} d\brown_{\play\pure_{\play}}
			- \sum_{\purealt_{\play}\in\pures_{\play}} \diffmat_{\play\purealt_{\play}} \Strat_{\play\purealt_{\play}} \dd \brown_{\play\purealt_{\play}}
			}
		\eqstop
\end{aligned}
\end{equation}
The origins of the latter two models are quite distinct from \eqref{eq:FTRL-stoch}, and they do not stem from learning considerations:
\eqref{eq:SRD-AS} was originally derived as a biological model of population evolution under ``aggregate shocks'' to the population's reproductive fitness, while \eqref{eq:SRD-PI} is based on economic microfoundations involving revision protocols in population games \cite{MV16}.
These models only coincide in the noiseless, deterministic regime;
otherwise, in the presence of noise and uncertainty, the drift is distinct (because of the Itô correction), and as we show in the next section, this leads to drastically different behaviors in the long-run.

\section{Analysis and results}
\label{sec:results}


\begin{figure}[tbp]
\centering
\footnotesize%
\includegraphics[height=40ex]{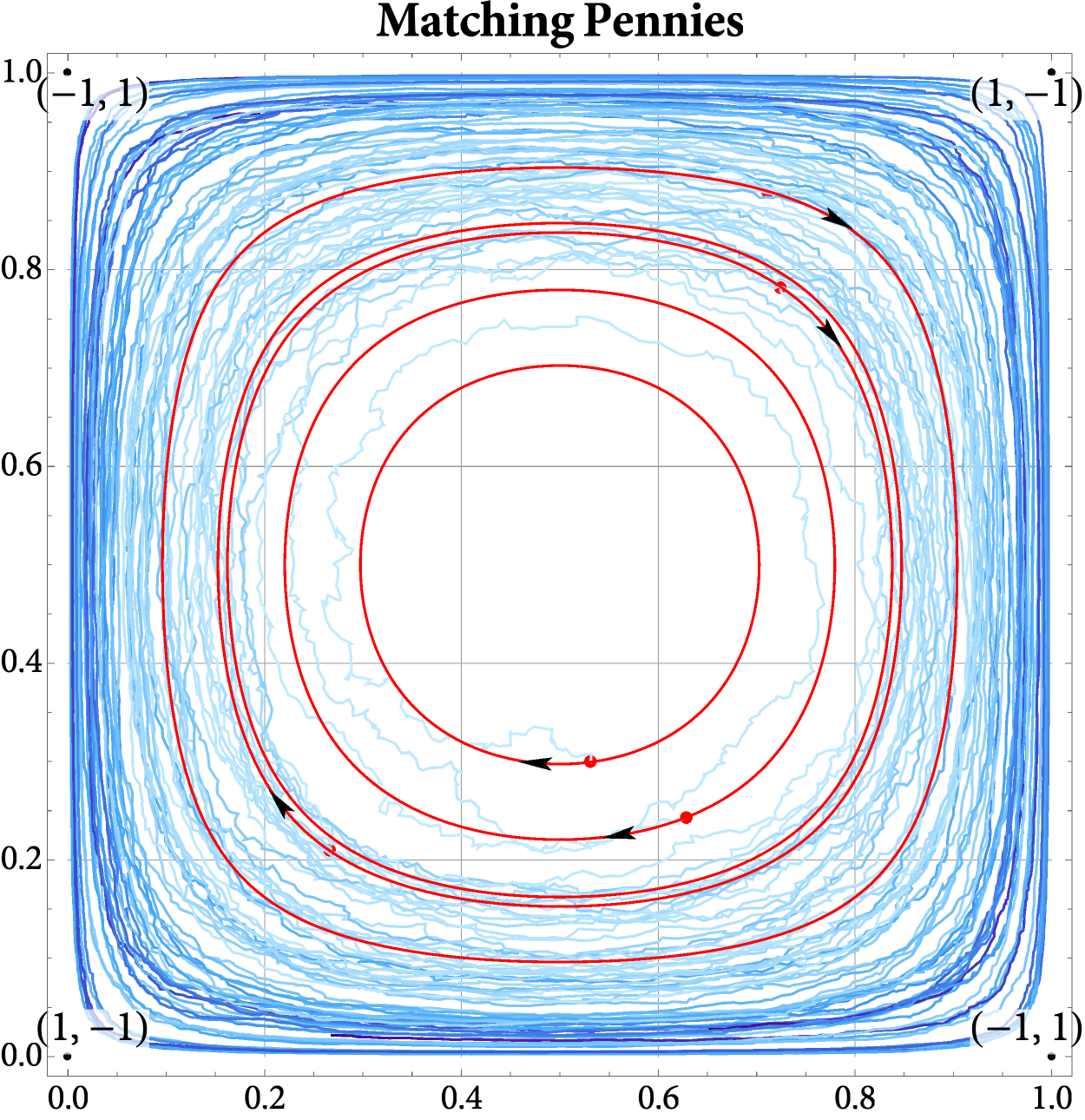}%
\qquad
\includegraphics[height=40ex]{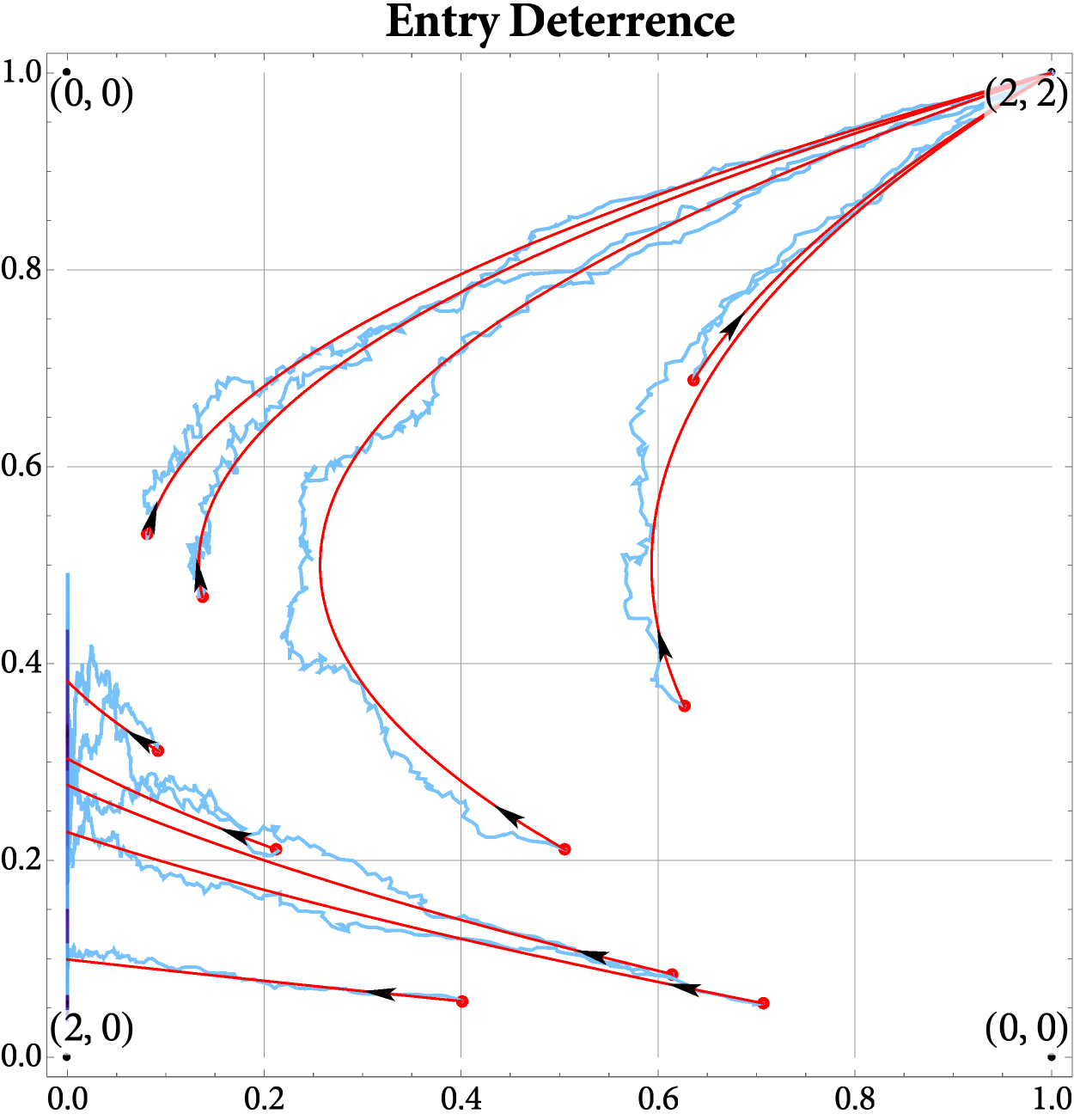}%
\\[3ex]
\includegraphics[height=40ex]{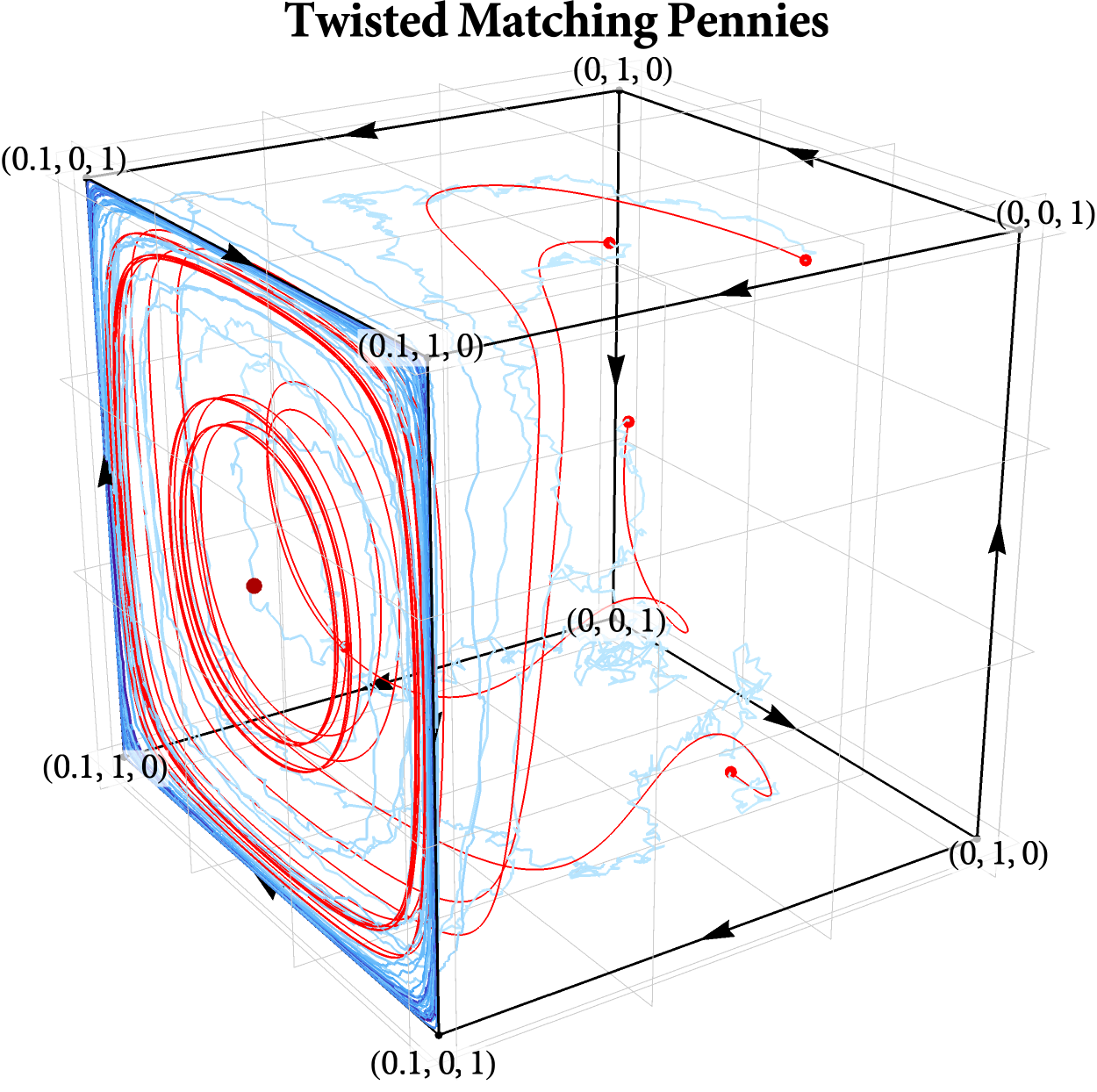}%
\qquad
\includegraphics[height=40ex]{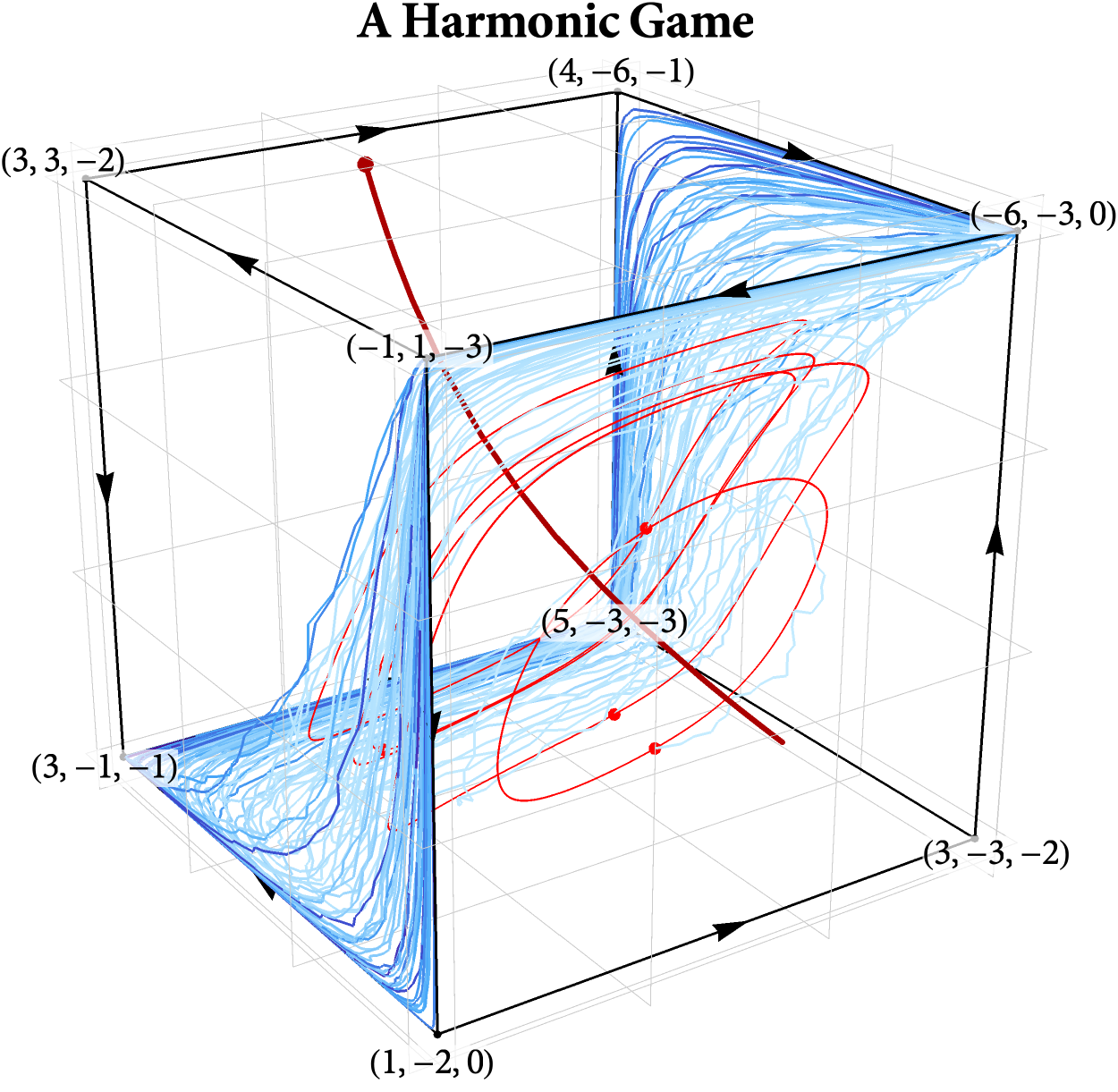}%
\caption{Trajectories of play under \eqref{eq:FTRL} and \eqref{eq:FTRL-stoch} in four different games (each game's payoffs are reported on the corresponding plot).
Deterministic orbits are plotted in red, stochastic trajectories in shades of blue, with darker hues indicating later points in time.
In all cases, the trajectories of \eqref{eq:FTRL-stoch} drift toward the boundary, even when the corresponding deterministic orbits of \eqref{eq:FTRL} do not.}
\label{fig:orbits}
\end{figure}


We now proceed to state our main results for the stochastic dynamics \eqref{eq:FTRL-stoch}.
To fix notation, we will assume throughout that \eqref{eq:FTRL-stoch} is initialized at some fixed $\strat\in\strats$, and we will write $\prob_{\!\strat}$ and $\ex_{\strat}$ for the law of the process starting at $\Stratof{\tstart} \gets \point$ and the induced expectation thereof.
Also, to quantify the level of noise and uncertainty in \eqref{eq:FTRL-stoch}, we will consider the metrics
\begin{equation}
\label{eq:noisebounds}
\begin{alignedat}{3}
\diffmat_{\play,\min}^{2}
	&\defeq \min_{\strat\in\strats} \lambda_{\min}(\covmat_{\play}(\strat))
	&\quad
\diffmat_{\min}^{2}
	&\defeq \min_{\strat\in\strats} \lambda_{\min}(\covmat(\strat))
	\\
\diffmat_{\play,\max}^{2}
	&\defeq \max_{\strat\in\strats} \lambda_{\max}(\covmat_{\play}(\strat))
	&\;
\diffmat_{\max}^{2}
	&\defeq \max_{\strat\in\strats} \lambda_{\max}(\covmat_(\strat))
\end{alignedat}
\end{equation}
where $\covmat_{\play} \equiv \diffmat_{\play}\diffmat_{\play}^{\top}$ and $\covmat \equiv \diffmat\diffmat^{\top}$ are the quadratic covariation matrices of the noise in \eqref{eq:FTRL-stoch}, and $\lambda_{\min}$ (resp.~$\lambda_{\max}$) denote the minimum (resp.~maximum) eigenvalues thereof.

To state our results, we will require the assumptions below:

\begin{assumption}
\label{asm:noisemin}
The noise in \eqref{eq:FTRL-stoch} has $\diffmat_{\min}^{2} > 0$.
\end{assumption}

\begin{assumption}
\label{asm:hgrowth}
The kernel functions $\hker_{\play}$ enjoy the bound
\begin{equation}
\label{eq:hgrowth}
	\sup\nolimits_{0<z<1} \abs{\hker_{\play}'''(z)} \big/ \bracks{\hker_{\play}''(z)}^{2}
	< \infty
	\quad
	\text{for all $\play\in\players$}.
\end{equation}
\end{assumption}

Finally, for some---but not all---of our results, we will require the following boundedness assumption:

\begin{assumption}
\label{asm:hfinite}
The kernel functions $\hker_{\play}$ are \emph{bounded}, \ie $\sup_{z\in(0,1)} \abs{\hker_{\play}(z)} < \infty$ for all $\play\in\players$.
\end{assumption}

Of the above, \cref{asm:noisemin} simply serves as a ``certificate of uncertainty'' that differentiates the stochastic dynamics \eqref{eq:FTRL-stoch} from their deterministic counterpart (the case $\diffmat = 0$).
As for \cref{asm:hgrowth}, this is a technical requirement which is satisfied by all standard regularizers used in practice (negentropy, Tsallis, log-barrier, etc.), so it is also very mild.
Finally, \cref{asm:hfinite} is also technical in nature, and it is used to facilitate certain bounds in the sequel;
of all the examples mentioned so far, only the log-barrier kernel is excluded by this requirement.

\subsection{Playing almost pure strategies infinitely often}
\label{sec:pure}

Our first result for \eqref{eq:FTRL-stoch} indicates a radical departure from the deterministic setting:
it shows that,
in \emph{any} game,
regardless of initialization,
\emph{every player reaches an arbitrarily small neighborhood of one of their pure strategies in finite time.}

\begin{hilite}
\begin{restatable}{theorem}{PureHit}
\label{thm:purehit}
Suppose \cref{asm:noisemin,asm:hgrowth} hold, fix a sufficiently small accuracy threshold $\thres>0$, and let
\begin{equation}
\label{eq:purehit}
\stoptime_{\play,\thres}
	=\inf\setdef{\time\geq0}{\max\nolimits_{\pure_\play \in \pures_\play} \Stratof[\play\pure_{\play}]{\time} \geq 1-\thres}
\end{equation}
denote the time player $\play\in\players$ takes to reach an $\thres$-neigh\-bor\-hood of one of their pure strategies.
Then $\stoptime_{\play,\thres}$ is finite \acl{wp1}, and we have
\begin{equation}
\label{eq:purehit-bound}
\exwrt{\strat}{\stoptime_{\play,\thres}}
	\lesssim e^{\lambda}/\lambda
\end{equation}
where $\lambda>0$ is a positive constant that scales as
\begin{equation}
\label{eq:bound-scaling}
\lambda
	= \Theta\parens*{
		\frac{1+\diffmat_{\play,\max}^{2}}{\diffmat_{\play,\min}^{2}}
			\hker_{\play}''\parens*{\frac{\thres}{\nPures_{\play} - 1}}^{2}
		}
	\eqstop
\end{equation}
\end{restatable}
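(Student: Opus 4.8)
The plan is to reduce the multidimensional hitting problem to a one‑dimensional barrier‑crossing estimate by tracking the \emph{purity} of player $\play$'s strategy, and then to control the resulting exit time with a Foster--Lyapunov argument. Set $P(\time) = \sum_{\pure_{\play}\in\pures_{\play}} \Strat_{\play\pure_{\play}}(\time)^{2} = \twonorm{\Stratof[\play]{\time}}^{2}$ and note the elementary bound $\max_{\pure_{\play}}\Strat_{\play\pure_{\play}} \geq \sum_{\pure_{\play}}\Strat_{\play\pure_{\play}}^{2} = P$, valid because $\sum_{\pure_{\play}}\Strat_{\play\pure_{\play}}^{2} \leq (\max_{\pure_{\play}}\Strat_{\play\pure_{\play}})\sum_{\pure_{\play}}\Strat_{\play\pure_{\play}} = \max_{\pure_{\play}}\Strat_{\play\pure_{\play}}$. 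Thus $\setof{P \geq 1-\thres} \subseteq \setof{\max_{\pure_{\play}}\Strat_{\play\pure_{\play}} \geq 1-\thres}$, so if $\stoptime'_{\play,\thres} = \inf\setdef{\time\geq0}{P(\time)\geq 1-\thres}$ then $\stoptime_{\play,\thres}\leq\stoptime'_{\play,\thres}$ pathwise, and it suffices to bound $\exwrt{\strat}{\stoptime'_{\play,\thres}}$. The payoff of this reduction is that $P$ is confined to the compact interval $[1/\nPures_{\play},1]$ and we only have to drive it up to the single level $1-\thres$.

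The next step is to apply Itô's lemma to $P$ using the strategy‑level \acs{SDE} of \cref{prop:FTRL-strat}, writing $dP = \mu_{P}\dd\time + dN_{P}$ with $d\langle N_{P}\rangle = \theta_{P}^{2}\dd\time$. The drift $\mu_{P}$ splits into (i) the contribution of the noiseless term \eqref{eq:FTRL-strat-det}, (ii) the Itô correction \eqref{eq:FTRL-strat-Ito}, and (iii) the quadratic‑variation part coming from squaring the martingale \eqref{eq:FTRL-strat-noise}. Because the payoff field $\payfield$ is bounded and \cref{asm:hgrowth} bounds the ratio $\abs{\hker_{\play}'''}/(\hker_{\play}'')^{2}$ governing the size of $\qcoef^{2}$, one obtains a uniform two‑sided bound $\abs{\mu_{P}}\leq\beta$ with $\beta = \Theta(1+\diffmat_{\play,\max}^{2})$ (the ``$1$'' from the bounded payoffs, the $\diffmat_{\play,\max}^{2}$ from the Itô correction). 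Meanwhile, the part of $\mu_{P}$ produced by (iii) is the trace of the noise covariance and is strictly \emph{positive}, which is what pushes $P$ up and away from the interior; and \cref{asm:noisemin} ($\covmat\mg 0$) provides the lower bound on the martingale fluctuations $\theta_{P}^{2}$ that lets the process cross any residual barrier.

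With these estimates, I would close the argument by an exponential Lyapunov function. Take $g(P) = B\bracks{1 - e^{-\lambda(1-\thres-P)}}$ with $\lambda = \Theta(\beta/\theta_{0}^{2})$ chosen to match \eqref{eq:bound-scaling}; since $g$ is decreasing and concave, applying Itô directly to $g(P(\time))$ (which sidesteps the non‑Markovianity of $P$, as only the pointwise drift matters) gives a $\dd\time$‑coefficient $g'(P)\mu_{P} + \tfrac12 g''(P)\theta_{P}^{2}$ that is $\leq -1$ throughout $\setof{P<1-\thres}$: the favorable positive drift dominates in the interior, while near the threshold the concave second‑order term $\tfrac12 g''(P)\theta_{P}^{2}$ overcomes the bounded adverse drift once $\lambda \gtrsim \beta/\theta_{0}^{2}$. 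Optional stopping/Dynkin at $\stoptime'_{\play,\thres}$ (localized) together with $0 \leq g \leq B \lesssim e^{\lambda}/\lambda$ and monotone convergence then yields $\exwrt{\strat}{\stoptime'_{\play,\thres}} \leq g(P(0)) \lesssim e^{\lambda}/\lambda$, which gives \eqref{eq:purehit-bound} and, since the expectation is finite, the \acl{wp1} finiteness of $\stoptime_{\play,\thres}$.

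The hard part is the degeneracy of the coefficients of $P$ as the state approaches the faces of the simplex: both the positive Itô drift and the fluctuation $\theta_{P}^{2}$ collapse near the vertices, so neither is uniformly bounded below on all of $\setof{P<1-\thres}$. The argument therefore hinges on showing that the positive Itô contribution plus the martingale fluctuations always beat the bounded payoff drift, and on pinning down the worst‑case (near‑vertex) configurations on $\setof{P = 1-\thres}$—where the off‑support weights are of order $\thres/(\nPures_{\play}-1)$ and the relevant speeds are $\gcoef = 1/\hker_{\play}''(\thres/(\nPures_{\play}-1))$—to obtain the sharp effective lower bound $\theta_{0}^{2} \gtrsim \diffmat_{\play,\min}^{2}\big/\hker_{\play}''(\thres/(\nPures_{\play}-1))^{2}$. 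This is exactly the degeneration that forces the crossing time to blow up as $\thres\to 0$ and produces the explicit $\hker_{\play}''$‑dependence of $\lambda$ in \eqref{eq:bound-scaling}.
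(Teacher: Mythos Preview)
Your approach is genuinely different from the paper's and the high-level plan is sound, but the 1D reduction introduces a degeneracy that you have not correctly located, and your treatment of it is the gap.

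The paper does not reduce to a scalar process. It works directly with the multi-component Lyapunov function $\phi(\strat)=\nPures_{\play}-1+e^{\lambda}-\sum_{\pure_{\play}} e^{\lambda \strat_{\play\pure_{\play}}}$ (borrowed from Imhof), computes $\gen\phi$ via the strategy SDE, and splits the action set into $\pures_{+}=\{\pure_{\play}:\strat_{\play\pure_{\play}}\geq 1/\nPures_{\play}\}$ and its complement. The second-order (noise) contribution restricted to $\pures_{+}$ is bounded below by $\diffmat_{\play,\min}^{2}c_{\thres}\sum_{\pures_{+}}\gcoef_{\play\pure_{\play}} e^{\lambda \strat_{\play\pure_{\play}}}$, while the first-order term over $\pures_{-}$ is controlled because $e^{\lambda \strat_{\play\pure_{\play}}}\leq e^{\lambda/\nPures_{\play}}$ there. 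The exponential weights on the \emph{individual} coordinates are what make this separation work; the quantity $c_{\thres}=\min\{\gcoef_{\pure}(1-\gcoef_{\pure}/\gtot)^{2}:\pure\in\pures_{+},\,\strat\in R_{\thres}\}$ is exactly where the $\hker_{\play}''(\thres/(\nPures_{\play}-1))^{-2}$ scaling enters.

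In your route, the martingale part of $dP$ has coefficient vector proportional to $w-W\hstrat$ with $w_{\pure}=\strat_{\pure}\gcoef_{\pure}$, $W=\sum_{\pure}w_{\pure}$, so $\theta_{P}^{2}\geq 4\diffmat_{\play,\min}^{2}\|w-W\hstrat\|^{2}$. This vanishes precisely when all $\strat_{\pure}$ are equal on $\supp(\strat)$, i.e.\ at \emph{every face barycenter}, including the global barycenter where $P=1/\nPures_{\play}$, far from the threshold. These are interior points of $\{P<1-\thres\}$, not near vertices as you state, so there is no uniform lower bound $\theta_{0}^{2}>0$ on the region and the choice $\lambda\gtrsim\beta/\theta_{0}^{2}$ is undefined. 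The argument is repairable, because the adverse payoff drift contribution to $\mu_{P}$ is $2\langle w-W\hstrat,\payfield_{\play}\rangle$ and therefore vanishes wherever $\theta_{P}^{2}$ does, leaving the strictly positive It\^o quadratic-variation drift $\sum_{\pure}d[X_{\pure}]/dt$ in control; but you would need to make this coupling explicit and carry out a genuine two-scale estimate (small $\|w-W\hstrat\|$ versus large) rather than invoking a nonexistent uniform $\theta_{0}^{2}$. The paper's coordinate-wise Lyapunov avoids this altogether: its second-order term is $\sum_{\pure}e^{\lambda \strat_{\pure}}d[X_{\pure}]/dt$, which is nondegenerate everywhere away from vertices.
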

\end{hilite}

We stress here that
\begin{enumerate*}
[\upshape(\itshape i\hspace*{.5pt}\upshape)]
\item
\cref{thm:purehit} applies to \emph{any} game;
and
\item
the behavior described in the theorem is intimately tied to the stochastic nature of \eqref{eq:FTRL-stoch}.
\end{enumerate*}
Indeed, in the noiseless, deterministic regime, the assertion of \cref{thm:purehit} is patently false:
for example, in the $2\times2$ game of Matching Pennies, it is well known that the deterministic \acl{RD}---and, more generally, \eqref{eq:FTRL}---cycle periodically at a constant distance from the game's unique \acl{NE} \citep{PS14,MPP18}, so the conclusion of \cref{thm:purehit} does not hold in this case (\cf~\cref{fig:orbits}).

From a technical standpoint, the proof of \cref{thm:purehit} hinges on crafting a suitable Lyapunov function in the spirit of \citet{Imh05},
and then bounding the action of the infinitesimal generator of the strategy dynamics \eqref{eq:FTRL-strat} on said function in order to apply Dynkin's formula on $\stoptime_{\play,\thres}$.
These estimates and the resulting calculations are fairly delicate and involved, so we defer all relevant details to \cref{app:results}.

From a more high-level, conceptual viewpoint, \cref{thm:purehit} shows that, in a fairly precise sense, uncertainty favors pure strategies.
This is further reinforced by the following consequences of \cref{thm:purehit}:

\begin{corollary}
\label{cor:limset-one}
Suppose that \cref{asm:noisemin,asm:hgrowth} hold.
Then, for every player $\play\in\players$, the limit set of $\Stratof[\play]{\time}$ contains a pure strategy \acl{wp1}:
specifically, there exists a \textpar{possibly random} sequence of times $\time_{\run}\nearrow\infty$ such that $\Stratof[\play]{\time_{\run}}$ converges to some pure strategy $\pure_{\play}\in\pures_{\play}$.
\end{corollary}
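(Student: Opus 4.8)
The plan is to promote the single finite-time hitting statement of \cref{thm:purehit} into a recurrence-and-convergence statement, by combining the Markov property of the dynamics with a pigeonhole argument over the finite action set $\pures_{\play}$. The feature of \cref{thm:purehit} that I would exploit is that the bound \eqref{eq:purehit-bound} on $\exwrt{\strat}{\stoptime_{\play,\thres}}$ is \emph{uniform in the initial condition}: writing $A_{\play,\thres} \defeq \setdef{\strat\in\strats}{\max_{\pure_{\play}\in\pures_{\play}}\strat_{\play\pure_{\play}} \geq 1-\thres}$, so that $\stoptime_{\play,\thres}$ is exactly the hitting time of $A_{\play,\thres}$, \cref{thm:purehit} gives $\probwrt{\strat}{\stoptime_{\play,\thres}<\infty} = 1$ for \emph{every} starting profile $\strat\in\strats$. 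Since the strategy profile $\Stratof{\time}$ solves the autonomous, well-posed system \eqref{eq:FTRL-strat}, it is a time-homogeneous Markov process, and this uniform-in-$\strat$ finiteness is exactly what is needed to upgrade ``hit once'' into ``hit infinitely often''.

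First I would fix $\thres>0$ and show that, almost surely, $\Stratof{\time}\in A_{\play,\thres}$ for arbitrarily large $\time$. To this end, I argue by contradiction: the event that the trajectory eventually avoids $A_{\play,\thres}$ decomposes as the countable union $\union_{\run\in\N} B_{\run}$, where $B_{\run}$ is the event that $\Stratof{\time}\notin A_{\play,\thres}$ for every $\time\geq\run$. Conditioning on $\filter_{\run}$ and invoking the Markov property then yields $\probwrt{\strat}{B_{\run}\given\filter_{\run}} = \probwrt{\Stratof{\run}}{\stoptime_{\play,\thres}=\infty} = 0$, because the hitting time is almost surely finite from every starting state. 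Hence $\probwrt{\strat}{B_{\run}} = 0$ for all $\run\in\N$, so the union has probability zero and $\Stratof{\time}$ revisits $A_{\play,\thres}$ at arbitrarily large times with probability one.

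Next I would run this over a vanishing sequence of thresholds. Taking $\thres_{\idx}\downarrow 0$ and letting $\Omega_{\infty}$ denote the almost-sure event on which $\Stratof{\time}$ enters $A_{\play,\thres_{\idx}}$ infinitely often for \emph{every} $\idx\in\N$, I can select on $\Omega_{\infty}$ a strictly increasing sequence of times $\timealt_{\idx}\nearrow\infty$ with $\Stratof{\timealt_{\idx}}\in A_{\play,\thres_{\idx}}$. At each such time some coordinate of $\Stratof[\play]{\timealt_{\idx}}$ exceeds $1-\thres_{\idx}$; since $\Stratof[\play]{\timealt_{\idx}}$ is a probability vector, its remaining mass is at most $\thres_{\idx}$, and hence $\Stratof[\play]{\timealt_{\idx}}$ lies within $\ell^{1}$-distance $2\thres_{\idx}$ of the corresponding vertex $\pure_{\play}^{(\idx)}\in\pures_{\play}$.

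Finally, since $\pures_{\play}$ is finite, some fixed pure strategy $\pure_{\play}^{\ast}$ is selected for infinitely many indices $\idx$; passing to the associated subsequence $\time_{\run}$ and letting $\thres_{\idx}\to 0$ gives $\Stratof[\play]{\time_{\run}}\to\pure_{\play}^{\ast}$ along $\time_{\run}\nearrow\infty$, which is precisely the asserted convergence. I expect the only genuinely delicate point to be the Markov step --- namely, verifying that $\Stratof{\time}$ carries a time-homogeneous Markov structure inherited from the well-posedness of \eqref{eq:FTRL-stoch} and the autonomy of \eqref{eq:FTRL-strat}, and that the uniform-in-$\strat$ finiteness of $\stoptime_{\play,\thres}$ legitimately transfers to the random conditioning states $\Stratof{\run}$. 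Everything downstream --- the pigeonhole over $\pures_{\play}$ and the elementary vertex estimate --- is routine.
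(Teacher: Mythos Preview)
Your proposal is correct and follows essentially the same approach as the paper's (informal) argument: iterate \cref{thm:purehit} via the Markov property of $\Stratof{\time}$ to show that neighborhoods of pure strategies are visited at arbitrarily late times, then pigeonhole over the finite set $\pures_{\play}$. The only cosmetic difference is that you condition at deterministic integer times and invoke the ordinary Markov property, whereas the paper refers to the strong Markov property; both implementations are valid and yield the same conclusion.
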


Intuitively, under \eqref{eq:FTRL-stoch}, $\Stratof[\play]{\time}$ reaches a neighborhood of some pure strategy in finite time.
Even if $\Stratof[\play]{\time}$ does not settle there, the strong Markov property of $\Stratof{\time}$ and a repeated application of \cref{thm:purehit} shows that $\Stratof[\play]{\time}$ approaches some other pure strategy, and so on ad infinitum, which ultimately yields the stated result.
In fact, arguing in a similar way, we have the following guarantee:

\begin{corollary}
\label{cor:limset-all}
Suppose \cref{asm:noisemin,asm:hgrowth} hold.
Then, \acl{wp1}, there exists a \textpar{random} sequence $\time_{\run}\nearrow\infty$ such that $\Stratof{\time_{\run}}$ converges to the boundary $\bd(\strats)$ of $\strats$.
\end{corollary}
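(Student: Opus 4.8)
The plan is to exploit the fact that the profile $\Stratof{\time}$ lies near the boundary of the \emph{product} space $\strats = \prod_{\play}\strats_{\play}$ as soon as a \emph{single} player sits near the relative boundary of their own simplex. First I would record the geometric reduction: since the players' coordinate blocks are mutually orthogonal and $\bd(\strats) = \bigcup_{\play}\setdef{\strat\in\strats}{\strat_{\play}\in\bd(\strats_{\play})}$, the distance to the boundary splits as $\dist(\strat,\bd(\strats)) = \min_{\play}\dist(\strat_{\play},\bd(\strats_{\play})) \leq \dist(\strat_{1},\bd(\strats_{1}))$, and an elementary simplex computation gives $\dist(\strat_{\play},\bd(\strats_{\play})) \leq \const_{\play}\min_{\pure_{\play}}\strat_{\play\pure_{\play}}$ for a dimensional constant $\const_{\play}$. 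Consequently, whenever $\max_{\pure_{1}}\strat_{1\pure_{1}}\geq 1-\thres$ (so that $\min_{\pure_{1}}\strat_{1\pure_{1}}\leq\thres$), we get $\dist(\strat,\bd(\strats))\leq\const_{1}\thres$. It thus suffices to manufacture an increasing sequence of times at which player~$1$ is $\thres_{\run}$-close to one of their vertices, with $\thres_{\run}\downarrow 0$.

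To build that sequence I would iterate \cref{thm:purehit} with the strong Markov property, exactly as in the proof sketched for \cref{cor:limset-one}. Fixing a null sequence $\thres_{\run}\downarrow 0$ (eventually below the ``sufficiently small'' threshold of \cref{thm:purehit}), the hitting time $\stoptime_{1,\thres_{1}}$ is finite \as, and I set $\time_{1}\defeq\stoptime_{1,\thres_{1}}$. Restarting the diffusion from $\Stratof{\time_{1}}$---legitimate because well-posedness (\cref{app:dynamics}) makes $\Stratof{\time}$ a strong Markov process---and re-applying \cref{thm:purehit} with threshold $\thres_{2}$ yields a finite time $\time_{2}>\time_{1}$ with $\max_{\pure_{1}}\Stratof[1\pure_{1}]{\time_{2}}\geq 1-\thres_{2}$; iterating produces the $\time_{\run}$. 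To guarantee $\time_{\run}\nearrow\infty$ rather than accumulation at a finite value, I would insert a fixed waiting period before each restart, so that $\time_{\run+1}\geq\time_{\run}+1$. Since every hitting time in this construction is finite with probability $1$, the procedure succeeds on an event of full measure. (Alternatively, one can read the sequence straight off \cref{cor:limset-one} applied to player~$1$: along the times where $\Stratof[1]{\cdot}$ converges to a vertex, $\min_{\pure_{1}}\Stratof[1\pure_{1}]{\cdot}\to 0$ automatically.)

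Combining the two ingredients, along $\time_{\run}$ one has $\dist(\Stratof{\time_{\run}},\bd(\strats))\leq\const_{1}\thres_{\run}\to 0$, i.e.\ $\Stratof{\time_{\run}}\to\bd(\strats)$, which is the claim.

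I do not expect a serious obstacle here: the statement is essentially a corollary of \cref{thm:purehit} together with \cref{cor:limset-one}. The only points requiring genuine care are the strong Markov restart used to chain the hitting times and the inserted waiting periods that force $\time_{\run}$ to diverge---both controlled by the a.s.\ finiteness of the hitting times---together with the routine simplex-geometry bound relating $\dist(\argdot,\bd(\strats_{\play}))$ to the minimal coordinate, which is the sole new ingredient beyond \cref{cor:limset-one}.
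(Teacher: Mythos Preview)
Your proposal is correct and follows essentially the same approach as the paper: the paper only sketches that one ``argu[es] in a similar way'' to \cref{cor:limset-one}, namely by iterating \cref{thm:purehit} via the strong Markov property, and you have simply spelled out the details (the geometric reduction to a single player's simplex, the shrinking thresholds $\thres_{\run}\downarrow 0$, and the inserted waiting periods to force $\time_{\run}\nearrow\infty$). Nothing further is needed.
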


As in the case of \cref{thm:purehit}, this assertion is demonstrably false in a deterministic context:
focusing again on Matching Pennies, periodicity implies that the orbits of \eqref{eq:FTRL} are bounded away from the boundary of $\strats$, so the conclusion of \cref{cor:limset-all} is false in this case.
In fact, putting everything together, we obtain a much stronger characterization of the possible limits of the players' learning process:


\begin{hilite}
\begin{corollary}
\label{cor:pure}
Suppose \cref{asm:noisemin,asm:hgrowth} hold.
If
\begin{equation}
\label{eq:pure}
\probwrt{\strat}{\lim\nolimits_{\time\to\infty} \Stratof{\time} = \limpoint}
	> 0
\end{equation}
then $\limpoint$ must be pure;
in words, the only possible limits of \eqref{eq:FTRL-stoch} are pure strategies.
\end{corollary}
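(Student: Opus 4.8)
The plan is to obtain the claim as a direct logical consequence of \cref{cor:limset-one} via a conditioning argument; no new analytic input is required. The guiding intuition is that if the \emph{entire} profile $\Stratof{\time}$ converges to a point $\limpoint$, then the limit set of each coordinate $\Stratof[\play]{\time}$ collapses to the singleton $\setof{\limpoint_{\play}}$, whereas \cref{cor:limset-one} forces that limit set to contain a pure strategy---so $\limpoint_{\play}$ must itself be pure for every $\play\in\players$.

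Concretely, I would begin by assuming $\probwrt{\strat}{A} > 0$ for the event $A \defeq \setof{\lim_{\time\to\infty}\Stratof{\time} = \limpoint}$, and then invoke \cref{cor:limset-one} to produce the almost-sure event $B$ on which, for every player $\play\in\players$, the limit set of $\Stratof[\play]{\time}$ contains some pure strategy. Since $\probwrt{\strat}{B} = 1$, we have $\probwrt{\strat}{A\cap B} = \probwrt{\strat}{A} > 0$; in particular $A\cap B \neq \emptyset$, so it suffices to argue on an arbitrary sample path $\sample\in A\cap B$.

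On such a path, membership in $A$ yields $\Stratof[\play]{\time}\to\limpoint_{\play}$ for each $\play$, whence the limit set of $\Stratof[\play]{\time}$ is exactly the singleton $\setof{\limpoint_{\play}}$; membership in $B$ asserts that this same set contains a pure strategy. A singleton can contain a pure strategy only if its unique element \emph{is} that pure strategy, so $\limpoint_{\play}\in\pures_{\play}$ (identified with the corresponding vertex of $\strats_{\play}$). As $\play$ was arbitrary, $\limpoint = (\limpoint_{1},\dotsc,\limpoint_{\nPlayers})$ is a pure strategy profile, which is precisely the assertion.

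I do not anticipate any genuine obstacle here, as the statement is essentially a repackaging of \cref{cor:limset-one}: the only point requiring care is the measure-theoretic bookkeeping needed to intersect the almost-sure conclusion of \cref{cor:limset-one} with the positive-probability event $A$ without it becoming vacuous, and this is handled by the elementary identity $\probwrt{\strat}{A\cap B} = \probwrt{\strat}{A}$ valid whenever $\probwrt{\strat}{B}=1$.
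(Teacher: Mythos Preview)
Your argument is correct and is precisely the derivation the paper has in mind: the corollary is presented without a separate proof, simply as the consequence of ``putting everything together'' after \cref{cor:limset-one,cor:limset-all}, and your conditioning argument---intersecting the positive-probability convergence event with the almost-sure event from \cref{cor:limset-one} to force each $\limpoint_{\play}$ to be the pure strategy in the (singleton) limit set---is exactly how that deduction goes.
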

\end{hilite}

A revealing illustration of \cref{cor:pure} is the Entry Deterrence game in \cref{fig:orbits}:
in the deterministic case, around half of the initializations of \eqref{eq:FTRL} converge to non-pure states;
however, under the slightest degree of uncertainty, convergence to non-pure profiles is no longer possible.

All this prompts the following important questions:
\begin{enumerate}
\item
Are \emph{all} pure strategies valid candidates for convergence?
\item
More generally, which pure strategies are present in sets that are stable and attracting under \eqref{eq:FTRL-stoch}?
\end{enumerate}
We devote the rest of this section to these two questions;
all relevant proofs are deferred to \cref{app:results}.

\subsection{The support of stable attractors}
\label{sec:club}

We begin with the second question, which is more general.
To that end, we first need to define the notion of ``stable and attracting'':

\begin{restatable}{definition}{SAS}
\label{def:SAS}
A subset $\set\subseteq\strats$ is said to be \define{stochastically asymptotically stable} under \eqref{eq:FTRL-stoch} if, for every tolerance 
level $\toler>0$ and every neighborhood $\nhd$ of $\set$, there exists a 
neighborhood $\nhd_{\toler}$ of $\set$ such that $\Stratof{\tstart} = \point\in\nhd_{\toler}$ implies
\begin{equation}
\label{eq:SAS}
\!
\probwrt*{\strat}{\text{$\Stratof{\time} \in \nhd$ for all $\time\geq0$ and $\Stratof{\time}\xrightarrow[\time\to\infty]{}\set$}}
	\geq 1-\toler
\end{equation}
\end{restatable}

With this in mind, let $\subpures = \prod_{\play}\subpures_{\play}$, $\subpures_{\play} \subseteq \pures_{\play}$,  be the set of pure strategies contained in some stable attractor of \eqref{eq:FTRL-stoch}, and let $\set = \prod_{\play} \simplex(\subpures_{\play})$ denote the \define{span} of $\subpures$, \ie the smallest face of $\strats$ that contains all strategies in $\subpures$.
We then ask:
\begin{quote}
\centering
\itshape
When is $\set$ stochastically asymptotically\\
stable under \eqref{eq:FTRL-stoch}?
\end{quote}
As it turns out, this question admits a remarkably simple and compelling answer that hinges on a criterion of strategic stability known as \acli{closedness}.
To make this precise, following \citet{RW95},
the set of \define{better replies} to $\strat\in\strats$ is defined as
\begin{equation}
\label{eq:better}
\!\!
\btr(\strat)
	= \setdef
		{\devstrat\in\strats}
		{\pay_{\play}(\devstrat_{\play};\strat_{-\play}) \geq \pay_{\play}(\strat) \: \text{for all $\play\in\players$}}
\end{equation}
and $\set$ is \acdef{closed} if $\btr(\set) \subseteq \set$.
We then have the following striking equivalence:

\begin{hilite}
\begin{restatable}{theorem}{StableClub}
\label{thm:club}
Suppose \cref{asm:hfinite} holds.
Then, with notation as above, $\set$ is stochastically asymptotically stable under \eqref{eq:FTRL-stoch} if and only if it is \acl{closed}.
If this is so, there exists an open initialization domain $\nhd_{0} \subseteq \strats$ such that, whenever $\Stratof{0}\in\nhd_{0}$, we have with arbitrarily high probability
\begin{equation}
\dist_{1}(\Stratof{\time},\set)
	\lesssim \rate\bracks*{ c_1 - c_2 \time + \bigoh\of*{\diffmat_{\max} \sqrt{\time \log \log \time}}}
\end{equation}
where $c_{1}$ and $c_{2}$ are constants \textpar{$c_{2} > 0$},
and the rate function $\rate$ is given by $\rate(z) = \max_{\play \in \players}(\hker_\play')^{-1}(z)$.
\end{restatable}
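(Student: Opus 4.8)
The plan is to reduce the stability question to a one-sided drift estimate for the \emph{scores} of the out-of-club strategies, and then to let the mirror maps transport this estimate back to strategy space, where the inverse marginals $(\hker_{\play}')^{-1}$---and hence $\rate$---appear on their own. The first step is to extract the analytic content of being club. Since the payoff field $\payfield_{\play\purealt_{\play}}(\strat)$ depends only on $\strat_{-\play}$, the club property of the product face $\set=\prod_{\play}\simplex(\subpures_{\play})$---namely that no out-of-club pure strategy is a better reply on $\set$---is equivalent, by freezing $\strat_{-\play}$ and sweeping $\strat_{\play}$ over the vertices of $\simplex(\subpures_{\play})$, to the pointwise domination $\payfield_{\play\pure_{\play}}(\strat)<\payfield_{\play\purealt_{\play}}(\strat)$ for every $\strat\in\set$, every out-of-club $\pure_{\play}\notin\subpures_{\play}$, and every in-club $\purealt_{\play}\in\subpures_{\play}$. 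Because there are finitely many such pairs and $\payfield$ is continuous on the compact set $\set$, this upgrades to a uniform gap: there exist $c_{2}>0$ and a neighborhood $\nhd$ of $\set$ with $\payfield_{\play\purealt_{\play}}(\strat)-\payfield_{\play\pure_{\play}}(\strat)\geq 2c_{2}$ for all $\strat\in\nhd$, all in-club $\purealt_{\play}$, and all out-of-club $\pure_{\play}$.

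Next I would pass to score space. Steepness of $\hker_{\play}$ forces the maximizer in \eqref{eq:mirror} to be interior, so there is a multiplier $\mu_{\play}(\Scoreof[\play]{\time})$ with $\hker_{\play}'(\Stratof[\play\pure_{\play}]{\time})=\Scoreof[\play\pure_{\play}]{\time}-\mu_{\play}$; equivalently $\Stratof[\play\pure_{\play}]{\time}=(\hker_{\play}')^{-1}(\Delta_{\play\pure_{\play}}(\time))$ for the score lag $\Delta_{\play\pure_{\play}}:=\Scoreof[\play\pure_{\play}]{\time}-\mu_{\play}$. Differentiating the normalization $\sum_{\purealt_{\play}\in\pures_{\play}}(\hker_{\play}')^{-1}(\Scoreof[\play\purealt_{\play}]{\time}-\mu_{\play})=1$ (or reading off \cref{prop:FTRL-strat}), the finite-variation part of $\mu_{\play}$ moves at the reply-weighted rate $\sum_{\purealt_{\play}}\hstrat_{\play\purealt_{\play}}\payfield_{\play\purealt_{\play}}$, up to a bounded Itô correction. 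Hence the drift of $\Delta_{\play\pure_{\play}}$ is $\payfield_{\play\pure_{\play}}-\sum_{\purealt_{\play}}\hstrat_{\play\purealt_{\play}}\payfield_{\play\purealt_{\play}}$ plus a correction; since the weights $\hstrat_{\play\purealt_{\play}}$ concentrate on in-club coordinates near $\set$, the uniform gap forces a drift $\leq -c_{2}$ throughout $\nhd$, while the noise contributes a continuous martingale $N_{\play\pure_{\play}}$ with $\langle N_{\play\pure_{\play}}\rangle_{\time}\lesssim\diffmat_{\max}^{2}\,\time$.

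Both halves of the statement then follow from this drift bound. For the \textbf{rate}, I would integrate to get $\Delta_{\play\pure_{\play}}(\time)\leq c_{1}-c_{2}\time+N_{\play\pure_{\play}}(\time)$ on the survival event, apply the law of the iterated logarithm to bound $N_{\play\pure_{\play}}(\time)=\bigoh\of*{\diffmat_{\max}\sqrt{\time\log\log\time}}$, and push this through the increasing map $(\hker_{\play}')^{-1}$; summing the finitely many out-of-club coordinates and bounding each marginal by $\rate=\max_{\play}(\hker_{\play}')^{-1}$ yields $\dist_{1}(\Stratof{\time},\set)\lesssim\rate\bracks*{c_{1}-c_{2}\time+\bigoh\of*{\diffmat_{\max}\sqrt{\time\log\log\time}}}$. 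For \textbf{stochastic asymptotic stability} (\cref{def:SAS}), I would localize at the exit time $\stoptime=\inf\setdef{\time\geq0}{\Stratof{\time}\notin\nhd}$: on $[0,\stoptime)$ the out-of-club mass $\sum_{\play}\sum_{\pure_{\play}\notin\subpures_{\play}}\Stratof[\play\pure_{\play}]{\time}$ (comparable to $\dist_{1}(\Stratof{\time},\set)$) is a supermartingale by the negative drift, so a maximal inequality bounds the exit probability by a multiple of the initial out-of-club mass; choosing $\nhd_{\toler}$ small makes this at most $\toler$, and on the non-exit event the same negative drift drives the mass to $0$, giving convergence to $\set$. This shows that $\set$ club implies stability.

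For the converse, if $\set$ is not club then there are $\strat^{0}\in\set$, a player $\play$, and an out-of-club $\pure_{\play}$ with $\payfield_{\play\pure_{\play}}(\strat^{0})\geq\pay_{\play}(\strat^{0})$, so near $\strat^{0}$ the drift of $\Delta_{\play\pure_{\play}}$ is nonnegative; invoking \cref{asm:noisemin} together with the boundary-visiting mechanism behind \cref{thm:purehit,cor:limset-all}, the coordinate $\Stratof[\play\pure_{\play}]{\time}$ cannot be driven to $0$, so from every neighborhood of $\set$ the trajectory escapes with probability bounded away from $0$ and fails to converge to $\set$---hence $\set$ is not attracting. The main obstacle I anticipate is the coupling in the second paragraph: controlling the Itô correction carried by the multiplier $\mu_{\play}$ and certifying the uniform negative drift \emph{simultaneously} with the supermartingale exit estimate, all while the trajectory hugs the boundary of $\strats$. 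This is precisely where \cref{asm:hfinite} enters, keeping $\mu_{\play}$, the marginal gap $\hker_{\play}'(1)$, and the quadratic variation $\langle N_{\play\pure_{\play}}\rangle$ bounded; the law-of-the-iterated-logarithm step is then routine, but the uniform-in-$\nhd$ drift and exit bounds will require care.
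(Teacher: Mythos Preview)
Your high-level plan---extract a uniform payoff gap from the club property, turn it into a one-sided drift in score space, and read off stability and the rate via the law of the iterated logarithm---is the same as the paper's. The differences are in the choice of Lyapunov coordinates, and these differences create two genuine gaps.

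\textbf{Gap 1 (stability via the out-of-club mass).} Your supermartingale claim for $\sum_{\play}\sum_{\pure_{\play}\notin\subpures_{\play}}\Stratof[\play\pure_{\play}]{\time}$ does not hold in general. Since $\hker_{\play}'''<0$, the map $(\hker_{\play}')^{-1}$ is convex, so Itô's formula on $\Stratof[\play\pure_{\play}]{\time}=(\hker_{\play}')^{-1}(\Delta_{\play\pure_{\play}}(\time))$ produces a \emph{positive} second-order term that can dominate the negative drift of $\Delta$ when the noise is large; the mass is then not a supermartingale and the maximal-inequality step fails. The paper avoids this by working with the Bregman energies $E_{\play\pure_{\play}}(\strat)=\breg_{\play}(e_{\play\pure_{\play}},\strat_{\play})$ via their Fenchel representation: the drift of $E_{\play\pure_{\play}}$ equals $\pay_{\play}(\Strat)-\pay_{\play}(\pure_{\play};\Strat_{-\play})+\tfrac12\tr\!\bigl(\Jac\mirror_{\play}(\Score_{\play})\,\covmat_{\play}\bigr)$, and the Itô correction is \emph{always} nonnegative, so the payoff gap alone gives $dE_{\play\pure_{\play}}\geq m\,d\time+d\xi_{\play\pure_{\play}}$. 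Stability is then obtained not by a maximal inequality on the mass but by a hitting-time reduction: one shows $\{\min E_{\play\pure_{\play}}\leq M\}\subseteq\{\xi_{\play\pure_{\play}}(\time)=-M-m\time\ \text{for some }\play,\pure_{\play}\}$, time-changes each $\xi_{\play\pure_{\play}}$ to a Brownian motion, and uses the explicit formula $\prob(\text{BM hits }-M-\tfrac{m}{2\diffmat_{\max}^{2}}\time)=e^{-mM/\diffmat_{\max}^{2}}$ to get $\prob(\text{exit})\leq|\subsubpures|\,e^{-mM/\diffmat_{\max}^{2}}$. \Cref{asm:hfinite} enters only through a separate lemma showing $E_{\play\pure_{\play}}(\strat)$ and $-\hker_{\play}'(\strat_{\play\pure_{\play}})$ differ by bounded constants, so that sublevel sets of the energies are genuine neighborhoods of $\set$.

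\textbf{Gap 2 (the converse).} Your claim that ``near $\strat^{0}$ the drift of $\Delta_{\play\pure_{\play}}$ is nonnegative'' is not justified: that drift is $\payfield_{\play\pure_{\play}}-\sum_{\purealt_{\play}}\hstrat_{\play\purealt_{\play}}\payfield_{\play\purealt_{\play}}$ with weights $\hstrat_{\play\purealt_{\play}}=g_{\play\purealt_{\play}}/G_{\play}$, whereas the failure of club only gives $\payfield_{\play\pure_{\play}}(\strat^{0})\geq\pay_{\play}(\strat^{0})=\sum_{\purealt_{\play}}\strat^{0}_{\play\purealt_{\play}}\payfield_{\play\purealt_{\play}}(\strat^{0})$. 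Since $\hstrat\neq\strat$ for general regularizers, the inequality does not transfer. The paper's fix is to pick a \emph{pure} witness $\pure\in\subpures$ and $\pure_{\play}'\notin\subpures_{\play}$ with $\pay_{\play}(\pure_{\play}';\pure_{-\play})\geq\pay_{\play}(\pure)$, restrict \eqref{eq:FTRL-stoch} to the invariant edge spanned by $\pure$ and $(\pure_{\play}';\pure_{-\play})$ (invariance uses steepness), and work with the score difference $\Scoreof[\play\pure_{\play}]{\time}-\Scoreof[\play\pure_{\play}']{\time}$, whose drift $\pay_{\play}(\pure)-\pay_{\play}(\pure_{\play}';\pure_{-\play})\leq0$ is exact. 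The law of the iterated logarithm on the martingale part (with $[\xi](\time)\geq2\diffmat_{\min}^{2}\time$) then forces $\liminf_{\time}\bigl(\Scoreof[\play\pure_{\play}]{\time}-\Scoreof[\play\pure_{\play}']{\time}\bigr)=-\infty$, hence $\Stratof[\play\pure_{\play}]{\time_{\run}}\to0$ along a subsequence, contradicting stability at $e_{\pure}\in\set$.
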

\end{hilite}

\Cref{thm:club} provides a sharp characterization of which spans of pure strategies are stable attractors of regularized learning, and how fast the dynamics converge to such sets.
In this regard, \cref{thm:club} echoes
\begin{enumerate*}
[\upshape(\itshape i\hspace*{.5pt}\upshape)]
\item
a classical finding by \citet{RW95} for \eqref{eq:RD} in a \emph{deterministic}, noiseless context;%
\footnote{Importantly, \cref{thm:club} also covers the deterministic case $\diffmat=0$;
we are not aware of another source for this result in the literature.}
and
\item
a more recent result by \citet{BM23} for a range of regularized learning algorithms in \emph{discrete} time.%
\footnote{We note here in passing that there is a technical gap in the proof of \cite{BM23} regarding stochastic asymptotic stability (and, more precisely, on establishing convergence for a \emph{neighborhood} of initial conditions).
We detail the issue in \cref{app:results}.}
\end{enumerate*}

In this regard, \cref{thm:club} might appear unsurprising, but this is not so.
The stochastic dynamics \eqref{eq:SRD-AS} and \eqref{eq:SRD-PI} are also stochastic variants of \eqref{eq:RD}, but \cref{thm:club} is false in both cases:
in the former because \define{any} pure strategy may be stochastically asymptotically stable depending on the profile of the noise \citep{Imh05,HI09};
in the latter because, in any game, \emph{all} pure strategies are stochastically asymptotically stable 
for high enough noise \citep{MV16}.
This disconnect has to do with the Itô correction term that appears in the dynamics:
in the case of \eqref{eq:FTRL-strat}, this term is ``just right'' so, even though uncertainty draws the dynamics toward the boundary, there is sufficient information remaining to identify the rationally admissible parts thereof.

Our proof strategy involves working directly with the score variables $\Scoreof{\time}$, and hinges on establishing a sort of local ``domination'' result for those strategies that are not supported in $\set$ versus those that are.
This has to be established in a neighborhood of $\set$, and doing this requires \cref{asm:hfinite} in order to establish a lower and upper bounds for the distance from $\set$ relative to the ensuing score differences.
This also requires using a different set of energy functions associated to $\set$, a step which in turn involves several fairly technical estimates.
To streamline our presentation, we defer all relevant details to \cref{app:proof_stable_club}.

\subsection{Nash equilibria and convergence}
\label{sec:NashStable}

We now turn to our first question, namely the characterization of the possible limits of \eqref{eq:FTRL-stoch}.
Our main result here is as follows:

\begin{hilite}
\begin{restatable}{theorem}{NashCV}
\label{thm:Nash}
Suppose \cref{asm:noisemin,asm:hgrowth,asm:hfinite} hold.
Then:
\begin{enumerate}
\item
$\eq\in\strats$ is stochastically asymptotically stable under \eqref{eq:FTRL-stoch} if and only if it is a strict equilibrium.
\item
If $\probwrt{\strat}{\lim\nolimits_{\time\to\infty} \Stratof{\time} = \eq} > 0$, then $\eq$ is a pure \acl{NE};
in words, the only possible limits of \eqref{eq:FTRL-stoch} are pure \aclp{NE}.
\end{enumerate}
\end{restatable}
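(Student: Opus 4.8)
The plan is to derive both assertions from two structural facts already in hand: the ``only pure limits'' principle of \cref{cor:pure}, and the span-level stability dichotomy of \cref{thm:club}. Throughout, I identify ``$\eq$ is stochastically asymptotically stable'' with stochastic asymptotic stability of the singleton $\set=\{\eq\}$ in the sense of \cref{def:SAS}.

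\emph{Part~(1).} First I would note that a strict equilibrium is automatically pure: if two distinct actions both lay in $\supp(\eq_\play)$, then the strict inequality \eqref{eq:Nash-supp} would have to hold in both directions between them, which is impossible. Hence a strict $\eq$ is a vertex of $\strats$, i.e. the span of a single pure profile, so \cref{thm:club} applies to $\set=\{\eq\}$. The crux is then the elementary equivalence that $\{\eq\}$ is closed under better replies if and only if $\eq$ is strict: testing a unilateral pure deviation $\devstrat=(\purealt_\play;\eq_{-\play})$ in \eqref{eq:better} shows $\devstrat\in\btr(\eq)$ exactly when $\pay_\play(\purealt_\play;\eq_{-\play})\ge\pay_\play(\eq)$, so $\btr(\eq)\subseteq\{\eq\}$ holds iff every deviation is strictly worse, which is precisely strictness. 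The forward implication is then immediate: strict $\Rightarrow$ vertex with $\{\eq\}$ closed $\Rightarrow$ (\cref{thm:club}) stochastically asymptotically stable. For the converse, if $\{\eq\}$ is stochastically asymptotically stable then \cref{def:SAS} with any $\toler<1$ yields an initialization $\strat$ from which $\Stratof{\time}\to\eq$ with probability at least $1-\toler>0$; \cref{cor:pure} then forces $\eq$ to be pure, whereupon \cref{thm:club} identifies $\{\eq\}$ as closed, and the equivalence above gives strictness.

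\emph{Part~(2).} By \cref{cor:pure} any profile attained as a limit with positive probability is pure, so it remains to exclude pure $\eq$ that fail to be Nash. Suppose $\eq$ is pure, write $\pure_\play$ for the action in $\supp(\eq_\play)$, and suppose some player $\play$ has a profitable deviation $\purealt_\play$, i.e. $\payfield_{\play\purealt_\play}(\eq)>\payfield_{\play\pure_\play}(\eq)$. I would track the score gap $\Delta(\time)\defeq\Scoreof[\play\pure_\play]{\time}-\Scoreof[\play\purealt_\play]{\time}$ on the event $A=\{\Stratof{\time}\to\eq\}$. Two estimates collide. Since steep regularizers map into the relative interior of $\strats_\play$, the first-order conditions for $\mirror_\play$ hold coordinatewise and give $\Delta(\time)=\hker_\play'(\Stratof[\play\pure_\play]{\time})-\hker_\play'(\Stratof[\play\purealt_\play]{\time})$; on $A$ we have $\Stratof[\play\pure_\play]{\time}\to1$ and $\Stratof[\play\purealt_\play]{\time}\to0$, so steepness of $\hker_\play$ forces $\Delta(\time)\to+\infty$. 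On the other hand, \eqref{eq:FTRL-stoch} gives
\[
\Delta(\time)=\Delta(0)+\int_{0}^{\time}\bracks{\payfield_{\play\pure_\play}(\Stratof{\timealt})-\payfield_{\play\purealt_\play}(\Stratof{\timealt})}\dd\timealt+N(\time),
\]
where $N=\mart_{\play\pure_\play}-\mart_{\play\purealt_\play}$ is a continuous martingale whose quadratic variation is controlled by $\diffmat_{\max}^{2}\,\time$. By continuity of $\payfield$, on $A$ the integrand eventually lies below a negative constant, while the strong law for continuous martingales (equivalently, the law of the iterated logarithm) gives $N(\time)/\time\to0$ almost surely. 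Hence the negative drift dominates and $\Delta(\time)\to-\infty$ almost surely on $A$, contradicting $\Delta(\time)\to+\infty$. Therefore $\prob_{\strat}(A)=0$, so no non-Nash pure profile can be a limit, and the only possible limits of \eqref{eq:FTRL-stoch} are pure \aclp{NE}.

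I expect the load-bearing difficulty to be the martingale bookkeeping in Part~(2): the window of negative drift is entered at a \emph{random} time that is itself measurable with respect to $A$, so one cannot condition on $A$ naively. The way around this is to establish the almost-sure control $N(\time)=o(\time)$ (equivalently the $\bigoh(\diffmat_{\max}\sqrt{\time\log\log\time})$ bound that already surfaces in \cref{thm:club}) on the \emph{whole} probability space, where it is insensitive to $A$, and only afterwards intersect with $A$. The ancillary facts---interiority of $\mirror_\play$, the coordinatewise stationarity conditions, and boundedness of the diffusion coefficients on the compact set $\strats$---are routine consequences of the Legendre-type hypotheses on $\hker_\play$ and of the Lipschitz assumption on $\diffmat$.
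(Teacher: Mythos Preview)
Your proposal is correct and follows essentially the same approach as the paper: Part~(1) reduces to \cref{cor:pure} plus \cref{thm:club} via the elementary equivalence ``singleton face is \acl{closed} iff strict'', and Part~(2) is the score-difference martingale argument (the paper attributes this step to \cite{BM17} and reproduces it for completeness). The only cosmetic difference is that the paper's version of Part~(2) does not first invoke \cref{cor:pure} to reduce to pure $\eq$; it works directly with any $\pure_\play\in\supp(\eq_\play)$ and concludes $\Stratof[\play\pure_\play]{\time}\to0$ from $\Delta(\time)\to-\infty$, which contradicts $\eq_{\play\pure_\play}>0$---but your route via $\Delta(\time)\to+\infty$ on $A$ is equally valid and arguably cleaner once purity is in hand.
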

\end{hilite}

Since a pure profile is \acl{closed} if and only if it is a strict \acl{NE}, the first part of \cref{thm:Nash} is an immediate corollary of \cref{thm:club}.
Importantly, as was shown in \cite{FVGL+20}, a similar conclusion holds under the noiseless dynamics \eqref{eq:FTRL}---see also \citep{GVM21} for a discrete-time analogue.
On that account, the more interesting part of \cref{thm:Nash} is the second one, which represents a drastic departure from the deterministic regime:
it complements \cref{cor:pure} in an essential way, showing that the possible limits of \eqref{eq:FTRL-stoch} are not just pure strategies, but \emph{pure \aclp{NE}}.
In other words:
\begin{quote}
\centering
\itshape
If a game does not admit a pure \acl{NE},\\
the dynamics \eqref{eq:FTRL-stoch} do not converge.
\end{quote}
This assertion is unique to the stochastic setting and it offers a vivid demonstration of how ``uncertainty favors extremes''.


\section{Consequences for recurrent dynamics}
\label{sec:recurrence}

The results of the previous section show that, in the presence of noise, the dynamics of regularized learning tend to drift toward the boundary.
This behavior comes into stark contrast with the deterministic regime where, in several classes of games, the dynamics of \eqref{eq:FTRL} are known to be \define{Poincaré recurrent} \textendash\ that is, almost all trajectories
return infinitely often arbitrarily close to their starting point.
In this last section, our aim is to understand this disparity in greater depth, by zooming in on the behavior of \eqref{eq:FTRL-stoch} in cases where the deterministic dynamics are Poincaré recurrent.

To provide some context, Poincaré recurrence was first established for \eqref{eq:RD} in two-player zero-sum games with a fully mixed \acl{NE} \cite{PS14}.
This result was subsequently extended to all \ac{FTRL} dynamics, always for zero-sum games with a fully mixed \acl{NE} \cite{MPP18}, and finally, in a very recent paper \citep{LMPP+24}, to a much more general class of games known as \define{harmonic games} \cite{CMOP11}.
This is the widest class of games known to exhibit recurrent behavior under \eqref{eq:FTRL}, and they can be characterized as follows (\cf \cref{prop:harmonic-mixed} in \cref{app:harmonic}):
a game is harmonic whenever it admits a collection of positive weights $\mass_{\play} > 0$, $\play\in\players$, and a fully mixed strategy profile $\scenter\in\relint\strats$ \textendash\ the game's \define{harmonic center} \textendash\ such that
\begin{equation}
\sum_{\play\in\players} \mass_{\play} \braket{\payfield_{\play}(\strat)}{\strat_{\play} - \scenter_{\play}}
	= 0
	\quad
	\text{for all $\strat \in \strats$}
	\eqstop
\end{equation}

A key property of harmonic games is that \eqref{eq:FTRL} admits a \define{constant of motion} given by the expression
\begin{equation}
\label{eq:energy}
\energy(\strat)
	= \insum_{\play\in\players} \mass_{\play} \breg_{\play}(\scenter_{\play},\strat_{\play})
\end{equation}
where $\breg_{\play}(\scenter_{\play},\strat_{\play}) = \hreg_{\play}(\scenter_{\play}) - \hreg_{\play}(\strat_{\play}) - \braket{\nabla\hreg_{\play}(\strat_{\play})}{\scenter_{\play} - \strat_{\play}}$ is the so-called \define{Bregman divergence} of the regularizer of player $\play\in\players$.
To streamline our presentation, we defer a detailed presentation of Bregman divergences to \cref{app:fenchel};
for our purposes, it suffices to keep in mind that $\breg_{\play}$ can be seen as an asymmetric measure of distance between $\scenter_{\play}$ and $\strat_{\play}$, with $\breg_{\play}(\scenter_{\play},\strat_{\play}) = 0$ if and only if $\strat_{\play} = \scenter_{\play}$, and $\breg_{\play}(\scenter_{\play},\strat_{\play})\to\infty$ whenever $\strat_{\play}\to\bd(\strats_{\play})$.
As was shown in \cite{LMPP+24}, the energy $\energy(\stratof{\time})$ of an orbit $\stratof{\time}$ of \eqref{eq:FTRL} remains constant, so the trajectories of \eqref{eq:FTRL} in harmonic games are contained away from the boundary $\bd(\strats)$ of $\strats$, \cf \cref{fig:orbits}.

By contrast, the behavior of \eqref{eq:FTRL-stoch} in harmonic games is drastically different, even with the least amount of noise:%

\begin{hilite}
\begin{restatable}{theorem}{Harmonic}
\label{thm:harmonic}
Suppose \cref{asm:noisemin} holds.
Then, in any harmonic game, we have
$\lim_{\time\to\infty}\exof{\energy(\Stratof{\time})} = \infty$.
Moreover, the time $\stoptime_{\level} \defeq \inf\setdef{\time \geq 0}{\energy(\Stratof{\time}) > \level}$ that $\Stratof{\time}$ takes to reach an energy level $\level>0$ is finite \acl{wp1} and bounded in expectation as
\begin{equation}
\label{eq:energy-bound}
\exwrt{\strat}{\stoptime_{\level}}
	\leq 2\frac{\level - \energy(\strat)}{\diffmat_{\min}^{2} \thres(\level)}
\end{equation}
where
\(
\thres(\level) = \min\setdef{ \sum_{\play} \mass_\play \tr(\Jac \mirror(\score)) }{ \strat_{\play} = \mirror_{\play}(\score_{\play}), \energy(\strat) \leq \level}
\)
is a positive constant.
\end{restatable}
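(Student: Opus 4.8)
The plan is to bypass the cumbersome strategy-space dynamics \eqref{eq:FTRL-strat} altogether and run the whole argument in the score coordinates $\Scoreof{\time}$, where the driving equation \eqref{eq:error-nocoords} is at its simplest. The first step is to recognize, via the Fenchel-coupling machinery of \cref{app:fenchel}, that the energy \eqref{eq:energy} is a \emph{smooth} function of the scores: writing $\fench_{\play}(\scenter_{\play},\score_{\play}) = \hreg_{\play}(\scenter_{\play}) + \hconj_{\play}(\score_{\play}) - \braket{\score_{\play}}{\scenter_{\play}}$ for the Fenchel coupling, one has $\energy(\Stratof{\time}) = \sum_{\play}\mass_{\play}\fench_{\play}(\scenter_{\play},\Scoreof[\play]{\time})$ whenever $\Stratof[\play]{\time} = \mirror_{\play}(\Scoreof[\play]{\time})$. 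The two elementary identities powering everything are $\nabla_{\score_{\play}}\fench_{\play} = \mirror_{\play}(\score_{\play}) - \scenter_{\play}$ and $\nabla^{2}_{\score_{\play}}\fench_{\play} = \Jac\mirror_{\play}(\score_{\play})$, both immediate from $\mirror_{\play} = \nabla\hconj_{\play}$.

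With this reformulation, the second step is a direct application of Itô's lemma to the $C^{2}$ map $\score\mapsto\sum_{\play}\mass_{\play}\fench_{\play}(\scenter_{\play},\score_{\play})$ along \eqref{eq:error-nocoords}. Since $\fench_{\play}$ depends only on $\score_{\play}$, the Hessian is block-diagonal across players, and the decomposition reads
\begin{equation}
\begin{split}
d\energy(\Stratof{\time})
 &= \sum_{\play}\mass_{\play}\braket{\Stratof[\play]{\time} - \scenter_{\play}}{\payfield_{\play}(\Stratof{\time})}\dd\time \\
 &\quad + \tfrac{1}{2}\sum_{\play}\mass_{\play}\tr\bracks{\Jac\mirror_{\play}(\Scoreof[\play]{\time})\,\covmat_{\play}(\Stratof{\time})}\dd\time + d\wilde\mart_{\time},
\end{split}
\end{equation}
where $\wilde\mart_{\time}$ is a continuous local martingale. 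The crucial observation is that the first (drift) term is \emph{exactly} the defining identity of harmonic games and therefore vanishes identically---this is precisely the statement that $\energy$ is a constant of motion for the noiseless dynamics. What survives is the Itô correction, which is nonnegative because $\Jac\mirror_{\play} = \nabla^{2}\hconj_{\play}\succeq0$.

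The third step turns this into quantitative positivity. As $\covmat_{\play}$ is a principal submatrix of $\covmat$, \cref{asm:noisemin} gives $\covmat_{\play}\succeq\diffmat_{\min}^{2}\eye$; combined with $\Jac\mirror_{\play}\succeq0$ and the trace inequality $\tr\bracks{AB}\geq\lambda_{\min}(B)\tr\bracks{A}$ for $A\succeq0$, this yields the drift bound $\tfrac12\sum_{\play}\mass_{\play}\tr\bracks{\Jac\mirror_{\play}\covmat_{\play}} \geq \tfrac12\diffmat_{\min}^{2}\sum_{\play}\mass_{\play}\tr\bracks{\Jac\mirror_{\play}}$, so $\energy(\Stratof{\time})$ is a submartingale with strictly positive drift. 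For the hitting-time bound I would stop at $\stoptime_{\level}$: on $[0,\stoptime_{\level}]$ the process stays in the sublevel set $\setdef{\strat}{\energy(\strat)\leq\level}$, which is a \emph{compact} subset of $\relint\strats$ since $\breg_{\play}(\scenter_{\play},\argdot)\to\infty$ at the boundary; there the noise coefficients are bounded (making $\wilde\mart_{\time}$ a true martingale on the stopped interval) and $\sum_{\play}\mass_{\play}\tr\bracks{\Jac\mirror_{\play}}$ attains a positive minimum, namely $\thres(\level)>0$. Applying Dynkin's formula to the stopped submartingale and using $\energy(\Stratof{\time\wedge\stoptime_{\level}})\leq\level$ gives $\level\geq\energy(\strat) + \tfrac12\diffmat_{\min}^{2}\thres(\level)\,\exwrt{\strat}{\time\wedge\stoptime_{\level}}$; rearranging and letting $\time\to\infty$ by monotone convergence yields \eqref{eq:energy-bound} and, in particular, $\stoptime_{\level}<\infty$ \as

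Finally, the divergence $\lim_{\time\to\infty}\exof{\energy(\Stratof{\time})}=\infty$ follows softly from the hitting-time result. Because $\energy\geq0$ and the drift is positive, $\time\mapsto\exof{\energy(\Stratof{\time})}$ is nondecreasing, hence convergent to some $L\in(0,\infty]$; were $L$ finite, $\energy(\Stratof{\time})$ would be a nonnegative $L^{1}$-bounded submartingale, thus convergent \as to a finite limit by the submartingale convergence theorem, forcing $\sup_{\time}\energy(\Stratof{\time})<\infty$ \as---in direct contradiction with $\stoptime_{\level}<\infty$ \as for every $\level$. The main obstacle lies in the second and third steps: carrying out the Itô expansion cleanly enough to expose the \emph{exact} cancellation of the harmonic drift, and then certifying that $\thres(\level)$ is a genuine positive constant, which hinges on confirming that the sublevel sets of $\energy$ are compactly contained in $\relint\strats$ and that $\Jac\mirror_{\play}$ is nondegenerate there.
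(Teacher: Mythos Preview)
Your proof is correct and follows essentially the paper's route: recast $\energy$ via the Fenchel coupling, apply It\^o in the score variables, observe that the harmonic identity kills the first-order drift, lower-bound the It\^o correction by $\tfrac12\diffmat_{\min}^{2}\sum_{\play}\mass_{\play}\tr(\Jac\mirror_{\play})$, and run Dynkin on the stopped process to get \eqref{eq:energy-bound}. The only divergence from the paper is in how you extract $\lim_{\time\to\infty}\exof{\energy(\Stratof{\time})}=\infty$. The paper argues directly on the generator: if the limit were finite, then $\int_{0}^{\infty}\gen\fench(\Scoreof{\timealt})\dd\timealt<\infty$ \as, forcing $\tr(\Jac\mirror_{\play}(\Scoreof[\play]{\time}))\to0$ along a subsequence, hence $\Stratof{\time}\to\bd(\strats)$ and $\energy(\Stratof{\time})\to\infty$, a contradiction. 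You instead invoke the hitting-time bound you have just proved together with the submartingale convergence theorem: if $\sup_{\time}\exof{\energy(\Stratof{\time})}<\infty$ then $\energy(\Stratof{\time})$ converges \as to a finite limit, contradicting $\stoptime_{\level}<\infty$ \as for every $\level$. Your variant is arguably cleaner since it avoids the intermediate step of linking vanishing $\tr(\Jac\mirror_{\play})$ to boundary convergence; the paper's argument, on the other hand, is self-contained and does not rely on first establishing the hitting-time estimate.
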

\end{hilite}

In particular, if applied to the entropic regularization setup of \cref{ex:logit}, \cref{thm:harmonic} yields the following estimate.

\begin{corollary}
Under \eqref{eq:SRD-EW}, we have:
\begin{equation}
\exwrt{\strat}{\stoptime_{\level}}
	\lesssim \level e^{\lambda\level}/ (\lambda\diffmat_{\min})
	\quad
	\text{for some $\lambda>0$}.
\end{equation}
\label{cor:harmonic_EW}
\end{corollary}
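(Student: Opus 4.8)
The plan is to read off \cref{cor:harmonic_EW} as a quantitative specialization of \cref{thm:harmonic} to the entropic setup of \cref{ex:logit}: the only game-independent object that needs to be made explicit is the geometric constant $\thres(\level)$ appearing in the bound \eqref{eq:energy-bound}, and the whole task reduces to computing it for the logit map and lower-bounding its decay in $\level$. First I would differentiate the logit choice map \eqref{eq:logit}. A direct calculation gives $\partial\strat_{\play\pure_\play}/\partial\score_{\play\purealt_\play} = \strat_{\play\pure_\play}(\delta_{\pure_\play\purealt_\play} - \strat_{\play\purealt_\play})$, whence
\begin{equation}
\tr(\Jac\logit_\play(\score_\play))
	= \sum_{\pure_\play\in\pures_\play} \strat_{\play\pure_\play}(1 - \strat_{\play\pure_\play})
	= 1 - \twonorm{\strat_\play}^{2},
	\qquad \strat_\play = \logit_\play(\score_\play).
\end{equation}
Since $\logit_\play$ is a bijection onto $\relint\strats_\play$, the minimization defining $\thres(\level)$ becomes $\thres(\level) = \min\setdef{\sum_\play \mass_\play(1 - \twonorm{\strat_\play}^{2})}{\energy(\strat)\le\level}$, taken over $\strat\in\relint\strats$.

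Next I would identify the energy. For the entropic kernel $\hker_\play(z) = z\log z$, the Bregman divergence in \eqref{eq:energy} is the relative entropy, $\breg_\play(\scenter_\play,\strat_\play) = \sum_{\pure_\play}\scenter_{\play\pure_\play}\log(\scenter_{\play\pure_\play}/\strat_{\play\pure_\play})$, so $\energy(\strat) = \sum_\play \mass_\play\,\breg_\play(\scenter_\play,\strat_\play)$ is a weighted Kullback--Leibler divergence from the harmonic center. This already shows $\thres(\level)>0$: the feasible set $\setdef{\strat}{\energy(\strat)\le\level}$ is a compact subset of $\relint\strats$ (the divergence blows up at $\bd(\strats)$), and $1-\twonorm{\strat_\play}^2>0$ throughout, so the minimum is attained and strictly positive, consistent with the claim in \cref{thm:harmonic}.

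The core estimate is a comparison of the two boundary behaviors. The trace $1-\twonorm{\strat_\play}^2$ vanishes only as $\strat_\play$ approaches a vertex, and there the relative entropy diverges logarithmically in the vanishing mass. Quantitatively, writing $\delta_\play = 1 - \max_{\pure_\play}\strat_{\play\pure_\play}$ for the distance of $\strat_\play$ to its nearest vertex, one has $1-\twonorm{\strat_\play}^2 \asymp \delta_\play$ and $\breg_\play(\scenter_\play,\strat_\play) \le \lambda_\play\log(1/\delta_\play) + \bigoh(1)$ for a constant $\lambda_\play$ depending only on $\scenter_\play$; inverting the latter yields the per-player bound $1 - \twonorm{\strat_\play}^2 \ge c_\play\, e^{-\breg_\play(\scenter_\play,\strat_\play)/\lambda_\play}$. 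The minimizer in $\thres(\level)$ spends the entire energy budget pushing one player toward a vertex (the cheapest way to shrink the weighted sum of traces), so combining the per-player bounds against the single scalar constraint $\energy(\strat)\le\level$ gives
\begin{equation}
\thres(\level) \gtrsim \lambda\, e^{-\lambda\level}
\end{equation}
for constants $\lambda>0$ determined by the weights $\mass_\play$ and the harmonic center $\scenter$, the $\lambda$ prefactor arising from inverting the logarithmic relation. Substituting into \eqref{eq:energy-bound} and bounding $\level - \energy(\strat) \le \level$ then produces an expectation bound that grows like $\level\, e^{\lambda\level}/\lambda$ in the energy level and scales inversely with the noise floor $\diffmat_{\min}$, which is precisely the estimate of \cref{cor:harmonic_EW}.

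The main obstacle is this third step: turning the qualitative statement ``the trace is small only near vertices, where the energy is large'' into a clean exponential lower bound on $\thres(\level)$. The delicate point is that the minimization couples all players through the single constraint $\energy(\strat)\le\level$, so one must justify that the optimal allocation concentrates the energy in a single player's approach to a vertex and then control the resulting worst case via the per-player inversion $1-\twonorm{\strat_\play}^2 \ge c_\play e^{-\breg_\play/\lambda_\play}$. Making the heuristic $\asymp$ and $\bigoh(1)$ terms above rigorous --- tracking the contributions of the non-extremal simplex components and the multiplicative constants $c_\play,\lambda_\play$ uniformly over the feasible set --- is where the genuine work lies; everything else is the bookkeeping of feeding this bound back into \cref{thm:harmonic}.
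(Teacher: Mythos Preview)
Your strategy matches the paper's: specialize \cref{thm:harmonic} by lower-bounding $\thres(\level)$ for the logit map, and your computation $\tr(\Jac\logit_\play) = 1 - \twonorm{\strat_\play}^2$ together with the identification of $\energy$ as a weighted KL divergence are correct. The gap is in the core estimate. Your claimed bound $\breg_\play \le \lambda_\play\log(1/\delta_\play) + \bigoh(1)$ is false: with $\nPures_\play\ge 3$ and $\strat_\play = (1-\delta,\,\delta-\eps,\,\eps,0,\dots)$, the divergence $\breg_\play(\scenter_\play,\strat_\play)\to\infty$ as $\eps\to 0$ while $\delta_\play = \delta$ stays fixed. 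Worse, even granting the bound, inverting it yields $\delta_\play \le K e^{-\breg_\play/\lambda_\play}$, an \emph{upper} bound on $1-\twonorm{\strat_\play}^2$---the wrong direction for lower-bounding $\thres(\level)$. What you actually need is the reverse inequality $\breg_\play \ge \lambda_\play'\log(1/\delta_\play) - C$; this does hold, but it is not what you wrote, and proving it via $\delta_\play$ is awkward precisely because $\delta_\play$ does not see how the non-maximal mass is distributed.

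The paper avoids $\delta_\play$ altogether. From the explicit formula $e^{-\breg_\play} = e^{-\hreg_\play(\scenter_\play)}\prod_{\pure_\play}\strat_{\play\pure_\play}^{\scenter_{\play\pure_\play}}$ it extracts $e^{-\breg_\play/\scenter_\play^\ast} \lesssim \prod_{\pure_\play}\strat_{\play\pure_\play}$ with $\scenter_\play^\ast = \min_{\pure_\play}\scenter_{\play\pure_\play}$, and then bounds the product by the trace through the elementary inequality $\prod_{\pure_\play}\strat_{\play\pure_\play} \le \frac{1}{\nPures_\play(\nPures_\play-1)}\sum_{\pure_\play}\strat_{\play\pure_\play}(1-\strat_{\play\pure_\play})$. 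This delivers the per-player lower bound with the correct sign in two lines. The multi-player step---your ``optimal allocation'' heuristic---is handled by Jensen's inequality for $t\mapsto e^{-t}$, with no allocation argument needed.
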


\Cref{thm:harmonic} shows that \eqref{eq:FTRL-stoch} exhibits a markedly different behavior than its deterministic counterpart in harmonic games (and hence, in two-player zero-sum games with a fully mixed equilibrium):
In the presence of noise, the energy $\energy(\strat)$ diverges to infinity on average, indicating in this way a mean drift towards the boundary $\bd(\strats)$ of $\strats$. 

From a technical standpoint, this involves using Dynkin's lemma on the energy function $\energy(\point)$.
More precisely, even though $\energy(\point)$ is a constant of motion in the deterministic case, its Itô correction under \eqref{eq:FTRL-stoch} leads to a constant positive drift, which only vanishes at the boundary of the game's strategy space.
This implies that the limit $\lim_{\time\to\infty} \exof{\energy(\Stratof{\time})}$ exists, and the rest of the proof follows by a contradiction argument in case this limit is assumed finite (this is where Dynkin's lemma and the infinitesimal generator of the process are evoked).

Moving forward, we should stress that \cref{thm:harmonic} \emph{does not} necessarily mean that the orbits of \eqref{eq:FTRL-stoch} converge to the boundary \acl{wp1}.
Nonetheless, as we show below, \eqref{eq:FTRL-stoch} reaches any neighborhood of the boundary in finite time on average, and takes infinite time to escape.


\begin{hilite}
\begin{restatable}{theorem}{Boundary}
\label{thm:boundary}
Suppose \cref{asm:noisemin} holds, let $\cpt$ be a compact subset of $\strats$, and let $\stoptime_{\cpt} = \inf\setdef{\time\geq0}{\Stratof{\time}\notin\cpt}$ be the time it takes $\Stratof{\time}$ to escape $\cpt$ starting at $\strat\in\cpt$.
Then, for any harmonic game, we have:
\begin{enumerate}
\item
$\exwrt{\strat}{\stoptime_{\cpt}} < \infty$ whenever $\cpt$ is disjoint from $\bd(\strats)$.
\item
$\exwrt{\strat}{\stoptime_{\cpt}} = \infty$ whenever $\cpt$ contains $\bd(\strats)$.
\end{enumerate}
\end{restatable}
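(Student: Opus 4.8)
The plan is to use the energy functional $\energy$ introduced in \cref{thm:harmonic} as a common Lyapunov function for both claims, exploiting the single structural fact that drives this section: under \eqref{eq:FTRL-stoch} the Itô correction turns the deterministic constant of motion $\energy$ into a strict submartingale, with infinitesimal drift $\gen\energy(\strat)>0$ throughout $\relint\strats$ that degenerates to $0$ only on $\bd(\strats)$. This is precisely the positive drift isolated in the proof of \cref{thm:harmonic} (and the place where \cref{asm:noisemin} enters, guaranteeing strict positivity through $\diffmat_{\min}^{2}>0$). The two parts are then dual: escaping a compact set \emph{toward} the boundary is aided by the drift and is fast, whereas escaping \emph{inward}, against the drift, is infinitely slow on average.

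For part (1), since $\cpt$ is compact and disjoint from $\bd(\strats)$ it lies in $\relint\strats$, where $\energy$ is finite and continuous; hence $\level_{\max}\defeq\max_{\strat\in\cpt}\energy(\strat)<\infty$ and, by compactness, $c\defeq\min_{\strat\in\cpt}\gen\energy(\strat)>0$. I would apply Dynkin's formula to $\energy$ along $\stoptime_{\cpt}\wedge\time$: because $\Stratof{\stoptime_{\cpt}\wedge\time}\in\cpt$ (as $\cpt$ is closed and the paths are continuous), this gives $\level_{\max}\geq\exwrt{\strat}{\energy(\Stratof{\stoptime_{\cpt}\wedge\time})}=\energy(\strat)+\exwrt{\strat}{\int_{0}^{\stoptime_{\cpt}\wedge\time}\gen\energy(\Stratof{\timealt})\dd\timealt}\geq\energy(\strat)+c\,\exwrt{\strat}{\stoptime_{\cpt}\wedge\time}$. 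Rearranging yields $\exwrt{\strat}{\stoptime_{\cpt}\wedge\time}\leq\level_{\max}/c$ uniformly in $\time$, and monotone convergence as $\time\to\infty$ gives $\exwrt{\strat}{\stoptime_{\cpt}}\leq\level_{\max}/c<\infty$. Intuitively, the energy cannot remain below $\level_{\max}$ while rising at rate at least $c$, so the process must vacate $\cpt$ within time $\level_{\max}/c$; note this argument is insensitive to the direction of escape.

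Part (2) is the crux. Here $\bd(\strats)\subseteq\cpt$, so escaping $\cpt$ means reaching the interior hole $U\defeq\strats\setminus\cpt$. Reducing to the representative case in which $\cpt$ is a closed collar of $\bd(\strats)$ (the general case following by enlarging $\cpt$, which only lengthens the escape time), I take $U$ to lie in a sublevel set $\{\energy\leq\level_{1}\}$ with $\level_{1}<\infty$, and the starting point to satisfy $\energy(\strat)>\level_{1}$. To escape, the submartingale $\energy(\Stratof{\time})$ must descend from $\energy(\strat)$ down to $\level_{1}$, against its positive drift. I would make this quantitative by building a bounded, decreasing Lyapunov function $g(\energy)$ with $\gen\,[g\circ\energy]\leq0$: writing the generator in the scalar variable $\energy$ as $g'(\energy)\,\gen\energy+\tfrac12 g''(\energy)\,\Gamma$, where $\Gamma$ is the carr\'e-du-champ of $\energy$, it suffices to choose $g$ solving $g''/(-g')\leq 2\underline{\mu}/\overline{\Gamma}$ for a positive lower bound $\underline{\mu}$ on the drift and an upper bound $\overline{\Gamma}$ on the diffusion (both as functions of the energy level). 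If the resulting scale integral $\int^{\infty}\exp(-\int^{u}2\underline{\mu}/\overline{\Gamma})\,\dd u$ converges, then $g$ can be normalized so that $g(\Stratof{\stoptime_{\cpt}\wedge\time})$ is a bounded nonnegative supermartingale with $g=1$ on $\{\energy=\level_{1}\}$ and $g(\strat)<1$; Fatou's lemma then forces $\probwrt{\strat}{\stoptime_{\cpt}<\infty}\leq g(\strat)<1$, whence $\exwrt{\strat}{\stoptime_{\cpt}}=\infty$. Even when the process is only null-recurrent toward the boundary rather than transient, the same drift/diffusion estimates yield $\exwrt{\strat}{\stoptime_{\cpt}}=\infty$ through the one-dimensional expected-hitting-time formula for the descent to $\level_{1}$.

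The main obstacle is exactly the drift-versus-diffusion estimate near the boundary feeding this scale integral: both $\gen\energy$ and $\Gamma$ degenerate as $\Stratof{\time}\to\bd(\strats)$, and one must show the drift dominates strongly enough for the barrier $g$ to exist and remain bounded. This is where \cref{asm:noisemin} and the explicit Itô correction of \cref{prop:FTRL-strat} are essential; for the entropic regularizer they produce the $\thres(\level)\sim e^{-\lambda\level}$ scaling already visible in \cref{cor:harmonic_EW}. A secondary technical point is the integrability of the martingale part of $\energy$ near the boundary, which I would dispatch by localization together with the boundedness of $g$, so that Dynkin's formula and Fatou's lemma apply without any uniform-integrability concerns.
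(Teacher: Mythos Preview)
Your Part~(1) is correct and essentially identical to the paper's argument: Dynkin's formula applied to $\energy$ (equivalently, to the Fenchel coupling $\fench$), together with $\max_{\cpt}\energy<\infty$ and $\min_{\cpt}\gen\energy>0$ on a compact disjoint from the boundary, yields the same bound $\exwrt{\strat}{\stoptime_{\cpt}}\leq(\max_{\cpt}\energy-\energy(\strat))/\min_{\cpt}\gen\energy$.

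For Part~(2), however, you are working much harder than necessary, and the ``main obstacle'' you identify---controlling the ratio of drift to carr\'e-du-champ of $\energy$ near the boundary so as to build a bounded scale function $g(\energy)$---is a genuine gap that you do not close. The reduction to a one-dimensional diffusion in the variable $\energy$ is delicate precisely because $\gen\energy$ and $\Gamma(\energy)$ are not functions of $\energy$ alone, and the required uniform level-set estimates are not supplied. The paper's argument sidesteps all of this. The only facts it needs are that $\gen\energy\geq0$ (so $\energy(\Stratof{\time})$ is a submartingale) and that $\gen\energy$ is \emph{bounded above} by a constant $K$, which follows from the $L$-smoothness of $\hreg_{\play}^{\ast}$ (i.e., $\tr\Jac\mirror_{\play}$ is bounded). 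Taking $\cpt=\{\energy\geq\level\}$ and a starting point with $\energy(\strat)\geq\level+1$, one argues by contradiction: if $\exwrt{\strat}{\stoptime_{\cpt}}<\infty$, then $\exwrt{\strat}{\energy(\Stratof{\stoptime_{\cpt}\wedge\time})}\leq\energy(\strat)+K\,\exwrt{\strat}{\stoptime_{\cpt}}<\infty$, so the stopped submartingale is uniformly integrable and Doob's convergence theorem gives $\energy(\strat)\leq\exwrt{\strat}{\energy(\Stratof{\stoptime_{\cpt}})}\leq\level$, contradicting $\energy(\strat)\geq\level+1$. This shows the process is not positive recurrent, and the transience/recurrence dichotomy for uniformly elliptic diffusions (which you do not invoke) then propagates the conclusion to \emph{every} compact containing $\bd(\strats)$ and every starting point. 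In short: the submartingale inequality alone already forbids a finite mean downward-crossing time, with no need for any boundary asymptotics.
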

\end{hilite}

While \cref{thm:boundary} states that the \emph{average} return time to $\cpt$ is infinite, individual trajectories of \eqref{eq:FTRL-stoch} may still reach it in finite time on a positive probability event.
This distinction, based on return times to compact subsets, is known as the \define{transience/recurrence dichotomy} and, in this context,
\cref{thm:boundary} implies that \eqref{eq:FTRL-stoch} is \emph{not} positive recurrent
for a detailed discussion, \cf \cref{app:stochastic}.

The proof of \cref{thm:boundary}, like that of \cref{thm:harmonic}, hinges on an application of Dynkin's lemma to the energy function $\energy(\strat)$---but this time, we require its bona fide, stopping time version.
This involves lower- and upper-bounding the action of the generator of $\Stratof{\time}$ on $\energy(\strat)$, and then integrating the result;
we defer the relevant details to \cref{app:harmonic}.

In view of all this, it is natural to ask if at least \emph{some} pure strategies are attracting.
This is not the case:

\begin{restatable}{theorem}{ClubHarmonic}
\label{thm:harmonic-club}
Suppose \cref{asm:hfinite} holds.
If the game is harmonic, there is no proper face $\set$ of $\strats$ that is stochastically asymptotically stable under \eqref{eq:FTRL-stoch}.
\end{restatable}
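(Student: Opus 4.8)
The plan is to reduce the statement to a purely deterministic fact about the noiseless dynamics \eqref{eq:FTRL}, and then to contradict it using the constant of motion $\energy$ available in harmonic games. The key observation is that the criterion of being \acl{closed} is a property of the payoffs alone---by \eqref{eq:better} it does not involve the noise matrix $\diffmat$ in any way. Since every proper face of $\strats$ is precisely a span $\set = \prod_{\play}\simplex(\subpures_{\play})$ with $\subpures_{\play}\subsetneq\pures_{\play}$ for at least one $\play\in\players$, \cref{thm:club} applies verbatim and yields the chain of equivalences: $\set$ is stochastically asymptotically stable under \eqref{eq:FTRL-stoch} if and only if $\set$ is \acl{closed}, if and only if (by the instance $\diffmat = 0$ of \cref{thm:club}) $\set$ is stable and attracting under the \emph{deterministic} dynamics \eqref{eq:FTRL}. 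It therefore suffices to prove that no proper face of $\strats$ is asymptotically stable under \eqref{eq:FTRL}.

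For this last step I would argue by contradiction using the energy $\energy(\strat) = \sum_{\play}\mass_{\play}\breg_{\play}(\scenter_{\play},\strat_{\play})$. Suppose a proper face $\set$ were \acl{closed}; then the deterministic case of \cref{thm:club} supplies an open neighborhood $\nhd_{0}\subseteq\strats$ of $\set$ from which every orbit of \eqref{eq:FTRL} converges to $\set$. As $\relint\strats$ is dense, I may choose a fully mixed initial condition $\strat\in\nhd_{0}\cap\relint\strats$, for which $\energy(\strat)<\infty$. Steepness of the regularizers ($\hker_{\play}'(0^{+}) = -\infty$) keeps the induced orbit $\stratof{\time}$ inside $\relint\strats$ for all finite $\time$, so $\energy(\stratof{\time})$ remains finite; and since the game is harmonic, $\energy$ is a constant of motion along \eqref{eq:FTRL}, whence $\energy(\stratof{\time}) = \energy(\strat)$ for all $\time\geq0$. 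On the other hand, $\set$ is a proper face, hence contained in $\bd(\strats)$; as $\stratof{\time}\to\set$, the strategy of at least one player $\play_{0}$ approaches $\bd(\strats_{\play_{0}})$, and the barrier property $\breg_{\play_{0}}(\scenter_{\play_{0}},\strat_{\play_{0}})\to\infty$ (together with $\breg_{\play}\geq0$ and $\mass_{\play_{0}}>0$) forces $\energy(\stratof{\time})\to\infty$. This contradicts the constancy of $\energy$ along the orbit, and the claim follows.

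I expect the only delicate points to lie in the bookkeeping behind the reduction---confirming that the deterministic instance of \cref{thm:club} yields honest convergence from an \emph{open} set of initial conditions (not merely from a dense set), that these can be taken with finite energy, and that steepness guarantees both interior-invariance and the barrier blow-up. I would also emphasize, by way of contrast, why a direct stochastic argument via \cref{thm:harmonic} does \emph{not} suffice: there $\energy(\Stratof{\time})$ diverges in expectation \emph{anyway}, reflecting the generic mean drift toward $\bd(\strats)$, so no contradiction with convergence to a specific face can be extracted at the stochastic level. It is precisely the \emph{conservation} of $\energy$ in the noiseless regime---made accessible through the noise-independence of \acl{closedness}---that turns the barrier property into a sharp obstruction, which is the conceptual heart of the proof.
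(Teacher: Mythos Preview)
Your proof is correct, but it takes a genuinely different route from the paper's. Both arguments begin by invoking \cref{thm:club} to reduce stochastic asymptotic stability of a proper face to \acl{closedness}. From there the paper proceeds \emph{combinatorially}: it proves directly (\cref{prop:harmonic-no-club} and \cref{cor:harmonic-no-mixed-club}) that in a harmonic game no proper subset of pure profiles---and hence no proper face---can be \acl{closed}, by an ``antisymmetric contraction'' identity on weighted payoff deviations. You instead proceed \emph{dynamically}: you re-apply \cref{thm:club} in the noiseless regime $\diffmat=0$ to convert \acl{closedness} into deterministic asymptotic stability of $\set$, and then derive a contradiction from the constant of motion $\energy$ (convergence to a proper face forces $\energy\to\infty$, violating conservation). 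Your route is shorter and nicely reuses the energy machinery already central to the section; the paper's route buys more, since it establishes a standalone game-theoretic fact (harmonic games admit no proper \ac{closed} set of pure profiles whatsoever) that is independent of the dynamics and of any regularizer assumptions.
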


In words, this shows that \eqref{eq:FTRL-stoch} is irreducible in harmonic games:
even though orbits tend to drift toward the boundary in average, they do not collapse to any proper face thereof, so any stable attractor must span the entire strategy space of the game (\cf \cref{fig:orbits}). 

We should also stress here that the trajectories of \eqref{eq:FTRL-stoch} do not necessarily converge to the boundary \acl{wp1};
instead, they resemble a standard Brownian motion in this regard.
This behavior stands in sharp contrast to \eqref{eq:SRD-AS} and \eqref{eq:SRD-PI} in zero-sum games, where orbits typically converge to the boundary with a nonzero drift, even under small perturbations \citep{HI09,EP23}.
A detailed comparison of the dynamics' long-run behavior is provided in \cref{app:recurrence}.

\section{Concluding remarks}
\label{sec:discussion}

Our aim in this paper was to quantify the impact of noise and uncertainty on the dynamics of regularized learning.
Our findings can be summarized by the informal mantra that ``uncertainty favors extremes'', in the sense that the dynamics exhibit a tendency to drift toward the boundary of the players' strategy space:
\begin{enumerate*}
[(\itshape a\upshape)]
\item
every player reaches an almost pure strategy in finite time;
\item
every player's limit set contains a pure strategy;
\item
the only possible limits of the dynamics are pure \aclp{NE};
and
\item
game dynamics that are recurrent in the noiseless regime escape in expectation toward the boundary under uncertainty.
\end{enumerate*}
We conjecture that there is an even stronger principle at play, namely that the only stochastically asymptotically stable limit sets of \eqref{eq:FTRL-stoch} are entirely contained on the boundary of the game's strategy space;
we pose this as an open problem for the community.

Several important directions remain open in the general context of stochastic \ac{FTRL} dynamics.
The first is what happens if the dynamics are run with a vanishing learning rate, as in \cite{BM17}.
In this case, we conjecture that an analogue of \cref{thm:club} continues to hold, but \cref{thm:harmonic} fails:
in games where the deterministic dynamics are recurrent (\eg harmonic games), \eqref{eq:FTRL-stoch} will most likely result in the sequence of play converging to some constant (Bregman) distance from a ``central'' equilibrium of the game (the game's strategic center in the case of harmonic games). 

Another major extension of our work involves examining the impact of uncertainty in games with continuous action spaces---which, arguably, are more relevant for many applications of game theory to data science and machine learning---and, also, undertaking a discrete-time analysis that goes beyond the vanishing step-size considerations that are common in the stochastic approximation literature applied to learning in games \cite{Ben99,BHS05,BHS06,BHS08,Les04,LC05,MZ19,HMC21,MHC24}.
Both of these directions would require significantly different tools and techniques (especially in the discrete-time setting), so we defer all this to the future.

\appendix
\crefalias{section}{appendix}
\crefalias{subsection}{appendix}
\numberwithin{lemma}{section}	
\numberwithin{corollary}{section}	
\numberwithin{proposition}{section}	
\numberwithin{equation}{section}	

\section{Regularizers, mirror maps, and related notions}
\label{app:fenchel}

\newmacro{\dimeff}{\nPures - 1}
\newmacro{\effambient}{\R^{\dimeff}}
\newmacro{\mirrordom}{\dpoints} 
\newcommand{\eff}[1]{\tilde{#1}}


In this appendix, we collect a number of standard properties concerning regularizers, their associated mirror maps, and Bregman/Fenchel distances; we closely follow \cite{MS16,MZ19} for notation and conventions.

\subsection{Preliminary definitions}

\begin{remark*}
All the constructions described in the following apply to each player of a finite game $\fingame = \fingamefull$;  to simplify notation, we suppress throughout the player index $\play \in \players$.
\end{remark*}

Given a finite game, each player has effectively $\dimeff$ degrees of freedom in the choice of a mixed strategy $\strat\in\strats$, due to the constraint $\sum_{\pure = 1}^{\nPures}\strat_{\pure} = 1$. In light of this, we denote in the following by $\vecspace$ the finite-dimensional vector space $\R^{\nPures-1}$, equipped with the Euclidean norm $\norm{\cdot}$,
and denote with slight abuse of notation each player’s \define{effective strategy set} as
$\strats =
	\setdef{\strat\in\effambient}
	{\sum_{\pure = 1}^{\dimeff} \strat_{\pure} \leq 1}$; with this convention, $\strats$ is a closed convex subset of $\vecspace$ with non-empty interior.
We will furthermore write
$\dpoints \defeq \dspace$ for the dual space of $\vecspace$,
$\braket{\dpoint}{\strat}$ for the canonical linear pairing between $\strat\in\vecspace$ and $\dpoint\in\dspace$,
and
$\dnorm{\dpoint} = \max\setdef{\braket{\dpoint}{\strat}}{\norm{\strat} \leq 1}$ for the dual norm induced on $\dpoints$.

Following standard conventions in convex analysis, functions will take values in the extended real line $\R\cup\{\infty\}$, and a convex function $\obj\from\strats\to\R$ will be identified with the extended-real-valued function $\bar\obj\from\vecspace\to\R\cup\{\infty\}$ that agrees with $\obj$ on $\strats$ and is identically equal to $\infty$ on $\vecspace\setminus\strats$; we will denote the \define{effective domain} of a convex function $\obj\from\vecspace\to\R\cup\{\infty\}$ on $\vecspace$ as
$
\dom\obj
	\defeq \setdef{\strat\in\vecspace}{\obj(\strat) < \infty}
	\eqstop
$

For the scope of this work, a \define{regularizer on $\strats$} is a convex function of Legendre type on $\intr(\strats)$, namely a closed proper convex function $\hreg\from\strats\to\R\cup\{\infty\}$ with the following properties:
\begin{enumerate}[(i)]
\item $\intr(\strats) = \intr(\dom{\hreg}) $;
\item \define{Steepness} \textendash{} $\hreg$ is smooth on $\intr(\strats)$, and
$\lim_{\run\to\infty}\norm{\nabla\hreg(\strat_{\run})}\to\infty$ for any sequence $(\strat_{\run})_{\run = 1}^{\infty}$ in $\intr\of\strats$ converging to a boundary point of $\intr\of\strats$;
\item $\hreg$ is $\hstr$-strongly convex on $\intr(\strats)$:
\begin{equation}
\label{eq:hstr}
\hreg(\base)
	\geq \hreg(\strat)
		+ \braket{\nabla\hreg(\strat)}{\base - \strat)}
		+ \frac{\hstr}{2} \norm{\base - \strat}^{2}
	\quad
	\text{for all } \base \in \strats\text{, } \strat\in\intr(\strats)
 \eqstop
\end{equation}
\end{enumerate}
Following \citet[Proposition 4.10]{ABB04}, a regularizer $\hreg$ of Legendre type can be obtained as
\begin{equation}
\label{eq:decomposable-regularizer}
\hreg(\strat)
	= \insum_{\pure = 1}^{\dimeff} \hker(\strat_{\pure})
	+ \hker\left( 1 - \sum_{\pure = 1}^{\dimeff} \strat_{\pure} \right)
	 \quad \text{for all } \strat\in\strats
\eqcomma
\end{equation}
where $\hker: [0, \infty) \to \Rinfty$ is a convex function of Legendre type, called \define{kernel} of the regularizer, fulfilling

\begin{enumerate}[(i)]
	\item $\hker(\kervar) < \infty$ for all $\kervar > 0$;
	\item \define{Steepness} \textendash{} $\hker$ is smooth on $(0, \infty )$,  and
	$\lim_{\kervar\to 0^{+}}\hker'(\kervar) = -\infty$;
	\item $\hker$ is strongly convex on $(0,\infty)$, namely
$
\inf_{\kervar > 0} \hker''(\kervar) > 0
$.


\end{enumerate}



\subsection{Regularized best response map and Fenchel coupling}
In what follows, we will need the following fundamental objects:
\begin{enumerate}
\item
The \define{convex conjugate} $\hconj\from\dpoints\to\R$ of $\hreg$:
\begin{alignat}{2}
\label{eq:conj}
\hconj(\dpoint)
	&= \max_{\strat\in\strats} \{ \braket{\dpoint}{\strat} - \hreg(\strat) \}
	&\qquad
	&\text{for all $\dpoint\in\dpoints$}.
\intertext{%
\item
The \define{regularized best response map} \textendash\ or \define{mirror map} \textendash\ $\mirror\from\dpoints\to\strats$ induced by $\hreg$:%
}
\label{eq:mirror-app}
\mirror(\dpoint)
	&= \argmax_{\strat\in\strats} \{ \braket{\dpoint}{\strat} - \hreg(\strat) \}
	&\qquad
	&\text{for all $\dpoint\in\dpoints$}.
\intertext{%
\item 
The associated \define{Bregman distance} $\breg\from\strats\times\intr(\strats)\to\R$ of $\hreg$:%
}
\label{eq:Bregman}
\breg(\base,\strat)
	&= \hreg(\base)
		- \hreg(\strat)
		- \braket{\nabla\hreg(\strat)}{ \base - \strat}
	&\quad
	&\text{for all $\base\in\strats$, $\strat\in\intr(\strats)$}.
\intertext{%
\item
The \define{Fenchel coupling} $\fench\from\strats\times\dpoints\to\R$ of $\hreg$:%
}
\label{eq:Fench}
\fench(\base,\dpoint)
	&= \hreg(\base)
		+ \hconj(\dpoint)
		- \braket{\dpoint}{\base}
	&\quad
	&\text{for all $\base\in\strats$, $\dpoint\in\dpoints$}.
\end{alignat}
\end{enumerate}
\smallskip
\begin{remark*}
The terminology ``Fenchel coupling'' is due to \cite{MS16}.
\end{remark*}

The proposition below provides some basic properties concerning the first two objects above:

\begin{proposition}
\label{prop:mirror}
Let $\hreg$ be a
regularizer on $\strats$.
Then:
\begin{enumerate}
[\upshape(\itshape a\hspace*{.5pt}\upshape)]

\item
The convex conjugate $\hconj\from\dpoints\to\R$ of $\hreg$ is differentiable, and
\begin{equation}
\label{eq:Danskin}
\mirror(\dpoint)
	= \nabla\hconj(\dpoint)
	\quad
	\text{for all $\dpoint\in\dpoints$}.
\end{equation}

\item \label{item:hinv}
$\nabla\hreg$ is one-to-one from $\intr(\strats)$ to $\mirrordom$, and $\mirror = (\nabla\hreg)^{-1}$
namely 
\begin{equation}
\label{eq:hinv}
\strat = \mirror(\dpoint) \iff \dpoint = \nabla\hreg(\strat)
\eqstop
\end{equation}

for all $\strat\in\intr(\strats)$ and $\dpoint\in\dpoints$.



\item
$\mirror$ is $(1/\hstr)$-Lipschitz continuous, that is,
\begin{equation}
\label{eq:QLips}
\norm{\mirror(\dpointalt) - \mirror(\dpoint)}
	\leq (1/\hstr) \dnorm{\dpointalt - \dpoint}
	\quad
	\text{for all $\dpoint,\dpointalt\in\dpoints$}.
\end{equation}

\end{enumerate}
\end{proposition}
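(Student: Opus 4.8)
The plan is to derive all three parts from the Legendre-type structure of $\hreg$ together with its $\hstr$-strong convexity, leaning on the classical conjugacy theory of Rockafellar. First I would dispense with the single-valuedness issue: since $\hreg$ is $\hstr$-strongly convex on $\intr(\strats)$, the objective $\braket{\dpoint}{\strat} - \hreg(\strat)$ in \eqref{eq:conj}--\eqref{eq:mirror-app} is strictly concave, so for every $\dpoint\in\dpoints$ the maximizer is unique and $\mirror$ is a well-defined map.

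For part (a) I would invoke the standard correspondence between the subdifferential of a conjugate and the set of maximizers: for a closed proper convex $\hreg$ one has $\strat\in\partial\hconj(\dpoint)$ if and only if $\strat$ attains the maximum in \eqref{eq:conj}, whence $\partial\hconj(\dpoint) = \{\mirror(\dpoint)\}$ by the uniqueness just established. A convex function whose subdifferential is a singleton at every point is differentiable there, so $\hconj$ is differentiable with $\nabla\hconj(\dpoint) = \mirror(\dpoint)$, which is exactly \eqref{eq:Danskin}. For part (b) the key observations are that (i) because $\strats$ is compact and $\hreg$ is bounded below, the supremum in \eqref{eq:conj} is finite for every $\dpoint$, so $\dom\hconj = \dpoints$ and hence $\intr(\dom\hconj) = \dpoints = \mirrordom$; and (ii) steepness forces the maximizer $\mirror(\dpoint)$ to lie in $\intr(\strats)$, since otherwise $\norm{\nabla\hreg}$ would blow up near $\bd(\strats)$ in contradiction with optimality. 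On the interior the first-order condition for \eqref{eq:mirror-app} is unconstrained and reads $\dpoint = \nabla\hreg(\strat)$, giving one implication of \eqref{eq:hinv}; the converse follows because the Legendre property makes $\nabla\hreg$ injective on $\intr(\strats)$. Combining this with part (a) identifies $\mirror = \nabla\hconj = (\nabla\hreg)^{-1}$, and Rockafellar's theorem that $\nabla\hreg$ is a bijection from $\intr(\dom\hreg)$ onto $\intr(\dom\hconj) = \dpoints$ supplies the range claim.

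Part (c) is then a short consequence of strong convexity. Writing $\strat = \mirror(\dpoint)$ and $\stratalt = \mirror(\dpointalt)$, part (b) gives $\dpoint = \nabla\hreg(\strat)$ and $\dpointalt = \nabla\hreg(\stratalt)$, both interior points. Adding \eqref{eq:hstr} to its copy with $\base$ and $\strat$ interchanged shows that $\nabla\hreg$ is $\hstr$-strongly monotone, so
\[
\hstr\,\norm{\stratalt - \strat}^{2}
	\leq \braket{\nabla\hreg(\stratalt) - \nabla\hreg(\strat)}{\stratalt - \strat}
	= \braket{\dpointalt - \dpoint}{\stratalt - \strat}
	\leq \dnorm{\dpointalt - \dpoint}\,\norm{\stratalt - \strat};
\]
dividing through by $\norm{\stratalt - \strat}$ yields \eqref{eq:QLips}.

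The hard part will be the boundary analysis in part (b): one must use steepness to certify that maximizers never escape to $\bd(\strats)$, so that the smooth, unconstrained first-order condition is legitimately applicable, while simultaneously confirming that $\hconj$ is finite (and hence differentiable) on the whole dual space. Once these two facts are pinned down, the Legendre-type bijection of part (b) and the strong-monotonicity estimate of part (c) are entirely routine.
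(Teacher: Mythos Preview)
Your proposal is correct and rests on exactly the classical convex-analysis facts the paper invokes: the paper's own proof simply points to Danskin's theorem for (a), to Rockafellar's Legendre-duality theorem (Theorem 26.5) for (b), and to Rockafellar--Wets Theorem 12.60(b) for (c), without further elaboration. You have effectively unpacked those references into self-contained arguments, so there is no substantive difference in approach.
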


\begin{proof}
Since these fact are relatively well-known, we only provide  a specific pointer to the literature.
\begin{enumerate}
[\upshape(\itshape a\hspace*{.5pt}\upshape)]

\item
The equality $\mirror = \nabla\hconj$ follows from Danskin's theorem, see \eg \citet[Section 2.7.2]{SS11}.

\item See \citet[Theorem 26.5]{Roc70}.


\item
See \citet[Theorem 12.60(b)]{RW98}. \qedhere
\end{enumerate}
\end{proof}


\begin{lemma}
\label{lemma:trace-jacobian-mirror}
The trace of the Jacobian matrix of the mirror map \eqref{eq:mirror-app} fulfills
\begin{equation}
\tr\Jac\mirror\of\dpoint > 0 \quad \text{for all } \dpoint \in \dpoints \eqstop
\end{equation}
\end{lemma}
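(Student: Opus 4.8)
The plan is to identify the Jacobian of the mirror map with the inverse Hessian of the regularizer, and then read off positivity of its trace from strong convexity. The one structural point that makes everything work is that, by \cref{prop:mirror} (specifically the identity $\mirror = (\nabla\hreg)^{-1}$), the map $\mirror$ is a bijection from $\dpoints$ onto $\intr(\strats)$. Thus, for \emph{every} $\dpoint\in\dpoints$, the image $\strat \defeq \mirror(\dpoint)$ lies in the interior of $\strats$, where $\hreg$ is smooth by hypothesis; this is exactly what the steepness of the Legendre-type regularizer buys us, and it keeps us away from the boundary (where the Hessian would degenerate) no matter which $\dpoint$ we feed in.

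With this in hand, I would differentiate the defining relation $\dpoint = \nabla\hreg(\mirror(\dpoint))$. The chain rule gives $\eye = \Hess\hreg(\mirror(\dpoint))\,\Jac\mirror(\dpoint)$, and since $\hstr$-strong convexity of $\hreg$ forces $\Hess\hreg(\strat) \mgeq \hstr\eye \mg 0$ on $\intr(\strats)$, the Hessian is symmetric positive definite and hence invertible there. Consequently,
\begin{equation}
\Jac\mirror(\dpoint) = \bracks[\big]{\Hess\hreg(\mirror(\dpoint))}^{-1} \eqstop
\end{equation}
(Equivalently, one may invoke $\mirror = \nabla\hconj$ from \cref{prop:mirror} to write $\Jac\mirror = \Hess\hconj$, but the inverse-Hessian form is more convenient for the strictness step.)

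The conclusion is then purely linear-algebraic: the inverse of a symmetric positive definite matrix is again symmetric positive definite, so all eigenvalues of $\Jac\mirror(\dpoint)$ are strictly positive, and therefore their sum $\tr\Jac\mirror(\dpoint)$ is strictly positive. Since $\dpoint\in\dpoints$ was arbitrary, this yields $\tr\Jac\mirror(\dpoint) > 0$ everywhere. I do not anticipate a genuine obstacle: the only delicate point is guaranteeing that $\mirror(\dpoint)$ never touches $\bd(\strats)$, so that $\Hess\hreg$ is well defined and invertible at $\mirror(\dpoint)$, and this is precisely what the steepness/Legendre property delivers through \cref{prop:mirror}. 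Everything else reduces to differentiating an identity and applying strong convexity.
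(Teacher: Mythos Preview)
Your proposal is correct and follows essentially the same route as the paper: both identify $\Jac\mirror(\dpoint)$ with $\bracks{\Hess\hreg(\mirror(\dpoint))}^{-1}$ via \cref{prop:mirror} and then read off positivity of the trace from the positive-definiteness of $\Hess\hreg$ on $\intr(\strats)$ guaranteed by strong convexity. Your explicit remark that steepness forces $\mirror(\dpoint)\in\intr(\strats)$ for every $\dpoint$ is exactly the point the paper records as $\im\mirror = \intr(\strats)$.
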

\begin{proof}
\begin{align}
\tr\Jac\mirror_{\dpoint}
	&= \tr\Jac\left[\nabla\hreg^{-1}\right]_{\dpoint}
	\explain{by \eqref{eq:hinv}}
	\\
	&= \tr \left[ \Jac \nabla\hreg\right]^{-1}_{\mirror(\dpoint)}
	\notag\\
	&= \tr \left[ \Hess \hreg\right]^{-1}_{\mirror(\dpoint)}
	\notag\\
	&= \sum_{\pure = 1}^{\dimeff} \frac{1}{\eig_{\pure}[\Hess\hreg(\mirror(\dpoint))]}
	\eqcomma
\end{align}
where $\eig_{\pure}[\Hess\hreg(\mirror(\dpoint))]$ is the $\pure$-th eigenvalue of the Hessian matrix of $\hreg$ at $\mirror(\dpoint)$. By strong (in particular, strict) convexity of $\hreg$,  $\Hess\hreg(\strat)$ is symmetric an positive-definite  for all $\strat\in\intr(\strats) = \im\mirror$, which concludes our proof.
\end{proof}


Next, we collect some basic properties of the Fenchel coupling.

\begin{proposition}
\label{prop:Fench}
Let $\hreg$ be a
regularizer on $\strats$.
Then, for all $\base\in\strats$ and all $\dpoint,\dpointalt\in\dpoints$, we have:
\begin{subequations}
\begin{flalign}
\label{eq:Fench-posdef}
\quad
(a)
	&\;\;
	\fench(\base,\dpoint)
	\geq 0
	\;\;
	\text{with equality if and only if $\base = \mirror(\dpoint)$}.
	&
	\\
\label{eq:Fench-Bregman}
\quad
(b)
	&\;\;
	\fench(\base, \dpoint) = \breg(\base, \mirror(\dpoint)) \eqstop
	\\
	\label{eq:Fench-norm}
\quad
(c)
	&\;\;
	\fench(\base,\dpoint)
	\geq \tfrac{1}{2} \hstr \, \norm{\mirror(\dpoint) - \base}^{2}\eqstop
\end{flalign}
\end{subequations}
\end{proposition}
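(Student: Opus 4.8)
The plan is to establish part~(b) first, since it is the structural identity from which the other two parts follow almost immediately. Fix $\base\in\strats$ and $\dpoint\in\dpoints$, and set $\strat = \mirror(\dpoint)$. By steepness of the regularizer, the maximizer in \eqref{eq:mirror-app} lies in $\intr(\strats)$, so $\strat\in\intr(\strats)$ and the Bregman distance $\breg(\base,\strat)$ is well-defined; moreover, the inverse-gradient identity \eqref{eq:hinv} gives $\dpoint = \nabla\hreg(\strat)$. Since $\strat$ attains the maximum defining $\hconj$ in \eqref{eq:conj}, I would substitute $\hconj(\dpoint) = \braket{\dpoint}{\strat} - \hreg(\strat)$ into the definition \eqref{eq:Fench} of the Fenchel coupling to get
\begin{equation*}
\fench(\base,\dpoint)
	= \hreg(\base) - \hreg(\strat) - \braket{\dpoint}{\base - \strat}
	= \hreg(\base) - \hreg(\strat) - \braket{\nabla\hreg(\strat)}{\base - \strat},
\end{equation*}
which is precisely $\breg(\base,\mirror(\dpoint))$ by \eqref{eq:Bregman}. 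This proves~(b).

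For part~(a), the bound $\fench(\base,\dpoint)\geq 0$ is just the Fenchel--Young inequality: directly from \eqref{eq:conj} one has $\hconj(\dpoint)\geq\braket{\dpoint}{\base} - \hreg(\base)$, which rearranges to $\fench(\base,\dpoint)\geq 0$. For the equality case I would invoke~(b) together with strong convexity: since $\fench(\base,\dpoint) = \breg(\base,\mirror(\dpoint))$ and \eqref{eq:hstr} yields $\breg(\base,\strat)\geq\tfrac{\hstr}{2}\norm{\base-\strat}^{2}$, the coupling vanishes if and only if $\base = \mirror(\dpoint)$. (Equivalently, equality in Fenchel--Young forces $\base$ to attain the maximum in \eqref{eq:conj}, and the strict convexity of $\hreg$ makes this maximizer unique, hence equal to $\mirror(\dpoint)$.)

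Finally, part~(c) is immediate once~(b) is in hand: with $\strat = \mirror(\dpoint)\in\intr(\strats)$, applying the strong-convexity bound \eqref{eq:hstr} to the expanded form of $\breg(\base,\strat)$ gives
\begin{equation*}
\fench(\base,\dpoint)
	= \breg(\base,\mirror(\dpoint))
	\geq \tfrac{1}{2}\hstr\,\norm{\mirror(\dpoint) - \base}^{2},
\end{equation*}
as claimed.

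The computations are entirely routine, so there is no genuine obstacle; the single point requiring care is the well-definedness of $\breg(\base,\mirror(\dpoint))$---that is, verifying $\mirror(\dpoint)\in\intr(\strats)$ so that both the Bregman distance and the gradient identity $\dpoint = \nabla\hreg(\mirror(\dpoint))$ are valid. This is exactly what the steepness assumption on the regularizer guarantees, and it is already encoded in \cref{prop:mirror} via \eqref{eq:hinv}, so no additional work is needed.
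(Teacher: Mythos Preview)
Your proof is correct and follows essentially the same approach as the paper: compute $\hconj(\dpoint)=\braket{\dpoint}{\mirror(\dpoint)}-\hreg(\mirror(\dpoint))$, use the inverse-gradient identity $\dpoint=\nabla\hreg(\mirror(\dpoint))$ to recover the Bregman form, and read off (a) and (c) from Fenchel--Young and strong convexity respectively. The only cosmetic difference is that the paper proves (a) first via Fenchel--Young and handles the equality case directly through \eqref{eq:hinv}, whereas you prove (b) first and derive the equality characterization in (a) from it; both routes are equivalent.
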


\begin{proof}
\begin{enumerate}
[\upshape(\itshape a\hspace*{.5pt}\upshape)]

\item
By the Fenchel\textendash Young inequality, we have $\hreg(\base) + \hconj(\dpoint) \geq \braket{\dpoint}{\base}$ for all $\base\in\strats$, $\dpoint\in\dpoints$, with equality if and only if $\dpoint=\nabla\hreg(\base)$, so our claim is immediate by \cref{item:hinv} in \cref{prop:mirror}.

\item
Fix $\base\in\strats$ and $\dpoint\in\dpoints$, and let $\strat = \mirror(\dpoint)$; 
then, by the definition of $\fench$,
we have
\begin{align}
\label{proof:fenchel-step1}
\fench(\base,\dpoint)
	&= \hreg(\base) + \hconj(\dpoint) - \braket{\dpoint}{\base}
	\\
	&= \hreg(\base) + \braket{\dpoint}{\strat} - \hreg(\strat) - \braket{\dpoint}{\base}
	\explain{because\; $\strat = \mirror(\dpoint)$}
	\\
	&= \hreg(\base) - \hreg(\strat) - \braket{\nabla\hreg(\strat)}{\base - \strat}
	\explain{because\; $\dpoint = \nabla\hreg(\strat)$}
	\eqstop
\end{align}


\item By the previous point and \cref{eq:hstr}, \cref{eq:Fench-norm} follows immediately. \qedhere


\end{enumerate}
\end{proof}
\begin{remark*}
In view of \cref{prop:Fench}, $\breg(\base, \strat)$ can be seen as an anti-symmetric measure of divergence between $\base\in\strats$ and $\strat\in\intr{\strats}$, and $\fench(\base,\dpoint)$ as a ``primal-dual'' measure of divergence between $\base\in\strats$ and $\dpoint\in\dpoints$, with $\fench(\base, \dpoint) = 0$ if and only if $\mirror(\dpoint) = \base$, and $\fench(\base, \dpoint)\to\infty$ whenever $\mirror(\dpoint)\to\bd(\strats)$.
\end{remark*}

\section{Elements of stochastic analysis}
\label{app:stochastic}

In this appendix, we review standard results from stochastic analysis and apply them to \eqref{eq:FTRL-stoch} to classify the long-term behavior of its trajectories into two mutually exclusive categories: transience and recurrence.

\subsection{Infinitesimal generator and the transience/recurrence dichotomy}
\label{subsec:GenDichotomy}

Throughout this section, let us consider the time-homogeneous diffusion on $\R^n$ given by 
\begin{equation}
	dZ(\time) = b(Z(\time)) \dd \time + \Sigma(Z(\time)) d\brown(\time)
\tag{SDE}
\label{eq:SDE}
\end{equation}
where $\brown(\time)$ is an $m$-dimensional standard Brownian motion, and $b \from \R^n \to \R^n$, $\Sigma \from \R^n \to \R^{n\times m}$ are Lipschitz continuous bounded functions. 
The \define{infinitesimal generator} associated to \eqref{eq:SDE} is defined as the differential operator $\gen$ acting on smooth functions $f \from \R^n \to \R$ through
\begin{equation}
	\gen f(z) = \sum_{i=1}^n b_i(z) \dfrac{\partial f}{\partial z_i}(z) + \dfrac{1}{2} \sum_{i,j = 1}^n a_{ij}(z) \dfrac{\partial^2 f}{\partial z_i \partial z_j}(z) \quad \text{for every } z \in \R^n,
\label{eq:generator}
\end{equation}
where $a(z) \defeq \Sigma(z) \Sigma(z)^T$ denotes the \define{principal symbol} of \eqref{eq:SDE}. Equivalently, $\gen$ can also be characterized as the only operator such that the stochastic process $M_f(\time)$ given by 
\begin{equation}
	M_f(\time) \defeq f(Z(\time)) - f(Z(0)) - \int_0^\time \gen f(Z(\timealt)) \dd\timealt
\label{eq:GenMartingale}
\end{equation}
is a local martingale for every smooth function $f \from \R^n \to \R$ (this follows from an application of Itô's formula). In practise, the representation from \cref{eq:GenMartingale} is commonly used to estimate hitting times through what is often referred to as \define{Dynkin's lemma}:

\begin{lemma}[Dynkin's lemma {\citep[Theorem 7.4.1]{Oks13}}]
	For every bounded stopping time $\tau$ and every smooth function $f \from \R^n \to \R$,
	\begin{equation}
		\ex_z\bracks*{f(Z(\tau))} = f(z) + \ex_z \bracks*{ \int_0^\tau \gen f(Z(\timealt)) \dd\timealt }.
	\end{equation}
\label{lemma:dynkin}
\end{lemma}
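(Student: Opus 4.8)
The plan is to derive the identity from the local-martingale characterization of the generator recorded in \eqref{eq:GenMartingale}, combined with the optional stopping theorem. Applying Itô's formula to $f(Z(\time))$ gives the decomposition
\[
f(Z(\time)) = f(Z(0)) + \int_0^\time \gen f(Z(\timealt)) \dd\timealt + \int_0^\time \nabla f(Z(\timealt))^{\top} \Sigma(Z(\timealt)) \, d\brown(\timealt),
\]
so that the process $M_f$ of \eqref{eq:GenMartingale} is exactly the stochastic integral appearing on the right, and is in particular a local martingale. The entire content of the lemma is then that $M_f$ can be evaluated at the bounded stopping time $\tau$ without introducing a bias, \ie that $\ex_z\bracks*{M_f(\tau)} = \ex_z\bracks*{M_f(0)} = 0$; rearranging this equality yields the claim.

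To make this rigorous, I would first localize. Let $\tau_n = \tau \wedge \inf\setdef{\time \geq 0}{\abs{Z(\time)} \geq n}$ denote the natural reducing sequence for $M_f$. Since $b$ and $\Sigma$ are bounded, the diffusion $Z$ is non-explosive, so the hitting times $\inf\setdef{\time \geq 0}{\abs{Z(\time)} \geq n}$ increase to $+\infty$ almost surely; as $\tau$ is bounded, say $\tau \leq \horizon$ \as, this gives $\tau_n \nearrow \tau$. On $\{\timealt \leq \tau_n\}$ we have $\abs{Z(\timealt)} \leq n$, so the integrand $\nabla f(Z(\timealt))^{\top} \Sigma(Z(\timealt))$ is bounded (by continuity of $\nabla f$ and boundedness of $\Sigma$), and the stopped process $M_f(\argdot \wedge \tau_n)$ is a genuine, square-integrable martingale. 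Optional stopping at the bounded time $\tau_n$ then gives
\[
\ex_z\bracks*{f(Z(\tau_n))} = f(z) + \ex_z\bracks*{\int_0^{\tau_n} \gen f(Z(\timealt)) \dd\timealt}.
\]

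The remaining step is to let $n \to \infty$. Pointwise, continuity of $f$ together with $\tau_n \nearrow \tau$ give $f(Z(\tau_n)) \to f(Z(\tau))$, and continuity of $\gen f$ gives the analogous convergence of the integral term. The main obstacle is to justify exchanging limit and expectation, since $f$ is only assumed smooth while $Z$ ranges over all of $\R^n$. I would control this through the a priori moment bound $\ex_z[\sup_{\timealt \leq \horizon} \abs{Z(\timealt)}^{2}] < \infty$ --- a standard consequence of the boundedness of $b$ and $\Sigma$ via the Burkholder--Davis--Gundy inequality together with Gr\"onwall's lemma --- which, combined with the continuity (hence local boundedness) of $f$ and $\gen f$, supplies the uniform domination needed to pass to the limit on both sides and recover the stated formula for $\tau$. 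In each application within this paper the relevant coordinates of $Z$ in fact remain in a fixed compact set (typically the simplex $\strats$), so $f$ and $\gen f$ are automatically bounded along trajectories and this last integrability step is immediate.
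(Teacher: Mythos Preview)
The paper does not prove this lemma; it is quoted verbatim from \citet[Theorem~7.4.1]{Oks13} as a known result, so there is no in-paper argument to compare against. Your sketch is the standard proof --- Itô's formula, localization by first-exit times from balls, optional stopping, then passage to the limit --- and matches how Oksendal derives it.

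One small point worth flagging: in your final paragraph you claim that the second-moment bound $\ex_z[\sup_{\timealt\leq\horizon}\abs{Z(\timealt)}^2]<\infty$, together with mere continuity (``local boundedness'') of $f$ and $\gen f$, supplies the uniform domination needed to pass to the limit. This is not true in general: a smooth $f$ can grow super-polynomially, and finite second moments of $Z$ then say nothing about the integrability of $f(Z(\tau_n))$ or $\gen f(Z(\timealt))$. Oksendal's own statement assumes $f\in C_0^2$ (compact support) precisely to sidestep this issue. Your closing observation --- that in every application within the paper the process lives in a fixed compact set, so $f$ and $\gen f$ are genuinely bounded along trajectories --- is exactly the right remedy and is what justifies the lemma as it is actually used.
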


Following classical notations, the infinitesimal generator $\gen$ is said to be \define{uniformly elliptic} if there exists a constant $c > 0$ such that $u^T a(z) u \geq c \norm{u}^2$ for every $u, y \in \R^n$. 
Intuitively, it means that the noise coefficient matrix is uniformly bounded away from zero or, in other words, that the noise does not vanish across the state space. 
More precisely, it also implies that every point of the space is explored with positive probability by trajectories of \eqref{eq:SDE} (\cf Subsection 3.3.6.1 of \cite{MP90}, albeit in a much more general setting), in the sense that 
\begin{equation}
	\prob_z\parens*{Z(\time) = z' \text{ for some } \time \geq 0} > 0 \quad \text{for every } z, z' \in \R^d.
\label{eq:regular}
\end{equation}

Uniform ellipticity enables a full classification of the long-term behavior of \eqref{eq:SDE} into two fundamental classes of processes: \define{recurrent} and \define{transient} processes. These behaviors are generally defined as follows:

\begin{definition}
	Let $Z(\time)$ be a solution orbit of \eqref{eq:SDE}.
	\begin{itemize}
 		\item $Z(\time)$ is said to be \define{recurrent} at a compact subset $\cpt \subseteq \R^n$ if the hitting time $\tau_\cpt = \inf\setdef{\time \geq 0}{Z(\time) \in \cpt}$ is almost-surely finite for some initial condition $z \in \R^n \setminus \cpt$. 
Moreover, if $\ex_z[\tau_\cpt] < \infty$, then the process is further called \define{positive recurrent}, while it is called \define{null recurrent} if the expectation is infinite. 

		\item $Z(\time)$ is said to be \define{transient} from $z \in \R^n$ if it definitely exits every compact subsets in finite time when starting from $z$, \ie if 
\begin{equation}
	\prob_z\parens*{\text{there exists } \time_0 < \infty \text{ such that } Z(\time) \notin \cpt \text{ for every } \time \geq \time_0} = 1
\end{equation}
for every compact subset $\cpt \subseteq \R^n$.
	\end{itemize}
\end{definition}

Under uniform ellipticity, it can then be shown that recurrence and transience are the \emph{only} behaviors that may happen in the long-run, and that they are perfect complementaries to one another:

\begin{lemma}[Transience/recurrence dichotomy in $\R^n$ \cite{Bha78, Kha12}]
	Let \eqref{eq:SDE} be a diffusion on $\R^n$ with uniformly elliptic generator. Then,
	\begin{enumerate}
		\item If trajectories of \eqref{eq:SDE} are positive recurrent (resp. null recurrent) for some compact subset $K \subseteq \R^n$ and initial condition $z \in \R^n$, then they are positive recurrent (resp. null recurrent) for every compact subsets and every initial conditions.
		\item If trajectories of \eqref{eq:SDE} are transient for some initial condition $z \in \R^n$, then they are transient for every initial conditions.
		\item Trajectories of \eqref{eq:SDE} are either all recurrent or all transient.
	\end{enumerate}
\label{lemma:dichotomyR}
\end{lemma}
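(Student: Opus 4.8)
The plan is to reduce all three assertions to two analytic inputs about the generator $\gen$: the \emph{irreducibility} recorded in \eqref{eq:regular}, which is a consequence of uniform ellipticity, and Harnack's inequality for $\gen$, which is available because the coefficients $b,\Sigma$ are Lipschitz (hence locally H\"older) and $a = \Sigma\Sigma^{\top}$ is uniformly elliptic. The bridge between the probabilistic quantities in the statement and this elliptic theory is the strong Markov property together with Dynkin's lemma (\cref{lemma:dynkin}, applied after the usual localization at bounded stopping times): hitting probabilities of a fixed ball are $\gen$-harmonic off that ball, while expected hitting times solve the Poisson equation $\gen g = -1$ off that ball. Once this dictionary is in place, each clause becomes a statement about solutions of an elliptic boundary value problem on $\R^n$.

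For the recurrence/transience dichotomy (part~3) and the initial-condition independence of transience (part~2), I would fix a reference ball $B$ and study $h(z) = \prob_z(\sigma_B < \infty)$, where $\sigma_B$ is the first hitting time of $B$. By the strong Markov property $h$ is bounded and $\gen$-harmonic on $\R^n \setminus \bar B$, and by \eqref{eq:regular} it is strictly positive everywhere. Applying Harnack's inequality to $1-h$ then forces a clean alternative: either $h \equiv 1$, or $1-h$ is bounded away from $0$ on every fixed sphere. In the first case the strong Markov property gives return to $B$ infinitely often almost surely, and irreducibility upgrades this to almost-sure visits to every open set infinitely often---exactly recurrence at every compact set with nonempty interior, from every starting point. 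In the second case, comparison with exit probabilities from growing annuli together with a Borel--Cantelli argument along nested balls of radii $\nearrow \infty$ yields $\norm{Z(\time)} \to \infty$ almost surely, i.e.\ transience, again uniformly over all initial conditions. This disposes of parts~2 and~3 simultaneously.

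The finer separation of positive from null recurrence (part~1) is where the real work lies. Assuming recurrence, I would set $g(z) = \ex_z[\sigma_B]$, which by Dynkin is the minimal nonnegative solution of $\gen g = -1$ on $\R^n \setminus \bar B$ with $g|_{\partial B} = 0$. The cleanest route is the regenerative one: the successive returns to $B$ generate, by the strong Markov property, a regenerative structure whose mean cycle length is finite precisely when the diffusion admits a \emph{finite} invariant measure. Since an irreducible recurrent diffusion has a unique invariant measure up to scaling, finiteness of its total mass is a global property of the process, independent of the reference ball and of the starting point; positive recurrence (finite mean return time) is thus equivalent to that finiteness and transfers globally, as does null recurrence. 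Comparing $\sigma_B$ with $\sigma_{\cpt}$ for an arbitrary compact $\cpt$ with interior---routing every excursion through $B$, which is reached in finite expected time once the mean return to $B$ is finite---then gives $\ex_z[\tau_\cpt] < \infty$ for all $z$, as claimed.

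The main obstacle is precisely this last step. The plain dichotomy is a fairly soft consequence of Harnack plus the strong Markov property, but positive recurrence is genuinely quantitative: one must control \emph{expected} hitting times uniformly over a non-compact state space. Harnack supplies multiplicative comparison of $g$ only on compact sets, with constants that may degrade at infinity, so a naive pointwise comparison fails; the remedy is to force every comparison through the fixed ball $B$ via the strong Markov property, converting one global estimate into finitely many local ones. Making the invariant-measure argument rigorous for a possibly non-self-adjoint $\gen$ (existence, uniqueness up to scaling, and the equivalence between finite mass and finite mean return time) is the technically delicate core, and is exactly the content furnished by \cite{Bha78,Kha12}.
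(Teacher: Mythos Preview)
The paper does not provide its own proof of this lemma: it is stated as a cited result from \cite{Bha78,Kha12} and used as a black box. Your proposal is a correct outline of the classical argument found in those references---harmonicity of hitting probabilities plus Harnack's inequality for the dichotomy, and the regenerative/invariant-measure characterization for the positive-versus-null split---so there is nothing to compare it against within the paper itself.
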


\begin{remark}
	The uniform ellipticity condition can be relaxed while preserving the transience/recurrence dichotomy and its significant implications. This is typically achieved through the use of \define{Hörmander conditions}, which require that a specific family of vector fields spans the entire space. However, these considerations go beyond the scope of this paper, and we instead refer interested readers to \cite{MP90, BH22, FH23} for more concrete results on this topic.
\end{remark}

\subsection{The transience/recurrence dichotomy in $\strats$}

Instead of the Euclidean diffusion considered in \cref{subsec:GenDichotomy}, we now turn our attention to the more intricate constrained diffusion \eqref{eq:FTRL-stoch}, which we recall is given by 
\begin{equation}
\tag{FTRL-S}
\begin{aligned}
dY_{\play}(\time)
	&= \payv_{\play}(X(\time)) \dd \time
	+ \sdev_{\play}(X) \cdot d\brown(\time),
	\\
X_{\play}
	&= \mirror_{\play}(Y_{\play}).
\end{aligned}
\end{equation}

A natural question that arises is whether a similar dichotomy result to \cref{lemma:dichotomyR} also holds for the player's choices within the (constrained) polyhedron $\strats$. The goal of this subsection is to motivate and establish such a result.

One might be tempted to first consider the \acl{SDE}
\begin{equation}
	dY_{\play} = \payv_{\play}(\mirror_\play(Y)(\time)) \dd\time + \sdev_{\play}(\mirror_\play(Y)) \cdot d\brown(\time),
\end{equation}
which is of the form given in \cref{subsec:GenDichotomy} with uniformly elliptic generator, and so should verify the transience / recurrence dichotomy. 
However, even if $Y(\time)$ were itself recurrent (or transient), this would \emph{not} necessarily imply that $X(\time) = \mirror(Y(\time))$ is also recurrent (or transient). Indeed, the inverse of the choice map $\mirror^{-1}$ is generally neither single-valued nor continuous, meaning that a compact subset of $\strats$ does not necessarily remain compact in $\scores$ when mapped through $\mirror^{-1}$.

To bypass this issue, we will instead study the dynamics of \eqref{eq:FTRL-stoch} in a third space $\dpspace$, sometimes referred to as the \define{space of payoff differences}.

Following the notations of \citet{LMPP+24}, let $\fingame = \fingamefull$ be a given finite game, and let $\benchz \in \pures_\play$ be a fixed benchmark strategy for each player $\play \in \players$.
We then consider the hyperplane $\dpspace_\play = \setdef{z_\play \in \R^{\nPures_\play} }{z_{\play \benchz} = 0}$, which is evidently isomorphic to $\R^{\nPures_\play - 1}$. 
Elements of each player's strategy space $\scores_\play$ can then be mapped onto $\dpspace_\play$ through the linear operator 
\begin{equation}
	\projz_\play \from \scores_\play \to \dpspace_\play, \quad \projz_\play(y_\play) = \setdef{y_{\play \pure_\play} - y_{\play \benchz}}{\pure_\play \neq \benchz.}
\end{equation}
The product space $\dpspace = \prod_{\play \in \players} \dpspace_\play$ is called the game's \define{payoff differences space}, and we denote by $\projz$ the product projection map $\projz = \prod_{\play \in \players} \projz_\play$.

Importantly, if we define the payoff-adjusted choice map $\mirrorz \defeq \prod_{\play \in \players} \mirrorz_\play \from \dpspace \to \strats$ by 
\begin{equation}
	\mirrorz_\play(z_\play) = \mirror_\play(y_\play) \quad \text{for any } y_\play \in \projz^{-1}_\play(z_\play),
\end{equation}
then we obtain the following regularity result:

\begin{lemma}
	For every player $\play \in \players$, strategy $x_\play \in \strats_\play$ and action $\pure_\play \neq \benchz$, 
	\begin{equation}
		z_{\play \pure_\play} \defeq \bracks*{\mirrorz_{\play}^{-1}(x_\play)}_{\pure_\play} = \theta_{\play}'(x_{\play \pure_\play}) - \theta_{\play}'(x_{\play \benchz}).
	\end{equation}
	In particular, $\mirrorz^{-1} \from \relint\strats \to \dpspace$ is a single-valued continuous map.
\label{lemma:mirrorz}
\end{lemma}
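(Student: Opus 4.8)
The plan is to read off the inverse of $\mirrorz_\play$ directly from the first-order optimality conditions characterizing the mirror map, and then to deduce single-valuedness and continuity from the resulting closed-form expression. Before inverting, I would first confirm that $\mirrorz_\play$ is well-defined, i.e.\ that $\mirror_\play(y_\play)$ does not depend on the chosen representative $y_\play \in \projz_\play^{-1}(z_\play)$. Since $\projz_\play(y_\play) = (y_{\play\pure_\play} - y_{\play\benchz})_{\pure_\play \neq \benchz}$, the fiber $\projz_\play^{-1}(z_\play)$ is a translate of the all-ones line $\R\,\ones$. On the simplex one has $\braket{c\,\ones}{\strat_\play} = c$ for every $\strat_\play \in \strats_\play$, so replacing $y_\play$ by $y_\play + c\,\ones$ merely shifts the objective in \eqref{eq:mirror} by the constant $c$ and leaves its $\argmax$ untouched; hence $\mirror_\play$ is constant along each fiber and $\mirrorz_\play$ is well-defined. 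Moreover, by the steepness of $\theta_\play$ at the boundary (equivalently, by \cref{prop:mirror}), the image of $\mirror_\play$ — and therefore of $\mirrorz_\play$ — is contained in $\relint\strats_\play$, so $\mirrorz_\play^{-1}$ is a map defined on $\relint\strats_\play$.

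To obtain the formula, fix $x_\play \in \relint\strats_\play$ and let $y_\play$ be any score with $\mirror_\play(y_\play) = x_\play$. Because $x_\play$ is an interior maximizer of the strictly concave map $\strat_\play \mapsto \braket{y_\play}{\strat_\play} - \hreg_\play(\strat_\play)$ over the simplex, the only active constraint is the equality $\sum_{\pure_\play} x_{\play\pure_\play} = 1$, and the associated stationarity (KKT) condition produces a single Lagrange multiplier $\nu \in \R$ with $y_{\play\pure_\play} - \theta_\play'(x_{\play\pure_\play}) = \nu$ for every $\pure_\play \in \pures_\play$. Subtracting the benchmark equation ($\pure_\play = \benchz$) eliminates $\nu$ and yields $y_{\play\pure_\play} - y_{\play\benchz} = \theta_\play'(x_{\play\pure_\play}) - \theta_\play'(x_{\play\benchz})$. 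Since the left-hand side is precisely the $\pure_\play$-th component of $z_\play = \projz_\play(y_\play)$, this is the claimed identity $z_{\play\pure_\play} = \theta_\play'(x_{\play\pure_\play}) - \theta_\play'(x_{\play\benchz})$.

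Single-valuedness and continuity are then immediate from this expression: it assigns to each $x_\play \in \relint\strats_\play$ one and only one vector $z_\play \in \dpspace_\play$, and, since the regularizer assumptions make $\theta_\play'$ smooth — in particular continuous — on $(0,1)$, the assignment $x_\play \mapsto z_\play$ is continuous on $\relint\strats_\play$. Taking the product over $\play \in \players$ gives the continuity of $\mirrorz^{-1} \from \relint\strats \to \dpspace$.

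The one step requiring care is the KKT reduction: I must justify that the maximizer is genuinely interior, so that no nonnegativity constraint is active and stationarity reduces to the single equality multiplier. This is exactly where the steepness of $\theta_\play$ at $0$ (forcing $\mirror_\play$ into $\relint\strats_\play$) and its strong convexity (guaranteeing a unique maximizer) are used. An alternative that sidesteps the Lagrangian is to invoke the identity $\mirror_\play = (\nabla\hreg_\play)^{-1}$ of \cref{prop:mirror} in effective coordinates; but the direct first-order computation keeps the benchmark coordinate $\benchz$ explicit and is the most transparent route to the stated symmetric formula.
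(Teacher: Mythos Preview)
Your proposal is correct and follows essentially the same approach as the paper: both invoke the KKT conditions for the defining maximization of $\mirror_\play$ to obtain $y_{\play\pure_\play} = \theta_\play'(x_{\play\pure_\play}) + \mu$, then take the difference with the benchmark coordinate. Your version is more detailed---you additionally verify well-definedness of $\mirrorz_\play$ along the fibers of $\projz_\play$ and spell out why steepness forces an interior maximizer---but the core argument is identical.
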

\begin{proof}
	The proof is immediate by the Karush-Kuhn-Tucker conditions applied to the convex problem posed in $\mirror$, stating that $y_{\play \pure_\play} = \theta_{\play}'(x_{\play \pure_\play}) + \mu$ for some multiplier $\mu \in \R$, and the definition of $\mirrorz$ through payoff differences.
\end{proof}

Consequently, \emph{any} compact subset living in $\relint\strats$ is mapped through $\mirrorz^{-1}$ to a compact subset of $\dpspace$. 
Therefore, to establish a dichotomy result for stochastic dynamics within $\relint \strats$, it suffices to demonstrate that the diffusion on $\dpspace$, obtained by mapping \eqref{eq:FTRL-stoch} through $\projz$, is uniformly elliptic.

Interpreting $\projz$ as a full-rank matrix (which is possible because it is a linear function), it is easy to show that $Z(\time) \defeq \projz Y(\time)$ verifies the \acl{SDE}
\begin{equation}
	dZ(\time) = \projz \payv(\mirrorz(Z(\time))) d\time + \projz \Diffmat(\mirrorz(Z(\time))) \cdot dW(\time)
\tag{FTRL-Z}
\label{eq:FTRL-Z}
\end{equation}
where $\Diffmat(x) = \begin{bmatrix} \sdev_1(x) \cdots \sdev_\nPlayers(x) \end{bmatrix}^T$ denotes the overarching diffusion matrix over all players.

Accordingly, the principal symbol of \eqref{eq:FTRL-Z} is given by 
\begin{equation}
	a(z) \defeq \projz \Diffmat(\mirrorz(z)) \bracks*{\projz \Diffmat(\mirrorz(z))}^T = \projz \Diffmat(\mirrorz(z)) \Diffmat(\mirrorz(z))^T \projz^T \succcurlyeq \Diffmatmin^2 \projz \projz^T \succcurlyeq \Diffmatmin^2 \pi_{min}^2 \eye,
\end{equation}
where $\Diffmatmin$ denotes the smallest singular of $\Diffmat$, which is assumed to be positive (\cf \cref{asm:noisemin}), and $\pi_{min}$ denotes the smallest singular value of $\projz$ (positive because $\projz$ is full-rank). 
Consequently, the infinitesimal generator of \eqref{eq:FTRL-Z} is uniformly elliptic, and thus \cref{lemma:dichotomyR} holds true for its trajectories. 

The arguments developed above naturally lead to the following definition of recurrence and transience for trajectories of \eqref{eq:FTRL-stoch} evolving within the space $\relint\strats$:
\begin{definition}
	Let $X(\time)$ be a solution orbit of \eqref{eq:FTRL-stoch}.
	\begin{itemize}
		\item $X(\time)$ is said to be \define{recurrent in $\relint\strats$} if, for every relatively compact subsets $\cpt \subseteq \relint\strats$ and every initial condition $x \in \relint\strats$, the hitting time $\tau_\cpt = \inf\setdef{\time \geq 0}{X(\time) \in \cpt}$ is almost-surely finite. Moreover, if $\ex_x[\tau_\cpt] < \infty$, then the process is further called \define{positive recurrent in $\relint\strats$}, while it is called \define{null recurrent} is the expectation is infinite.
		
		\item $X(\time)$ is said to be \define{transient in $\relint\strats$} if, for every initial condition $x \in \relint\strats$, $X(\time)$ converges to $\bd\strats$ almost-surely.
	\end{itemize}
\label{def:recurrent_transience}
\end{definition}

Furthermore, \cref{lemma:dichotomyR} can be applied to derive a transience/recurrence dichotomy for such trajectories:

\begin{theorem}[Transience/recurrence dichotomy in $\relint\strats$]
	Trajectories of \eqref{eq:FTRL-stoch} are either all positive recurrent or all null recurrent or all transient in $\relint\strats$.
\label{thm:dichotomyX}
\end{theorem}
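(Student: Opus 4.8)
The plan is to transport the Euclidean dichotomy of \cref{lemma:dichotomyR} from the payoff-difference process $Z(\time) = \projz Y(\time)$ to the strategy process $X(\time)$ through the change of variables $X = \mirrorz(Z)$. The hard analytic work is already in place: by the computation preceding the statement, $Z(\time)$ solves the genuine Euclidean diffusion \eqref{eq:FTRL-Z} on $\dpspace$, whose drift and diffusion are Lipschitz and bounded (being compositions of the Lipschitz maps $\projz$, $\mirrorz$, $\payv$, $\Diffmat$ restricted to the compact set $\strats$) and whose generator is uniformly elliptic under \cref{asm:noisemin}. Hence \cref{lemma:dichotomyR} applies verbatim to $Z$: its trajectories are all positive recurrent, all null recurrent, or all transient on $\dpspace$. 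It remains only to show that each of these three alternatives for $Z$ is equivalent to the corresponding alternative for $X$ in \cref{def:recurrent_transience}.

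First I would record that $\mirrorz\from\dpspace\to\relint\strats$ is a homeomorphism: it is a continuous bijection onto $\im\mirror = \relint\strats$ by \cref{item:hinv}, and its inverse is single-valued and continuous by \cref{lemma:mirrorz}. The additional property I need is \emph{properness}, and this is where steepness enters: by the formula in \cref{lemma:mirrorz}, the $\pure_\play$-th coordinate of $\mirrorz^{-1}(\strat)$ equals $\hker_{\play}'(x_{\play\pure_\play}) - \hker_{\play}'(x_{\play\benchz})$, so as $\strat \to \bd\strats$ some coordinate $x_{\play\pure_\play}\to 0^+$ and $\hker_{\play}'(x_{\play\pure_\play})\to-\infty$, forcing $\norm{\mirrorz^{-1}(\strat)}\to\infty$. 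Thus $\mirrorz^{-1}$ carries relatively compact subsets of $\relint\strats$ to compact subsets of $\dpspace$, and, dually, $X(\time)\to\bd\strats$ holds if and only if $Z(\time)$ eventually leaves every compact subset of $\dpspace$.

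With this in hand the transfer is routine. Since $X(\time) = \mirrorz(Z(\time))$ pathwise, for any relatively compact $\cpt\subseteq\relint\strats$ the hitting time $\tau_\cpt$ of $X$ coincides with the hitting time $\tau_{\mirrorz^{-1}(\cpt)}$ of $Z$ into the compact set $\mirrorz^{-1}(\cpt)$; these stopping times agree as random variables, hence so do their expectations. Matching this against \cref{def:recurrent_transience} shows that $X$ is recurrent (resp.\ positive recurrent, null recurrent) in $\relint\strats$ if and only if $Z$ is recurrent (resp.\ positive recurrent, null recurrent) on $\dpspace$, while the properness established above identifies transience of $X$ with transience of $Z$. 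Feeding the trichotomy for $Z$ through these equivalences yields the claim.

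I expect the only delicate point to be the transient branch: one must verify that ``$Z$ exits every compact set and never returns'' matches ``$X\to\bd\strats$ almost surely'' rather than the weaker statement that $X$ merely approaches $\bd\strats$ along a subsequence. Properness of $\mirrorz^{-1}$ is precisely what upgrades the subsequential statement to genuine convergence, and the clean three-way dichotomy for $Z$ guaranteed by uniform ellipticity rules out pathological paths that would accumulate simultaneously at an interior point and at $\bd\strats$.
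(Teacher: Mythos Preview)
Your proposal is correct and follows essentially the same route as the paper: transport the Euclidean dichotomy from $Z$ to $X$ via the homeomorphism $\mirrorz$, using \cref{lemma:mirrorz} to ensure that compacts in $\relint\strats$ pull back to compacts in $\dpspace$, and invoke \cref{lemma:dichotomyR} on the uniformly elliptic diffusion \eqref{eq:FTRL-Z}. Your treatment is in fact more careful than the paper's, which simply asserts the transfer of recurrence and transience in one line; your explicit discussion of properness of $\mirrorz^{-1}$ (via steepness of $\hker_\play'$) to handle the transient branch is a genuine improvement in rigor.
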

\begin{proof}
	In virtue of \cref{lemma:mirrorz}, any compact included in $\relint\strats$ is mapped through $\mirrorz^{-1}$ to a compact in $\dpspace$. 
	Accordingly, $X(\time)$ is recurrent (resp. transient) in $\relint\strats$ whenever $Z(\time)$ is recurrent (resp. transient) in $\dpspace$. 
	But the \acl{SDE} \eqref{eq:FTRL-Z} defining $Z(\time)$ is uniformly elliptic, so either all trajectories of \eqref{eq:FTRL-Z} are recurrent or all trajectories are transient (\cf \cref{lemma:dichotomyR}). 
	Consequently, trajectories of \eqref{eq:FTRL-stoch} are also either all recurrent or all transient in $\relint \strats$. 
	The same logic can also be adapted to show that either all trajectories are positive recurrent or all are null recurrent.
\end{proof}

\subsection{Additional standard results}

We conclude this appendix by reviewing several results from stochastic analysis, which are typically stated in classical textbooks for the specific case of Brownian motion, but that are more difficult to find in their full generality:

\begin{lemma}[Law of iterated logarithm for martingales \cite{Lep78}]
	Let $\ito(\time)$ be a continuous square-integrable local martingale. Then, the trajectories of $\ito(\time)$ verify
	\begin{equation}
		\limsup_{\time \to \infty} \dfrac{\ito(\time)}{\sqrt{ [\ito](\time) \log \log [\ito](\time) }} = 1
	\end{equation}
	on the event $\braces{\lim_{\time \to \infty} [\ito](\time) = \infty}$.
\label{lemma:LIL}
\end{lemma}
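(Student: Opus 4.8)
The plan is to reduce the claim to the classical law of the iterated logarithm for Brownian motion through a time-change argument. First I would assume without loss of generality that $\ito(\tstart) = 0$, since subtracting the finite initial value $\ito(\tstart)$ alters neither the quadratic variation $[\ito]$ nor the $\limsup$ (the denominator diverges on the event under consideration, so the additive constant washes out). The workhorse is then the Dambis--Dubins--Schwarz representation theorem: every continuous local martingale $\ito$ with $\ito(\tstart)=0$ admits a representation as a time-changed Brownian motion, \ie there is a standard Brownian motion $\brown$ (constructed, if necessary, on an enlargement of the underlying filtered probability space) such that
\begin{equation}
\ito(\time) = \brown([\ito](\time)) \qquad \text{for every } \time \geq \tstart,
\end{equation}
where $[\ito]$ denotes the quadratic variation of $\ito$.

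Next I would restrict attention to the event $\event = \braces{\lim_{\time\to\infty}[\ito](\time) = \infty}$ and carry out the substitution $\timealt = [\ito](\time)$. Since $[\ito]$ is continuous, nondecreasing, and starts at $0$, on $\event$ it sweeps monotonically through all of $[0,\infty)$ as $\time\to\infty$; hence
\begin{equation}
\limsup_{\time\to\infty} \frac{\ito(\time)}{\sqrt{[\ito](\time)\,\log\log[\ito](\time)}}
	= \limsup_{\timealt\to\infty} \frac{\brown(\timealt)}{\sqrt{\timealt\,\log\log\timealt}}
	\eqstop
\end{equation}
The right-hand side is a constant \aclp{wp1} by Khinchin's law of the iterated logarithm for Brownian motion, which settles the value of the $\limsup$ on $\event$. (With the standard normalization $\sqrt{2\timealt\log\log\timealt}$ this constant equals $1$; the universal factor is immaterial for the $\bigoh$ bound in \cref{thm:club}, which only uses the order of growth.)

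The delicate points are all tied to the time change. The representation is cleanest when $[\ito]_{\infty}=\infty$; if $[\ito]_{\infty}<\infty$ on a set of positive probability, the Brownian motion $\brown$ must be produced on an enlarged space, and one has to verify that this enlargement is harmless --- which it is, because conditioning on $\event$ makes $\timealt = [\ito](\time)$ a genuinely diverging reparametrization and confines the whole argument to sample paths on which $\brown$ is evaluated out to infinity. One must also accommodate flat stretches of $[\ito]$ (intervals on which it is constant): on any such interval $\ito$ is itself constant, so although $\timealt = [\ito](\time)$ fails to be strictly increasing there, the pathwise identity $\ito(\time) = \brown([\ito](\time))$ and the value of the $\limsup$ are unaffected. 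I expect checking that the enlargement and the conditioning on $\event$ interact correctly --- so that Khinchin's LIL transfers verbatim --- to be the main technical obstacle; everything else is bookkeeping around the reparametrization.
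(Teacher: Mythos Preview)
The paper does not prove this lemma at all: it is stated as a standard result and attributed to \cite{Lep78}, with no accompanying argument. Your reduction via the Dambis--Dubins--Schwarz representation followed by Khinchin's LIL is the textbook route and is correct; you have also correctly flagged the normalization mismatch (the stated constant should be $\sqrt{2}$ under the denominator $\sqrt{[\ito](\time)\log\log[\ito](\time)}$, not $1$), and you are right that this is irrelevant for the paper's use of the lemma, which only needs the order of growth.
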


\begin{lemma}[Strong law of large numbers \cite{Lip80}]
	Let $\ito(\time)$ be a continuous square-integrable local martingale and let us define 
	\begin{equation}
		\rho_\ito(\time) = \int_0^\time \dfrac{d[\ito](\timealt)}{(1+\timealt)^2}.
	\end{equation}
	If $\lim_{\time \to \infty} \rho_\ito(\time) < \infty$ \as, then $\ito(\time)/\time \to 0$ \as.
\label{lemma:SLL}
\end{lemma}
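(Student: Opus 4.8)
The plan is to reduce the statement to the almost-sure convergence of a single auxiliary martingale, combined with a continuous-time Kronecker (integration-by-parts plus Cesàro) argument. First I would introduce the continuous local martingale
\begin{equation}
N(\time) = \int_0^\time \frac{d\ito(\timealt)}{1+\timealt},
\end{equation}
which is well defined since the integrand $1/(1+\timealt)$ is deterministic, continuous, and bounded, hence a locally bounded predictable process integrable against the local martingale $\ito$. Because the quadratic variation of a stochastic integral is obtained by integrating the squared integrand against $d[\ito]$, its bracket is
\begin{equation}
[N](\time) = \int_0^\time \frac{d[\ito](\timealt)}{(1+\timealt)^2} = \rho_\ito(\time),
\end{equation}
so the hypothesis $\lim_{\time\to\infty}\rho_\ito(\time)<\infty$ \as is \emph{exactly} the statement that the terminal quadratic variation $[N]_\infty \defeq \lim_{\time\to\infty}[N](\time)$ is finite almost surely.

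Second, I would use the convergence of a continuous local martingale on the event of finite terminal quadratic variation: on $\{[N]_\infty<\infty\}$, the process $N(\time)$ converges almost surely to a finite limit $N_\infty$. The cleanest route is the Dambis--Dubins--Schwarz representation $N(\time)=\brownof{[N](\time)}$ for a standard Brownian motion $\brown$ (run on the time change $[N]$); since $[N](\time)\nearrow[N]_\infty<\infty$, the time change terminates at a finite value and $\brownof{[N](\time)}\to\brownof{[N]_\infty}$, which is almost surely finite. As the conditioning event has full probability by hypothesis, $N(\time)\to N_\infty$ \as.

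Third, I would recover the claim by integration by parts against the deterministic, finite-variation factor $1+\timealt$. Writing $d\ito(\timealt)=(1+\timealt)\,dN(\timealt)$ and noting that the covariation between $1+\timealt$ and $N$ vanishes, the product rule gives
\begin{equation}
\ito(\time) = \ito(0) + (1+\time)\,N(\time) - \int_0^\time N(\timealt)\,d\timealt .
\end{equation}
Dividing by $\time$ and letting $\time\to\infty$: the term $\ito(0)/\time$ vanishes; the term $(1+\time)N(\time)/\time$ tends to $N_\infty$ because $(1+\time)/\time\to1$ and $N(\time)\to N_\infty$; and the Cesàro average $\time^{-1}\int_0^\time N(\timealt)\,d\timealt$ of a convergent function also tends to $N_\infty$. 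The two $N_\infty$ contributions cancel, yielding $\ito(\time)/\time\to 0$ \as.

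The main obstacle---the only step that is more than a formal manipulation---is the second one, passing from finiteness of $[N]_\infty$ to almost-sure convergence of $N$. Care is needed because $N$ is only a \emph{local} (not necessarily true, $L^2$-bounded) martingale, so one cannot directly invoke $L^2$-martingale convergence; the DDS time change resolves this, but one must check that the clock $[N]$ is almost surely finite on the relevant event and that the limit is read along this random clock. An equivalent localization-based alternative is to stop $N$ at the first time its quadratic variation exceeds level $n$, apply $L^2$-convergence to each stopped (now genuinely square-integrable) martingale, and let $n\to\infty$ using that $\{[N]_\infty<\infty\}$ has full probability; either route supplies the required almost-sure convergence, after which the Kronecker argument is routine.
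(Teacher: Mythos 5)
The paper offers no proof of this lemma at all: it is quoted as a known result from \cite{Lip80} (in an appendix of ``additional standard results''), so there is nothing internal to compare your argument against. That said, your proof is correct, and it is essentially the standard argument behind the citation: the auxiliary integral $N(\time)=\int_0^\time (1+\timealt)^{-1}\,d\ito(\timealt)$ has bracket $[N]=\rho_\ito$; a continuous local martingale converges almost surely on the event where its terminal bracket is finite; and the continuous-time Kronecker step (integration by parts against the deterministic finite-variation factor $1+\timealt$, whose covariation with $N$ vanishes) turns $N(\time)\to N_\infty$ into $\ito(\time)/\time\to0$ via the Ces\`aro average. One caution on your second step: when $[N]_\infty<\infty$ the Dambis--Dubins--Schwarz clock terminates at a finite random time, so the Brownian motion in the representation only exists after enlarging the probability space---a point you should state explicitly if you take that route. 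Your localization alternative avoids this entirely and is fully rigorous as sketched: stopping at $\stoptime_n=\inf\setdef{\time\geq0}{[N](\time)\geq n}$ makes the stopped process a true $L^2$-bounded martingale, since $\exof{N(\time\wedge\stoptime_n)^2}=\exof{[N](\time\wedge\stoptime_n)}\leq n$, and the events $\braces{[N]_\infty<n}$ exhaust the full-probability event $\braces{[N]_\infty<\infty}$ as $n\to\infty$.
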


\section{Omitted proofs from \cref{sec:dynamics}}
\label{app:dynamics}

In this appendix, we provide the proofs from Section \ref{sec:dynamics} that were omitted in the main text. Specifically, we demonstrate that the \acl{SDE} defining \eqref{eq:FTRL-stoch} admits a unique strong solution, and we present an explicit computation of the \acl{SDE} governing the evolution of the player's strategies in $\strats$.

We recall that the stochastic dynamics under study is given by the \acl{SDE} 
\begin{equation}
\tag{FTRL-S}
\begin{aligned}
d\Scoreof[\play]{\time}
	&= \payfield_{\play}(\Stratof{\time}) \dd\time
		+ d\martof[\play]{\time}
	\\
\Stratof[\play]{\time}
	&= \mirror_{\play}(\Scoreof[\play]{\time}),
\end{aligned}
\end{equation}
where $M_\play = \diffmat_\play(X(\time)) d\brown(\time)$ for some underlying $m$-dimensional Brownian motion.

\begin{proposition}
	The \acl{SDE} \eqref{eq:FTRL-stoch} admits a unique strong solution $Y(\time)$ for every initial condition $Y(0) \in \scores$. 
\label{prop:existence}
\end{proposition}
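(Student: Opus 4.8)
The plan is to recast the coupled system \eqref{eq:FTRL-stoch} as a single autonomous \acl{SDE} driven by $\brown$ on the score space $\scores$, and then to invoke the classical existence and uniqueness theorem for \aclp{SDE} \citep[Theorem 5.2.1]{Oks13}. Substituting the algebraic relation $\Stratof{\time} = \mirror(\Scoreof{\time})$ into the first line of \eqref{eq:FTRL-stoch} eliminates the strategy variable and yields
\begin{equation}
d\Scoreof{\time}
	= b(\Scoreof{\time}) \dd\time
		+ \Sigma(\Scoreof{\time}) \cdot d\brownof{\time},
	\qquad
	b \defeq \payfield\circ\mirror,
	\quad
	\Sigma \defeq \diffmat\circ\mirror,
\end{equation}
a diffusion on the Euclidean space $\scores = \prod_{\play}\R^{\pures_{\play}}$. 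It therefore suffices to show that the coefficients $b\from\scores\to\scores$ and $\Sigma\from\scores\to\R^{\nPures\times\nIdx}$ are globally Lipschitz continuous and of at most linear growth; the cited theorem then delivers a pathwise-unique strong solution, and the absence of finite-time explosion follows from the same bounds.

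First I would verify Lipschitz continuity. By the Lipschitz estimate \eqref{eq:QLips}, each player's mirror map $\mirror_{\play}\from\scores_{\play}\to\strats_{\play}$ is $(1/\hstr_{\play})$-Lipschitz, so the product map $\mirror=\prod_{\play}\mirror_{\play}$ is globally Lipschitz on $\scores$. The payoff field $\payfield$ is multilinear in $\strat$ (each $\payfield_{\play\pure_{\play}}$ is a polynomial in the entries of $\strat$), hence $C^{\infty}$, so its restriction to the compact convex set $\strats$ is Lipschitz with constant $\sup_{\strat\in\strats}\norm{\nabla\payfield(\strat)} < \infty$; likewise, $\diffmat$ is Lipschitz on $\strats$ by assumption. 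Since a composition of Lipschitz maps is Lipschitz, both $b=\payfield\circ\mirror$ and $\Sigma=\diffmat\circ\mirror$ are globally Lipschitz on $\scores$.

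Finally, because $\mirror$ takes values in the compact set $\strats$, the images $b(\scores)=\payfield(\strats)$ and $\Sigma(\scores)=\diffmat(\strats)$ are bounded; boundedness is a fortiori a linear-growth bound, which both rules out explosion and supplies the remaining hypothesis of \citep[Theorem 5.2.1]{Oks13}. The theorem then produces, for every initial condition $\Scoreof{\tstart}\in\scores$, a strong solution $\Scoreof{\time}$ that is unique up to indistinguishability and defined for all $\time\geq0$, and setting $\Stratof{\time}=\mirror(\Scoreof{\time})$ recovers the full solution of \eqref{eq:FTRL-stoch}. The only genuinely game-theoretic ingredient is the global Lipschitz continuity of $\mirror$ on the \emph{unbounded} score space $\scores$ — this is the main obstacle, and the reason the regularizer's strong convexity is imposed; once \cref{prop:mirror} is available, the remaining verifications are routine and the boundedness of $b$ and $\Sigma$ makes the growth and non-explosion conditions immediate.
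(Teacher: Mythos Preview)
Your proposal is correct and essentially identical to the paper's own proof: both reduce \eqref{eq:FTRL-stoch} to an autonomous \acl{SDE} in $\Score$, invoke \citep[Theorem 5.2.1]{Oks13}, and verify its hypotheses by combining the Lipschitz continuity of $\mirror$ (via strong convexity, \cref{prop:mirror}) with the boundedness of $\payfield$ and $\diffmat$ on the compact image $\strats$. Your write-up is in fact a bit more detailed than the paper's (e.g., spelling out why $\payfield$ is Lipschitz on $\strats$), but there is no substantive difference.
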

\begin{proof}
	The existence and uniqueness of a strong solution can be proved using a classical Picard's iteration argument (see \eg \citet[Theorem 5.2.1]{Oks13}). 
	To do so, we only need to show that the drift and diffusion coefficients verify a mild growth condition and that they are Lipschitz continuous on $\scores$. 
	First, notice that $\mirror$ is $(1/K)$-Lipschitz on $\scores$ where $K > 0$ denotes the smallest strong convexity constant among the regularizer functions $\hreg_\play$, $\play \in \players$ (\cf \cref{prop:mirror}). 
	As such, the composition of $\payv$ (resp. $\diffmat$) with $\mirror$ remains Lipschitz-continuous.
	Furthermore, as $\mirror$ takes values into the compact set $\strats$, it is immediate that the growth condition is verified (the coefficients are in fact all uniformly bounded on the state space).
	The stochastic dynamics \eqref{eq:FTRL-stoch} therefore admits a unique strong solution as required.
\end{proof}
\begin{remark}
	The uniqueness mentioned in \cref{prop:existence} should be understood in the sense of \emph{equivalence up to evanescence}, meaning that any two solutions of \eqref{eq:FTRL-stoch} must be equal at any time with probability $1$; \ie they should be indistinguishable from one another as stochastic processes.
\end{remark}

\EvolX*

\begin{proof}[Proof of \cref{prop:FTRL-strat}]
	For simplicity (and without loss of generality), let us suppress the player's index $\play \in \players$ in the remaining of the proof.
	Due to $\hreg$ being assumed steep and decomposable as $\hreg(x) = \insum_\pure \hker(x_\pure)$ (see \cref{app:fenchel} for the collection of assumptions verified by $\hker$), the Karush-Kuhn-Tucker conditions applied to the convex optimization problem of \cref{eq:mirror} posed to define $\mirror$ leads to $y_\pure = \hker'(x_\pure) + \lambda$ for some Lagrange multiplier $\lambda \in \R$ whenever $x = \mirror(y)$.
	In particular, applying this identity to the solution orbits $Y(\time)$ and $X(\time)$ of \eqref{eq:FTRL-stoch} yields the existence of a continuous stochastic process $\lambda(\time)$ such that
	\begin{equation}
	\label{eq:Y_KKT}
		Y_\pure(\time) = \hker'(X_\pure(\time)) + \lambda(\time) \quad \text{for every } \time \geq 0.
	\end{equation}
	Applying the classical Itô's formula (see \eg Section 3.3 of \citet{KS98}) on both sides of the equality therefore gives
	\begin{equation}
		d\lambda = dY_\pure - \hker''(X_\pure) dX_\pure - \dfrac{1}{2} \hker'''(X_\pure) d[X_\pure],
	\label{eq:dLambda1}
	\end{equation}
	where $[X_\pure](\time)$ denotes the quadratic variation process of $X(\time)$. 
	Before proceeding with the proof, let us introduce some notation that will be useful for simplifying the subsequent computations:
	\begin{align}
		\hker_{\pure}'' &\defeq \hker''(X_\pure), \quad \hker_{\pure}''' \defeq \hker''(X_\pure);\\
		g_\pure &\defeq 1/\hker_{\pure}'', \quad G \defeq \insum_\purealt g_\purealt, \quad \hstrat_\pure \defeq g_\pure / G.
	\end{align}	 
	If we rearrange the terms of \cref{eq:dLambda1} and take the sum over all $\pure \in \pures$, we get
	\begin{equation}
		0 = \insum_\pure dX_\pure = \insum_\pure g_\pure dY_\pure - \dfrac{1}{2} \insum_\pure g_\pure \hker_{\pure}''' d[X_\pure] - G d\lambda,
	\end{equation}
	where the first equality comes from the simplex constraint $\insum_\pure X_\pure = 1$. 
	Consequently, $\lambda(\time)$ is given explicitly by
	\begin{equation}
		d\lambda = \insum_\pure \hstrat_\pure dY_\pure - \dfrac{1}{2} \insum_\pure \hstrat_\pure \hker_{\pure}''' d[X_\pure].
	\end{equation}
	By inserting this expression into \cref{eq:dLambda1}, we therefore get
	\begin{align}
		dX_\pure 
		&= g_\pure\parens*{ dY_\pure - \insum_\purealt \hstrat_\purealt dY_\purealt } 
		- \dfrac{1}{2} g_\pure \parens*{ \hker_{\pure}''' d[X_\pure] - \insum_\purealt \hstrat_\purealt \hker_{\purealt}''' d[X_\purealt] }\\
		&= g_\pure\parens*{ \payv_\pure - \insum_\purealt \hstrat_\purealt \payv_\purealt } d\time 
		+ g_\pure \insum_k \parens*{ \diffmat_{\pure k} - \insum_\purealt \hstrat_\purealt \diffmat_{\purealt k} } d\brown_k
		\notag\\
		&\qquad
		- \dfrac{1}{2} g_\pure \parens*{ \hker_{\pure}''' d[X_\pure] - \insum_\purealt \hstrat_\purealt \hker_{\purealt}'''d[X_\purealt] }.
	\label{eq:DX_pure}
	\end{align}
	The quadratic variation of $X_\pure(\time)$ can thus be computed as
	\begin{align}
		d[X_\pure] 
		&= g_{\pure}^2 \insum_{k l} \parens*{ \diffmat_{\pure k} - \insum_\purealt \hstrat_\purealt \diffmat_{\purealt k} }\parens*{ \diffmat_{\pure l} - \insum_\purealt \hstrat_\purealt \diffmat_{\purealt l} } d [\brown_l, \brown_l]
		\notag\\
		&= g_{\pure}^2 \insum_k \parens*{ \diffmat_{\pure k} - \insum_\purealt \hstrat_\purealt \diffmat_{\purealt k} }^2 d\time,
	\end{align}
	which ultimately proves the \acl{SDE} stated in  \cref{prop:FTRL-strat} upon substituting $d[X_\pure]$ with its expression in \cref{eq:DX_pure}.
\end{proof}

\section{Omitted proofs from \cref{sec:results}} 
\label{app:results}

In this appendix, we collect the proofs of the results presented in \cref{sec:results}.

\subsection{Proof of \cref{thm:purehit}}
\label{app:proof_pure_hitting}

Our aim in this appendix is to prove \cref{thm:purehit}, which we restate below for convenience.

\PureHit*

To facilitate a more modular proof of \cref{thm:purehit}, we reformulate its results as two distinct propositions detailed below:

\begin{proposition}
	The hitting time $\tau_{\play, \eps}$ is finite with probability $1$.
\label{prop:pure_hitting_time}
\end{proposition}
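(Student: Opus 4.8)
The plan is to realize $\stoptime_{\play,\thres}$ as the exit time of a Lyapunov functional from the ``mixed region'' and to bound that exit time through Dynkin's formula (\cref{lemma:dynkin}). Fix the player $\play\in\players$ and recall from \cref{prop:FTRL-strat} that the strategy profile $\Stratof{\time}$ is itself a Markov diffusion on the \emph{compact} set $\strats$, with infinitesimal generator $\gen$ determined by the drift \eqref{eq:FTRL-strat-det}\,$+$\,\eqref{eq:FTRL-strat-Ito} and by the quadratic covariation induced by the martingale part \eqref{eq:FTRL-strat-noise}. Writing $\region \defeq \setdef{\strat\in\strats}{\max_{\pure_\play}\strat_{\play\pure_\play} < 1-\thres}$, the hitting time $\stoptime_{\play,\thres}$ is precisely the first exit time of $\Stratof{\time}$ from $\region$, so it suffices to show that this exit time is finite \acl{wp1}.

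The guiding intuition is the mantra ``uncertainty favors extremes'': the diffusion term systematically inflates the concentration of $\Strat_{\play}$. Concretely, I would seek a bounded $C^{2}$ concentration functional $V\from\strats\to\R$ that increases as $\Strat_{\play}$ approaches a vertex of $\strats_{\play}$ and whose generator is \emph{uniformly positive} on $\region$, i.e. $\gen V \geq c > 0$ on $\region$ for a constant $c = c(\thres)$. Evaluating $\gen V$ via \cref{prop:FTRL-strat}, the engine is the second-order contribution coming from \eqref{eq:FTRL-strat-noise}: each coordinate $\Strat_{\play\pure_\play}$ carries a quadratic variation $\gcoef_{\play\pure_\play}^{2}\sum_{\idx}[\diffmat_{\play\pure_\play\idx}-\sum_{\purealt_\play}\hstrat_{\play\purealt_\play}\diffmat_{\play\purealt_\play\idx}]^{2}$, which is bounded below by virtue of \cref{asm:noisemin} ($\diffmat_{\min}^{2}>0$) and pushes $V$ upward. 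This must be shown to dominate both (i) the deterministic replicator drift \eqref{eq:FTRL-strat-det}, which is uniformly bounded since $\payfield$ is continuous on the compact set $\strats$, and (ii) the It\^o-correction drift \eqref{eq:FTRL-strat-Ito}, whose coefficients $\qcoef_{\play\pure_\play\idx}^{2}\propto -\hker_\play'''\,\gcoef_{\play\pure_\play}^{2}$ are controlled relative to the quadratic variation precisely by \cref{asm:hgrowth}, i.e. by $\sup_{z}\abs{\hker_\play'''(z)}/[\hker_\play''(z)]^{2}<\infty$.

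Once $\gen V \geq c > 0$ on $\region$ with $V$ bounded, I would apply Dynkin's formula (\cref{lemma:dynkin}) to the bounded stopping time $\stoptime_{\play,\thres}\wedge t$, obtaining $\exof{V(\Stratof{\stoptime_{\play,\thres}\wedge t})} = V(\strat) + \exof{\int_{0}^{\stoptime_{\play,\thres}\wedge t}\gen V(\Stratof{\timealt})\dd\timealt} \geq V(\strat) + c\,\exof{\stoptime_{\play,\thres}\wedge t}$. Rearranging and using the boundedness of $V$ gives $\exof{\stoptime_{\play,\thres}\wedge t} \leq (\sup V - \inf V)/c$ uniformly in $t$; letting $t\to\infty$ by monotone convergence yields $\exof{\stoptime_{\play,\thres}}<\infty$, and in particular $\stoptime_{\play,\thres}<\infty$ \acl{wp1}. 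This argument also prefigures the quantitative bound \eqref{eq:purehit-bound}: sharpening $V$ to be of exponential type at the scale $\hker_\play''(\thres/(\nPures_\play-1))$ converts $(\sup V - \inf V)/c$ into the announced $e^{\lambda}/\lambda$.

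The main obstacle is the second step: establishing $\gen V \geq c > 0$ \emph{uniformly over all of} $\region$. The difficulty is the degeneracy of the strategy-space diffusion near the relative boundary of $\strats$: as $\Strat_{\play\pure_\play}\to 0$ the weight $\gcoef_{\play\pure_\play}=1/\hker_\play''(\Strat_{\play\pure_\play})$ degenerates, so a naive choice such as $V=\norm{\Strat_{\play}}^{2}$ has a generator that is \emph{not} bounded away from zero on the parts of $\region$ that touch lower-dimensional faces of the simplex. Overcoming this requires weighting $V$ so that the surviving, non-degenerate directions always contribute a definite positive amount, and this is exactly where the interplay of \cref{asm:noisemin,asm:hgrowth} together with the threshold scale $\thres/(\nPures_\play-1)$ enters; carrying out these estimates is delicate, and I would defer the full construction to the detailed proof.
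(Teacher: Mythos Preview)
Your proposal is correct and matches the paper's approach: a bounded Lyapunov functional combined with Dynkin's formula, with \cref{asm:noisemin} supplying the diffusion lower bound and \cref{asm:hgrowth} controlling the It\^o correction. The paper's explicit choice (following Imhof) is $\phi(\strat) = (\nPures_\play - 1) + e^{\lambda} - \sum_{\pure_\play} e^{\lambda \strat_{\play\pure_\play}}$, and the boundary-degeneracy obstacle you flag is overcome precisely by taking the exponential parameter $\lambda$ large enough that the $\lambda^{2}$-scaled diffusion contribution to $\gen\phi$ dominates the $\lambda$-scaled drift contribution uniformly on $\region$---so the exponential form is already needed for the qualitative finiteness, not only for the quantitative bound \eqref{eq:purehit-bound}.
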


\begin{proposition}
	The expectation of $\tau_{\play, \eps}$ is upper-bounded by
	\begin{equation}
\label{eq:purehit-bound-2}
\exwrt{\strat}{\stoptime_{\play,\toler}}
	\lesssim e^{\lambda}/\lambda
\end{equation}
where $\lambda>0$ is a positive constant that scales as
\begin{equation}
\label{eq:bound-scaling-2}
\lambda
	= \Theta\parens*{
		\frac{1+\diffmat_{\play,\max}^{2}}{\diffmat_{\play,\min}^{2}}
			\hker_{\play}''\parens*{\frac{\toler}{\nPures_{\play} - 1}}^{2}
		}
	\eqstop
\end{equation}
\label{prop:pure_hitting_time_estimate}
\end{proposition}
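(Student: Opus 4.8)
The plan is to obtain both propositions from a single Lyapunov (test-function) construction combined with Dynkin's lemma (\cref{lemma:dynkin}), working with the infinitesimal generator $\gen$ of the strategy diffusion \eqref{eq:FTRL-strat}; equivalently, one may pass to the uniformly elliptic payoff-difference diffusion of \cref{app:stochastic}, whose coefficients are bounded and whose principal symbol is bounded below by a positive multiple of $\diffmat_{\play,\min}^{2}$ (\cref{asm:noisemin}). Write $\mathcal{R}_{\play,\thres} = \setdef{\strat\in\strats}{\max_{\pure_{\play}}\strat_{\play\pure_{\play}} < 1-\thres}$ for the region in which player $\play$ has not yet reached the target, so that $\stoptime_{\play,\thres}$ is exactly the first exit time of $\mathcal{R}_{\play,\thres}$. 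The entire argument then reduces to exhibiting a nonnegative $\mathcal{C}^{2}$ function $\Psi$, depending only on $\strat_{\play}$, with $\gen\Psi \leq -1$ throughout $\mathcal{R}_{\play,\thres}$ and $\sup_{\mathcal{R}_{\play,\thres}}\Psi \lesssim e^{\lambda}/\lambda$, where $\lambda$ is as in \eqref{eq:bound-scaling-2}.

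Given such a $\Psi$, both propositions follow immediately. Applying Dynkin's lemma to the bounded stopping time $\stoptime_{\play,\thres}\wedge t$ gives $\exwrt{\strat}{\Psi(\Stratof{\stoptime_{\play,\thres}\wedge t})} = \Psi(\strat) + \exwrt{\strat}{\int_{0}^{\stoptime_{\play,\thres}\wedge t}\gen\Psi(\Stratof{\timealt})\dd\timealt} \leq \Psi(\strat) - \exwrt{\strat}{\stoptime_{\play,\thres}\wedge t}$, so that $\exwrt{\strat}{\stoptime_{\play,\thres}\wedge t} \leq \Psi(\strat) \leq \sup_{\mathcal{R}_{\play,\thres}}\Psi$ uniformly in $t$. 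Monotone convergence as $t\to\infty$ then yields $\exwrt{\strat}{\stoptime_{\play,\thres}} \leq \sup_{\mathcal{R}_{\play,\thres}}\Psi \lesssim e^{\lambda}/\lambda$, which is \cref{prop:pure_hitting_time_estimate}; and \cref{prop:pure_hitting_time} is then automatic, since a nonnegative random variable with finite expectation is almost surely finite.

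The construction of $\Psi$ is the crux and carries all the difficulty. Following Imhof's strategy, I take $\Psi$ of exponential type, built from terms $e^{\lambda\,\varphi(\strat_{\play})}$ (summed over the vertices of $\strats_{\play}$) for a smooth "mixedness" potential $\varphi$ that increases as $\strat_{\play}$ recedes from every vertex, so that $\gen\Psi$ picks up a factor $\lambda$ from the first-order part and $\lambda^{2}$ from the second-order (quadratic-variation) part. Substituting the explicit coefficients of \eqref{eq:FTRL-strat}, the decisive balance is between the Itô-correction drift \eqref{eq:FTRL-strat-Ito} — which, thanks to $\hker_{\play}'''<0$, carries the definite sign encoded in $\qcoef_{\play\pure_{\play}\idx}^{2} = -\tfrac12\hker_{\play}'''\gcoef_{\play\pure_{\play}}^{2}(\cdots)^{2} > 0$ and pushes $\strat_{\play}$ toward the boundary — and the deterministic replicator drift \eqref{eq:FTRL-strat-det}, which is uniformly bounded because payoffs are bounded. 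The weight $\lambda$ is tuned so that this net first-order push dominates the bounded adverse drift as well as the quadratic-variation contribution of the martingale term \eqref{eq:FTRL-strat-noise}, certifying $\gen\Psi \leq -1$. Evaluating the coefficients at the hardest configuration in $\mathcal{R}_{\play,\thres}$, where the second-largest mass is pinned near $\thres/(\nPures_{\play}-1)$, produces the factor $\hker_{\play}''(\thres/(\nPures_{\play}-1))^{2}$, while the noise floor and ceiling $\diffmat_{\play,\min}^{2},\diffmat_{\play,\max}^{2}$ enter through the covariation bounds \eqref{eq:noisebounds}, reproducing exactly the scaling \eqref{eq:bound-scaling-2}. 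Here \cref{asm:hgrowth} is indispensable: the bound $\sup_{z}\abs{\hker_{\play}'''(z)}/\hker_{\play}''(z)^{2}<\infty$ is precisely what keeps the Itô-correction drift commensurate with the quadratic variation against which it is weighed, so that the tuning closes.

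The main obstacle is therefore the careful, uniform estimation of $\gen\Psi$ over the non-compact region $\mathcal{R}_{\play,\thres}$: lower-bounding the boundary-directed Itô drift while simultaneously controlling the cross terms generated by \eqref{eq:FTRL-strat-noise} under the exponential envelope, all through a single smooth function that "sees" every one of the $\nPures_{\play}$ vertices at once (so that reaching any one of them triggers the exit). Making these delicate estimates rigorous — rather than the subsequent Dynkin argument — is where the work concentrates; the remainder is bookkeeping of constants to match \eqref{eq:purehit-bound-2}–\eqref{eq:bound-scaling-2}.
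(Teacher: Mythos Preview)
Your architecture—Imhof-style Lyapunov function plus Dynkin's lemma—is exactly what the paper does; the concrete function is $\phi(\strat_{\play}) = \nPures_{\play}-1+e^{\lambda}-\sum_{\pure_{\play}}e^{\lambda\strat_{\play\pure_{\play}}}$, and the Dynkin step goes through verbatim as you describe. But your explanation of \emph{why} $\gen\phi$ is uniformly negative on $\mathcal{R}_{\play,\thres}$ has the mechanism backwards, and as you state it the tuning of $\lambda$ would not close.

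In the paper's computation the favorable (negative) contribution to $\gen\phi$ is the \emph{second-order} term, coming from the quadratic variation of the martingale part \eqref{eq:FTRL-strat-noise}: since $\phi$ is concave in each coordinate ($\partial^{2}_{\pure_{\play}}\phi=-\lambda^{2}e^{\lambda\strat_{\play\pure_{\play}}}$) and the principal symbol is $\succcurlyeq\diffmat_{\play,\min}^{2}\eye$, this contributes $\leq -\tfrac{\lambda^{2}}{2}\diffmat_{\play,\min}^{2}\sum_{\pure_{\play}}g_{\play\pure_{\play}}^{2}(1-G_{\play}^{-1}g_{\play\pure_{\play}})^{2}e^{\lambda\strat_{\play\pure_{\play}}}$. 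The first-order part, combining the deterministic drift \eqref{eq:FTRL-strat-det} \emph{and} the It\^o correction \eqref{eq:FTRL-strat-Ito}, is the nuisance: \cref{asm:hgrowth} is invoked precisely to bound the It\^o-correction coefficients from \emph{above} by $O(\diffmat_{\play,\max}^{2})$, so that the whole first-order part is $O(\lambda(1+\diffmat_{\play,\max}^{2}))$. Taking $\lambda$ \emph{large} lets the $\lambda^{2}$-term dominate; the factor $\hker_{\play}''(\thres/(\nPures_{\play}-1))^{2}$ in \eqref{eq:bound-scaling-2} is the reciprocal of the infimum over $\mathcal{R}_{\play,\thres}$ of the second-order coefficient $g_{\play\pure_{\play}}(1-G_{\play}^{-1}g_{\play\pure_{\play}})^{2}$. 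Your account instead has the It\^o drift \eqref{eq:FTRL-strat-Ito} as the driving force and the quadratic variation as adverse—this would need the $O(\lambda)$ term to beat the $O(\lambda^{2})$ one (forcing $\lambda$ small rather than large), and in any case the It\^o drift $g_{\play\pure_{\play}}\sum_{\idx}\bigl(\qcoef_{\play\pure_{\play}\idx}^{2}-\sum_{\purealt_{\play}}\hstrat_{\play\purealt_{\play}}\qcoef_{\play\purealt_{\play}\idx}^{2}\bigr)$ has no uniform sign on $\mathcal{R}_{\play,\thres}$ (it vanishes, for instance, at the barycentre under symmetric noise), so the estimate would not close on your stated mechanism.
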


Before proving any of these results, let us restate the \acl{SDE} defining the evolution of $X(\time)$ (\cf \cref{prop:FTRL-strat}) as
\begin{equation}
	dX_{\play \pure_\play} 
	= \insum_{\purealt_\play} g_{\play \pure_\play} \bracks*{\delta_{\pure_\play \purealt_\play} - G_{\play}^{-1} g_{\play \purealt_\play}} (\payv_{\play \purealt_\play} + S_{\play \purealt_\play}) d\time 
	+ \insum_{\purealt_\play} g_{\play \pure_\play} \bracks*{\delta_{\pure_\play \purealt_\play} - G_{\play}^{-1} g_{\play \purealt_\play} } d\mart_{\play \purealt_\play},
\label{eq:evolutionXapp}
\end{equation}
where:
\begin{enumerate}[\indent\itshape a\upshape)]
	\item $g_{\play \pure_\play} = 1/\theta_{\play}''(x_{\play \pure_\play})$;
	\item $G_i = \insum_\purealt g_{\play \purealt_\play}$;
	\item $\dis S_{\play \pure_\play} = -\dfrac{1}{2} \dfrac{\theta_{\play}'''(x_{\play \pure_\play})}{[ \theta_{\play}''(x_{\play \pure_\play}) ]^2} \sum_{k=1}^m \bracks*{ \diffmat_{\play \pure_\play k} - \insum_{\purealt_\play} G_{\play}^{-1} g_{\play \purealt_\play} \diffmat_{\play \purealt_\play k} }^2$.
\end{enumerate}

Borrowing the idea from \citet{Imh05} -- who has studied the same type of property for the replicator dynamics with aggregate shocks \eqref{eq:SRD-AS} in symmetric games -- we aim to construct a well-behaved Lyapunov function $\phi_\play$ for the collection $\pures_\play$ of pure strategies of player $\play \in \players$. 
More precisely, such a function should verify the following properties:
\begin{enumerate}
	\item $\phi_\play(x) \geq 0$ for every $x \in \strats$, with equality reached only when $x_\play$ is a pure strategy;
	\item Away from pure strategies of $\strats_\play$, $\phi_\play$ decreases in average along trajectories of \eqref{eq:FTRL-stoch}.
\end{enumerate} 
Intuitively, the existence of such a function would suggest that the solution orbits of \eqref{eq:FTRL-stoch} converge on average toward a neighborhood of pure strategies in $\strats_\play$, which in turn would imply that the time required to reach such a neighborhood is finite.

For the sake of readability, let us drop completely the reference to the player's index $\play$ in the remainder of this appendix. This can be done without any loss of generality, as we are focusing solely on the hitting time of a single player. 

Accordingly, let us propose $\phi$ of the parametric form
\begin{equation}
	\phi(x) = \nPures - 1 + e^\lambda - \insum_{\pure} e^{\lambda x_\pure}
\end{equation}
with $\lambda > 0$ a positive parameter that will be chosen afterwards (depending on the parameters of interest of the problem).

\begin{claim}
	$\phi(x) \geq 0$ with equality reached only when $x$ is a pure strategy.
\label{claim:phi_Lyapunov_1}
\end{claim}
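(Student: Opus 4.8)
The plan is to reduce the claimed inequality to a one-dimensional convexity estimate applied coordinatewise. Since $\phi(x) = (\nPures - 1) + e^\lambda - \insum_{\pure} e^{\lambda x_\pure}$ and $x$ ranges over the simplex (so $x_\pure \geq 0$ and $\insum_\pure x_\pure = 1$), proving $\phi(x) \geq 0$ is equivalent to establishing the bound $\insum_\pure e^{\lambda x_\pure} \leq \nPures - 1 + e^\lambda$, with equality exactly at the vertices. This reformulation turns the claim into a statement about maximizing $\insum_\pure e^{\lambda x_\pure}$ over the simplex.

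First I would record the elementary chord inequality: because the map $t \mapsto e^{\lambda t}$ is strictly convex, it lies below the secant line joining the endpoints of the interval $[0,1]$, which gives $e^{\lambda t} \leq 1 + (e^{\lambda} - 1)\,t$ for all $t \in [0,1]$, with equality if and only if $t \in \{0,1\}$. Applying this with $t = x_\pure$ and summing over all $\pure$, the constraint $\insum_\pure x_\pure = 1$ collapses the right-hand side to $\nPures + (e^{\lambda} - 1) = \nPures - 1 + e^{\lambda}$, which is precisely the required bound; substituting back into the definition of $\phi$ yields $\phi(x) \geq 0$.

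Then I would address the equality case, which is where the only care is needed. The sum of the per-coordinate inequalities is tight only when each summand is tight, i.e.\ only when every $x_\pure \in \{0,1\}$; combined with $\insum_\pure x_\pure = 1$ this forces exactly one coordinate to equal $1$ and all others to vanish, which is exactly the definition of a pure strategy. Propagating tightness back in this way requires invoking \emph{strict} (rather than merely weak) convexity of $t\mapsto e^{\lambda t}$, so that the secant bound is strict on the open interval $(0,1)$; this is the single subtle point, and it is guaranteed since $\lambda > 0$.

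There is no substantive obstacle here. An alternative route would observe that $\insum_\pure e^{\lambda x_\pure}$ is a strictly convex function whose maximum over the polytope $\strats$ must be attained at an extreme point, with the value at each vertex computed directly to be $e^\lambda + \nPures - 1$; but the coordinatewise chord bound is cleaner, self-contained, and delivers both the inequality and the equality characterization in a single pass, so I would present that version.
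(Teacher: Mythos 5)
Your proof is correct, and it takes a genuinely different (if closely related) elementary route from the paper's. The paper iterates the two-term inequality $e^{a}+e^{b}\leq e^{a+b}+1$ (valid for $a,b\geq 0$, equivalent to $(e^{a}-1)(e^{b}-1)\geq 0$) across the coordinates $\lambda x_{\pure}$, collapsing the sum to $e^{\lambda\sum_{\pure}x_{\pure}}+\nPures-1=e^{\lambda}+\nPures-1$; you instead apply the secant-line bound $e^{\lambda t}\leq 1+(e^{\lambda}-1)\,t$ on $[0,1]$ coordinatewise and let the simplex constraint $\sum_{\pure}x_{\pure}=1$ collapse the right-hand side to the same value. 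Both hinge on strict convexity of the exponential plus the simplex constraint, but your version has a tangible advantage in the equality analysis: tightness is characterized independently in each coordinate ($x_{\pure}\in\{0,1\}$ by strict convexity), so the conclusion that $x$ must be a vertex follows in one step, whereas the paper's iterated pairwise inequality requires propagating the equality condition ($a=0$ or $b=0$ at each merge) through the whole iteration—a point the paper asserts rather than spells out. Your parenthetical remark that $\lambda>0$ is what guarantees strictness is exactly the right thing to flag, since at $\lambda=0$ the secant bound degenerates to an identity and the equality characterization fails.
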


\begin{proof}
	This claim follows from the inequality $e^a + e^b \leq e^{a+b} + 1$ holding for every $a, b \geq 0$ (this is easily shown by fixing $b$ and taking derivative with respect to $a$). 
	Indeed, iterating this inequality on every elements $\lambda x_\pure \geq 0$, we get
	\begin{equation}
		\insum_\pure e^{\lambda x_\pure} \leq e^{\lambda \insum_\pure x_\pure} + \nPures - 1 = e^{\lambda} + \nPures - 1,
	\end{equation}
	where the right-hand side is reached exclusively for pure strategies. Consequently, we readily obtain 
	\begin{equation}
		\phi(x) \geq \nPures-1 + e^\lambda - e^\lambda - (\nPures-1) = 0
	\end{equation}
	as required.
\end{proof}

\begin{claim}
	Away from pure strategies, $\phi$ decreases in average along trajectories of \eqref{eq:FTRL-stoch}.
\label{claim:phi_Lyapunov_2}
\end{claim}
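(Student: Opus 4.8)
The plan is to show that, once $\lambda$ is taken large enough (of the order announced in the theorem), the infinitesimal generator $\gen$ of \eqref{eq:evolutionXapp} satisfies $\gen\phi(x) \leq -c < 0$ on the region $\setdef{x\in\strats}{\max_{\pure}x_{\pure}\leq 1-\thres}$ where $x$ stays away from every vertex. Since $\partial\phi/\partial x_{\pure} = -\lambda e^{\lambda x_{\pure}}$ and $\partial^{2}\phi/\partial x_{\pure}\partial x_{\purealt} = -\lambda^{2} e^{\lambda x_{\pure}}\delta_{\pure\purealt}$, substituting the drift and diffusion coefficients of \eqref{eq:evolutionXapp} into the generator \eqref{eq:generator} yields
\[
\gen\phi(x)
	= -\lambda \insum_{\pure} e^{\lambda x_{\pure}} b_{\pure}
	- \frac{\lambda^{2}}{2} \insum_{\pure} e^{\lambda x_{\pure}} a_{\pure\pure},
\]
where $b_{\pure} = g_{\pure}(\payv_{\pure}-\bar\payv) + g_{\pure}(S_{\pure}-\bar S)$ is the drift, the diffusion variance $a_{\pure\pure} = g_{\pure}^{2}\sum_{k}\bracks*{\diffmat_{\pure k}-\insum_{\purealt}\hstrat_{\purealt}\diffmat_{\purealt k}}^{2}$ is the quantity $d[X_{\pure}]/\dd\time$ computed in the proof of \cref{prop:FTRL-strat}, and bars denote $\hstrat$-weighted averages with $\hstrat_{\purealt}\defeq g_{\purealt}/G$. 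The structural point driving the whole argument is that $\phi$ is concave in each coordinate, so the second-order term is nonpositive and serves as the engine of the estimate, while the drift term is only linear in $\lambda$; this is precisely the analytic incarnation of ``uncertainty favors extremes''.

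Next I would estimate the two terms separately. For the drift, the payoffs are uniformly bounded and $g_{\pure}=1/\hker''(x_{\pure})\leq 1/\inf_{z}\hker''(z)$, which controls $g_{\pure}\abs{\payv_{\pure}-\bar\payv}$; meanwhile \cref{asm:hgrowth} bounds $\abs{\hker'''}/(\hker'')^{2}$, so that $\abs{S_{\pure}} = \tfrac12\frac{\abs{\hker'''(x_{\pure})}}{\hker''(x_{\pure})^{2}}\sum_{k}\bracks*{\diffmat_{\pure k}-\insum_{\purealt}\hstrat_{\purealt}\diffmat_{\purealt k}}^{2}\lesssim\diffmat_{\max}^{2}$, giving $\insum_{\pure}\abs{b_{\pure}}\lesssim 1+\diffmat_{\max}^{2}$ uniformly. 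For the diffusion I would write the centered vector $\tilde\diffmat_{\pure k}\defeq\insum_{\purealt}(\delta_{\pure\purealt}-\hstrat_{\purealt})\diffmat_{\purealt k}$: its coefficients $c_{\purealt}=\delta_{\pure\purealt}-\hstrat_{\purealt}$ sum to zero, so \cref{asm:noisemin} yields $\sum_{k}\tilde\diffmat_{\pure k}^{2}=c^{\top}\covmat c\geq\diffmat_{\min}^{2}\norm{c}^{2}\geq\diffmat_{\min}^{2}(1-\hstrat_{\pure})^{2}$. In particular $\diffmat$ has full row rank, so no row equals a weighted average of the others and $a_{\pure\pure}>0$ throughout $\relint\strats$; the role of \cref{asm:noisemin} is to make this positivity quantitative.

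The last and most delicate step is the comparison between the two sums, and this is where I expect the main obstacle to lie. A naive bound replacing $e^{\lambda x_{\pure}}$ by $1$ in the diffusion and by $e^{\lambda}$ in the drift fails, because a diffusion lower bound that is only linear in $\lambda$ cannot beat a drift of size $e^{\lambda}$. The resolution is to factor the common exponential $e^{\lambda x_{\pure^{\ast}}}$ of the dominant coordinate $\pure^{\ast}=\argmax_{\pure}x_{\pure}$ out of \emph{both} sums, using $e^{\lambda x_{\pure}}\leq e^{\lambda x_{\pure^{\ast}}}$, to obtain
\[
\gen\phi(x)
	\leq \lambda\, e^{\lambda x_{\pure^{\ast}}}
	\bracks*{ \insum_{\pure}\abs{b_{\pure}} - \tfrac{\lambda}{2}\, a_{\pure^{\ast}\pure^{\ast}} },
\]
and then to secure a matching lower bound on the diffusion \emph{at} $\pure^{\ast}$. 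Here $x_{\pure^{\ast}}\geq 1/\nPures$ forces $g_{\pure^{\ast}}\gtrsim 1$ (since $\hker''$ is decreasing by $\hker'''<0$), while $\max_{\pure}x_{\pure}\leq 1-\thres$ forces a second coordinate to carry mass at least $\thres/(\nPures-1)$ and hence $1-\hstrat_{\pure^{\ast}}\gtrsim 1/\hker''(\thres/(\nPures-1))$; combining these with the displayed lower bound gives $a_{\pure^{\ast}\pure^{\ast}}\gtrsim\diffmat_{\min}^{2}/\hker''(\thres/(\nPures-1))^{2}$. Choosing $\lambda$ of the stated order $\tfrac{1+\diffmat_{\max}^{2}}{\diffmat_{\min}^{2}}\hker''(\thres/(\nPures-1))^{2}$ then renders the bracket negative and bounded away from $0$, so $\gen\phi(x)\leq -c<0$ as required. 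The crux is exactly this coupling of the exponential weights across the drift and diffusion sums: one must localize to the dominant coordinate and obtain a lower bound on its diffusion, which is where both the full-rank consequence of \cref{asm:noisemin} and the pigeonhole argument on the mass away from the vertices are indispensable.
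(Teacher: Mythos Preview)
Your proposal is correct and follows essentially the same route as the paper: compute $\gen\phi$ via It\^o, bound the first-order drift uniformly (using \cref{asm:hgrowth} to control $|S_\pure|\lesssim\diffmat_{\max}^2$), bound the second-order diffusion below via \cref{asm:noisemin}, and then choose $\lambda$ large enough that the quadratic-in-$\lambda$ diffusion term dominates. The only tactical difference is in the final comparison: the paper keeps the per-coordinate bracket $\bracks{B-\lambda\diffmat_{\min}^{2}\,g_\pure(1-G^{-1}g_\pure)^2}$ and splits the sum over $\pures_+=\{\pure:x_\pure\geq1/\nPures\}$ versus its complement, whereas you localize directly to the single dominant index $\pure^\ast$; your version is slightly leaner but rests on the same pigeonhole (a second coordinate must carry mass $\geq\thres/(\nPures-1)$) and the same lower bound on the diffusion at a coordinate with mass $\geq1/\nPures$.
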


\begin{proof}
	More precisely, we will show that outside the neighborhood $\nhd_\eps = \setdef{x \in \strats}{x_\pure > 1 - \eps \text{ for some } \pure \in \pures}$, there exists a parameter $\lambda$ big enough such that 
	\begin{equation}
		\dfrac{d}{d\time} \ex_{\startx}\bracks*{\phi(X(\time)))} \eqdef \gen \phi(x) < 0 \quad \text{for every } x \in \strats \setminus \nhd_\eps.
	\end{equation}
	To do so, let us first notice that $\phi \in C^2(\strats)$ and that its partial derivatives are given by 
	\begin{equation}
		\dfrac{\pd \phi}{\pd x_\pure} = - \lambda e^{\lambda x_\pure} \quad \text{and} \quad \dfrac{\pd^2 \phi}{\pd x_\pure \pd x_\purealt} = -\lambda^2 e^{\lambda x_\pure} \delta_{\pure \purealt}.
	\end{equation}
	As such, Itô's lemma and the \acl{SDE} from \cref{eq:evolutionXapp} can be used to obtain
	\begin{align}
		\gen \phi(x)
			&= - \lambda \underbrace{\insum_{\pure \purealt} g_\pure \bracks*{\delta_{\pure \purealt} - G^{-1} g_\purealt} (\payv_\purealt + S_\purealt) e^{\lambda x_\pure}}_{\textrm{(I)}}
			\notag\\
			&- \dfrac{\lambda^2}{2} \underbrace{\insum_{\pure} g_{\pure}^2 e^{\lambda x_\pure} \insum_{\purealt \purealtalt} \bracks*{\delta_{\pure \purealt} - G^{-1} g_\purealt}\bracks*{\delta_{\pure \purealtalt} - G^{-1} g_\purealtalt} \covmat_{\purealt \purealtalt}}_{\textrm{(II)}},
	\end{align}
	where we recall that $\covmat = \diffmat \diffmat^T$ denotes the quadratic covariation matrix of the noise (for player $\play$).
	
	\begin{itemize}
		\item \underline{Lower bound for $(I)$}:
		From the definition of $g_\pure$ and $G$, and the fact $\theta$ is strongly convex and steep, it is evident that $0 \leq G^{-1} g_\pure \leq 1$ for every $\pure \in \pures$. 
		Furthermore, by \cref{asm:hgrowth}, we have $\abs{\theta'''(x_\pure) / [\theta''(x_\pure)]^2} \leq M$ for some constant $M < \infty$.
		As such, we can already bound the absolute value of $S_\pure$ by
		\begin{equation}
		\label{eq:lower-bound-1}
			\abs*{S_\pure} \leq \dfrac{M}{2} \sum_{k=1}^m \bracks*{ \diffmat_{\pure k}^2 + \parens*{ \insum_{\purealt} G^{-1} g_\purealt \diffmat_{\purealt k} }^2 } \leq \dfrac{M}{2} \sum_{k=1}^m (\diffmat_{\max}^2 + \diffmat_{\max}^2) = mM \diffmat_{\max}^2
		\end{equation}
		where we recall that $\diffmat_{\max}^2$ denotes the maximal eigenvalue of $\covmat$ uniformly on $x \in \strats$.
		Using this bound in tandem with $\abs*{\payv_\pure(x)} \leq M_\payv < \infty$ for every $\pure \in \pures$, $x \in \strats$ (holding from continuity of $\payv_\pure$ on the compact $\strats$), then leads to
		\begin{equation}
			(I) \geq - \insum_{\pure} \nPures\left( M_\payv + mM \diffmat_{\max}^2 \right) g_\pure e^{\lambda x_\pure} = - \dfrac{B}{2} \insum_\pure g_\pure e^{\lambda x_\pure}
		\end{equation}
		where $B = 2\nPures(M_\payv + mM \diffmat_{\max}^2)$.
		
		\item \underline{Lower bound for $(II)$}:
		Notice that $\covmat \succcurlyeq \diffmat_{\min}^2 \eye$ with $\diffmat_{\min} > 0$ by \cref{asm:noisemin}, which allows us to bound $(II)$ as
		\begin{align}
		\mathrm{(II)}
			&\geq \insum_{\pure} g_{\pure}^2 e^{\lambda x_\pure} \diffmat_{\min}^2 \insum_\purealt \bracks*{\delta_{\pure \purealt} - G^{-1}g_\purealt}^2
			\notag\\
			&= \diffmat_{\min}^2 \braces*{\insum_\pure g_{\pure}^2 \parens{1 - G^{-1}g_{\pure}}^2 e^{\lambda x_\pure} + \insum_{\pure \neq \purealt} g_{\pure}^2 \bracks*{G^{-1}g_{\purealt}}^2 e^{\lambda x_\pure}}
			\notag\\
			&\geq \diffmat_{\min}^2 \insum_\pure g_{\pure}^2 (1 - G^{-1}g_\pure)^2 e^{\lambda x_\pure}.
		\end{align}
	\end{itemize}		
	
	Combining the bounds for both $(I)$ and $(II)$ therefore leads to 
	\begin{equation}
		\gen \phi(x) \leq \dfrac{\lambda}{2} \insum_{\pure} g_\pure e^{\lambda x_\pure} \bracks*{ B - \lambda \diffmat_{\min}^2 g_\pure (1 - G^{-1} g_\pure)^2 }.
	\end{equation}
	Now, let us consider the region of interest $R_\eps = \strats \setminus \nhd_\eps$.
	Any $x \in R_\eps$ thus verifies $x_\pure \leq 1 - \eps$ for every $\pure \in \pures$. Furthermore, notice that there always exists at least one action $\pure \in \pures$ such that $x_\pure \geq 1/\nPures$.
	
	Accordingly, let us the decompose the set of actions as $\pures_+(x) = \setdef{\pure \in \pures}{x_\pure \geq 1/\nPures}$ and $\pures_-(x) = \pures \setminus \pures_+(x)$. 
	We also define the following quantities, which will be useful in establishing the bound:
\begin{enumerate}[\indent\itshape a\upshape)]
	\item $H_{max} = \max_{x \leq 1/\nPures} 1/\theta''(x) < \infty$ and $H_{min} = \min_{x \geq 1/\nPures} 1/\theta''(x) > 0$;
	\item $c_\eps = \min\setdef*{g_\pure(x) (1 - G^{-1}(x)g_\pure(x))^2}{\pure \in \pures_+(x), x \in R_\eps} > 0$.
\end{enumerate}
where the positiveness of both $H_{min}$ and $c_\eps$ comes from the fact that $g_\pure(x) = 0$ only when $x_\pure = 0$ (consequence of the regularizer being steep).

Using these notations we can then write, for any $x \in R_\eps$, 
\begin{align}
	\gen \phi(x) &\leq \dfrac{\lambda}{2} \braces*{ B \sum_{\pure \in \pures_-} g_\pure e^{\lambda x_\pure} + (B - \lambda \diffmat_{\min}^2 c_\eps) \sum_{\pure \in \pures_+} g_\pure e^{\lambda x_\pure}}
	\notag\\
	&\leq \dfrac{\lambda}{2} \braces*{ H_{max} B (\nPures - 1) e^{\lambda / \nPures} - \diffmat_{\min}^2 c_\eps \parens*{ \lambda - \dfrac{B}{\diffmat_{\min}^2 c_\eps} } \sum_{\pure \in \pures_+} g_\pure e^{\lambda x_\pure} }.
\end{align}
Consequently, if we choose $\lambda$ big enough so that 
\begin{equation}
	\lambda \geq \dfrac{1}{\diffmat_{\min}^2 c_\eps} \bracks*{B + \dfrac{H_{max}}{H_{min}} (\nPures - 1) B + 1 },
\label{eq:LambdaBig}
\end{equation}
then we readily get
\begin{align}
	\gen \phi(x) &\leq \dfrac{\lambda}{2} \braces*{ H_{max} B (\nPures - 1) e^{\lambda/\nPures}  - \parens*{ \dfrac{H_{max}}{H_{min}} (\nPures - 1) B + 1 } \sum_{\pure \in \pures_+} g_\pure e^{\lambda x_\pure} }
	\notag\\
	&\leq \dfrac{\lambda}{2} \braces*{ H_{max} B (\nPures - 1) e^{\lambda/\nPures} - \dfrac{H_{max}}{H_{min}} (\nPures - 1) B H_{min} e^{\lambda/\nPures} - H_{min} e^{\lambda/\nPures} }
	\notag\\
	&= - \dfrac{\lambda}{2} H_{min} e^{\lambda/\nPures} < 0
\end{align}
for every $x \in R_\eps$, which verifies the claim.
\end{proof}

Building on the two previous claims, we are now ready to prove \cref{prop:pure_hitting_time} and \cref{prop:pure_hitting_time_estimate}.

\begin{proof}[Proof of \cref{prop:pure_hitting_time}]
	Let $\tau_\eps = \inf\setdef{\time \geq 0}{X_\pure(\time) \geq 1 - \eps \text{ for some } \pure \in \pures}$. 
	For every fixed $\time \geq 0$, Dynkin's formula yields
	\begin{equation}
		\ex_{\startx}\bracks*{\phi(X(\tau_\eps \wedge t))} = \phi(\startx) + \ex_\startx\bracks*{ \int_0^{\tau_\eps \wedge \time} \gen \phi(X(\timealt))\dd\timealt }.
	\end{equation}
	
	Claims \ref{claim:phi_Lyapunov_1} and \ref{claim:phi_Lyapunov_2} can then be used to argue that there exists a (deterministic) parameter $\lambda$ big enough so that 
	\begin{equation}
		0 \leq \nPures + e^\lambda - \dfrac{\lambda}{2} H_{min} e^{\lambda/\nPures} \ex_{\startx}\bracks*{\tau_\eps \wedge \time}
	\end{equation}
	for every $\time \geq 0$. Rearranging the terms and taking the monotone limit as $\time \nearrow \infty$ therefore give
	\begin{equation}
		\ex_\startx \bracks*{\tau_\eps} \leq \dfrac{2}{\lambda} \dfrac{e^\lambda + \nPures}{H_{min} e^{\lambda/\nPures}} < \infty
	\label{eq:boundPureHitting}
	\end{equation}
	for every initial condition $x \in \strats$, which finishes the proof.
\end{proof}

\begin{proof}[Proof of \cref{prop:pure_hitting_time_estimate}]

Based on \cref{eq:boundPureHitting}, it is evident that the hitting time $\tau_\eps$ has an average bounded as
\begin{equation}
	\ex_\startx[\tau_\eps] \lesssim \dfrac{e^\lambda}{\lambda}
\end{equation}
for $\lambda$ big enough. Furthermore, due to the condition from \cref{eq:LambdaBig}, $\lambda$ can also be chosen proportional to $\frac{\diffmat_{\max}^2 + 1}{\diffmat_{\min}^2 c_\eps}$ (where the hidden multiplicative constants depend neither on noise nor on $\eps$). 

To prove \cref{prop:pure_hitting_time_estimate}, we therefore only need to show that $c_\eps$ is of order $[\theta''(\eps/(\nPures - 1))]^{-2}$ for $\eps$ small enough.

Recall that $c_\eps$ is defined through the minimization problem
\begin{equation}
	c_\eps = \min\setdef*{g_\pure(x) (1 - G^{-1}(x)g_\pure(x))^2}{\pure \in \pures_+(x), x \in R_\eps},
\end{equation}
where $R_\eps = \setdef{x \in \strats}{x_\pure \leq 1 - \eps \text{ for every } \pure \in \pures}$ and $\pures_+(x) = \setdef{\pure \in \pures}{x_\pure \geq 1/\nPures}$.
For the sake of clarity, let us also define
\begin{equation}
	c(x,\pure) = g_\pure(x) [1 - G^{-1}(x) g_\pure(x)]^2 \quad \text{and} \quad c(x) = \min_{\pure \in \pures_+(x)} c(x, \pure),
\end{equation}
so that $c_\eps = \min_{x \in R_\eps} c(x)$ for any $\eps > 0$. 

The main idea of the proof is the following observation: $c(x) \geq 0$ for every $x \in \strats$, with equality occurring if and only if $x$ is a pure strategy (this is a consequence of $g_\pure(x) = 0$ if and only if $x_\pure = 0$). 
Accordingly, for $\eps$ small enough, we should expect $c_\eps$ to be reached for an $x \in R_\eps$ that is the closest possible to a pure strategy. 

To make this argument rigorous, note that by definition $\theta''$ is locally decreasing around $0$, and so there exists an $\eta < 1/\nPures$ small enough verifying $\theta''(x) \geq \theta''(y)$ for every $x \leq y \leq \eta$.
The proof is then separated into two parts: for $\eps < \eta$ small enough, we decompose the minimization space $R_\eps$ as $R_\eta$ and $R_\eps \setminus R_\eta$, and find a lower-bound for both spaces independently.

Before jumping into those steps, notice that $c(x, \pure)$ can be rewritten as 
\begin{equation}
	\dfrac{g_\pure(x)}{G(x)^2} \parens*{ \insum_{\pure \neq \purealt} g_\purealt(x) }^2,
\end{equation}
where the term $g_\pure(x) G^{-2}(x)$ is lower-bounded uniformly on $\pure \in \pures_+(x)$ and $x \in R_\eps$. 

\begin{proofstep}{Minimization on $R_\eta$}
	As $c(x) \to 0$ when $x$ converges to a pure strategy, there exists an $\eps' < \eta$ small enough such that, for any $\eps < \eps'$, $c(x) > c(x^*)$ whenever $x \in R_\eta$ and $x^* \in \nhd_\eps$. In particular, for $x^* = (1-\eps) e_\pure + \frac{\eps}{\nPures - 1} \insum_{\purealt \neq \pure} e_\purealt$, it implies that 
	\begin{equation}
		\min_{x \in R_\eta} c(x) > c(x^*) \gtrsim \parens*{ \insum_{\pure \neq \purealt} [\theta''(x^*_{\purealt})]^{-1} }^2 \propto \bracks*{ \theta''\parens*{ \dfrac{\eps}{A-1} } }^{-2}
	\end{equation}
	by definition of $x^*$.
\end{proofstep}

\begin{proofstep}{Minimization on $R_\eps \setminus R_\eta$}
	Let $x \in R_\eps \setminus R_\eta$, \ie $x_\pure \leq 1 - \eps$ for every $\pure \in \pures$ and there exists a $\purealt \in \pures$ such that $x_\purealt > 1 - \eta$. 
	In particular, we know that $x_\purealt \leq 1 - \eps$, so there exists $\purealtalt' \neq \purealt$ verifying \begin{equation}
	\dfrac{\eps}{\nPures - 1} \leq x_{\purealtalt'} \leq \eta.
\end{equation}
	Furthermore, $\eta$ was initially taken so that $\eta < 1/\nPures$, thus $\pures_+(x) = \{ \purealt \}$, which leads to 
	\begin{equation}
		c(x) = c(x, \purealt) \gtrsim \parens*{ \insum_{\purealtalt \neq \purealt} [\theta''(x_{\purealtalt})]^{-1} }^2 \geq [\theta''(x_{\purealtalt'})]^{-2} \geq \bracks*{ \theta''\parens*{ \dfrac{\eps}{A-1} } }^{-2},
	\end{equation}
	where the last inequality comes from the local monotony of $\theta''$ for $\frac{\eps}{\nPures-1} \leq x_{\purealtalt'} \leq \eta$. 
\end{proofstep}

Combining those two steps, we conclude that 
\begin{equation}
	c_\eps = \min_{x \in R_\eps} c(x) = \min \braces*{ \min_{x \in R_\eta} c(x), \min_{x \in R_\eps \setminus R_\eta} c(x) } \gtrsim \bracks*{ \theta''\parens*{ \dfrac{\eps}{\nPures-1} } }^{-2}.
\end{equation}
On the other hand, taking $x = (1-\eps) e_\pure + \frac{\eps}{\nPures - 1} \insum_{\purealt \neq \pure} e_\purealt \in R_\eps$ leads to 
\begin{align}
	c_\eps \leq c(x, \pure) &= \dfrac{1}{\theta''(1-\eps)} \bracks*{ \dfrac{1}{\theta''(1-\eps)} + \dfrac{\nPures - 1}{\theta''(\eps/(\nPures - 1))} }^{-2} (\nPures - 1)^2 \bracks*{ \theta''\parens*{ \dfrac{\eps}{\nPures-1} } }^{-2}
	\notag\\
	&\leq (A-1)^2 \theta''(1-\eps) \bracks*{ \theta''\parens*{ \dfrac{\eps}{\nPures-1} } }^{-2}
	\notag\\
	&\lesssim \bracks*{ \theta''\parens*{ \dfrac{\eps}{\nPures-1} } }^{-2},
\end{align}
which finally shows that $c_\eps = \Theta\parens*{ \bracks*{ \theta''\parens*{ \dfrac{\eps}{\nPures-1} } }^{-2} }$ as needed.
\end{proof}

\subsection{Proof of \cref{thm:club}}
\label{app:proof_stable_club}

Our goal in this appendix is to prove the following theorem:

\StableClub*


Let $\subpures = \prod_{\play \in \players} \subpures_\play$ be a product of pure strategies, and let $\set$ be the face spanned by $\subpures$.
Recall from \cref{app:fenchel} that, for every player $\play \in \players$, the Bregman divergence is defined as
\begin{equation}
	\breg_\play(\base_\play, \strat_\play) = \hreg_\play(\base_\play) - \hreg_\play(\strat_\play) - \braket*{\nabla \hreg_\play(\strat_\play) }{\base_\play - \strat_\play} \quad \text{for all } \base_\play \in \strats_\play, \strat_\play \in \relint\strats_\play.
\end{equation}
For each player $\play \in \players$, let $\subsubpures_\play = \pures_\play \setminus \subpures_\play$ denote the collection of pure actions not included in $\subpures_\play$. 
The main idea behind proving the stability of \acl{closed} faces is to show that the family of energy functions
\begin{equation}
	E_{\findex}(\strat) = \breg_\play(e_{\findex}, \strat_\play) \quad \text{for } \strat_\play \in \relint\strats_\play, \pure_\play \in \subsubpures_\play, \play \in \players,
\end{equation}
can be combined to yield a well-behaved ``Lyapunov-like'' function for $\set$.

The definition of these energy functions is motivated by the following property, which will also play a crucial role in deriving convergence rates:

\begin{lemma}
	If the kernels verify $\hker_\play(0) < \infty$ for every player $\play \in \players$, then there exist constants $c_1 < \infty$ and $c_2 > -\infty$ such that
	\begin{equation}
		\sum_{\play \in \players} \sum_{\pure_\play \in \subsubpures_\play} \rate_\play\bracks{c_2 - E_{\findex}(\strat)} \leq \dist_1(\strat, \set) \leq 2 \sum_{\play \in \players} \sum_{\pure_\play \in \subsubpures_\play} \rate_\play\bracks{c_1 - E_{\findex}(\strat)}
	\end{equation}
	with $\rate_\play$ defined as $\rate_\play(z) = (\hker_\play')^{-1}(z)$.
\label{lemma:energy_bound_x}
\end{lemma}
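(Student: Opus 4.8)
The plan is to reduce both the middle term and the two flanking sums to the weight that $\strat$ places on the excluded actions $\subsubpures_\play = \pures_\play\setminus\subpures_\play$, and then to compare these weights pointwise. Since $\set = \prod_{\play}\simplex(\subpures_\play)$ and the $\ell_1$ norm is additive across players, computing the nearest point of $\simplex(\subpures_\play)$ to $\strat_\play$ shows that one must discard the excluded mass $\sum_{\pure_\play\in\subsubpures_\play}\strat_{\play\pure_\play}$ and redistribute an equal amount among the retained coordinates, giving
\[
\dist_1(\strat,\set)
	= 2\sum_{\play\in\players}\sum_{\pure_\play\in\subsubpures_\play}\strat_{\play\pure_\play}.
\]
Thus it suffices to sandwich each excluded weight $\strat_{\play\pure_\play}$, $\pure_\play\in\subsubpures_\play$, between expressions of the form $\rate_\play(c-E_{\findex}(\strat))$.

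The key step is an algebraic cancellation. Writing $\hreg_\play$ in its decomposed form \eqref{eq:decomposable-regularizer} so that $\nabla\hreg_\play(\strat_\play) = (\hker_\play'(\strat_{\play\purealt_\play}))_{\purealt_\play}$, I expand $E_{\findex}(\strat) = \breg_\play(e_{\findex},\strat_\play)$ via \eqref{eq:Bregman} and add $\hker_\play'(\strat_{\play\pure_\play})$. The linear term contributes $-\hker_\play'(\strat_{\play\pure_\play})(1-\strat_{\play\pure_\play})$, which combines with the added $\hker_\play'(\strat_{\play\pure_\play})$ to erase all dependence on the index $\pure_\play$, leaving the symmetric identity
\[
E_{\findex}(\strat) + \hker_\play'(\strat_{\play\pure_\play})
	= \hker_\play(1) + (\nPures_\play-1)\hker_\play(0)
	+ \sum_{\purealt_\play\in\pures_\play}\bracks{\hker_\play'(\strat_{\play\purealt_\play})\,\strat_{\play\purealt_\play} - \hker_\play(\strat_{\play\purealt_\play})}.
\]

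The next step is to bound the right-hand side between two $\strat$-independent constants. The scalar map $s\mapsto s\hker_\play'(s) - \hker_\play(s)$ is continuous on $(0,1)$ and, because $\hker_\play(0)<\infty$ (this is exactly where the hypothesis is used), it extends continuously to $s=0$ with value $-\hker_\play(0)$; here the convexity estimate $\hker_\play(0)\geq\hker_\play(s) - s\hker_\play'(s)$ combined with $\hker_\play'(s)<0$ near the origin forces $\lim_{s\to0^{+}}s\hker_\play'(s)=0$. Being continuous on the compact interval $[0,1]$ it is bounded, and since the sum has finitely many terms there are constants $a_\play\leq b_\play$, depending only on $\hker_\play$ and $\nPures_\play$, with $a_\play\leq E_{\findex}(\strat)+\hker_\play'(\strat_{\play\pure_\play})\leq b_\play$ for all $\strat$.

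Finally I invert and assemble. Strong convexity of $\hker_\play$ makes $\hker_\play'$ strictly increasing, so $\rate_\play=(\hker_\play')^{-1}$ is increasing and valued in $(0,1)$; rearranging the two-sided bound gives $\rate_\play(a_\play-E_{\findex}(\strat))\leq\strat_{\play\pure_\play}\leq\rate_\play(b_\play-E_{\findex}(\strat))$ (capping $\rate_\play$ at $1$ beyond the upper end of its domain, consistent with $\strat_{\play\pure_\play}\leq1$). Substituting into the distance formula and setting $c_1=\max_\play b_\play$ and $c_2=\min_\play a_\play$---using monotonicity of each $\rate_\play$ together with positivity to absorb the factor $2$ on the lower side---yields the stated sandwich. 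The main obstacle is the uniform boundedness of the symmetric sum in the third step: without $\hker_\play(0)<\infty$ the $s\to0$ limit of $s\hker_\play'(s)-\hker_\play(s)$ diverges, the comparison between $E_{\findex}$ and $\strat_{\play\pure_\play}$ breaks down, and the lemma fails.
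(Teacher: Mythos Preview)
Your proof is correct and follows essentially the same route as the paper: both arguments show that $E_{\findex}(\strat)+\hker_\play'(\strat_{\play\pure_\play})$ is trapped between two $\strat$-independent constants, invert via $\rate_\play$, and then combine with the relation $\dist_1(\strat,\set)\asymp\sum_{\play}\sum_{\pure_\play\in\subsubpures_\play}\strat_{\play\pure_\play}$. Your version is marginally cleaner---you get the exact equality $\dist_1(\strat,\set)=2\sum\strat_{\play\pure_\play}$ rather than the paper's two-sided estimate, and you bound the symmetric sum by a single compactness argument rather than two separate convexity inequalities---but the structure is the same.
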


\begin{proof}
	From the definitions of the Bregman divergence and of the regularizers, we can write 
	\begin{align}
		E_{\findex}(\strat)
			&= \hreg_\play(e_{\findex}) - \hreg_\play(\strat_\play) - \braket{\nabla \hreg_\play(\strat_\play)}{e_{\findex} - \strat_\play}
			\notag\\
			&= c_{\findex} - \insum_{\purealt_\play} \hker_\play(\strat_{\findexalt}) - \hker_\play'(\strat_{\findex}) + \insum_{\purealt_\play} \strat_{\findexalt} \hker_\play'(\strat_{\findexalt}),
	\end{align}
	where $c_{\findex}$ is a finite constant. As a result, the convexity of $\hker_\play$ yields
	\begin{equation}
		E_{\findex}(\strat) \leq c_{\findex} - \nPures_\play \min_{z \in [0, 1]} \hker_\play(z) - \hker_\play'(\strat_{\findex}) + \hker_\play'(1) \leq c_1 - \hker_\play'(\strat_{\findex}), 
	\end{equation}
	for some constant $c_1 < \infty$ obtained by taking the maximum over all actions $\pure_\play \in \subsubpures_\play$ and all players $\play \in \players$.
	On the other hand, note that for all players $\play \in \players$ and every $z \in (0, 1)$, we have 
	\begin{equation}
		z \hker_\play'(z) = \int_0^z \hker_\play'(z) dt \geq \int_0^z \hker_\play'(t) dt = \hker_\play(z) - \hker_\play(0),
	\end{equation}
	where the inequality comes from the (strong) convexity of $\hker_\play$. As such, we also get
	\begin{align}
		E_{\findex}(\strat)
			&\geq c_{\findex} - \insum_{\purealt_\play} \hker_\play(\strat_{\findexalt}) - \hker_\play'(\strat_{\findex}) + \insum_{\purealt_\play} \bracks{ \hker_\play(\strat_{\findexalt}) - \hker_\play(0) }
			\notag\\
			&= c_{\findex} - \nPures_\play \hker_\play(0) - \hker_\play'(\strat_{\findex}) \geq c_2 - \hker_\play'(\strat_{\findex}),
	\end{align}
	where $c_2 > -\infty$ thanks to the assumption that $\hker_\play(0) < \infty$. Putting both sides together, we therefore obtain
	\begin{equation}
		\rate_\play\bracks{c_2 - E_{\findex}(\strat)} \leq \strat_{\findex} \leq \rate_\play\bracks{c_1 - E_{\findex}(\strat)}
	\label{eq:energy_twosidedbound}
	\end{equation}
	for every $\pure_\play \in \subsubpures_\play$, $\play \in \players$, $x \in \relint\strats$. Now, note that by definition of the face $\set$,
	\begin{align}
		\dist_1(\strat, \set)
			&\defeq \min_{\strat' \in \set} \norm{\strat - \strat'}_1
			\notag\\
			&= \min_{\strat' \in \set} \sum_{\play \in \players} \parens*{ \sum_{\pure_\play \in \subsubpures_\play} \strat_{\findex} + \sum_{\pure_\play \in \subpures_\play} \abs{\strat_{\findex} - \strat'_{\findex} } }
			\geq \sum_{\play \in \players} \sum_{\pure_\play \in \subsubpures_\play} \strat_{\findex}.
	\label{eq:dist_lowerbound}
	\end{align}
	Furthermore, if we pick some arbitrary pure action $\pure^* \in \subpures$ and define 
	\begin{equation}
		\strat_{\findex}' = 
		\begin{cases}
			0 & \text{if } \pure_\play \in \subsubpures_\play\\
			\strat_{\findex} & \text{if } \pure_\play \in \subpures_\play \setminus \braces{\pure_{\play}*}\\
			\strat_{\play \pure_{\play}^*} + \sum_{\purealt_\play \in \subsubpures_\play} \strat_{\findexalt} & \text{if } \pure_\play = \pure_{\play}^*
		\end{cases}
	\end{equation}
	for each $\strat \in \strats$, then $\strat' \in \set$ by construction and we get 
	\begin{equation}
		\norm{\strat - \strat'}_1 = \sum_{\play \in \players} \parens*{ \sum_{\pure_\play \in \subsubpures_\play} \strat_{\findex} + \sum_{\purealt \in \subsubpures_\play} \strat_{\findexalt} } = 2 \sum_{\play \in \players} \sum_{\pure_\play \in \subsubpures_\play} \strat_{\findex}.
	\label{eq:dist_upperbound}
	\end{equation}
	Consequently, \cref{eq:dist_lowerbound,eq:dist_upperbound} yield 
	\begin{equation}
		\sum_{\play \in \players} \sum_{\pure_\play \in \subsubpures_\play} \strat_{\findex} \leq \dist_1(\strat, \set) \leq 2 \sum_{\play \in \players} \sum_{\pure_\play \in \subsubpures_\play} \strat_{\findex},
	\end{equation}
	which concludes the proof when combined with \cref{eq:energy_twosidedbound}.
\end{proof}

According to \cref{lemma:energy_bound_x}, we can derive two consistency results for the family $E_{\findex}(\strat)$:
\begin{enumerate}
	\item $\strat \to S$ if and only if $E_{\findex}(x) \to \infty$ for every action $\pure_\play \in \subsubpures_\play$ and every player $\play \in \players$;
	\item For every neighborhood $\nhd_\set$ of $\set$, there exist constants $M' \geq M > 0$ such that $\nhd_{M'} \subseteq \nhd_\set \subseteq \nhd_M$, where
	\begin{equation}
		\nhd_M \defeq \setdef{\strat \in \strats}{E_{\findex}(\strat) > M \text{ for every } \pure_\play \in \subsubpures_\play, \play \in \players}, \quad M > 0.
	\end{equation}
\end{enumerate}

\begin{remark}
	Our proof scheme for showing stability of closed under better replies faces is similar to the one proposed by \citet{BM23} in a discrete setting, but uses a different family of energy functions in order to fix a flaw in their initialization domain's construction.
	To better understand this flaw, recall that they define the set of \emph{outward deviations from $\set$} as $\devset = \setdef{e_{\findexalt} - e_{\findex}}{\pure_\play \in \subpures_\play, \purealt_\play \in \subsubpures_\play, \play \in \players}$, and they argue that the associated family of energy functions $E_z(\score) = \braket*{\score}{z}$ forms a good Lyapunov function for $\set$.
	In particular, they show that $\strat \defeq \mirror(\score)$ converges to $\set$ whenever $\max_{z \in \devset} E_z(\score) \to -\infty$, which is similar property that we have obtained in Lemma \ref{lemma:energy_bound_x}.
	However, in their framework, the reverse implication is \emph{not true} in general, \ie there could be a sequence $(\strat^n)$ converging to $\set$ but such that $\max_{z \in \devset} E_z(\score^n)$ remains lower-bounded.
	For instance, if there exists a player $\play \in \players$ for which $\abs{\subpures_\play} \geq 2$ (this is equivalent to ask that $\set$ is not restricted to a vertex of $\strats$) and if we fix some pure strategies $\pure^* \in \subpures$, $\pure_\play \in \subpures_\play \setminus \braces{\pure_{\play}^*}$, $\purealt_\play \in \subsubpures_\play$, then any sequence $(\strat^n)$ converging to $e_{\pure^*} \in \set$ with $\strat_{\findex}^n = \strat_{\findexalt}^n$ satisfies 
	\begin{equation}
		\inf_{n \geq 1} E_z(\score^n) = \inf_{n \geq 1} \score_{\findexalt}^n - \score_{\findex}^n = \inf_{n \geq 1} \hker_\play'(\strat_{\findexalt}^n) - \hker_\play'(\strat_{\findex}^n) = 0 > -\infty
	\end{equation}
	for the deviation $z = e_{\findexalt} - e_{\findex} \in \devset$.
	Due to this subtlety, the initialization domain $\nhd_0 = \setdef{\strat \in \strats}{\max_{z \in \devset} E_z(\score) < -2M, \, \strat = \mirror(\score)}$ (for $M > 0$ big enough) proposed by \citet{BM23} does not describe a bona fide neighborhood of $\set$: in fact, from the previous example, the vertex $e_{\pure^*} \in \set$ is not even included in $\nhd_0$.
	Since the initialization domain is required to be a neighborhood of the whole face in order to verify the stochastic stability criterion (\cf \cref{def:SAS}), it means that the energy functions used in \citep{BM23} are not precise enough to prove such a result.
\end{remark}

To gain insight into how these energy functions arise in local stability proofs, let us assume $\set$ to be \acl{closed}. 
By definition, we then know that $\pay_\play(\pure_\play; \strat_{\others}) - \pay_\play(\strat) < 0$ for every $\pure_\play \in \subsubpures_\play$, $\play \in \players$ and all $\strat \in \set$.
Since the payoff functions $\pay_\play$ are continuous on the compact space $\strats$, it implies that there exists a neighborhood $\nhd$ of $\set$ and a finite constant $m > 0$ such that 
\begin{equation}
	\pay_\play(\pure_\play; \strat_{\others}) - \pay_\play(\strat) \leq -m \quad \text{for every } \pure_\play \in \subsubpures_\play, \play \in \players, x \in \nhd.
\end{equation}
Now, let $\Strat(\time)$ be a solution orbit of \eqref{eq:FTRL-stoch}. Using Itô's lemma in tandem with the dual representation of the Bregman divergence (\cf \cref{app:fenchel} and the proof of \cref{lemma:gen_fench_bound}), we then get
\begin{equation}
	dE_{\findex}(\Strat(\time)) = \bracks{\pay_\play(\Strat) - \pay_\play(\pure_\play; \Strat_{\others})} d\time + \frac{1}{2} \tr\parens*{\Jac \mirror_\play (\Score_\play) \covmat_\play(\Strat)} d\time + \braket{\Strat_\play - e_{\findex}}{d\mart_\play(\time)}
\label{eq:dE_z}
\end{equation}

Specifically, if we temporarily ignore the martingale term $\braket{\Strat_\play - e_{\findex}}{d\mart_\play(\time)}$ (\ie if we consider the noiseless setting), then all energy functions are expected to locally increase along trajectories of \eqref{eq:FTRL-stoch} evolving in $\nhd$ as $dE_{\findex}(\Strat(\time)) \geq m d\time$. As a result, player's strategies will remain in $\nhd$ for any time and converge to $\set$ thanks to \cref{lemma:energy_bound_x}.

However, once the noise term is reintroduced, this local increase of the energy no longer holds: the noise, being modelled as a continuous martingale with zero mean, can counterbalance the positive drift and, with positive probability, cause the energy to decrease locally. 
For instance, under \cref{asm:noisemin}, we know that trajectories of \eqref{eq:FTRL-stoch} cannot even stay in $\nhd$ with probability $1$; they always visit any strategy in $\relint\strats$ with positive probability (\cf \cref{eq:regular} and the related discussion about uniform ellipticity in \cref{app:stochastic}).

With that being said, although trajectories could exit $\nhd$ with positive probability, we can still show that the total energy converges to $\infty$ along those trajectories that do stay into $\nhd$:

\begin{claim}
	If $E = \braces{X(\time) \in \nhd \text{ for every } \time \geq 0}$ has positive probability, then $E_{\findex}(\Strat(\time)) \to \infty$ almost surely on $E$ for every $\pure_\play \in \subsubpures_\play$, $\play \in \players$.
\label{claim:cv_energy_local}
\end{claim}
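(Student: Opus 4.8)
The plan is to combine the strictly positive drift that clubness forces on each energy function with a sublinear control of the corresponding martingale fluctuation: conditioned on $E$ the orbit never leaves $\nhd$, so every $E_{\findex}$ accumulates drift at a linear rate while its noise grows only sublinearly, forcing $E_{\findex}\to\infty$. Fix $\play\in\players$ and $\pure_\play\in\subsubpures_\play$ and work from the Itô expansion \eqref{eq:dE_z}. This decomposition is valid for all finite time, because steepness of $\hreg_\play$ keeps $\Strat(\time)=\mirror(\Score(\time))$ in $\relint\strats$ and hence keeps $E_{\findex}(\Strat(\time))$ finite along the orbit.

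The drift of \eqref{eq:dE_z} is $\pay_\play(\Strat)-\pay_\play(\pure_\play;\Strat_{\others})+\tfrac12\tr(\Jac\mirror_\play(\Score_\play)\covmat_\play(\Strat))$. On $\nhd$ the first difference is at least $m>0$ by the clubness bound established just above, and the Itô correction is nonnegative: indeed $\Jac\mirror_\play=(\Hess\hreg_\play)^{-1}\succ 0$ and $\covmat_\play\succeq 0$, so the trace of their product is nonnegative (this is the estimate underlying \cref{lemma:trace-jacobian-mirror}). Hence the full drift is bounded below by $m$ throughout $\nhd$. Writing $N(\time)=\int_0^\time\braket{\Strat_\play(\timealt)-e_{\findex}}{d\mart_\play(\timealt)}$ for the martingale part, integration of \eqref{eq:dE_z} therefore gives, on the event $E$,
\begin{equation}
E_{\findex}(\Strat(\time))\;\geq\;E_{\findex}(\Strat(0))+m\,\time+N(\time)\qquad\text{for all }\time\geq0.
\end{equation}

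It remains to show $N(\time)/\time\to0$ almost surely. Since $\Strat_\play$ lives in the bounded simplex and $\diffmat_\play$ is bounded on $\strats$, the quadratic variation obeys $d[N]=\braket{\Strat_\play-e_{\findex}}{\covmat_\play(\Strat)\,(\Strat_\play-e_{\findex})}\,\dd\time\leq C\,\dd\time$ for some finite constant $C$; consequently $\lim_{\time\to\infty}\rho_N(\time)=\lim_{\time\to\infty}\int_0^\time(1+\timealt)^{-2}\,d[N](\timealt)\leq C\int_0^\infty(1+\timealt)^{-2}\,\dd\timealt<\infty$ almost surely. The strong law of large numbers for martingales (\cref{lemma:SLL}) then yields $N(\time)/\time\to0$ almost surely on the whole sample space. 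Dividing the displayed inequality by $\time$ and letting $\time\to\infty$ gives $\liminf_{\time\to\infty}E_{\findex}(\Strat(\time))/\time\geq m>0$ on $E$, whence $E_{\findex}(\Strat(\time))\to\infty$ almost surely on $E$, which is the assertion.

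The delicate point is the matching of two estimates of different natures: the bound $N(\time)/\time\to0$ is \emph{unconditional} (its quadratic-variation control never invokes $E$), whereas the drift lower bound $\geq m$ is only in force while the trajectory remains in $\nhd$ — that is, for all time precisely on $E$. Pairing a global sublinear-fluctuation bound with an event-wise linear drift is what closes the argument. One must also confirm that the Itô decomposition extends to $\time\to\infty$ despite $E_{\findex}$ being singular on $\set$ itself; this is exactly what steepness guarantees, since the orbit stays in $\relint\strats$ and approaches $\set$ only in the limit.
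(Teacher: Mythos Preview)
Your proof is correct and follows essentially the same route as the paper: integrate \eqref{eq:dE_z} on $E$ to obtain $E_{\findex}(\Strat(\time))\geq E_{\findex}(\Strat(0))+m\time+\xi_{\findex}(\time)$, bound the quadratic variation of the martingale part linearly in $\time$, and invoke \cref{lemma:SLL} to conclude. Your explicit justification that the Itô correction $\tfrac12\tr(\Jac\mirror_\play\,\covmat_\play)\geq0$ is a nice touch that the paper leaves implicit in the proof of this claim.
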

\begin{proof}[Proof of Claim \ref{claim:cv_energy_local}]
	On the event $E$, integrating \cref{eq:dE_z} yields 
	\begin{equation}
	E_{\findex}(\Strat(\time)) = E_{\findex}(\Strat(0)) + \int_0^\time dE_{\findex}(\Strat(\timealt)) d\timealt \geq E_{\findex}(\startx) + m\time + \xi_{\findex}(\time)
	\label{eq:bound_E}
	\end{equation}
	where $\xi_{\findex}(\time)$ is the square-integrable martingale given by $d\xi_{\findex}(\time) = \braket*{\Strat_\play(\time) - e_{\findex}}{d\mart_\play(\time)}$.
	By definition of $\mart_\play(\time)$, the quadratic variation of $\xi_{\findex}(\time)$ can be upper-bounded as
	\begin{align}
		d[\xi_{\findex}](\time)
			&= (\Strat_\play - e_{\findex})^{\top} d[\mart_\play](\time) (\Strat_\play - e_{\findex})
			\notag\\
			&= (\Strat_\play - e_{\findex})^{\top} \covmat_\play(\time) (\Strat_\play - e_{\findex}) d\time
			\notag\\
			&\leq \norm{\Strat_\play - e_{\findex}}_2^2 \diffmat_{\max}^2 d\time
			\notag\\
			&\leq 2 \diffmat_{\max}^2 d\time,
	\end{align}
	where $\diffmat_{\max} < \infty$ by continuity of noise coefficients and compactness of $\strats$.
	The strong law of large numbers for martingales (\cf \cref{lemma:SLL}) can then be used to obtain $E_{\findex}(\Strat(\time)) \to \infty$ almost surely on the event $E$, which finishes the proof.
\end{proof}

With this claim established, we begin to see the connection to stochastically asymptotic stability: if we can demonstrate that trajectories remain within $\nhd$ with arbitrarily high probability, depending on how close their initial conditions are to $\set$, then we would directly obtain convergence toward $\set$ with the same (arbitrarily high) probability.

To show that, we use \cref{lemma:energy_bound_x} to describe the neighborhood $\nhd$ through the energy functions $E_{\findex}$. 
Indeed, we know from the aforementioned lemma that there exists a constant $M > 0$ (which can be taken as large as needed) such that $\strat \in \nhd$ whenever $\strat \in \nhd_M$, where we recall that $\nhd_M$ denotes the non-empty subset of $\strats$ given by 
\begin{equation}
	\nhd_M = \setdef{\strat \in \strats}{E_{\findex}(\strat) > M \text{ for every } \pure_\play \in \subsubpures_\play, \play \in \players}.
\end{equation}
The main idea is to show that if the initial condition is sufficiently close to $\set$, say with $\startx \in \nhd_{2M}$ (which describes a neighborhood of $S$ thanks to \cref{lemma:energy_bound_x}), then we can choose $M$ big enough so that every energy functions remains above $M$ at all times, with arbitrarily high probability. 
In other words, we want to ensure that the noise perturbations are not strong enough to counterbalance the linearly increasing drift. 
More precisely, we aim to prove the following intermediate result:

\begin{claim}
	For every $\eps > 0$, there exists $M \defeq M(\eps) > 0$ big enough such that
	\begin{equation}
		\prob_\startx\parens*{\Strat(\time) \in \nhd_M \text{ for every } \time \geq 0} \geq 1 - \eps
	\end{equation}
	whenever $\startx \in \nhd_{2M}$.
\label{claim:stability_energy_local}
\end{claim}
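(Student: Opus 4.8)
The plan is to control the exit time $\stoptime_{M} = \inf\setdef{\time\geq0}{\Strat(\time)\notin\nhd_{M}}$ and to show that, once $M$ is large enough, $\stoptime_{M}=\infty$ with probability at least $1-\eps$. Fix $\eps>0$. Since $\set$ is \acl{closed}, the strict drift inequality $\pay_{\play}(\pure_{\play};\Strat_{\others}) - \pay_{\play}(\Strat) \leq -m$ holds on a \emph{fixed} neighborhood $\nhd$ of $\set$, for some $m>0$ and all $\pure_{\play}\in\subsubpures_{\play}$, $\play\in\players$; by the consistency property following \cref{lemma:energy_bound_x} (item~2), I may then pick $M$ large enough that $\nhd_{M}\subseteq\nhd$. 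Starting from $\startx\in\nhd_{2M}$ means that every relevant energy obeys $E_{\findex}(\startx)>2M$, so for the trajectory to leave $\nhd_{M}$ at least one energy function must drop by $M$ — the event whose probability I must bound.

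The first step is to read off from \eqref{eq:dE_z} that, on $[0,\stoptime_{M})$, each energy has drift bounded below by $m$. Indeed, the better-reply term contributes $\pay_{\play}(\Strat)-\pay_{\play}(\pure_{\play};\Strat_{\others})\geq m$, while the Itô correction $\tfrac12\tr(\Jac\mirror_{\play}(\Score_{\play})\covmat_{\play}(\Strat))$ is nonnegative because $\Jac\mirror_{\play}=[\Hess\hreg_{\play}]^{-1}$ is symmetric positive-definite (\cf \cref{lemma:trace-jacobian-mirror}) and $\covmat_{\play}\succcurlyeq0$. Writing $\xi_{\findex}$ for the martingale part of $E_{\findex}$, which satisfies $d[\xi_{\findex}]\leq 2\diffmat_{\max}^{2}\dd\time$ (as in the proof of \cref{claim:cv_energy_local}), the stopped process therefore obeys
\begin{equation}
E_{\findex}(\Strat(\time\wedge\stoptime_{M})) \geq 2M + m\,(\time\wedge\stoptime_{M}) + \xi_{\findex}(\time\wedge\stoptime_{M}).
\end{equation}
Since $\Strat(\time)$ stays in $\relint\strats$ (steepness of $\hreg_{\play}$ forces $\mirror_{\play}(\Score_{\play})$ into the relative interior), the energies remain finite and continuous along trajectories, so on $\{\stoptime_{M}<\infty\}$ continuity forces $E_{\findex}(\Strat(\stoptime_{M}))=M$ for at least one pair; the displayed bound then gives $-\xi_{\findex}(\stoptime_{M})\geq M+m\stoptime_{M}$. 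In other words, the continuous martingale $-\xi_{\findex}$ crosses the upward-sloping linear barrier $\time\mapsto M+m\time$.

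It remains to bound the probability of such a crossing. For a fixed pair I use the exponential-supermartingale estimate: with $\theta=m/(2\diffmat_{\max}^{2})$, the stochastic exponential $\exp\bigl(\theta(-\xi_{\findex})(\time\wedge\stoptime_{M}) - \tfrac{\theta^{2}}{2}[\xi_{\findex}](\time\wedge\stoptime_{M})\bigr)$ is a nonnegative supermartingale of unit initial value, and on the crossing event it is at least $\exp(\theta M)$, after using $[\xi_{\findex}]\leq 2\diffmat_{\max}^{2}\time$ together with $\theta m - \theta^{2}\diffmat_{\max}^{2}=m^{2}/(4\diffmat_{\max}^{2})\geq0$. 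The maximal inequality for nonnegative supermartingales then caps each crossing probability at $\exp(-mM/(2\diffmat_{\max}^{2}))$, and a union bound over the finitely many pairs $\pure_{\play}\in\subsubpures_{\play}$, $\play\in\players$ (say $D$ of them) yields $\probwrt{\startx}{\stoptime_{M}<\infty}\leq D\exp(-mM/(2\diffmat_{\max}^{2}))$ for every $\startx\in\nhd_{2M}$. Choosing $M\geq(2\diffmat_{\max}^{2}/m)\log(D/\eps)$ renders the right-hand side at most $\eps$, proving the claim. The main obstacle is the tension between the region-dependent drift estimate and the infinite time horizon: the linear drift $m\time$ must be shown to dominate the martingale fluctuations uniformly over \emph{all} times, not merely eventually, and the lower bound on the drift may only be invoked while the trajectory is still inside $\nhd_{M}$ — both issues are resolved precisely by running the exponential-supermartingale barrier argument against the stopping time $\stoptime_{M}$.
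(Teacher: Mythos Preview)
Your argument is correct and follows the same overall architecture as the paper's proof: bound each energy from below by $2M + m\time + \xi_{\findex}(\time)$ on $[0,\stoptime_{M})$, deduce that $\{\stoptime_{M}<\infty\}$ forces some martingale $-\xi_{\findex}$ to cross the linear barrier $M+m\time$, and then control the crossing probability exponentially in $M$ before taking a union bound over the finitely many pairs.

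The one genuine difference is in how the barrier-crossing probability is bounded. The paper invokes the Dambis--Dubins--Schwarz time-change to write $\xi_{\findex}(\time)=\mbrown([\xi_{\findex}](\time))$ for a Brownian motion $\mbrown$, reduces to the explicit formula $\prob(\mbrown \text{ hits } -M-\tfrac{m}{2\diffmat_{\max}^{2}}\time)=e^{-mM/\diffmat_{\max}^{2}}$, and obtains the sharper exponent $m/\diffmat_{\max}^{2}$. You instead run the stochastic-exponential supermartingale $\exp(\theta(-\xi_{\findex})-\tfrac{\theta^{2}}{2}[\xi_{\findex}])$ with $\theta=m/(2\diffmat_{\max}^{2})$ and apply the maximal inequality, which is more self-contained (no time-change theorem, no external Brownian-hitting formula) at the cost of a factor of~$2$ in the exponent. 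Either constant suffices for the claim, so both routes close the argument; yours is marginally more elementary, the paper's marginally sharper.
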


\begin{proof}[Proof of Claim \ref{claim:stability_energy_local}]
	For every $M$ big enough, let $\tau_M$ denotes the first time at which the minimal energy is lower than $M$, \viz 
	\begin{equation}
		\tau_M = \inf\setdef{\time \geq 0}{\min_{\pure_\play \in \subsubpures_\play, \play \in \players} E_{\findex}(\Strat(\time)) \leq M}.
	\end{equation}
	Proving the claim is therefore equivalent to showing that $\prob(\tau_M = \infty) \geq 1 - \eps$ for a well-chosen constant $M$. 
	To prove this, we will reduce the problem of estimating $\tau_M$ to examining specific hitting times of $\xi_{\findex}(\time)$. Since $\xi_{\findex}(\time)$ is a continuous square-integrable martingale, these hitting times will be significantly easier to estimate using standard results in stochastic analysis.

First, let us define the auxiliary hitting time 
\begin{equation}
	\tau = \inf\setdef{\time \geq 0}{\xi_{\findex}(\time) = - M - m\time \text{ for some } \pure_\play \in \subsubpures_\play, \play \in \players}.
\end{equation}
We claim that $\tau \leq \tau_M$ almost surely. 
Indeed, let us assume on the other hand that there exists a non-negligible event on which $\tau > \tau_M$. 
Then $\tau_M$ is necessarily finite on this event, and $X(\time) \in \nhd$ for every $\time < \tau_M$, so \cref{eq:dE_z} leads to
\begin{equation}
	E_{\findex}(\Strat(\tau_M)) > 2M + m \tau_M + \xi_{\findex}(\tau_M) > 2M + m\tau_M - M - m\tau_M = M
\end{equation}
for every $\pure_\play \in \subsubpures_\play$ and $\play \in \players$, where the second inequality is a consequence of $\tau_M < \tau$.
This contradicts the definition of $\tau_M$, which prove that $\tau \leq \tau_M$ with probability $1$ as required.

It follows from the previous argument that $\braces{\tau = \infty} \subseteq \braces{\tau_M = \infty}$, which readily implies the bound
\begin{equation}
	\prob\parens*{\tau_M < \infty} \leq \prob\parens*{\tau < \infty} \leq \sum_{\play \in \players} \sum_{\pure_\play \in \subsubpures_\play} \prob\parens*{\tau_{\findex} < \infty},
\label{eq:boundTau1}
\end{equation}
where $\tau_{\findex}$ is the hitting time given by 
\begin{equation}
	\tau_{\findex} = \inf\setdef{\time \geq 0}{\xi_{\findex}(\time) = - M - m\time}.
\end{equation}

To estimate the probability that $\tau_{\findex}$ is finite, remember that the quadratic variation of $\xi_{\findex}(\time)$ is upper-bounded by $2 \sdev_{\max}^2 \time$ \as.
Moreover, the time-change theorem for continuous martingales {\citep[p. 174]{KS98}} states that there exists a standard Brownian motion $\mbrown(\time)$ (defined on a possibly enlarged probability space) such that $\xi_{\findex}(\time) = \mbrown([\xi_{\findex}](\time))$.

Accordingly, if we let $\tilde\tau$ denotes the stopping time \begin{equation}
\tilde{\tau} = \inf\setdef*{\time \geq 0}{\mbrown(\time) = - M - \dfrac{m\time}{2\sigma_{max}^2}},
\end{equation}
then we readily get $\prob(\tau_{\findex} < \infty) \leq \prob(\tilde{\tau} < \infty)$ for every $\pure_\play \in \subsubpures_\play$, $\play \in \players$. 
This last hitting time is easily computed as it only involves a standard Brownian motion, for which classical results of the literature (see for instance {\citep[Subsection 3.5.C]{KS98}}) yield $\prob(\tilde{\tau} < \infty) = e^{- \lambda M}$ with $\lambda = m/\sdev_{max}^2$. 

Putting all these estimations back into \cref{eq:boundTau1} then allows us to obtain
\begin{equation}
	\prob(\tau_M < \infty) \leq \abs{\subsubpures} e^{-\lambda M}.
\end{equation}
where $\abs{\subsubpures} \defeq \insum_{\play \in \players} \abs{\subsubpures_\play}$.
In particular, if we take $M$ big enough so that $M > \frac{\sdev_{max}^2}{m} \log\parens{\abs{\subsubpures}/\eps}$, then we finally get $\prob(\tau_M = \infty) \geq 1 - \eps$, which finishes the proof.
\end{proof}

With Claims \ref{claim:cv_energy_local} and \ref{claim:stability_energy_local} established, we are now ready to prove \cref{thm:club}.

\begin{proof}[Proof of \cref{thm:club} (club $\implies$ stable)]
Let us fix a threshold $\eps > 0$ and a neighborhood $\nhd_0$ of $\set$.
By previously discussed arguments, if $\set$ is \acl{closed}, we can find a neighborhood $\nhd$ and a constant $m > 0$ small enough such that  
\begin{equation}
	\pay_\play(\pure_\play; \strat_{\others}) - \pay_\play(\strat) \leq -m \quad \text{for every } \pure_\play \in \subsubpures_\play, \play \in \players, x \in \nhd.
\end{equation} 
Furthermore, we can choose $\nhd \subseteq \nhd_0$ without loss of generality. Claim \ref{claim:stability_energy_local} therefore yields a constant $M > 0$ big enough so that 
\begin{equation}
	\prob_\startx\parens*{ \Strat(\time) \in \nhd_0 \text{ for every } \time \geq 0} \geq \prob_\startx\parens*{ \Strat(\time) \in \nhd \text{ for every }\time \geq 0} \geq \prob_\startx\parens*{ \Strat(\time) \in \nhd_M \text{ for every } \time \geq 0 } \geq 1 - \eps
\end{equation}
whenever $\startx \in \nhd_{2M}$ (which describes a neighborhood of $\set$).
Moreover, $X(\time) \to \set$ almost surely on the event $\braces{X(\time) \in \nhd \text{ for every } \time \geq 0}$ by Claim \ref{claim:cv_energy_local} and \cref{lemma:energy_bound_x}, so $\set$ is stochastically asymptotically stable as required.
\end{proof}

\begin{proof}[Proof of \cref{thm:club} (convergence rate)]
	Coming back to \cref{eq:bound_E}, we obtain that, on the event $E = \braces{X(\time) \in \nhd \text{ for every } \time \geq 0}$ which has probability greater than $1 - \eps$,
	\begin{equation}
		E_{\findex}(\Strat(\time)) \geq m\time + \xi_{\findex}(\time)
	\end{equation}
	for every pure actions $\pure_\play \in \subsubpures_\play$ and all players $\play \in \players$, where $\xi_{\findex}(\time)$ is a square-integrable martingale with $[\xi_{\findex}](\time) \leq 2 \sigma_{\max}^2 \time$.
	In particular, if $[\xi_{\findex}](\time) \nearrow \infty$, then the law of iterated logarithm (\cref{lemma:LIL}) yields
	\begin{multline}
		\limsup_{\time \to \infty} \dfrac{\abs{\xi_{\findex}(\time)}}{\sigma_{max} \sqrt{\time \log\log \time}}
		\\
		= \limsup_{\time \to \infty} \dfrac{\abs{\xi_{\findex}(\time)}}{ \sqrt{ 2 [\xi_{\findex}](\time) \log\log [\xi_{\findex}](\time) } }  \sqrt{ \dfrac{ 2 [\xi_{\findex}](\time) \log\log[\xi_{\findex}](\time) }{ \sigma_{max}^2 \time \log\log\time } } \leq 2 \sqrt{2} \quad \as
	\end{multline}
	On the other hand, if $[\xi_{\findex}](\time) \nearrow [\xi_{\findex}](\infty) < \infty$, then $\xi_{\findex}(\time)$ converges to a finite random variable and so $\abs{\xi_{\findex}(\time)} = o\parens{\sigma_{max} \sqrt{\time\log\log\time}}$.
	Consequently, as $\subsubpures_\play$ and $\players$ are finite, it implies that $\max_{\pure_\play \in \subsubpures_\play, \play \in \players} \abs{\xi_{\findex}(\time)} = \bigoh\parens*{\sigma_{max} \sqrt{\time \log\log\time}}$ \as in any cases.
	
	Now, in virtue of \cref{lemma:energy_bound_x} and by the fact that $\rate_\play$ is increasing ($\hker_\play$ is convex), we therefore get 
	\begin{align}
	d_1(X(\time), \set)
		&\leq 2 \sum_{\play \in \players} \sum_{\pure_\play \in \subsubpures_\play} \rate_\play\bracks*{c_1 - m\time + \max_{\pure_\play \in \subsubpures_\play, \play \in \players} \abs{\xi_{\findex}(\time)}}
		\notag\\
		&\lesssim \Phi\bracks*{c_1 - m \time + \bigoh\parens*{ \sigma_{max} \sqrt{\time\log\log\time} }}
	\end{align}
	on the event $E$, which readily finish the proof.
\end{proof}

\begin{proof}[Proof of \cref{thm:club} (stable $\implies$ club)]
Assume that the face $\set$ is stochastically asymptotically stable but not \acl{closed}. Then, there exists a player $\play \in \players$ and pure strategies $\pure \in \subpures$, $\pure_{\play}' \in \subsubpures_\play$ such that $\pay_\play(\pure_{\play}'; \pure_{-\play}) \geq \pay_\play(\pure)$.

Now, let us consider the restriction of \eqref{eq:FTRL-stoch} to the face spanned by $\pure$ and $(\pure_{\play}'; \pure_{-\play})$, by taking as initial condition $X(0) = (x_\play; \pure_{-\play})$ where $\supp x_\play = \braces{\pure_\play, \pure_{\play}'}$.
As $h$ is steep, trajectories of \eqref{eq:FTRL-stoch} are invariant in this subface, meaning that $X(\time) = (X_\play(\time); \pure_{-\play})$ with $\supp X_\play(\time) = \braces{\pure_\play, \pure_{\play}'}$ for every time $\time \geq 0$.

Consequently, the score difference between strategies $\pure_\play$ and $\pure_{\play}'$ is given at any time by
\begin{equation}
	dY_{\play \pure_\play} - d Y_{\play \pure_{\play}'} = \bracks*{\pay_\play(\pure) - \pay_\play(\pure_{\play}'; \pure_{-\play}) } \dd\time + d\xi(\time) \leq d\xi(\time),
\label{eq:scoreDiffNonClub}
\end{equation}
where $\xi(\time)$ is the square-integrable martingale given by $\xi(\time) = \braket*{e_{\play \pure_\play} - e_{\play \pure_{\play}'}}{\mart(\time)}$.
The quadratic variation of $\xi(\time)$ is lower-bounded as $[\xi](\time) \geq 2 \sigma_{\min}^2 \time$ \as, which implies that $[\xi](\time) \nearrow \infty$ when $\time \to \infty$. 
We can therefore invoke the law of iterated logarithm for martingales (\cf \cref{lemma:LIL}) to obtain $\liminf_{\time \to \infty} \xi(\time) = - \infty$ \as.
Coming back to \cref{eq:scoreDiffNonClub}, we then get 
\begin{equation}
	\liminf_{\time \to \infty} Y_{\play \pure_{\play}}(\time) - Y_{\play \pure_{\play}'}(\time) \leq Y_{\play \pure_{\play}}(0) - Y_{\play  \pure_{\play}'}(0) + \liminf_{\time \to \infty} \xi(\time) = - \infty \quad \as.
\end{equation}
This implies that, with probability one and for any initial condition, there exists a subsequence $(\time_n) \nearrow \infty$ such that $Y_{\play \pure_\play}(\time_n) - Y_{\play \pure_{\play}'}(\time_n) \to - \infty$.
Moreover, the Karush-Kuhn-Tucker conditions applied to the convex problem posed in the definition of $\mirror_\play$ yield $\score_{\findex} = \hker_\play'(\strat_{\findex}) + \mu_\play$ for some Lagrange multiplier $\mu_\play \in \R$ whenever $\strat_\play = \mirror_\play(\score_\play)$. Consequently, we also get
\begin{equation}
	\hker_\play'(\Strat_{\findex}(\time_n)) = \hker_\play'(\Strat_{\play \pure_{\play}'}(\time_n)) + \Score_{\findex}(\time_n) - \Score_{\play \pure_{\play}'}(\time_n) \leq \hker_\play'(1) + \Score_{\findex}(\time_n) - \Score_{\play \pure_{\play}'}(\time_n) \to -\infty \quad \as,
\end{equation}
meaning that $X_{\play \pure_\play}(\time_n) \to 0$ with probability one thanks to the convexity of $\hker_\play$. This contradicts the fact that $\set$ is stochastically asymptotically stable (as $e_\pure \in \set$, we should have $X(\time) \to e_\pure$ with positive probability for trajectories starting close enough to $e_\pure$), thus $\set$ is necessarily \acl{closed}.
\end{proof}

\subsection{Proof of \cref{thm:Nash}}
\label{app:Nash_convergence}

In this section, we provide the proof of \cref{thm:Nash}, whose statement is recalled below for convenience:

\NashCV*

First, note that any stochastically asymptotically stable strategy must be pure, as shown by \cref{cor:pure} (and similarly, any strategy toward which trajectories of \eqref{eq:FTRL-stoch} converge with positive probability is pure). Therefore, it is sufficient to prove the following claims in order to establish \cref{thm:Nash}:

\begin{claim}
	A pure strategy is stochastically asymptotically stable if and only if it is a strict Nash equilibrium.
\label{claim:stable_SNE}
\end{claim}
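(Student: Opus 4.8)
The plan is to deduce this claim directly from \cref{thm:club} by identifying a pure strategy profile with a vertex of $\strats$. A pure profile $\pure = (\pure_{\play})_{\play\in\players} \in \pures$ is precisely the span of the singletons $\subpures_{\play} = \{\pure_{\play}\}$, \ie the vertex $\set = \{\pure\} = \prod_{\play}\simplex(\{\pure_{\play}\})$. Since \cref{asm:noisemin,asm:hgrowth,asm:hfinite} are all in force in \cref{thm:Nash}, \cref{thm:club} applies to this face and shows that $\{\pure\}$ is stochastically asymptotically stable under \eqref{eq:FTRL-stoch} if and only if it is \acl{closed}. It therefore suffices to establish the purely combinatorial equivalence: a vertex $\{\pure\}$ is \acl{closed} if and only if $\pure$ is a strict \acl{NE}.

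For the forward direction, I would suppose $\pure$ is a strict \acl{NE}, so that $\pay_{\play}(\pure_{\play};\pure_{-\play}) > \pay_{\play}(\purealt_{\play};\pure_{-\play})$ for every $\play\in\players$ and every $\purealt_{\play}\neq\pure_{\play}$. By linearity of $\strat_{\play}\mapsto\pay_{\play}(\strat_{\play};\pure_{-\play})$ over the simplex $\strats_{\play}$, strictness makes $\pure_{\play}$ its \emph{unique} maximizer. Hence, if $\devstrat\in\btr(\pure)$, the defining inequality $\pay_{\play}(\devstrat_{\play};\pure_{-\play}) \geq \pay_{\play}(\pure) = \pay_{\play}(\pure_{\play};\pure_{-\play})$ forces $\devstrat_{\play}=\pure_{\play}$ for every $\play$, so $\btr(\pure)=\{\pure\}$ and $\{\pure\}$ is \acl{closed}. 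Conversely, if $\pure$ is not a strict \acl{NE}, then by the characterization \eqref{eq:Nash-supp} there is a player $\play$ and a pure action $\purealt_{\play}\neq\pure_{\play}$ with $\pay_{\play}(\purealt_{\play};\pure_{-\play}) \geq \pay_{\play}(\pure)$ (covering both the non-Nash case, with strict inequality, and the Nash-but-not-strict case, with equality). The pure deviation profile $\devstrat \defeq (\purealt_{\play};\pure_{-\play})$ then satisfies the better-reply inequality \eqref{eq:better} for player $\play$ by this estimate, and for every other player $\playalt\neq\play$ with equality, since $\devstrat_{\playalt}=\pure_{\playalt}$ is tested against $\pure_{-\playalt}$; thus $\devstrat\in\btr(\pure)$, while $\devstrat\neq\pure$, so $\{\pure\}$ is not \acl{closed}.

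Combining the two equivalences yields the claim. The entire nontrivial content sits in \cref{thm:club}; the remaining step is elementary, so there is no serious obstacle here. The one point that genuinely requires care is the definition \eqref{eq:better} of $\btr$, in which each coordinate of a candidate better reply is tested against the \emph{fixed} opponent profile $\pure_{-\play}$ rather than against the deviating profile---this is exactly what makes the single-coordinate deviation above a legitimate element of $\btr(\pure)$ in the converse direction, and conflating the two would break the argument.
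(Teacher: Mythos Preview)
Your proposal is correct and matches the paper's own proof, which simply invokes \cref{thm:club} together with the observation that a pure profile is \acl{closed} if and only if it is a strict \acl{NE}. You have spelled out the combinatorial equivalence in more detail than the paper does, but the approach is identical.
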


\begin{claim}
	Trajectories of \eqref{eq:FTRL-stoch} converge only to \aclp{NE}.
\label{claim:cv_NE}
\end{claim}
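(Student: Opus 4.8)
The plan is to argue by contradiction, in the same spirit as the ``stable $\implies$ club'' part of \cref{thm:club}. By \cref{cor:pure}, any limit of $\Stratof{\time}$ attained with positive probability is already known to be a pure profile, so it suffices to rule out convergence to a pure profile that is not a \acl{NE}. Concretely, I would suppose that the event $\event = \braces{\lim_{\time\to\infty}\Stratof{\time} = \pure^{\ast}}$ has positive probability for some pure $\pure^{\ast}\in\pures$ that is not a \acl{NE}, and derive a contradiction. Since $\pure^{\ast}$ is not a \acl{NE}, the support characterization \eqref{eq:Nash-supp} fails for some player: there is a $\play\in\players$ and an action $\purealt_\play\neq\pure^{\ast}_\play$ with a strictly profitable deviation, \ie $m \defeq \pay_\play(\purealt_\play;\pure^{\ast}_{-\play}) - \pay_\play(\pure^{\ast}_\play;\pure^{\ast}_{-\play}) > 0$.

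The key object will be the score difference $D(\time) \defeq \Score_{\play\purealt_\play}(\time) - \Score_{\play\pure^{\ast}_\play}(\time)$, which by \eqref{eq:FTRL-stoch} evolves as
\begin{equation}
dD(\time) = \bracks*{\payfield_{\play\purealt_\play}(\Stratof{\time}) - \payfield_{\play\pure^{\ast}_\play}(\Stratof{\time})}\dd\time + d\tilde{\mart}(\time),
\end{equation}
where $\tilde{\mart}(\time) = \braket*{e_{\play\purealt_\play} - e_{\play\pure^{\ast}_\play}}{\mart_\play(\time)}$. I would then extract two incompatible asymptotics for $D(\time)$ on $\event$. On the one hand, continuity of $\payfield$ together with the definition of $\event$ makes the drift converge to $m > 0$, hence bounded below by $m/2$ for all large $\time$; since the quadratic variation satisfies $d[\tilde{\mart}](\time) = (e_{\play\purealt_\play} - e_{\play\pure^{\ast}_\play})^{\top}\covmat_\play(\Stratof{\time})(e_{\play\purealt_\play} - e_{\play\pure^{\ast}_\play})\dd\time \leq 2\diffmat_{\max}^{2}\dd\time$ (by compactness of $\strats$ and continuity of $\diffmat$), the strong law of large numbers for martingales (\cref{lemma:SLL}) gives $\tilde{\mart}(\time)/\time\to 0$ almost surely. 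Integrating and dividing by $\time$ then forces $\liminf_{\time\to\infty} D(\time)/\time \geq m/2 > 0$, so $D(\time)\to+\infty$ on $\event$. On the other hand, the Karush--Kuhn--Tucker relation $\Score_{\play\pure_\play} = \hker_\play'(\Strat_{\play\pure_\play}) + \mu_\play$ obtained in the proof of \cref{prop:FTRL-strat} cancels the multiplier in $D$, giving $D(\time) = \hker_\play'(\Strat_{\play\purealt_\play}(\time)) - \hker_\play'(\Strat_{\play\pure^{\ast}_\play}(\time))$; on $\event$ we have $\Strat_{\play\purealt_\play}(\time)\to 0$ and $\Strat_{\play\pure^{\ast}_\play}(\time)\to 1$, so steepness of $\hker_\play$ drives $\hker_\play'(\Strat_{\play\purealt_\play}(\time))\to-\infty$ and hence $D(\time)\to-\infty$. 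These two conclusions are incompatible, so $\prob(\event) = 0$; as $\pure^{\ast}$ was arbitrary among non-equilibrium pure profiles, this yields the claim.

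The main obstacle I anticipate is organizing the two asymptotics so that they hold jointly on the same positive-probability event $\event$. The control of $\tilde{\mart}$ via \cref{lemma:SLL} is clean because its quadratic variation grows at most linearly on all of $\strats$, so it holds almost surely on the whole sample space and in particular on $\event$; the drift lower bound, in contrast, is only a pathwise consequence of $\Stratof{\time}\to\pure^{\ast}$ and must be invoked path-by-path on $\event$. Once both are in place, the remaining steps (integrating the drift, passing to the $\liminf$ of $D/\time$, and reading off the steepness limit) are routine. A secondary point worth verifying is that the Lagrange multipliers cancel in the score difference, which is immediate from the stationarity relation since both actions belong to the same player $\play$, so that no appeal to \cref{asm:noisemin} is even needed for this particular step.
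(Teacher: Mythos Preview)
Your proposal is correct and follows essentially the same approach as the paper: argue by contradiction, look at the score difference between a supported action and a strictly better deviation, use continuity of $\payfield$ to get a linear drift bound once the trajectory is trapped near the putative limit, control the martingale via the strong law of large numbers (\cref{lemma:SLL}), and derive the contradiction through the KKT relation linking scores to strategies. The only organizational difference is that you first invoke \cref{cor:pure} to reduce to pure limits, whereas the paper proves the statement directly for an arbitrary (possibly mixed) limit $\eq$ by taking $\pure_\play\in\supp(\eq_\play)$; this extra reduction is harmless but unnecessary, since your core argument works verbatim in the general case.
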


\begin{proof}[Proof of Claim \ref{claim:stable_SNE}]
By \cref{thm:club} and the fact that strict \aclp{NE} are the only pure strategies whose support is \acl{closed}, Claim \ref{claim:stable_SNE} is automatically proven true.
\end{proof}

Claim \ref{claim:cv_NE}, on the other hand, does not follow from any of our previous results.
However, this property of \eqref{eq:FTRL-stoch} has already been established by \citet{BM17} under the same kind of assumptions as ours. 
For the sake of completeness, we also provide the proof of this result here.

\begin{proof}[Proof of Claim \ref{claim:cv_NE}]
	For the sole purpose of proving a contradiction, assume that there exists an event $E$ with positive probability and a strategy $\eq$ that is not a Nash equilibrium such that $X(\time) \to \eq$ on $E$.
	As $\eq$ in not Nash, there must exists a player $\play \in \players$ and actions $\pure_\play \in \supp(\eq_\play)$, $\purealt_\play \in \pures_\play$ verifying $\payv_{\play \pure_\play}(\eq) < \payv_{\play \purealt_\play}(\eq)$.
	By continuity of $\payv$, we can therefore find a small enough neighborhood $\nhd$ of $\eq$ and a positive constant $m > 0$ such that $\payv_{\play \purealt_\play}(x) - \payv_{\play \pure_\play}(x) \geq m$ for every $x \in \nhd$.
	As $X(\time)$ converges to $\eq$ on $E$, it follows that there exists a finite (random) time $\time_0$ such that $X(\time) \in \nhd$ for every $\time \geq \time_0$ when conditioned on the event $E$.
	In particular, for $\time$ big enough, we get
	\begin{equation}
		Y_{\play \pure_\play}(\time) - Y_{\play \purealt_\play}(\time) = y_{\play \pure_\play} - y_{\play \purealt_\play} + \int_0^{\time_0} (\payv_{\play \pure_\play} - \payv_{\play \purealt_\play}) \dd\timealt + \int_{\time_0}^\time (\payv_{\play \pure_\play} - \payv_{\play \purealt_\play}) \dd\timealt + \xi(\time) \leq C - m\time + \xi(\time),
	\end{equation}
	 where $C = y_{\play \pure_\play} - y_{\play \purealt_\play} + \int_0^{\time_0} (\payv_{\play \pure_\play} - \payv_{\play \purealt_\play}) \dd\timealt + m \time_0$ is a random constant finite on $E$, and $\xi(\time)$ is the square-integrable martingale given by
	 \begin{equation}
	 	\xi(\time) = \sum_{k=1}^m \int_0^\time \bracks*{\diffmat_{\play \pure_\play k}(X(\timealt)) - \diffmat_{\play \purealt_\play k}(X(\timealt))} \dd\brown_k(\timealt).
	 \end{equation}
	 As the quadratic variation of $\xi$ is upper-bounded by $C' \time$ for some positive deterministic constant $C'$, the strong law of large numbers (\cf \cref{lemma:SLL}) implies that $Y_{\play \pure_\play}(\time) - Y_{\play \purealt_\play}(\time) \to -\infty$ on $E$, which directly leads to $X_{\play \pure_\play}(\time) \to 0$ by the same argument used in the proof of \cref{thm:club}. 
	 This contradicts the fact that we should have $X_{\play \pure_\play}(\time) \to \eq_{\play \pure_\play} > 0$ on $E$, hence proving that $\eq$ must be a Nash equilibrium.
\end{proof}

\section{Harmonic games and \acl{closedness}}
\label{app:harmonic}

\setcounter{definition}{0}
\setcounter{example}{0}
\setcounter{theorem}{0}

\renewcommand{\theexample}{\Alph{section}.\arabic{example}}
\renewcommand{\thedefinition}{\Alph{section}.\arabic{definition}}
\renewcommand{\thetheorem}{\Alph{section}.\arabic{theorem}}


In this appendix, we present some general facts about \define{harmonic games}, following the framework of \citet{LMPP+24}, and establish the following original result: \textit{no subface of a harmonic game can be \acl{closed}.}

Originally introduced by \citet{CMOP11} and later generalized by \citet{APSV22}, harmonic games games model strategic scenarios where players have \textit{conflicting, anti-aligned interests}. These games encompass two-player zero-sum games with a fully mixed equilibrium, a widely studied framework for modeling conflicting interactions~\cite{MPP18}. 	By contrast, harmonic games are, in a precise sense, complementary to the class of \define{potential games} of \citet{MS96}, which model strategic situations where players' interests \textit{align} to maximize outcomes by collectively optimizing a shared potential function.

A defining feature of harmonic games is the existence of a special strategy, known as \define{strategic center}. Intuitively, the center has the property that the game's payoff vector is always \textit{parallel to an ellipsoid} centered at this point, imparting a circular character to the game's strategic structure. This should be compared with the payoff vector of potential games~\cite{San10}, which is always \textit{perpendicular the level sets} of the underlying potential function. In the deterministic setting, these strategic structures reflect in the dynamics as follows: harmonic games exhibit a \textit{constant of motion} with bounded level sets, closely related to the \define{Fenchel coupling} introduced in \cref{app:fenchel}; conversely, potential games are characterized by a \textit{Lyapunov function} in the form of their potential.

An important additional property of harmonic games is the following: excluding trivial situations where all unilateral payoff deviations are identically zero, harmonic games cannot have \textit{strict} \aclp{NE}. In \cref{prop:harmonic-no-club}, we shall prove that this is an instance of a more general result: given a harmonic game $\fingame = \fingamefull$, no proper subset $\pureset \subset \pures$ of pure action profiles can be \acdef{closed}.

Since their introduction, harmonic games have generated a substantial body of literature; for a brief survey, we refer the reader to \citet{LMP24}.

\subsection{Harmonic games}
Roughly speaking, a finite game $\fingame = \fingamefull$ is \define{harmonic} if, whenever a player considers deviating \textit{towards} a specific action profile $\pure \in \pures$, there are other players inclined to deviate \textit{away} from that profile. More formally, a game is harmonic if there exist strictly positive \define{weights} $\meas_{\findex} > 0$ that each player $\play \in \players$ assigns to each of their pure actions $\pure_{\play} \in \pures_{\play}$, such that for any action profile $\pure \in \pures$, the $\meas$-weighted sum of unilateral utility deviations to $\pure$ is zero:
\begin{equation}
\label{eq:harmonic-pure}
\sum_{\play\in\players}
	\sum_{\purealt_{\play}\in\pures_{\play}}
		\meas_{\findexalt}
		\left[
			\pay_{\play}(\pure_{\play}; \pure_{\others})
			- \pay_{\play}(\purealt_{\play}; \pure_{\others})
		\right]
		= 0
		\quad
		\text{for all } \pure\in\pures
		\eqstop
\end{equation}
An immediate consequence of this definition is that, excluding trivial games where all unilateral payoff deviations vanish identically, harmonic games cannot have \textit{strict} \aclp{NE}: If $\pure$ were a strict \ac{NE}, all terms within the square brackets in \cref{eq:harmonic-pure} would by definition be strictly positive, leading to a contradiction.

This stands in stark contrast to the behavior of potential games, which always admit at least one pure (and generically strict) \acl{NE}. This important difference stems from a rather deep geometrical fact: the \define{orthogonal} nature of potential and harmonic games, discussed in detail by \citet{CMOP11, APSV22}, which we briefly describe here. For fixed sets of players $\players$ and actions $\pures$ with cardinalities $\nPlayers$ and $\nPures$ respectively, the payoff function $\pay$ of a game $\fingame = \fingamefull$ can be represented as an element of a $\nPures \nPlayers$-dimensional vector space. Potential games and harmonic games comprise linear subspaces of this vector space; moreover, endowed with a suitable inner product, this vector space admits an orthogonal direct sum decomposition into the subspaces of potential and harmonic games (modulo \define{strategic equivalence}, an equivalence relation that identifies games with the same strategic structure). In other words, up to strategic equivalence, the payoff functions of any finite game $\fingame = \fingamefull$ can be uniquely decomposed as $\pay = \potPay + \hPay$ , where $\potGame$ is a potential game and $\hGame$ is a harmonic one.

Moving on, the defining property \eqref{eq:harmonic-pure} for a finite game $\fingame$ to be harmonic can be reformulated in terms its mixed extension's payoff vector $\payv$: 
\begin{proposition}[\cite{LMPP+24}]
\label{prop:harmonic-mixed}
A finite game $\fingame = \fingamefull$ is harmonic if and only if its mixed extension $\mixgame = (\players, \strats, \pay)$ fulfills the following: there exist
\begin{enumerate*}
[(\itshape i\hspace*{.5pt}\upshape)]
\item a tuple $\mass \in \R^{\nPlayers}_{>0}$, and
\item a fully mixed strategy $\scenter \in \relint{\strats}$,
\end{enumerate*}
such that
\begin{equation}
\label{eq:harmonic-mixed}
\sum_{\play\in\players}
	\mass_{\play}
	\dualp{\payv_{\play}\of\strat}{\strat_{\play} - \scenter_{\play}}
	= 0
	\quad \text{for all } \strat \in \strats
	\eqstop
\end{equation}
Whenever the above holds true, $\mass$ and $\scenter$ are called respectively \define{mass} and \define{strategic center} of the underlying harmonic game.
\end{proposition}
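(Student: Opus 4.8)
The plan is to exhibit an explicit correspondence between the positive weights $\meas_{\play\pure_\play}$ of the pure definition \eqref{eq:harmonic-pure} and the pair $(\mass,\scenter)$ of the mixed characterization \eqref{eq:harmonic-mixed}, and then to pass between the two identities by a multilinearity argument. Concretely, given weights $\meas_{\play\pure_\play}>0$, I would define $\mass_\play = \sum_{\pure_\play\in\pures_\play}\meas_{\play\pure_\play}$ and $\scenter_{\play\pure_\play} = \meas_{\play\pure_\play}/\mass_\play$; positivity of the $\meas_{\play\pure_\play}$ then gives $\mass_\play>0$ and $\scenter_\play\in\relint\strats_\play$, so that $\scenter$ is fully mixed, while the inverse map is simply $\meas_{\play\pure_\play} = \mass_\play\,\scenter_{\play\pure_\play}$.

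The direction \textbf{(mixed $\Rightarrow$ pure)} is immediate: evaluating \eqref{eq:harmonic-mixed} at an arbitrary pure profile $\pure\in\pures$ and using $\braket{\payv_\play(\pure)}{\pure_\play - \scenter_\play} = \pay_\play(\pure_\play;\pure_\others) - \sum_{\purealt_\play}\scenter_{\play\purealt_\play}\pay_\play(\purealt_\play;\pure_\others)$ together with $\mass_\play\,\scenter_{\play\purealt_\play} = \meas_{\play\purealt_\play}$ and $\mass_\play = \sum_{\purealt_\play}\meas_{\play\purealt_\play}$ recovers \eqref{eq:harmonic-pure} verbatim. For the converse \textbf{(pure $\Rightarrow$ mixed)}, I would introduce the potential $\Psi(\strat) \defeq \sum_{\play\in\players}\mass_\play\braket{\payv_\play(\strat)}{\strat_\play - \scenter_\play}$; the same computation shows that the pure condition \eqref{eq:harmonic-pure} is exactly the statement $\Psi(\pure)=0$ for every vertex $\pure\in\pures$. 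The crux is then to upgrade this from the vertices to all of $\strats$.

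Here I would use that each payoff vector $\payv_\play(\strat)$ depends only on the opponents' strategies $\strat_\others$ and is multilinear in them, so that $\Psi$ is affine in each block $\strat_\play$ separately. Invoking $\sum_{\pure_\play}\strat_{\play\pure_\play}=1$, affinity in $\strat_\play$ yields $\Psi(\strat_\play;\strat_\others) = \sum_{\pure_\play}\strat_{\play\pure_\play}\,\Psi(\pure_\play;\strat_\others)$, and iterating this telescoping over all players produces the vertex expansion $\Psi(\strat) = \sum_{\pure\in\pures}\strat_\pure\,\Psi(\pure)$, where $\strat_\pure = \prod_\play\strat_{\play\pure_\play}$ is the joint probability of $\pure$ under $\strat$. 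Since $\Psi$ vanishes at every vertex, this forces $\Psi\equiv 0$ on $\strats$, which is precisely \eqref{eq:harmonic-mixed}.

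I expect the multilinear extension to be the only delicate point. One must verify that $\Psi$ is genuinely \emph{affine}, rather than merely linear, in each player's block: this is why the constant contribution $-\mass_\play\braket{\payv_\play(\strat)}{\scenter_\play}$ has to be carried through using the simplex normalization $\sum_{\pure_\play}\strat_{\play\pure_\play}=1$, so that the convex-combination identity (and not just a linear one) propagates through the telescoping and collapses to the joint weighting $\strat_\pure$. Everything else reduces to bookkeeping with the weight correspondence above.
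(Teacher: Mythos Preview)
Your proposal is correct and follows essentially the same route as the paper: the weight correspondence $\mass_\play=\sum_{\pure_\play}\meas_{\play\pure_\play}$, $\scenter_{\play\pure_\play}=\meas_{\play\pure_\play}/\mass_\play$ (and its inverse) is exactly what the paper uses, and both proofs hinge on passing between the pure identity \eqref{eq:harmonic-pure} and the mixed identity \eqref{eq:harmonic-mixed} via multilinearity. The only difference is cosmetic: where you spell out the vertex expansion $\Psi(\strat)=\sum_{\pure\in\pures}\strat_\pure\,\Psi(\pure)$ explicitly, the paper simply asserts that \eqref{eq:harmonic-pure} holds if and only if its multilinear extension does, and then rewrites that extension algebraically into the form \eqref{eq:harmonic-mixed}.
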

\begin{proof}
By standard arguments, \cref{eq:harmonic-pure} holds true if and only if its multilinear extension holds true, that is if and only if
$
\sum_{\play\in\players}
	\sum_{\purealt_{\play}\in\pures_{\play}}
		\meas_{\findexalt}
		\left[
			\pay_{\play}(\strat_{\play}; \strat_{\others})
			- \pay_{\play}(\purealt_{\play}; \strat_{\others})
		\right]
		= 0
$
for all $\strat\in\strats$. By \cref{eq:pay-mixed,eq:pay-lin}, this is equivalent to
\begin{multline}
\label{eq:harmonic-mixed-proof}
\sum_{\play\in\players}
	\left[
		\dualp{\payv_{\play}\of\strat}{\strat_{\play}} \massplayfull   - 
		\dualp{\meas_{\play}}{\payv_{\play}\of\strat}
	\right]
	\\
	= \sum_{\play\in\players} 
	\massplayfull
	\left[
		  \dualp{\payv_{\play}\of\strat}{\strat_{\play} - \frac{\meas_{\play}}{\massplayfull}}
	\right]
	= 0 \quad \text{for all } \strat \in \strats
	\eqcomma
\end{multline}
where we denote $\meas_{\play}\defeq (\meas_{\findex})_{\pure_{\play}\in\pures_{\play}}$, and all sums involving players' pure actions are taken over $\purealt_{\play}\in\pures_{\play}$. Now, assume that \cref{eq:harmonic-mixed-proof} holds true; then \cref{eq:harmonic-mixed} holds true by setting $\mass_{\play} = \massplayfull$ and $\scenter_{\findex} = \meas_{\findex} / \massplayfull$, for all $\play\in\players$ and $\pure_{\play}\in\pures_{\play}$.
Conversely, if \cref{eq:harmonic-mixed} holds true, then \cref{eq:harmonic-mixed-proof} holds also true by setting $\meas_{\findex} = \mass_{\play}\scenter_{\findex}$ for all $\play\in\players$ and $\pure_{\play}\in\pures_{\play}$.
\end{proof}

An immediate  corollary is that harmonic games encompass two-player zero-sum games with a fully mixed equilibrium:
\begin{corollary}
Every two-player zero-sum game with a fully mixed \acl{NE} $\eq$ is harmonic, with weights $\meas = \eq$.
\end{corollary}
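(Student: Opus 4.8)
The plan is to invoke the mixed characterization of harmonicity in \cref{prop:harmonic-mixed} rather than to verify the pure-form identity \eqref{eq:harmonic-pure} by hand. First I would pin down the data $(\mass,\scenter)$ that the prescription $\meas=\eq$ forces. Recalling from the proof of \cref{prop:harmonic-mixed} that the pure weights and the mixed data are linked by $\meas_{\findex}=\mass_{\play}\scenter_{\findex}$, and that each $\scenter_{\play}$ is a strategy (so $\sum_{\pure_{\play}}\scenter_{\findex}=1$), summing $\meas_{\findex}=\eq_{\findex}$ over $\pure_{\play}\in\pures_{\play}$ gives $\mass_{\play}=\sum_{\pure_{\play}}\eq_{\findex}=1$ and hence $\scenter_{\findex}=\meas_{\findex}/\mass_{\play}=\eq_{\findex}$. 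Thus $\mass=\ones$, $\scenter=\eq$, and by \eqref{eq:harmonic-mixed} it suffices to establish
\[
\sum_{\play\in\players} \dualp{\payv_{\play}(\strat)}{\strat_{\play}-\eq_{\play}} = 0 \quad\text{for all } \strat\in\strats.
\]

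Next I would expand this expression using \eqref{eq:pay-mixed}--\eqref{eq:pay-lin}, which give $\dualp{\payv_{\play}(\strat)}{\strat_{\play}}=\pay_{\play}(\strat)$ and $\dualp{\payv_{\play}(\strat)}{\eq_{\play}}=\pay_{\play}(\eq_{\play};\strat_{-\play})$, so that each summand equals $\pay_{\play}(\strat)-\pay_{\play}(\eq_{\play};\strat_{-\play})$. Writing the zero-sum relation as $\pay_{2}=-\pay_{1}$ and specializing to $\players=\{1,2\}$, the common term $\pay_{1}(\strat)$ cancels between the two summands, leaving
\[
\sum_{\play\in\players} \dualp{\payv_{\play}(\strat)}{\strat_{\play}-\eq_{\play}} = \pay_{1}(\strat_{1};\eq_{2}) - \pay_{1}(\eq_{1};\strat_{2}).
\]
Hence the claim reduces to the identity $\pay_{1}(\strat_{1};\eq_{2})=\pay_{1}(\eq_{1};\strat_{2})$ for all $\strat_{1}\in\strats_{1}$ and $\strat_{2}\in\strats_{2}$.

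The crux---and essentially the only substantive step---is then the indifference property of a fully mixed equilibrium. Since $\eq$ is a \acl{NE}, $\eq_{1}$ maximizes the affine map $\strat_{1}\mapsto\pay_{1}(\strat_{1};\eq_{2})$ over the simplex $\strats_{1}$; because $\eq_{1}\in\relint\strats_{1}$ by full support, this maximum is attained at a relative-interior point, which forces an affine function to be constant on $\strats_{1}$, equal to $\pay_{1}(\eq_{1};\eq_{2})=\pay_{1}(\eq)$. Symmetrically, $\eq_{2}$ maximizes $\strat_{2}\mapsto\pay_{2}(\strat_{2};\eq_{1})=-\pay_{1}(\eq_{1};\strat_{2})$, i.e. minimizes $\pay_{1}(\eq_{1};\cdot)$, again at an interior point, so $\pay_{1}(\eq_{1};\strat_{2})\equiv\pay_{1}(\eq)$. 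Both sides of the reduced identity therefore equal the game's value $\pay_{1}(\eq)$, which closes the argument. I would flag that this is the one place where the fully-mixed hypothesis is indispensable: without full support, indifference holds only across supported actions, the identity can fail off the support, and the conclusion is genuinely special to interior equilibria.
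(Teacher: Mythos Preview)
Your argument is correct and follows the same route as the paper: both reduce the claim to the mixed characterization \eqref{eq:harmonic-mixed} with $\mass_{\play}=1$ and $\scenter=\eq$, i.e., to the identity $\sum_{\play}\dualp{\payv_{\play}(\strat)}{\strat_{\play}-\eq_{\play}}=0$. The paper simply invokes this identity as the known \emph{null variational stability} of interior equilibria in two-player zero-sum games (citing \cite{MPP18,MLZF+19,MZ19}), whereas you supply a self-contained derivation via the indifference principle at a fully mixed equilibrium; your version is more explicit but otherwise equivalent.
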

\begin{proof}
An interior equilibrium of two-player zero-sum games is \define{null variationally stable} \cite{MPP18,MLZF+19,MZ19}, that is $\sum_{\play\in\players}\dualp{\payv_{\play}\of\strat}{\strat_{\play}-\eq_{\play}} = 0$ for all $\strat\in\strats$, which implies that \cref{eq:harmonic-mixed} is fulfilled with $\mass_{\play} = 1$ and $\scenter_{\play} = \eq_{\play}$.
\end{proof}

\cref{prop:harmonic-mixed} implies that the payoff vector $\payv$ of a harmonic game is parallel to the parametric family of hyperellipsoids given by the level sets of the function $\strat \mapsto \sum_{\play\in\players} \mass_{\play}(\strat_{\play} - \scenter_{\play})^{2} \in \R$. Each hyperellipsoid in this family is centered at $\scenter$, with the ratios among the semi-axes determined by the game’s mass $\mass$, and their absolute lengths varying based on the level value.
As shown by \citet[Th. 2]{LMPP+24}, a consequence of this ``circular'' strategic structure is that the (deterministic) dynamics \eqref{eq:FTRL} are ``almost-periodic'' in harmonic games \textendash{} more precisely, they exhibit \define{Poincaré recurrence}. The authors establish this result by demonstrating that the \ref{eq:FTRL} dynamics in harmonic games admit a \textit{constant of motion}; for completeness, we include the proof of this fact here.
\begin{proposition}
Assume \eqref{eq:FTRL} is run in a harmonic game with mass $\mass$ and strategic center $\scenter$. Then the function 
\begin{equation}
\label{eq:harmonic-fenchel}
\fench_{(\mass,\scenter)}\of\score \defeq
\sum_{\play\in\players}
	\mass_{\play}
	\left[
		\hreg_{\play}(\scenter_{\play})
		+ \hconj_{\play}(\score_{\play})
		- \dualp{\scenter_{\play}}{\score_{\play}}
	\right]
\end{equation}
remains constant along score trajectories $\score\of\time$.
\begin{remark*}
\cref{eq:harmonic-fenchel} expresses the sum of Fenchel couplings $\fench(\scenter_{\play}, \score_{\play})$, as defined in \cref{eq:Fench}, for each player $\play \in \players$. Each term is computed relative to the strategic center $\scenter_{\play}$, and weighted by the corresponding mass $\mass_{\play}$.
\end{remark*}
\begin{proof}
By chain rule, 
\begin{equation}
\frac{d}{dt}\fench_{(\mass,\scenter)}(\score\of\time)
= 
	\sum_{\play\in\players}
	\mass_{\play}
	\left[
		\dualp{\nabla\hconj_{\play}(\score_{\play})}{\dot{\score}_{\play}}
		- \dualp{\scenter_{\play}}{\dot{\score}_{\play}}
	\right]
=
	\sum_{\play\in\players}
	\mass_{\play}
	\dualp{\strat_{\play}\of\time - \scenter_{\play}}{\payv_{\play}(\strat\of\time)}
 = 0
\end{equation}
where we used \cref{eq:Danskin}, the definition of \eqref{eq:FTRL}, and the characterization \eqref{eq:harmonic-mixed} of harmonic games.
\end{proof}
\end{proposition}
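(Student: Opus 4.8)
The plan is to show that $\fench_{(\mass,\scenter)}$ is a first integral of \eqref{eq:FTRL}, i.e.\ that $\tfrac{d}{dt}\fench_{(\mass,\scenter)}(\score\of\time) = 0$ along every score trajectory. First I would differentiate in $\time$. Since the terms $\hreg_{\play}(\scenter_{\play})$ are constants, they contribute nothing, and the chain rule leaves
\[
\frac{d}{dt}\fench_{(\mass,\scenter)}(\score\of\time)
= \sum_{\play\in\players} \mass_{\play}
\bracks*{ \dualp{\nabla\hconj_{\play}(\score_{\play})}{\dot{\score}_{\play}} - \dualp{\scenter_{\play}}{\dot{\score}_{\play}} }.
\]
The differentiability of each $\hconj_{\play}$ needed to justify this step is precisely what the Legendre-type assumption on $\hreg_{\play}$ supplies, through \cref{prop:mirror}.

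Next I would substitute the identification \eqref{eq:Danskin}, namely $\nabla\hconj_{\play}(\score_{\play}) = \mirror_{\play}(\score_{\play}) = \strat_{\play}$, together with the score-update law $\dot{\score}_{\play} = \payv_{\play}(\strat)$ read off from \eqref{eq:FTRL}. Both terms inside the bracket then pair against $\payv_{\play}(\strat)$, so the expression collapses to the single weighted sum
\[
\frac{d}{dt}\fench_{(\mass,\scenter)}(\score\of\time)
= \sum_{\play\in\players} \mass_{\play}
\dualp{\strat_{\play}\of\time - \scenter_{\play}}{\payv_{\play}(\strat\of\time)}.
\]

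Finally I would invoke the harmonic characterization \eqref{eq:harmonic-mixed} of \cref{prop:harmonic-mixed}: since $\strat\of\time \in \strats$ at every instant, the right-hand side above is exactly the defining $\mass$-weighted deviation sum of a harmonic game evaluated at the current state $\strat\of\time$, and hence vanishes. This yields $\tfrac{d}{dt}\fench_{(\mass,\scenter)}(\score\of\time) = 0$ for all $\time$, so $\fench_{(\mass,\scenter)}$ is constant along score trajectories. The computation is a one-line chain-rule argument, so there is no genuine obstacle; the only points deserving care are checking that $\hconj_{\play}$ is differentiable so that the chain rule applies, and observing that \eqref{eq:harmonic-mixed} is a pointwise identity valid at \emph{every} $\strat\of\time$ — not merely at an equilibrium — which is what licenses its use along the entire orbit.
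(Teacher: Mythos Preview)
Your proposal is correct and follows essentially the same approach as the paper: differentiate via the chain rule, identify $\nabla\hconj_{\play}(\score_{\play}) = \strat_{\play}$ using \eqref{eq:Danskin}, substitute $\dot{\score}_{\play} = \payv_{\play}(\strat)$ from \eqref{eq:FTRL}, and conclude with the harmonic identity \eqref{eq:harmonic-mixed}. Your added remarks on the differentiability of $\hconj_{\play}$ and the pointwise nature of \eqref{eq:harmonic-mixed} are accurate and make the argument slightly more explicit than the paper's one-line version.
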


In the following sections, we will show that the absence of strict \aclp{NE} in harmonic games is a special case of a broader principle, namely the absence of sets that are \acli{closed}. To this end, we first review some basic properties of the better-reply correspondence over pure strategies, later extending our results to the case of mixed strategies.

\subsection{Pure better reply correspondence and club sets}
Given a finite game $\fingame = \fingamefull$, \citet{RW95} introduced the \define{better reply correspondence} $\btr_{\play} \from \pures \too \pures_{\play}$ as the set-valued map returning all (weakly) profitable deviations for player $\play\in\players$ from a given action profile $\pure\in\pures$:
\begin{equation}
\label{eq:btr-davide}
\btr_{\play}\of\pure
= \setdef
	{\purealt_{\play}\in\pures_{\play}}
	{\pay_{\play}(\purealt_{\play}; \pure_{\others})
	\geq
	\pay_{\play}(\pure_{\play}; \pure_{\others})
	}
	\eqstop
\end{equation}
Extending this to all players, we will write
$\btr
\defeq
	\prod_{\play\in\players}
	\btr_{\play} \from \pures \too \pures$  for the product correspondence; we then say that a subset $\pureset \subseteq \pures$ of action profiles is \acdef{closed} if $\btr\of\pure\subseteq\pureset$ for all $\pure\in\pureset$. Clearly, the whole $\pures$ is always \acl{closed}; the question of whether a set of action profiles is \ac{closed} or not is thus non-trivial only in the case of \textit{proper} subsets $\pureset \subset \pures$.

Next, we provide a simple characterization of \ac{closed} sets in terms of unilateral utility deviations: heuristically, a subset $\pureset$ of action profiles is \ac{closed} if and only if each unilateral deviation towards $\pureset$ is strictly profitable. To make this precise, we set some notation.
\begin{definition}
Let $\fingame = \fingamefull$ be a finite game. Two action profiles $\pure,\purealt\in\pures$ form a \define{unilateral deviation} if they differ in the action of precisely one player; whenever this is the case, we write $\pure\isdev\purealt$, and simply say that $\purealt$ is a deviation from $\pure$.
The deviations from any $\pure\in\pures$ constitute the set
$\devs_{\pure} \defeq 
	\setdef{\purealt\in\pures}
	{\purealt \isdev \pure}$.
Whenever $\pure\isdev\purealt$, we write
	$\paydev(\pure,\purealt) \defeq 
	\pay_{\play}(\pure) 
	- \pay_{\play}(\purealt_{\play}; \pure_{\others})$
for the payoff difference of the (necessarily existing and unique) player $\play\in\players$ deviating from $\purealt$ to $\pure$; furthermore, if the game is harmonic with weights $\meas$, we write $\meas(\pure,\purealt) \defeq \meas_{\findexalt}$, where again $\play$ is the (existing and unique) deviating player.%
\footnote{It is important to observe that $\meas(\pure, \purealt)=\meas_{\findexalt}$ as defined in the text is indeed a function of both of its arguments: While the first argument does not explicitly appear on the right-hand side, player $\play$ in the definition depends implicitly on both $\pure$ and $\purealt$. A more explicit, albeit verbose, definition would be
$\meas(\pure,\purealt) 
	= \meas_{
		\actor(\pure,\purealt)
		\purealt_{\actor(\pure,\purealt)}
		}$, where $\actor(\pure,\purealt)\in\players$ is the player deviating between $\pure$ and $\purealt$, for any pair $\pure\isdev\purealt$.
}
Note that $\paydev(\cdot, \cdot)$ is anti-symmetric in its two arguments, while $\meas(\cdot, \cdot)$ does not generically possess any symmetry. Consider now a subset $\pureset \subseteq \pures$: For any fixed $\pure\in\pures$, we denote by
$\indevs_{\pure,\pureset} 
\defeq
	\devs_{\pure} \cap \pureset
	$ the set of deviations from $\pure$ that belong to $\pureset$, and by $\outdevs_{\pure, \pureset}\defeq \devs_{\pure} - \indevs_{\pure, \pureset}$  the set of deviations from $\pure$ that do \textit{not} belong to $\pureset$. 
\end{definition}
\begin{remark*}
These notations allow to recast the defining property \eqref{eq:harmonic-pure} of harmonic games as
\begin{equation}
\label{eq:harmonic-pure-recast}
\sum_{\purealt\in\devs_{\pure}}
		\meas(\pure, \purealt)
		\paydev(\pure, \purealt)
		= 0
		\quad
		\text{for all } \pure\in\pures
		\eqstop
\end{equation}
\end{remark*}
\begin{lemma}[Characterization of \ac{closed} sets]
\label{lemma:characterization-club-sets}
Given a finite game $\fingame = \fingamefull$, a proper subset $\pureset\subset\pures$ of action profiles is \acl{closed} if and only if the following condition holds true:
\begin{equation}
\label{eq:characterization-club-sets}
\paydev(\pure, \purealt) > 0
\quad \text{for all } \pure \in \pureset \text{ and all } \purealt \in \outdevs_{\pure, \pureset}
\eqstop
\end{equation}
\end{lemma}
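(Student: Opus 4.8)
The plan is to prove the two directions of the equivalence separately, treating $\pureset = \prod_{\play\in\players}\pureset_{\play}$ as the support of a face, so that it carries a product structure over the players. The single computation underlying everything is the following translation: if $\pure\in\pures$ and $\purealt\isdev\pure$ is a unilateral deviation with (unique) deviating player $\play$, then, since $\paydev(\pure,\purealt) = \pay_{\play}(\pure) - \pay_{\play}(\purealt_{\play};\pure_{\others})$ and $\pay_{\play}(\pure) = \pay_{\play}(\pureplay;\pure_{\others})$, one has $\paydev(\pure,\purealt)\leq 0$ if and only if $\purealt_{\play}\in\btr_{\play}(\pure)$. In words, the outward unilateral deviations with nonpositive $\paydev$ are exactly the unilateral weak better replies.

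For the forward direction (\acl{closed} $\implies$ \eqref{eq:characterization-club-sets}) I would argue by contraposition. Fix $\pure\in\pureset$ and $\purealt\in\outdevs_{\pure,\pureset}$, so $\purealt\isdev\pure$ (deviating player $\play$) and $\purealt\notin\pureset$, and suppose $\paydev(\pure,\purealt)\leq 0$. By the translation, $\purealt_{\play}\in\btr_{\play}(\pure)$; the remaining coordinates of $\purealt$ agree with those of $\pure$ and hence lie trivially in the corresponding $\btr_{\playalt}(\pure)$. Therefore $\purealt\in\btr(\pure)$, and \acl{closedness} of $\pureset$ yields $\purealt\in\pureset$, contradicting $\purealt\in\outdevs_{\pure,\pureset}$. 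Hence $\paydev(\pure,\purealt)>0$, which is \eqref{eq:characterization-club-sets}.

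For the reverse direction (\eqref{eq:characterization-club-sets} $\implies$ \acl{closed}) I would fix $\pure\in\pureset$ and an arbitrary simultaneous better reply $\purealt\in\btr(\pure)$ and show $\purealt\in\pureset$ by examining one player at a time. For each $\play\in\players$, membership $\purealt_{\play}\in\btr_{\play}(\pure)$ says, via the translation, that the unilateral deviation $(\purealt_{\play};\pure_{\others})$ satisfies $\paydev\leq 0$ against $\pure$; the contrapositive of \eqref{eq:characterization-club-sets} then forces $(\purealt_{\play};\pure_{\others})\in\pureset$. Reading off the $\play$-th coordinate through $\pureset=\prod_{\playalt}\pureset_{\playalt}$ gives $\purealt_{\play}\in\pureset_{\play}$, and since $\play$ was arbitrary we obtain $\purealt\in\prod_{\play}\pureset_{\play}=\pureset$, i.e.\ $\btr(\pure)\subseteq\pureset$.

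The step I expect to be delicate is precisely this last coordinate-wise reduction. \acl{closedness} is stated through the product correspondence $\btr=\prod_{\play}\btr_{\play}$, which evaluates every player's deviation against the \emph{same} fixed opponent profile $\pure_{\others}$ and so permits several coordinates to change at once, whereas \eqref{eq:characterization-club-sets} only constrains single-coordinate moves; a simultaneous better reply need not decompose into a chain of unilateral better replies. The product structure of $\pureset$ is exactly what bridges this gap, letting one certify each coordinate of $\purealt$ by a separate unilateral-deviation argument---which is why the characterization is to be read for the support set of a face, as in the applications leading to \cref{thm:harmonic-club}.
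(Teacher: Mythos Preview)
Your forward direction matches the paper's exactly. For the reverse direction the paper argues by contradiction---if $\pureset$ is not \acl{closed}, it asserts the existence of some $\pure\in\pureset$ and a \emph{unilateral} $\purealt\in\outdevs_{\pure,\pureset}\cap\btr(\pure)$, whence $\paydev(\pure,\purealt)\leq 0$---which for product $\pureset$ is just the contrapositive of your coordinate-wise argument. Your caution about the product hypothesis is in fact sharper than the paper: the Remark following the lemma claims $\pureset$ need not factor as $\prod_{\play}\pureset_{\play}$, but the reverse implication actually fails without this. In the $2\times2$ game with $\pay_{1}=\pay_{2}=1$ on disagreeing profiles and $0$ on agreeing ones, the non-product set $\pureset=\pures\setminus\{(2,2)\}$ satisfies \eqref{eq:characterization-club-sets} (the only outward unilateral moves are from $(1,2)$ and $(2,1)$ to $(2,2)$, each with $\paydev=1$), yet $\btr_{1}((1,1))=\btr_{2}((1,1))=\{1,2\}$, so $(2,2)\in\btr((1,1))$ and $\pureset$ is not \acl{closed}; the paper's asserted unilateral better reply out of $\pureset$ does not exist here. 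This does not undermine \cref{prop:uniform-harmonic-no-club,prop:harmonic-no-club}, which invoke only the forward direction (valid for arbitrary $\pureset$); so those applications do not actually need a product structure, contrary to your closing remark.
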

\begin{proof}
We proceed by contradiction. Let $\pureset$ be a \ac{closed} set, and assume the existence of a pair $\pure\in\pureset$, $\purealt\in\outdevs_{\pure, \pureset}$ such that $\paydev(\pure, \purealt) \leq 0$. Then by \cref{eq:btr-davide}, $\purealt_{\play}\in\btr_{\play}\of\pure$ for some unique $\play\in\players$. Since $\purealt_{\others} \equiv \pure_{\others}$ and $\pure_{\playalt}\in\btr_{\playalt}\of\pure$ for all $\playalt\in\players$, this implies that $\purealt\in\btr\of{\pure}$. This is a contradiction, since by assumption $\pure\in\pureset$, $\purealt \notin \pureset$, and $\pureset$ is \acl{closed}.

Conversely, let \cref{eq:characterization-club-sets} hold true, and assume $\pureset$ is not \acl{closed}. Then there must exist a (non-strictly) profitable deviation leaving $\pureset$, that is pair $\pure\in\pureset$, $\purealt\in\outdevs_{\pure, \pureset}$, such that $\purealt\in\btr\of\pure$. This in turn implies $\paydev(\pure,\purealt)\leq0$, a contradiction.
\end{proof}
\begin{remark*}
Note that in the above we consider \textit{any} subset $\pureset \subset \pures$; in particular, $\pureset$ is not necessarily spanning a subface of $\strats$, that is, it is not necessarily factoring as $\pureset = \prod_{\play\in\players}\pureset_{\play}$ for a family $(\pureset_{\play} \subset \pures_{\play})_{\play\in\players}$. The case in which $\subpures$ does span a subface of $\strats$ is described in \cref{cor:harmonic-no-mixed-club}.
\end{remark*}


\subsection{No pure \ac{closed} sets in harmonic games}
With the notions of the previous sections at hand, we move on to show that harmonic games do not admit proper subsets of action profiles that are \acl{closed}.
Rather than directly stating and proving  the general result,
we take a more pedagogical approach to develop intuition by proceeding as follows:
\begin{enumerate*}
[(\itshape i\hspace*{.5pt}\upshape)]

\item We begin with \cref{ex:uniform-harmonic-no-club}, which considers the special case of a \define{uniform} harmonic game,
where $\meas_{\findex}\equiv 1$ for all $\play\in\players$ and all $\pureplay \in \puresplay $,
involving three players each having two available actions; 

\item  Next, we provide the proof for the case of uniform harmonic games with arbitrary number of player and actions, as presented in \cref{prop:uniform-harmonic-no-club};

\item Finally, we extend the result to general harmonic games, as detailed in \cref{prop:harmonic-no-club}. Readers familiar with the topic may choose to proceed directly to this final proof.
\end{enumerate*}

\begin{example}[$2\times2\times2$ uniform harmonic game]
\label{ex:uniform-harmonic-no-club}
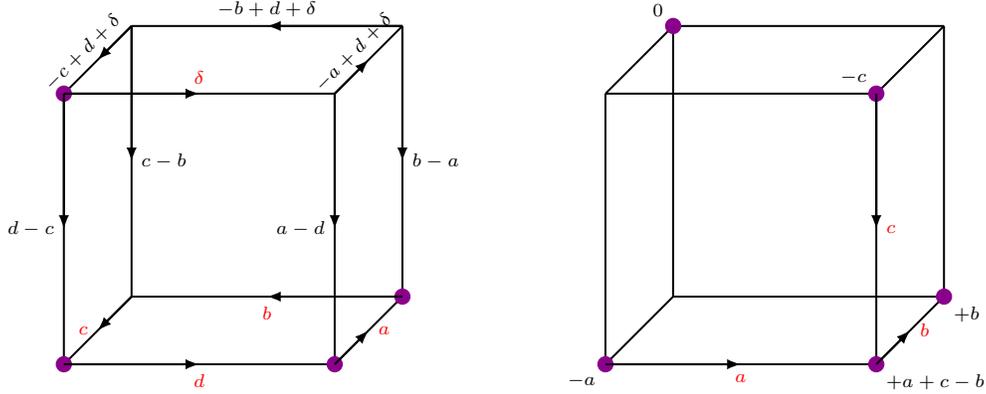
\begin{figure}
\centering

\begin{tikzpicture}[line width=0.75pt, scale=1.8, >=latex, font=\scriptsize]

\coordinate (D) at (0,2); 
\coordinate (C) at (2,2); 
\coordinate (G) at (2.5,2.5); 
\coordinate (H) at (0.5,2.5); 

\coordinate (A) at (0,0); 
\coordinate (B) at (2,0); 
\coordinate (F) at (2.5,0.5); 
\coordinate (E) at (0.5,0.5); 

\draw[black] (A) -- (B) node[midway, below, red] {\( d \)}; 
\draw[black] (B) -- (F) node[midway, right, red] {\( a \)}; 
\draw[black] (F) -- (E) node[midway, below, red] {\( b \)}; 
\draw[black] (E) -- (A) node[midway, left, red] {\( c \)}; 

\draw[black] (D) -- (C) node[midway, above, red] {\( \delta \)}; 
\draw[black] (C) -- (G) node[midway, xshift=-0.2cm, yshift=0.1cm, rotate = 45] {\( -a+d+\delta \)}; 
\draw[black] (G) -- (H) node[midway, above] {\( -b+d+\delta \)}; 
\draw[black] (H) -- (D) node[midway, xshift=-0.2cm, yshift=0.1cm, rotate = 45] {\( -c+d+\delta \)}; 

\draw[black] (D) -- (A) node[midway, left] {\( d-c \)}; 
\draw[black] (C) -- (B) node[midway, left] {\( a-d \)}; 
\draw[black] (G) -- (F) node[midway, right] {\( b-a \)}; 
\draw[black] (H) -- (E) node[midway, right] {\( c-b \)}; 

\filldraw[MyViolet] (A) circle (1.5pt);
\filldraw[MyViolet] (B) circle (1.5pt);
\filldraw[MyViolet] (F) circle (1.5pt);
\filldraw[MyViolet] (D) circle (1.5pt);


\draw[black, thick, ->] (A) -- ($(B)!0.5!(A)$); 
\draw[black, thick, ->] (B) -- ($(F)!0.5!(B)$);
\draw[black, thick, ->] (F) -- ($(E)!0.5!(F)$);
\draw[black, thick, ->] (E) -- ($(A)!0.5!(E)$);

\draw[black, thick, ->] (D) -- ($(C)!0.5!(D)$); 
\draw[black, thick, ->] (C) -- ($(G)!0.5!(C)$);
\draw[black, thick, ->] (G) -- ($(H)!0.5!(G)$);
\draw[black, thick, ->] (H) -- ($(D)!0.5!(H)$);

\draw[black, thick, ->] (D) -- ($(A)!0.5!(D)$); 
\draw[black, thick, ->] (C) -- ($(B)!0.5!(C)$);
\draw[black, thick, ->] (G) -- ($(F)!0.5!(G)$);
\draw[black, thick, ->] (H) -- ($(E)!0.5!(H)$);

\begin{scope}[shift={(4,0)}]

\coordinate (D) at (0,2); 
\coordinate (C) at (2,2); 
\coordinate (G) at (2.5,2.5); 
\coordinate (H) at (0.5,2.5); 

\coordinate (A) at (0,0); 
\coordinate (B) at (2,0); 
\coordinate (F) at (2.5,0.5); 
\coordinate (E) at (0.5,0.5); 


\node[below left] at (A) {\( -a \)};
\node[below right] at (B) {\( +a+c-b \)};
\node[above left] at (C) {\( -c \)};
\node[below right] at (F) {\( +b \)};
\node[above left] at (H) {\( 0 \)};


\draw[black] (A) -- (B) node[midway, below, red] {\( a \)}; 
\draw[black] (B) -- (F) node[midway, right, red] {\( b \)}; 
\draw[black] (F) -- (E) node[midway, below, black] {\(  \)}; 
\draw[black] (E) -- (A) node[midway, left, black] {\(  \)}; 

\draw[black] (D) -- (C) node[midway, above] {\( \)}; 
\draw[black] (C) -- (G) node[midway, xshift=-0.2cm, yshift=0.1cm, rotate = 45] {\(  \)}; 
\draw[black] (G) -- (H) node[midway, above] {\( \)}; 
\draw[black] (H) -- (D) node[midway, xshift=-0.2cm, yshift=0.1cm, rotate = 45] {\( \)}; 

\draw[black] (D) -- (A) node[midway, left] {\(  \)}; 
\draw[black] (C) -- (B) node[midway, right, red] {\( c \)}; 
\draw[black] (G) -- (F) node[midway, right] {\(  \)}; 
\draw[black] (H) -- (E) node[midway, left] {\(  \)}; 


\filldraw[MyViolet] (A) circle (1.5pt);
\filldraw[MyViolet] (B) circle (1.5pt);
\filldraw[MyViolet] (F) circle (1.5pt);
\filldraw[MyViolet] (H) circle (1.5pt);
\filldraw[MyViolet] (C) circle (1.5pt);


\draw[black, thick, ->] (A) -- ($(B)!0.5!(A)$); 
\draw[black, thick, ->] (B) -- ($(F)!0.5!(B)$);


\draw[black, thick, ->] (C) -- ($(B)!0.5!(C)$);

\end{scope}
\end{tikzpicture}
\caption{
\textbf{Left \textendash{} }Response graph of a $2 \times 2 \times 2$ uniform harmonic game.
\textit{Red labels on edges:} freely chosen payoff deviations $a, b, c, d, \delta\in\R$.
\textit{Black labels on edges:} payoff deviations constrained as to make the game harmonic according to \cref{eq:harmonic-pure}.
\textit{Violet vertices:} generic subset $\pureset$ of pure action profiles.
As detailed in \cref{ex:uniform-harmonic-no-club}, there is no way to choose the parameters  $a, b, c, d, \delta \in \R$ as to make
the set $\pureset$ \acl{closed}.
\textbf{Right \textendash{}} Response graph of a generic $2 \times 2 \times 2$ game.
\textit{Violet vertices:} generic subset $\pureset$ of pure action profiles.
\textit{Red labels on edges:} unilateral payoff deviations between elements of $\pureset$. 
\textit{Black labels on vertices $\pure\in\pureset$:} sum of unilateral payoff deviations towards $\pure$ within $\pureset$, that is  $\sum_{\purealt\in\indevs_{\pure,\pureset}}\paydev(\pure, \purealt)$. As detailed in \cref{lemma:antisymmetric-contraction}, summing these values for all $\pure\in\pureset$ yields zero by the anti-symmetry of the $\paydev(\cdot, \cdot)$ operator. Note that $\indevs_{\pure, \pureset} = \varnothing$ for $\pure = \topa\anda\lefta\anda\posta$, which hence does not contribute to the sum.
}
\label{fig:harmonic-response-graph}
\end{figure}
A harmonic game is called \define{uniform} if all players assign a weight of $1$ to each of their actions. As shown by \citet[Prop. 4.1]{CMOP11}, the number of degrees of freedom available in choosing unilateral payoff deviations in a uniform harmonic game is given by the formula
$(\nPlayers-1)\prod_{\play\in\players}\nPures_{\play}
-\sum_{\play\in\players}
	\prod_{\playalt\neq\play}
		\nPures_{\playalt}
+1$,
where $\nPlayers$ is the number of players and $\nPures_{\play}$ is the number of actions available to player $\play$. For a $2 \times 2 \times 2$ game, this yields $5$ degrees of freedom; we denote them by $a, b, c, d, \delta \in \R$, and represent them in \cref{fig:harmonic-response-graph}~(left) as payoff deviations on the game's \define{response graph}~\cite{CMOP11}, an oriented graph with a node for each of the $8$ action profiles and an edge for each of the $12$ unilateral deviations of the game.%
\footnote{The label on each oriented edge represents the payoff difference for the player performing the deviation indicated by the arrow.}

The values of the remaining $7$ payoff deviations required for the game to be harmonic are readily computed by \cref{eq:harmonic-pure}, and are also represented in \cref{fig:harmonic-response-graph}~(left). For example, label the action profiles of the game by the positions they occupy in the response graph.%
\footnote{$\topa$ and $\bota$ for \textit{top} and \textit{bottom}; $\lefta$ and $\righta$ for \textit{left} and \textit{right}; $\anta$ and $\posta$ for \textit{anterior} and \textit{posterior}.}
Deviating \textit{towards} the $\topa\anda\lefta\anda\anta$  profile entails a payoff difference of $c-d$ for the first player, of $-\delta$ for the second player, and of $-c+d+\delta$ for the third player: Summing these differences yields zero, as required for the game to be harmonic.

Having the general form of a $2 \times 2 \times 2$ uniform harmonic game in place,
we claim that one cannot find values for the free parameters $a, b, c, d, \delta \in\R$ such that some proper subset of action profiles be \acl{closed}. Consider, for example, the set
$\pureset = \setof{
	\topa\anda\lefta\anda\anta, \, 
	\bota\anda\lefta\anda\anta, \, 
	\bota\anda\righta\anda\anta, \, 
	\bota\anda\righta\anda\posta 
	}
	\subset \pures$, whose elements are highlighted in violet in \cref{fig:harmonic-response-graph}~(left).
By \cref{lemma:characterization-club-sets}, if $\pureset$ were \acl{closed}, then each unilateral payoff deviation from a profile not in $\pureset$ to any profile in $\pureset$ would be strictly positive. This would yield the following system of inequalities for the parameters $a, b, c, d, \delta \in \R$:
\begin{equation}
	\topa\anda\lefta\anda\anta  : 
		\begin{cases}
			-c + d + \delta > 0 \\
			- \delta > 0
		\end{cases}\eqcomma
\quad
	\bota\anda\lefta\anda\anta   : 
		\begin{cases}
			c > 0
		\end{cases}\eqcomma
\quad
	\bota\anda\righta\anda\anta  : 
		\begin{cases}
			a-d > 0
		\end{cases}\eqcomma
\quad
	\bota\anda\righta\anda\posta: 
		\begin{cases}
			b-a > 0 \\
			-b > 0
		\end{cases}\eqstop
\end{equation}
This system is easily verified to have no solution, as summing all of the left-hand sides yields zero. Consequently, it is impossible to select values for the parameters $a, b, c, d, \delta \in \R$ such that the set $\pureset$ \acl{closed}. It is straightforward to verify that this does not depend on the particular choice of $\pureset$, and holds true for any subset of action profiles of the game. \endenv
\end{example}

The reasoning employed in the previous example can be generalized to the case of uniform harmonic games with arbitrary number of players and actions, as follows.

\begin{proposition}
\label{prop:uniform-harmonic-no-club}
In a uniform harmonic game, no proper subset of action profiles can be \acl{closed}.
\end{proposition}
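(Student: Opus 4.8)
The plan is to sum the harmonic identity over the candidate club set and extract a contradiction from the anti-symmetry of the deviation operator $\paydev$. First I would specialize the defining property to the uniform case: with $\meas\equiv 1$, the recast identity \eqref{eq:harmonic-pure-recast} becomes $\sum_{\purealt\in\devs_{\pure}}\paydev(\pure,\purealt)=0$ for every $\pure\in\pures$. Assume toward a contradiction that some proper subset $\pureset\subset\pures$ is \acl{closed} (club sets being nonempty by convention, fix some $\pure_{0}\in\pureset$), and sum this identity over all $\pure\in\pureset$, so that
\[
\sum_{\pure\in\pureset}\sum_{\purealt\in\devs_{\pure}}\paydev(\pure,\purealt)=0.
\]
I would then split each inner sum along the partition $\devs_{\pure}=\indevs_{\pure,\pureset}\sqcup\outdevs_{\pure,\pureset}$ into deviations that remain inside $\pureset$ and those that leave it.

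The heart of the argument is that the ``inside'' part vanishes by anti-symmetry. A pair $(\pure,\purealt)$ contributes to $\sum_{\pure\in\pureset}\sum_{\purealt\in\indevs_{\pure,\pureset}}$ precisely when $\pure,\purealt\in\pureset$ and $\pure\isdev\purealt$, a condition symmetric in its two arguments; hence $(\purealt,\pure)$ contributes as well, and since $\paydev(\pure,\purealt)+\paydev(\purealt,\pure)=0$ these contributions cancel pairwise, giving
\[
\sum_{\pure\in\pureset}\sum_{\purealt\in\indevs_{\pure,\pureset}}\paydev(\pure,\purealt)=0,
\qquad\text{so}\qquad
\sum_{\pure\in\pureset}\sum_{\purealt\in\outdevs_{\pure,\pureset}}\paydev(\pure,\purealt)=0.
\]
Now I would invoke the club characterization \cref{lemma:characterization-club-sets}: since $\pureset$ is \acl{closed}, every term $\paydev(\pure,\purealt)$ with $\pure\in\pureset$ and $\purealt\in\outdevs_{\pure,\pureset}$ is strictly positive. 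A sum of strictly positive numbers can vanish only if it is empty, which forces $\outdevs_{\pure,\pureset}=\varnothing$ for every $\pure\in\pureset$; that is, $\pureset$ is closed under all unilateral deviations. Since the response graph on $\pures$ (whose edges are exactly the unilateral deviations) is connected---any two action profiles are joined by changing one player's action at a time---a nonempty set closed under unilateral deviations must exhaust $\pures$, contradicting the properness of $\pureset$.

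The main obstacle is really only the bookkeeping in the cancellation step: one must verify that the index set $\setdef{(\pure,\purealt)}{\pure\in\pureset,\ \purealt\in\indevs_{\pure,\pureset}}$ is genuinely symmetric, which uses both that $\indevs$ requires $\purealt\in\pureset$ and that $\isdev$ is a symmetric relation; everything else (the sign argument and the connectivity of the response graph) is routine. It is worth stressing that uniformity is exactly what makes this cancellation immediate, since then the weight attached to $(\pure,\purealt)$ and to $(\purealt,\pure)$ coincide. In the general weighted case treated in \cref{prop:harmonic-no-club} this breaks down, because $\meas(\pure,\purealt)$ and $\meas(\purealt,\pure)$ differ in general, so the same decomposition will require a more delicate accounting of the asymmetric weights rather than a bare pairwise cancellation.
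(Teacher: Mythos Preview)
Your argument is correct and uses the same ingredients as the paper: the uniform harmonic identity, the split $\devs_{\pure}=\indevs_{\pure,\pureset}\sqcup\outdevs_{\pure,\pureset}$, the antisymmetric cancellation (this is exactly the paper's \cref{lemma:antisymmetric-contraction}), and the strict-positivity characterization \cref{lemma:characterization-club-sets}. The only cosmetic difference is the order of operations and the final contradiction: the paper works profile-by-profile, argues that each inside sum must be strictly negative, and then contradicts \cref{lemma:antisymmetric-contraction}; you sum first, use the cancellation to force the total outside contribution to zero, and close via connectivity of the response graph. Your route is in fact a touch more careful, since the paper tacitly assumes $\outdevs_{\pure,\pureset}\neq\varnothing$ for every $\pure\in\pureset$ when asserting strict positivity of the outside sum, whereas your connectivity step handles that edge case explicitly.
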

\begin{proof}
Let $\fingame = \fingamefull$ be a uniform harmonic game, and assume by contradiction the existence of some $\pureset \subset \pures$ that is \ac{closed}. For any $\pure\in\pureset$, specialize \cref{eq:harmonic-pure-recast} characterizing harmonic games to the uniform case ($\meas_{\findex} = 1$ for all $\play\in\players, \pureplay\in\puresplay$), and split the domain of summation as the disjoint union
$\devs_{\pure} =
	\indevs_{\pure, \pureset}
	\sqcup
	\outdevs_{\pure, \pureset}$, to obtain
	\begin{equation}
	\sum_{\purealt\in\indevs_{\pure, \pureset}}
		\paydev(\pure, \purealt)
	+
	\sum_{\purealt\in\outdevs_{\pure, \pureset}}
		\paydev(\pure, \purealt)
	= 0
	\quad \text{for all } \pure \in \pureset
	\eqstop
	\end{equation}
By \cref{lemma:characterization-club-sets}, each term of the second summation is strictly positive, and so is the whole second summation; hence, the first summation must be strictly negative. In other words, if a uniform harmonic game admitted a \ac{closed} set $\pureset$, it should be true that
\begin{equation}
	\sum_{\purealt\in\indevs_{\pure, \pureset}}
		\paydev(\pure, \purealt)
	< 0
	\quad \text{for all } \pure \in \pureset
	\eqstop
\end{equation}
However, this contradicts \cref{lemma:antisymmetric-contraction} below, concluding our proof.
\end{proof}
\begin{lemma}[Antisymmetric contraction]
\label{lemma:antisymmetric-contraction}
In any finite game $\fingame = \fingamefull$ and for any $\pureset \subseteq \pures$,
\begin{equation}
\sum_{\pure\in\pureset}
	\sum_{\purealt\in\indevs_{\pure, \pureset}}
		\paydev(\pure,\purealt)
	\equiv 0
	\eqstop
\end{equation}
\end{lemma}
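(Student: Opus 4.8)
The plan is to recognize the iterated sum as a sum of an anti-symmetric quantity over an index set that is symmetric under interchanging its two arguments, so that it must equal its own negative and therefore vanish. No harmonicity or any structural hypothesis on the game is needed; the statement is purely combinatorial.

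First I would unfold the inner index set: by definition $\indevs_{\pure,\pureset} = \devs_{\pure}\cap\pureset = \setdef{\purealt\in\pureset}{\purealt\isdev\pure}$, so the double sum in the statement is exactly a sum over the set of ordered pairs
\[
P \defeq \setdef{(\pure,\purealt)\in\pureset\times\pureset}{\pure\isdev\purealt}
\eqstop
\]
Writing $S$ for the quantity to be computed, we thus have $S = \sum_{(\pure,\purealt)\in P}\paydev(\pure,\purealt)$. Next I would observe that the deviation relation $\isdev$ is manifestly symmetric (two profiles differ in exactly one player's action independently of which one is listed first), so the swap $(\pure,\purealt)\mapsto(\purealt,\pure)$ is a well-defined involution of $P$ onto itself. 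Reindexing the sum along this involution and then invoking the anti-symmetry $\paydev(\purealt,\pure) = -\paydev(\pure,\purealt)$ already recorded in the paper gives
\[
S = \sum_{(\pure,\purealt)\in P}\paydev(\purealt,\pure) = -\sum_{(\pure,\purealt)\in P}\paydev(\pure,\purealt) = -S
\eqcomma
\]
whence $2S = 0$ and $S = 0$, as claimed.

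The computation is short, and the only point requiring care — which I would treat as the crux — is justifying the anti-symmetry of $\paydev$ in tandem with the legitimacy of the reindexing. Here one must check that the unique deviating player $\play$ between $\pure$ and $\purealt$ is the same as between $\purealt$ and $\pure$, and that $\purealt_{\others} = \pure_{\others}$, so that $\paydev(\purealt,\pure) = \pay_{\play}(\purealt_{\play};\pure_{\others}) - \pay_{\play}(\pure_{\play};\pure_{\others}) = -\paydev(\pure,\purealt)$; this is precisely the anti-symmetry noted in the definition, and it guarantees that the involution pairs each summand of $S$ with its own negative. Since no property beyond the symmetry of $\isdev$ and the anti-symmetry of $\paydev$ is invoked, the identity holds for \emph{any} finite game $\fingame = \fingamefull$ and \emph{any} subset $\pureset\subseteq\pures$, exactly as stated.
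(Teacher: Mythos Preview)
Your proof is correct and follows essentially the same approach as the paper: both arguments rewrite the double sum as one over symmetric pairs $(\pure,\purealt)$ with $\pure\isdev\purealt$ and $\pure,\purealt\in\pureset$, then invoke the symmetry of $\isdev$ together with the anti-symmetry of $\paydev$ to pair each term with its negative. Your formulation via an involution on the index set $P$ is slightly more compact, but the underlying idea is identical.
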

\begin{proof}
The idea is that for each term appearing in the sum, it's opposite also appears, by anti-symmetry of $\paydev(\cdot, \cdot)$. Take $\pure\in\pureset$. If $\indevs_{\pure, \pureset} = \varnothing$, there is nothing to show. Assume then that $\indevs_{\pure, \pureset} \neq \varnothing$. Each $\purealt \in \indevs_{\pure, \pureset}$ contributes with a term $\paydev(\pure,\purealt)$ to the sum. However, $\purealt \in \indevs_{\pure, \pureset}$ implies that $\pure \in \indevs_{\purealt, \pureset}$, since $\pure\isdev\purealt$ is symmetric and both $\pure$ and $\purealt$ belong to $\pureset$. Hence each $\purealt \in \indevs_{\pure, \pureset}$ contributes to the sum also with a term $\paydev(\purealt,\pure) = -\paydev(\pure,\purealt)$, canceling out the previous contribution.
\end{proof}

\begin{example}[Antisymmetric contraction]
Let \cref{fig:harmonic-response-graph}~(right) be the response graph of a generic $2\times2\times2$ game. Consider the set of profiles
$\pure\in\pureset
= \setof{
\bota\anda\lefta\anda\anta,  \,
\bota\anda\righta\anda\anta,  \,
\bota\anda\righta\anda\posta, \, 
\topa\anda\righta\anda\anta, \,
\topa\anda\lefta\anda\posta
}$, with payoff deviations among them given by $a, b, c \in \R$; it is not necessary to consider any of the deviations to $\topa\anda\lefta\anda\posta$, since none of them belongs to $\pureset$. We have that
$
\sum_{\pure\in\pureset}
	\sum_{\purealt\in\indevs_{\pure, \pureset}}
		\paydev(\pure,\purealt)
= -a + (a+c-b) +b -c
\equiv 0
\eqcomma
$
in agreement with \cref{lemma:antisymmetric-contraction}.
\end{example}


Finally, we generalize this result to the case of harmonic games with non-unitary weights.

\begin{theorem}
\label{prop:harmonic-no-club}
Assume $\fingame = \fingamefull$ is a harmonic game.
Then no proper subset $\pureset\subset\pures$ can be \acl{closed}.
\end{theorem}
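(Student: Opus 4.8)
The plan is to argue by contradiction, generalizing the payoff-balancing argument used for uniform games in \cref{prop:uniform-harmonic-no-club}. Suppose some (nonempty) proper subset $\pureset\subset\pures$ is \acl{closed}; the empty set is excluded by the standing convention that \acl{closed} sets are nonempty. Invoking \cref{prop:harmonic-mixed}, I would fix a mass $\mass\in\R^{\nPlayers}_{>0}$ and a strategic center $\scenter\in\relint\strats$; as recorded in the proof of that proposition, the pure weights are recovered as $\meas_{\findex} = \mass_{\play}\scenter_{\findex}$, so for any deviation $\pure\isdev\purealt$ with (unique) deviating player $\play$ one has $\meas(\pure,\purealt) = \mass_{\play}\scenter_{\findexalt} > 0$. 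The starting point is then the recast harmonic identity \eqref{eq:harmonic-pure-recast}, namely $\sum_{\purealt\in\devs_{\pure}}\meas(\pure,\purealt)\paydev(\pure,\purealt) = 0$ at every $\pure\in\pures$.

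The obstruction to copying the uniform proof verbatim is that, with non-unitary weights, summing these identities over $\pure\in\pureset$ and splitting $\devs_{\pure} = \indevs_{\pure,\pureset}\sqcup\outdevs_{\pure,\pureset}$ no longer kills the internal part: an internal edge $\{\pure,\purealt\}$ contributes $(\meas(\pure,\purealt)-\meas(\purealt,\pure))\paydev(\pure,\purealt)$, which need not vanish because $\meas(\cdot,\cdot)$ is \emph{not} symmetric. The key idea that repairs this is to weight each vertex $\pure$ by the center's joint probability $\scenter_{\pure}\defeq\prod_{\play\in\players}\scenter_{\play\pure_{\play}} > 0$ before summing. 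A short computation shows that the \emph{rescaled} edge weight $\scenter_{\pure}\meas(\pure,\purealt) = \mass_{\play}\scenter_{\play\pure_{\play}}\scenter_{\play\purealt_{\play}}\prod_{\playalt\neq\play}\scenter_{\playalt\pure_{\playalt}}$ is symmetric under $\pure\leftrightarrow\purealt$ (the two profiles agree off the deviating player $\play$). Since $\paydev(\cdot,\cdot)$ is antisymmetric, the product $\scenter_{\pure}\meas(\pure,\purealt)\paydev(\pure,\purealt)$ is antisymmetric, so the same pairwise-cancellation mechanism as in \cref{lemma:antisymmetric-contraction} yields
\[
\sum_{\pure\in\pureset}\scenter_{\pure}\sum_{\purealt\in\indevs_{\pure,\pureset}}\meas(\pure,\purealt)\paydev(\pure,\purealt) = 0 .
\]

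With the internal terms removed, multiplying \eqref{eq:harmonic-pure-recast} by $\scenter_{\pure}$ and summing over $\pure\in\pureset$ leaves precisely the boundary terms:
\[
\sum_{\pure\in\pureset}\scenter_{\pure}\sum_{\purealt\in\outdevs_{\pure,\pureset}}\meas(\pure,\purealt)\paydev(\pure,\purealt) = 0 .
\]
To close the argument I would apply \cref{lemma:characterization-club-sets}: since $\pureset$ is \acl{closed}, $\paydev(\pure,\purealt) > 0$ for every $\pure\in\pureset$ and every $\purealt\in\outdevs_{\pure,\pureset}$, while $\scenter_{\pure} > 0$ and $\meas(\pure,\purealt) > 0$, so every summand is strictly positive. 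It then remains only to guarantee the sum is not empty, i.e. that at least one boundary deviation exists; this holds because the response graph on $\pures$ (profiles as vertices, unilateral deviations as edges) is connected — any two profiles are joined by changing one player's action at a time — so a nonempty proper subset must admit an edge crossing its boundary. A strictly positive quantity equal to zero is the sought contradiction, establishing that no proper $\pureset\subset\pures$ can be \acl{closed}. The main obstacle is exactly the middle step: recognizing that the naive, uniform-style cancellation breaks down for non-unitary weights, and restoring it through the center-weighting $\scenter_{\pure}$ that symmetrizes the edge weights.
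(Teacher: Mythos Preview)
Your proof is correct and follows essentially the same route as the paper's: both arguments hinge on weighting each profile $\pure\in\pureset$ by a product over players so that the resulting edge weight becomes symmetric in $(\pure,\purealt)$, after which the antisymmetry of $\paydev$ kills the internal sum and the \acl{closed} characterization forces a strictly positive boundary sum to vanish. Your vertex weight $\scenter_{\pure}=\prod_{\play}\scenter_{\play\pure_{\play}}$ differs from the paper's $\prod_{\play}\meas_{\play\pure_{\play}}$ only by the harmless constant $\prod_{\play}\mass_{\play}$, and your explicit check that the boundary sum is nonempty (via connectedness of the response graph) is in fact a small point the paper glosses over.
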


\begin{proof}
The proof is formally almost identical to that of \cref{prop:uniform-harmonic-no-club}, but hinges on a more general version of the key \cref{lemma:antisymmetric-contraction}, taking into account the ``symmetry breaking'' due to the presence of weights. Let $\fingame = \fingamefull$ be a harmonic game with weights $\meas$, and assume by contradiction the existence of some $\pureset \subset \pures$ that is \ac{closed}. For any $\pure\in\pureset$, invoke \cref{eq:harmonic-pure-recast} characterizing harmonic games, and split the domain of summation as the disjoint union
$\devs_{\pure} =
	\indevs_{\pure, \pureset}
	\sqcup
	\outdevs_{\pure, \pureset}$, to obtain
	\begin{equation}
	\sum_{\purealt\in\indevs_{\pure, \pureset}}
		\meas(\pure,\purealt)
		\paydev(\pure, \purealt)
	+
	\sum_{\purealt\in\outdevs_{\pure, \pureset}}
		\meas(\pure,\purealt)
		\paydev(\pure, \purealt)
	= 0
	\quad \text{for all } \pure \in \pureset
	\eqstop
	\end{equation}
By \cref{lemma:characterization-club-sets}, each term of the second summation is strictly positive, and so is the whole second summation; hence, the first summation must be strictly negative. In other words, if a harmonic game with weights $\meas$ admitted a \ac{closed} set $\pureset$, it should be true that
\begin{equation}
\label{eq:club-harmonic-inequalities}
	\sum_{\purealt\in\indevs_{\pure, \pureset}}
		\meas(\pure,\purealt)
		\paydev(\pure, \purealt)
	< 0
	\quad \text{for all } \pure \in \pureset
	\eqstop
\end{equation}
However, this leads to a contradiction: indeed, for any $\pureset\subseteq\pures$, we claim that
\begin{equation}
\label{eq:antisymmetric-contraction-generalized}
\sum_{\pure\in\pureset}
	\left[
		\prod_{\playalt\in\players}
		\meas_{\playalt \pure_{\playalt}}
	\right]
	\sum_{\purealt\in\indevs_{\pure, \pureset}}
		\meas(\pure, \purealt)
		\paydev(\pure,\purealt)
\equiv 0 \eqstop
\end{equation}
If we show that \cref{eq:antisymmetric-contraction-generalized} holds true, our proof is complete; note that this equation is a generalization of \cref{lemma:antisymmetric-contraction} in the presence of non-uniform weights. As such, it is proved analogously: recall that for any profile $\pure\in\pureset$ and any deviation $\purealt\in\indevs_{\pure,\pureset}$, the map $(\pure,\purealt)\mapsto\paydev(\pure,\purealt)$ is anti-symmetric, and observe that (on the same domain) the map $(\pure,\purealt)\mapsto
\left[
	\prod_{\playalt\in\players}
	\meas_{\playalt \pure_{\playalt}}
\right]
	\meas(\pure,\purealt)
$ is symmetric. To verify the latter statement, denote by $\act\in\players$ the (existing and unique) player deviating between $\pure$ and $\purealt$, that is, $\purealt = (\purealt_{\act}, \pure_{\noact})$. Then,
\begin{equation}
\left[
	\prod_{\playalt\in\players}
	\meas_{\playalt \pure_{\playalt}}
\right]
	\meas(\pure,\purealt)
=
\left[
	\prod_{\playalt\in\players - \setof{\act}}
	\meas_{\playalt \pure_{\playalt}}	
\right]
	\meas_{\act \pure_{\act}}
	\meas_{\act \purealt_{\act}}
=
\left[
	\prod_{\playalt\in\players}
	\meas_{\playalt \purealt_{\playalt}}
\right]
	\meas(\purealt,\pure)
\eqstop
\end{equation}
For conciseness we denote this  symmetric map by
$S(\pure,\purealt) \equiv \left[
	\prod_{\playalt\in\players}
	\meas_{\playalt \pure_{\playalt}}
\right]
	\meas(\pure,\purealt)$ for all $\pure\in\pureset, \purealt\in\indevs_{\pure,\pureset}$. \cref{eq:antisymmetric-contraction-generalized} then reduces to
$\sum_{\pure\in\pureset}
	\sum_{\purealt\in\indevs_{\pure, \pureset}}
		S(\pure,\purealt)
		\paydev(\pure,\purealt)
=0$, and follows as the contraction of a symmetric map with an anti-symmetric one: for each term appearing in the sum, it's opposite also appears. Take $\pure\in\pureset$. If $\indevs_{\pure, \pureset} = \varnothing$, there is nothing to show. Assume then that $\indevs_{\pure, \pureset} \neq \varnothing$. Each $\purealt \in \indevs_{\pure, \pureset}$ contributes with a term $S(\pure,\purealt) \paydev(\pure,\purealt)$ to the sum. However, $\purealt \in \indevs_{\pure, \pureset}$ implies that $\pure \in \indevs_{\purealt, \pureset}$, since $\pure\isdev\purealt$ is symmetric and both $\pure$ and $\purealt$ belong to $\pureset$. Hence each $\purealt \in \indevs_{\pure, \pureset}$ contributes to the sum also with a term $S(\purealt,\pure)\paydev(\purealt,\pure) = -S(\pure,\purealt)\paydev(\pure,\purealt)$, canceling out the previous contribution.
\end{proof}
\begin{example}[Antisymmetric contraction revisited]
Consider again the deviations of \cref{fig:harmonic-response-graph}~(right), this time assigning a positive weight $\meas_{\findex}$ to each player's action. Let
$\pureset
= \setof{
\bota\anda\lefta\anda\anta,  \,
\bota\anda\righta\anda\anta,  \,
\bota\anda\righta\anda\posta, \, 
\topa\anda\righta\anda\anta, \,
\topa\anda\lefta\anda\posta
}$. The condition given by \cref{eq:club-harmonic-inequalities} for $\pureset$ to be \ac{closed} assuming that the game is harmonic yields the following system of inequalities (one for each $\pure\in\pureset$ such that $\indevs_{\pure,\pureset}\neq\varnothing$):
\begin{equation}
\label{eq:contraction-example-system}
\begin{aligned}
	\bota\anda\lefta\anda\anta & : 
			\meas_{\righta}(-a) < 0
	\quad \quad \bota\anda\righta\anda\anta \negspace \negspace  &&: 
			\meas_{\lefta}a  +\meas_{\topa}c + \meas_{\posta}(-b) < 0
\\
	\bota\anda\righta\anda\posta & : 
			\meas_{\anta}b < 0
	\, \,   \quad \quad \quad \topa\anda\righta\anda\anta \negspace \negspace  &&: 
		\meas_{\bota}(-c) < 0
\end{aligned}
\end{equation}
This system of inequalities admits no solution in the domain $\meas_{\findex} > 0$, $a, b, c \in R$.
Indeed,
\begin{align}
\insum_{\pure\in\pureset}
	\left[
		\inprod_{\play\in\players}
		\meas_{\findex}
	\right]
	\insum_{\purealt\in\indevs_{\pure, \pureset}}
		\meas(\pure, \purealt)
		\paydev(\pure,\purealt)
	&=
	\meas_{\bota}\meas_{\lefta}\meas_{\anta}
		[-\meas_{\righta}a]
	\notag\\
	&+\meas_{\bota}\meas_{\righta}\meas_{\posta}
		[\meas_{\anta}b]
	\notag\\
	&+\meas_{\topa}\meas_{\righta}\meas_{\anta}
		[-\meas_{\bota}c]
	\notag\\
	&+\meas_{\bota}\meas_{\righta}\meas_{\anta}
		[\meas_{\lefta}a  +\meas_{\topa}c - \meas_{\posta}b] \equiv 0
	\eqcomma
\end{align}
in agreement with \cref{eq:antisymmetric-contraction-generalized}, making it impossible for the system \eqref{eq:contraction-example-system} to admit a solution.
\end{example}

\subsection{No club subfaces in harmonic games}
We conclude this appendix by extending \cref{prop:harmonic-no-club} to the case of mixed strategies. To set the notation we recall the definition of the following standard objects:
\begin{definition}
Given the mixed extension $\mixgame$ of a finite game $\fingame = \fingamefull$, the \define{support} of any $\strat_{\play}\in\strats_{\play}$ is
$\supp\of{\strat_{\play}}
\defeq \setdef{\pureplay\in\puresplay}{\strat_{\findex}>0}$. Given a family $(\pureset_{\play}\subseteq\pures_{\play})_{\play\in\players}$, the \define{face spanned by $\pureset \defeq \prod_{\play\in\players}\pureset_{\play}$} is
\begin{equation}
\label{eq:face-app}
\face_{\pureset}
	\defeq
	\prod_{\play\in\players}
		\setdef{\strat_{\play}\in\strats_{\play}}{\supp(\strat_{\play}) \subseteq \pureset_{\play}}
	\equiv
	\prod_{\play\in\players}
		\face_{\pureset_{\play}}
\eqstop
\end{equation}
A \define{subface of $\strats$} (sometimes also referred to as a \define{proper face}) is a proper subset  of $\strats$ that can be written in the form \eqref{eq:face-app} for some $(\pureset_{\play}\subset\pures_{\play})_{\play\in\players}$. Moving on, the better reply correspondence \eqref{eq:btr-davide} extends to mixed strategies as the correspondence $\btr\from\strats\too\strats$ define by
\begin{equation}
\label{eq:better-mixed-app}
\btr\of\strat = 
	\prod_{\play\in\players}
	\setdef{\stratalt_{\play}\in\strats_{\play}}{\pay(\stratalt_{\play}, \strat_{\others}) \geq \pay\of\strat}
	\eqcomma
\end{equation}
and we say that a subface $\face$ of $\strats$ is \acdef{closed} if $\btr\of\strat\subseteq\face$ for all $\strat\in\face$.
\end{definition}
\begin{corollary}
\label{cor:harmonic-no-mixed-club}
Assume $\mixgame$ is the mixed extension of a finite harmonic game $\fingame = \fingamefull$. Then no subface of $\strats$ can be \acl{closed}.
\end{corollary}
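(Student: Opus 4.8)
The plan is to reduce \cref{cor:harmonic-no-mixed-club} entirely to the pure-strategy statement already proved in \cref{prop:harmonic-no-club}, exploiting the fact that subfaces of $\strats$ are spanned by \emph{product} sets of action profiles. First I would record the elementary bookkeeping: by \cref{eq:face-app} every subface of $\strats$ has the form $\face_{\pureset}$ for some $\pureset = \prod_{\play\in\players}\pureset_{\play}$ with $\pureset_{\play}\subseteq\pures_{\play}$, and $\face_{\pureset}$ is a \emph{proper} subface precisely when some $\pureset_{\play}\subsetneq\pures_{\play}$, i.e. exactly when $\pureset\subsetneq\pures$ is a proper subset of action profiles. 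Hence it suffices to show that if a proper subface $\face_{\pureset}$ were \acl{closed} under the mixed better-reply correspondence \cref{eq:better-mixed-app}, then the spanning set $\pureset$ would be \acl{closed} under the pure better-reply correspondence \cref{eq:btr-davide}; this would immediately contradict \cref{prop:harmonic-no-club}.

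The heart of the argument is the contrapositive of that implication: if $\pureset$ is not \acl{closed} in the pure sense, then $\face_{\pureset}$ is not \acl{closed} in the mixed sense. So suppose there exist $\pure\in\pureset$ and $\purealt\in\btr\of\pure$ (pure correspondence) with $\purealt\notin\pureset$. Since $\btr\of\pure=\prod_{\play}\btr_{\play}\of\pure$ and $\pureset=\prod_{\play}\pureset_{\play}$, there is a single player $\play\in\players$ with $\purealt_{\play}\in\btr_{\play}\of\pure\setminus\pureset_{\play}$, that is $\pay_{\play}(\purealt_{\play};\pure_{\others})\geq\pay_{\play}(\pure)$ while $\purealt_{\play}\notin\pureset_{\play}$. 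Viewing $\pure$ as a vertex of $\strats$, we have $\supp\of{\pure_{\playalt}}=\setof{\pure_{\playalt}}\subseteq\pureset_{\playalt}$ for every $\playalt\in\players$ (because $\pure\in\pureset$), so $\pure\in\face_{\pureset}$. Now I would consider the single-player deviation $\devstrat\defeq(\purealt_{\play};\pure_{\others})$, again a vertex of $\strats$. For the deviating player the inequality $\pay_{\play}(\purealt_{\play};\pure_{\others})\geq\pay_{\play}(\pure)$ is exactly the mixed better-reply condition at $\pure$; for every other player $\playalt\neq\play$ the component $\pure_{\playalt}$ is that player's own strategy under $\pure$, so it is a (weak) better reply trivially. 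Therefore $\devstrat$ lies in the mixed better-reply set $\btr\of\pure$ of \cref{eq:better-mixed-app}. On the other hand $\supp\of{\devstrat_{\play}}=\setof{\purealt_{\play}}\not\subseteq\pureset_{\play}$, so $\devstrat\notin\face_{\pureset}$. Thus $\pure\in\face_{\pureset}$ but $\btr\of\pure\not\subseteq\face_{\pureset}$, i.e. $\face_{\pureset}$ fails to be \acl{closed}, as claimed.

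Putting the two steps together: were some proper subface $\face_{\pureset}$ of the harmonic game $\fingame$ \acl{closed}, the spanning product set $\pureset\subsetneq\pures$ would be \acl{closed} in the pure sense, contradicting \cref{prop:harmonic-no-club}; hence no subface of $\strats$ can be \acl{closed}. The only point I expect to require care — and where the argument might superficially look incomplete — is the interface between the mixed and pure better-reply correspondences. I would stress that I deliberately do \emph{not} attempt the converse implication (that every mixed better reply to a point of $\face_{\pureset}$ stays supported in $\pureset$), which is more delicate and false in general; for the corollary it is enough to exhibit \emph{one} destabilizing deviation, and the product structure of $\pureset$ together with the identification of pure strategies with vertices supplies such a deviation directly from a vertex of the face. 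All of the game-theoretic substance (the anti-symmetric contraction forcing the inequalities to be inconsistent) is already contained in \cref{prop:harmonic-no-club}, so no new harmonic-specific estimate is needed here.
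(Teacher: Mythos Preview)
Your proposal is correct and follows essentially the same approach as the paper: both reduce the mixed statement to \cref{prop:harmonic-no-club} by proving the implication ``$\face_{\pureset}$ is \acl{closed} (mixed) $\Rightarrow$ $\pureset$ is \acl{closed} (pure)'', and both do so by evaluating the mixed better-reply correspondence at vertices of the face. The only cosmetic difference is that the paper argues this implication directly while you argue its contrapositive, and that you exhibit the escaping deviation as the unilateral profile $(\purealt_{\play};\pure_{\others})$ rather than the full profile $\purealt$; both choices work.
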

\begin{proof}
Let $\face_{\pureset} \subset \strats$ be a subface of $\strats$ with spanning set $\pureset = \prod_{\play\in\players}\pureset_{\play} \subset \pures$, with $\pureset_{\play}\subset\pures_{\play}$ for all $\play\in\players$. Our claim follows as a consequence of \cref{prop:harmonic-no-club}, and the fact that
\begin{equation}
\face_{\pureset} \text{ is \ac{closed}}
\implies
\pureset \text{ is \ac{closed}}
\eqstop
\end{equation}
Indeed, assume that $\face_{\pureset} \text{ is \ac{closed}}$. Then for any $\strat\in\face_{\pureset}$ and any $\stratalt\in\strats$, if $\stratalt\in\btr\of\strat$, that is, if $\pay_{\play}(\stratalt_{\play}, \strat_{\others}) \geq \pay_{\play}(\strat)$ for all $\play\in\players$, then $\stratalt\in\face_{\pureset}$. For any $\pure\in\pureset$, this holds true in particular at the mixed profile $\mixed{\pure}\in\face_{\pureset}$ such that $\supp(\mixed{\pure}_{\play}) = \setof{\pureplay}$, \ie at the mixed representation of the pure profile $\pure\in\pureset$, in which each player $\play$ plays the strategy $\pureplay$ with probability $1$. Hence, for any $\pure\in\pureset$ and any $\stratalt\in\strats$, if $\pay_{\play}(\stratalt_{\play}, \mixed{\pure}_{\others}) \geq \pay_{\play}(\mixed{\pure})$ for all $\play\in\players$, then $\stratalt\in\face_{\pureset}$. With the same reasoning, this holds true in particular at the mixed representation $\mixed{\purealt}\in\strats$ of any $\purealt\in\pures$:
for any $\pure\in\pureset$ and any $\mixed{\purealt}\in\strats$, if $\pay_{\play}(\mixed{\purealt}_{\play}, \mixed{\pure}_{\others}) \geq \pay_{\play}(\mixed{\pure})$ for all $\play\in\players$, then $\mixed{\purealt}\in\face_{\pureset}$, meaning that
$\supp(\mixed{\purealt}_{\play})
\subseteq \pureset_{\play}$, \ie $\purealt_{\play}\in\pureset_{\play}$. \qedhere

\end{proof}


As a consequence, we can now prove \cref{thm:harmonic-club}, which we restate here for ease of reference:

\ClubHarmonic*

\begin{proof}
	By \cref{thm:club}, stochastically asymptotically stable subfaces of $\strats$ are necessarily \acl{closed}. \cref{cor:harmonic-no-mixed-club} shows that harmonic game do not admit any such faces, which completes our proof.
\end{proof}

\section{Omitted proofs from \cref{sec:recurrence}}
\label{app:recurrence}

In this appendix, we prove the various results stated in \cref{sec:recurrence} regarding the fragility of deterministic recurrence in a noisy environment.
As mentioned in the main text, the primary object of interest of this section will be the energy function
\begin{equation}
	\energy(x) = \sum_{\play \in \players} \mass_\play \breg_\play(\scenter_\play, x_\play)
\label{eq:Bregman_x_def}
\end{equation}
where $\mass$ and $\scenter$ denote respectively the mass and the strategy center of a fixed harmonic game $\fingame$, and $\breg_\play$ denotes the standard Bregman divergence generated by the regularizer function $\hreg_\play$.

In fact, we will not directly work with the energy function $\energy$ through its ``primal'' definition \cref{eq:Bregman_x_def}, but instead use its ``dual'' representation
\begin{equation}
	\fench(y) = \sum_{\play \in \players} \mass_\play \fench_\play(\scenter_\play, y_\play)
\end{equation}
where $\fench_\play$ denotes the Fenchel coupling defined in \cref{app:fenchel}. 
In particular, \cref{prop:Fench} implies that $\energy(x) = \fench(y)$ whenever $x = \mirror(y)$, which allows this change of function.
This choice of representation greatly simplifies the computations, as it avoids the need to use an explicit expression of trajectories in $\strats$. 
In the deterministic setting for instance, it provides a quick proof that the energy $\energy$ (resp. $\fench$) is a constant of motion for \eqref{eq:FTRL} in harmonic games (\cf \cref{app:harmonic}). 

\subsection{Proof of \cref{thm:harmonic,thm:boundary}}

The main idea underlying the proofs of \cref{sec:recurrence} is to show that the infinitesimal generator $\gen$ of \eqref{eq:FTRL-stoch} applied to $\fench$ is positive on $\scores$.
To be more precise, recall from \cref{app:stochastic} that
\begin{equation}
	\ex_y[\fench(Y(\time))] = \fench(y) + \ex_y\bracks*{ \int_0^\time \gen \fench(Y(\timealt)) \dd\timealt }
\label{eq:fench_exp}
\end{equation}
for every $\time \geq 0$, so $\time \mapsto \ex_y[\fench(Y(\time))]$ is increasing whenever $\gen \fench$ is anywhere positive. 
The interesting aspect is that this result also holds when $\time$ is replaced by any (almost surely bounded) hitting time $\tau$ (see \cref{lemma:dynkin}), enabling us to obtain even finer results beyond just the average increase of the energy function

For this reason, we begin with a preliminary lemma allowing us to accurately estimate this quantity in harmonic games:

\begin{lemma}
For every harmonic game $\fingame$ and every $y \in \scores$,
	\begin{equation}
		\dfrac{\Diffmatminsq}{2} \sum_{\play \in \players} \mass_\play\tr(\Jac \mirror_\play(y_\play)) 
		\leq \gen \fench(y)
		\leq \dfrac{\Diffmatmaxsq}{2} \sum_{\play \in \players} \mass_\play\tr(\Jac \mirror_\play(y_\play)).
	\label{eq:gen_fench}
	\end{equation}
\label{lemma:gen_fench_bound}
\end{lemma}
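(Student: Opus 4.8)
The plan is to compute $\gen\fench$ directly from the generator formula \eqref{eq:generator}, exploit the player-wise separability of $\fench$ to identify the drift and Itô contributions, kill the drift using harmonicity, and then reduce the remaining second-order term to a trace inequality for positive semidefinite matrices.

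First I would differentiate $\fench(y) = \sum_{\play}\mass_{\play}\fench_{\play}(\scenter_{\play},y_{\play})$ with respect to the score variable $y$. Writing $\fench_{\play}(\scenter_{\play},y_{\play}) = \hreg_{\play}(\scenter_{\play}) + \hconj_{\play}(y_{\play}) - \dualp{y_{\play}}{\scenter_{\play}}$ and invoking $\mirror_{\play} = \nabla\hconj_{\play}$ from \eqref{eq:Danskin}, we get $\nabla_{y_{\play}}\fench_{\play} = \mirror_{\play}(y_{\play}) - \scenter_{\play} = X_{\play} - \scenter_{\play}$ and $\nabla^{2}_{y_{\play}}\fench_{\play} = \Jac\mirror_{\play}(y_{\play})$, while all mixed second derivatives across distinct players vanish by separability. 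Substituting the drift $\payv_{\play}(X)$ and the covariation $\covmat = \diffmat\diffmat^{\top}$ of \eqref{eq:FTRL-stoch} into \eqref{eq:generator}, and noting that the block-diagonal Hessian only contracts against the diagonal blocks $\covmat_{\play}$ of $\covmat$, yields
\begin{equation}
\gen\fench(y) = \sum_{\play\in\players}\mass_{\play}\dualp{X_{\play} - \scenter_{\play}}{\payv_{\play}(X)} + \tfrac{1}{2}\sum_{\play\in\players}\mass_{\play}\tr(\Jac\mirror_{\play}(y_{\play})\,\covmat_{\play}(X)).
\end{equation}

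The crucial simplification is that, because $X = \mirror(Y)$ is a genuine strategy profile, the harmonic characterization \eqref{eq:harmonic-mixed} forces the entire drift term $\sum_{\play}\mass_{\play}\dualp{\payv_{\play}(X)}{X_{\play} - \scenter_{\play}}$ to vanish identically. Thus $\gen\fench(y) = \tfrac12\sum_{\play}\mass_{\play}\tr(\Jac\mirror_{\play}(y_{\play})\,\covmat_{\play}(X))$, and it remains only to sandwich each trace. For this I would combine two ingredients: (i) $\Jac\mirror_{\play}$ is symmetric positive definite, being the inverse Hessian of a strongly convex regularizer (exactly as in the proof of \cref{lemma:trace-jacobian-mirror}); and (ii) the per-player covariance $\covmat_{\play}$ is a principal diagonal block of $\covmat$, so eigenvalue interlacing for principal submatrices gives $\diffmat_{\min}^{2}\,\eye \mleq \covmat_{\play} \mleq \diffmat_{\max}^{2}\,\eye$ with $\diffmat_{\min}^{2},\diffmat_{\max}^{2}$ as defined in \eqref{eq:noisebounds}. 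The elementary fact that $c\,\eye \mleq B \mleq C\,\eye$ together with $A \mgeq 0$ implies $c\,\tr(A) \leq \tr(AB) \leq C\,\tr(A)$ — seen by writing $\tr(AB) = \tr(A^{1/2}BA^{1/2})$ and using monotonicity of the trace on the Loewner order — then delivers the stated two-sided bound upon summing over $\play$.

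The main obstacle I anticipate is bookkeeping rather than anything conceptual: one must ensure that the generator's Itô term contracts the Hessian of $\fench$ only with the diffusion coefficient $\covmat$ (and not with derivatives of $\payv$ or $\diffmat$), and that the dimensional conventions for $\Jac\mirror_{\play}$ — which acts on the effective $(\nPures_{\play}-1)$-dimensional payoff-difference space of \cref{app:fenchel} — are matched consistently with $\covmat_{\play}$ so that the interlacing and trace comparisons are carried out on the same coordinates. Once the clean trace representation $\gen\fench = \tfrac12\sum_{\play}\mass_{\play}\tr(\Jac\mirror_{\play}\covmat_{\play})$ is in place, positivity of $\Jac\mirror_{\play}$ and the interlacing bound make both inequalities immediate.
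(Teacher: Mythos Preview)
Your proposal is correct and follows essentially the same route as the paper: compute the first and second derivatives of $\fench$ via $\mirror_{\play} = \nabla\hconj_{\play}$, plug into the generator, kill the drift with the harmonic identity \eqref{eq:harmonic-mixed}, and bound each $\tr(\Jac\mirror_{\play}\,\covmat_{\play})$ using that $\Jac\mirror_{\play}=\Hess\hconj_{\play}$ is symmetric positive semidefinite together with the Loewner sandwich on $\covmat_{\play}$. The paper frames the first step as an application of Itô's formula rather than the generator formula directly, and asserts the eigenvalue bound on $\covmat_{\play}$ without naming interlacing, but the substance is identical.
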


\begin{proof}
	Let us fix the player's index $\play \in \players$. By Itô's formula, we have
	\begin{align}
		d\fench_\play(\scenter_\play, Y_\play) &= \braket*{\grad \fench_\play(\scenter, Y_\play)}{dY_\play} + \dfrac{1}{2} \insum_{\pure \purealt} \dfrac{\partial^2 \fench_\play}{\partial y_\pure \partial y_\purealt}(\scenter_\play, Y_\play) d[Y_{\play \pure}, Y_{\play \purealt}]
		\notag\\
		&= \braket*{\mirror_\play(Y_\play) - \scenter_\play}{\payv_\play(X)}d\time + \dfrac{1}{2} \insum_{\pure \purealt} \dfrac{\partial \mirror_{\play \pure}}{\partial y_\purealt}(Y_\play) (\covmat_{\play}(X))_{\pure \purealt} d\time + d\xi_\play
		\notag\\
		&= \braket*{X_\play - \scenter_\play}{\payv_\play(X)}d\time + \dfrac{1}{2} \tr\parens*{ \Jac \mirror_\play(Y_\play) \covmat_{\play}(X) } d\time + d\xi_\play
	\end{align}
	where $\covmat_{\play} = \sdev_\play \sdev_{\play}^T$ and $\xi_\play$ denotes a square-integrable martingale starting from $0$. Consequently, we get
	\begin{equation}
		\gen\fench(y) = \sum_{\play \in \players} \mass_\play \gen \fench_\play(\scenter_\play, y_\play) = \dfrac{1}{2} \sum_{\play \in \players} \mass_\play \tr\parens*{ \Jac \mirror_\play(y_\play) \covmat_{\play}(x) }
	\label{eq:gen_fench_equal}
	\end{equation}
	for every $y \in \scores$ and $x = \mirror(y)$. 
	
	Now, notice that the eigenvalues of each $\covmat_{\play}$ are all included in the interval $[\Diffmatminsq, \Diffmatmaxsq]$ by \cref{asm:noisemin} on the diffusion matrix.
	Furthermore, $\Jac Q_\play(y) = \Hess \hreg_{\play}^*(y)$ is symmetric and positive semi-definite for any $y \in \scores$ (this is a consequence of $\hreg^*$ being convex and of \cref{prop:mirror}). 
	By classical results of linear algebra, we therefore obtain
	\begin{equation}
		\Diffmatminsq \tr(\Jac \mirror_\play(y_\play)) \leq \tr\parens*{ \Jac \mirror_\play(y_\play) \covmat_{\play}(x) } \leq \Diffmatmaxsq \tr(\Jac \mirror_\play(y_\play))
	\end{equation}
	for every player $\play \in \players$, which finishes the proof when combined with \cref{eq:gen_fench_equal}.
\end{proof}

With this lemma in hand, we are now ready to prove every results stated in \cref{sec:recurrence}.

\begin{proof}[Proof of \cref{thm:harmonic} ($\energy \to \infty$ in average)]
	From \cref{lemma:gen_fench_bound} and the convexity of $\hreg$, it is evident that $\gen \fench(y) \geq 0$ for every $y \in \scores$. 
	According to \cref{eq:fench_exp}, it then implies that $\time \mapsto \ex[\fench(Y(\time))]$ is a non-decreasing function.
	In particular, it admits a (possible infinite) limit. 
	Assume for a moment that this limit is finite. 
	Then, \cref{eq:fench_exp} and Fubini's theorem would imply that $\int_0^\time \gen \fench(Y(\timealt)) d\timealt$ should be finite almost-surely.
	But as $\gen\fench(y)$ is positive for every $y \in \pures$, this is possible only if $\gen\fench(Y(\time)) \to 0$ almost-surely.
	Due to the lower bound of \cref{lemma:gen_fench_bound}, it would mean that $\tr(\Jac \mirror_\play(Y_\play(\time))) \to 0$ for every player $\play \in \players$, but this can only occurs if $X(\time) = \mirror(Y(\time))$ converges to the boundary $\bd\strats$ (indeed, this quantity is strictly positive on the relative interior of $\strats$ due to \cref{lemma:trace-jacobian-mirror}).
	However, this leads to a contradiction because $\fench(Y(\time)) = \energy(X(\time))$ explodes to infinity whenever $X(\time)$ converges to the boundary. 
	Consequently, the limit of $\ex[\fench(Y(\time))]$ is necessarily infinite as required.
\end{proof}

\begin{proof}[Proof of \cref{thm:boundary}]
	\begin{enumerate}
		\item Let $\cpt$ be compact subset of $\strats$ disjoint from $\bd(\strats)$. 
		Due to Dynkin's lemma (\cref{lemma:dynkin}) and \cref{lemma:gen_fench_bound} we have, for every fixed $\time \geq 0$,
		\begin{align}
			M(\cpt) \geq \ex_\startx[\fench(Y(\tau_{\cpt} \wedge \time))] &\geq \fench(y) + \dfrac{\Diffmatminsq}{2} \ex_\starty \bracks*{ \int_0^{\tau_{\cpt} \wedge \time} \sum_{\play \in \players} \mass_\play \tr(\Jac \mirror_\play(Y_\play(\timealt)) \dd\timealt }
			\notag\\
			&\geq \fench(y) + \dfrac{\Diffmatminsq}{2} m(\cpt) \ex_\startx [\tau_{\cpt} \wedge \time]
		\end{align}
		where 
		\begin{align}
			M(\cpt) &\defeq \max\setdef*{ \energy(x) }{ x \in \cpt } < \infty,\\
			m(\cpt) &\defeq \min\setdef*{ \sum_{\play \in \players} \mass_\play \tr(\Jac \mirror_\play(y_\play)) }{ x = Q(y), x \in \cpt } > 0.
		\end{align}
		The finiteness of $M(\cpt)$ follows from properties of the Bregman divergence, while the positiveness of $m(\cpt)$ comes from \cref{lemma:trace-jacobian-mirror}.
		Rearranging the inequality and taking the monotone limit as $\time \to \infty$, we therefore obtain
		\begin{equation}
			\ex_\startx[\tau_{\cpt}] \leq 2 \dfrac{M(\cpt) - \energy(x)}{m(\cpt) \Diffmatminsq} < \infty;
		\label{eq:boundHittingGeneral}
		\end{equation}
		hence the mean escape time is finite as required.

		\item From \cref{lemma:gen_fench_bound} and the fact that $\hreg_{\play}^*$ is both convex and $L$-smooth (\cf \cref{app:fenchel}), we obtain
		\begin{equation}
			0 \leq \gen\fench(y) \leq \dfrac{L \Diffmatminsq}{2} \sum_{\play \in \players} \mass_\play \eqdef K.
		\label{eq:upper_bound_gen_fenchel}
		\end{equation}
		In particular,  it implies that $\fench(Y(\time))$ (resp. $\energy(X(\time))$) is a submartingale (this is a consequence of the martingale characterization of infinitesimal generators, \cf \cref{eq:GenMartingale}).
		Let us consider the compact subset $\cpt = \setdef{x \in \strats}{\energy(x) \geq M}$ and assume that $\ex_\startx[\tau_\cpt] < \infty$ for some initial condition $x \in \relint\strats$ such that $\energy(x) \geq M + 1$.
		Then, Dynkin's lemma and \cref{eq:upper_bound_gen_fenchel} yield
		\begin{equation}
			\ex_\startx[\energy(X(\tau_\cpt \wedge \time))] \leq \energy(x) + K \ex_\startx[\tau_\cpt \wedge \time] \leq \energy(x) + K \ex_\startx[\tau_\cpt] < \infty.
		\end{equation}
		Consequently, the stopped process $\parens*{ \energy(X(\tau_\cpt \wedge \time)) }_{\time \geq 0}$ is a uniformly integrable submartingale.
		Doob's martingale convergence theorem {\citep[Theorem 3.15 p.17]{KS98}} therefore implies that 
		\begin{equation}
			\energy(\startx) \leq \ex_\startx[\energy(X(\tau_\cpt \wedge \time))] \to \ex_\startx[\energy(X(\tau_\cpt))] \leq M,
		\end{equation}
		which contradicts the fact that $\energy(\startx) \geq M + 1$. 
		We therefore deduce that $\ex_\startx[\tau_\cpt] = \infty$ for some $\startx \in \cpt$, so trajectories of \eqref{eq:FTRL-stoch} cannot be positive recurrent in the sense of \cref{def:recurrent_transience}. 
		In particular, the mean escape time is also infinite for any compact subset containing $\bd(\strats)$ due to the transience/recurrence dichotomy of \cref{thm:dichotomyX}. 
	\end{enumerate}
\end{proof}

\begin{proof}[Proof of \cref{thm:harmonic} (Hitting time estimate)]
	The proof follows directly from \cref{eq:boundHittingGeneral} with the particular choice $\cpt = \setdef{x \in \strats}{\energy(x) \leq M}$, which leads to $M(\cpt) = M$ and $m(\cpt) = \eps(M)$.
\end{proof}

\begin{proof}[Proof of \cref{cor:harmonic_EW}]
	Let us consider the entropic kernel $\hker_\play(z) = z \log z$, and let us drop the player's index $\play$ for a moment.
	A standard computation shows that, in this case, the Bregman divergence $\breg(\scenter, x)$ is equal to the \define{Kullback-Leibler divergence} between $\scenter$ and $x$, given by
	\begin{equation}
		\dkl(\scenter, x) = \insum_\pure \scenter_\pure \log \parens*{ \frac{\scenter_\pure}{x_\pure} } = \hreg(\scenter) - \insum_\pure \scenter_\pure \log x_\pure.
	\label{eq:entropic_KL}
	\end{equation}
	On the other hand, computing the Jacobian matrix of the logit map $\mirror$ leads to 
	\begin{equation}
		\tr(\Jac \mirror(y)) = \insum_\pure x_\pure(1-x_\pure).
	\end{equation}
	Applying the inverse exponential function on \cref{eq:entropic_KL} then yields
	\begin{equation}
		e^{-\breg(\scenter, x)} = e^{-\hreg(\scenter)} \inprod_\pure x_{\pure}^{\scenter_\pure} = e^{-\hreg(\scenter)} \parens*{ \inprod_\pure x_\pure }^{\scenter^*} \inprod_\pure x_{\pure}^{\scenter_\pure - \scenter^*} \leq e^{-\hreg(\scenter)} \parens*{ \inprod_\pure x_\pure }^{\scenter^*},
	\label{eq:entropic_breg_bound1}
	\end{equation}
	where $\scenter^* = \min_\pure \scenter_\pure > 0$. 
	Furthermore, we can upper bound $\inprod_\purealtalt x_\purealtalt$ as
	\begin{equation}
		\inprod_\purealtalt x_\purealtalt = \dfrac{1}{\nPures(\nPures - 1)} \insum_{\pure \neq \purealt} \inprod_\purealtalt x_\purealtalt \leq \dfrac{1}{\nPures(\nPures -1)} \insum_{\pure \neq \purealt} x_\pure x_\purealt = \dfrac{1}{\nPures(\nPures-1)} \insum_\pure x_\pure(1-x_\pure),
	\end{equation}
	where the the penultimate inequality follows by taking $x_\purealtalt \leq 1$ for every $\purealtalt \notin \braces{\pure, \purealt}$, and the last equality uses the simplex constraint $\insum_\pure x_\pure = 1$.
	Putting this bound back into \cref{eq:entropic_breg_bound1} therefore leads to
	\begin{equation}
		e^{-\frac{1}{\scenter^*}\breg(\scenter, y)} \leq e^{-\frac{1}{\scenter^*}\hreg(\scenter)} [\nPures(\nPures - 1)]^{-1} \insum_\pure x_\pure(1-x_\pure) = c(\scenter) \tr(\Jac \mirror(y)).
	\end{equation}
	with $c > 0$ a constant depending only on $\scenter$ and $\nPures$.
	
	Adding back the players' indices, we define $\scenter^* = \min_{\play} \scenter_{\play}^*$ and $c(\scenter) = \max_\play c_\play(\scenter_\play)$, which allows us to write
	\begin{equation}
		\sum_{\play \in \players} \mass_\play \tr(\Jac \mirror_\play(y_\play)) \geq \dfrac{1}{c(\scenter)} \sum_{\play \in \players} \mass_\play e^{- \frac{1}{\scenter^*} \breg_\play(\scenter_\play, x_\play)} \geq \dfrac{\mass}{c(\scenter)} e^{ - \frac{1}{\mass \scenter^*} \sum_{\play} \mass_\play \breg_\play(\scenter_\play, x_\play) } = c_1 e^{-c_2 \energy(\scenter, x)},
	\end{equation}
	where $m = \insum_\play \mass_\play$, $c_1$ and $c_2$ are positive constants, and the penultimate inequality holds thanks to the convexity of $e^{-x}$.
	Accordingly, if $\energy(\scenter, x) \leq M$, then 
	\begin{equation}
		\sum_{\play \in \players} \mass_\play \tr(\Jac \mirror_\play(y_\play)) \geq c_1 e^{-c_2 M},
	\end{equation}
	and, therefore, the same lower bound also holds true for $\eps(M)$. Substituting this bound inside the hitting time estimate of \cref{thm:harmonic} then yields the desired result, accordingly that 
	\begin{equation}
		\ex_\startx[\tau_M] \lesssim \frac{M}{\Diffmatminsq}e^{c M}
	\end{equation}
	for some positive constant $c > 0$.
\end{proof}

\subsection{Comparison with related works: the pure noise setup}

To conclude this appendix, we discuss the main differences between the behavior of \eqref{eq:FTRL-stoch} and those of \eqref{eq:SRD-PI} and \eqref{eq:SRD-AS} in harmonic games. 
In particular, we aim to clarify why the Itô correction term in \eqref{eq:FTRL-stoch} is the one that preserves most of the rationally admissible properties of the deterministic case.

For this purpose, we focus on the \emph{pure noise} regime, i.e., the case where $\payv \equiv 0$ across all $\strats$ (which, by definition, is also a trivial harmonic game). In this setting, the system's behavior is entirely determined by the noise and the corresponding Itô correction term, making it a suitable approximation for analyzing the impact of noise on \eqref{eq:FTRL-stoch}.

To simplify the analysis, we consider uncorrelated noise where each $\diffmat_{\play \pure_\play}$ is independent of the players' strategies. Under these assumptions, the dynamics of \eqref{eq:FTRL-stoch} reduce to the following form:
\begin{align}
	Y_{\play \pure_\play}(\time) = \diffmat_{\play \pure_\play} \brown_{\play \pure_\play}(\time), \quad X_\play(\time) = Q_\play(Y_\play(\time)).
\tag{FTRL-N}
\label{eq:FTRL-N}
\end{align}

Since $\mirror^{-1}$ is generally neither continuous nor single-valued, the behavior of $X(\time)$ cannot be inferred directly from that of $Y(\time)$. Instead, we must map $Y(\time)$ to the \emph{space of payoff differences} $\dpspace$(\cf \cref{app:stochastic}), where \eqref{eq:FTRL-N} then takes the form:
\begin{equation}
	Z_{\play \pure_\play}(\time) = \diffmat_{\play \pure_\play} \brown_{\play \pure_\play}(\time) - \diffmat_{\play \benchz} W_{\play \benchz}(\time); \quad X_\play(\time) = \mirrorz_\play(Z_\play(\time))
\end{equation}
for every $\pure_\play \neq \benchz$, where $\benchz \in \pures_\play$ is a fixed benchmark action and $\mirrorz$ denotes the payoff-adjusted mirror map (whose inverse is continuous and single-valued by \cref{lemma:mirrorz}).
In particular, the long-run behaviors of $X(\time)$, \ie the classification of those as either transient or recurrent, are completely the same as that of a (correlated) Brownian motion.

This identification explains why the deterministic stability properties of \eqref{eq:FTRL} largely persist under any level of noise: in terms of players' strategies, the induced noise essentially behaves like a standard Brownian motion and is therefore negligible compared to most drifts, as justified by the strong law of large numbers (\cref{lemma:SLL}).

Additionally, this also explains why positive recurrence is impossible in harmonic games: the absence of significant drift results in trajectories of \eqref{eq:FTRL-stoch} resembling ones of a (correlated) Brownian motion. Since the effective dimension of such a process is at least 2 (and finite games require at least 2 players), it generally cannot exhibit positive recurrence.

To illustrate these points, consider two specific examples of pure-noise settings:

\begin{example}[$2 \times 2$ zero game]
	Let $\fingame$ be a $2 \times 2$ zero game, \ie a game with two players each having two strategies, and take all noise coefficients equal to $1$ for simplicity. 
	In this case, we get
	\begin{align}
		Z_1(\time) &= W_{1 \pure_1}(\time) - W_{1 \hat\pure_2}(\time) = \frac{1}{\sqrt{2}} B_1(\time)\\
		Z_2(\time) &= W_{2 \pure_2}(\time) - W_{2 \hat\pure_2}(\time) = \frac{1}{\sqrt{2}} B_2(\time)
	\end{align}
	where $B(\time) = (B_1(\time), B_2(\time))$ is a standard $2$-dimensional Brownian motion. 
	Accordingly, $Z(\time)$ is proportional to a $2$-dimensional Brownian motion, which is known to be null recurrent from classical results (see \eg Examples 3.11 and 3.12 of \citep{Kha12}). Accordingly, $X(\time)$ must also be null-recurrent by \cref{lemma:mirrorz}.
\end{example}

\begin{example}[$2 \times 2 \times 2$ zero game]
	Let $\fingame$ be a $2 \times 2 \times 2$ zero game, with then three players each having two strategies, and all noise coefficient equal to $1$. 
	Similarly to the previous example, we therefore obtain that $Z(\time) = \frac{1}{\sqrt{3}} B(\time)$ for $B(\time)$ a $3$-dimensional Brownian motion. 
	Such a process is transient, so $X(\time)$ must also be transient.
\end{example}

The previous examples highlight an important observation about \eqref{eq:FTRL-stoch} in harmonic games: although trajectories converge \emph{on average} toward the boundary by Theorem \ref{thm:harmonic}, this is not enough to distinguish between convergence with probability one (transience) and infinite oscillations in the strategy space (null recurrence).

With that in mind, let us now examine the different stochastic differential equations proposed to study random perturbations to the replicator dynamics \eqref{eq:RD} (and more generally in \eqref{eq:FTRL}), namely the \emph{stochastic replicator dynamics with aggregate shocks} \eqref{eq:SRD-AS} \citep{FH92,Cab00,Imh05,HI09} and the \emph{stochastic replicator dynamics of pairwise imitation} \eqref{eq:SRD-PI} \citep{FY90,BHS08,MV16,EP23}.

Still in the pure-noise setting and with a slight abuse of notation, both of these dynamics can be recast within the framework of \eqref{eq:FTRL-stoch} (at least for the entropic regularizer from Example \ref{ex:logit}) as:
\begin{equation}
	Y_{\play \pure_\play}^{AS}(\time) = - \dfrac{\diffmat_{\play \pure_\play}^2}{2} \time + \diffmat_{\play \pure_\play}W_{\play \pure_\play}(\time); \quad X_{\play}^{AS}(\time) = \mirror_\play(Y_{\play}^{AS}(\time))
\tag{FTRL-AS-N}
\label{eq:FTRL-AS-N}
\end{equation}
and 
\begin{equation}
	dY_{\play \pure_\play}^{PI}(\time) = - \dfrac{1}{2}\parens*{1 - 2 X_{\play \pure_\play}^{PI}(\time)} \diffmat_{\play \pure_\play}^2 d\time + \diffmat_{\play \pure_\play} d W_{\play \pure_\play}(\time); \quad X_{\play}^{PI}(\time) = \mirror_\play(Y_{\play}^{PI}(\time))
\tag{FTRL-PI-N}
\label{eq:FTRL-PI-N}
\end{equation}
respectively. Mapping both \eqref{eq:FTRL-AS-N} and \eqref{eq:FTRL-PI-N} into the payoff differences space $\dpspace$ then yield
\begin{equation}
	Z_{\play \pure_\play}^{AS}(\time) = - \dfrac{1}{2} (\diffmat_{\play \pure_\play}^2 - \diffmat_{\play \benchz}^2) \time + \diffmat_{\play \pure_\play}W_{\play \pure_\play}(\time) - \diffmat_{\play \benchz} W_{\play \benchz}(\time)
\tag{AS-Z}
\label{eq:AS-Z}
\end{equation} 
and
\begin{equation}
	dZ_{\play \pure_\play}^{PI}(\time) = -2\parens*{X_{\play \benchz}^{PI}(\time) - X_{\play \pure_\play}^{PI}(\time)} d\time + \diffmat_{\play \pure_\play} dW_{\play \pure_\play}(\time) - \diffmat_{\play \benchz} dW_{\play \benchz}(\time).
\tag{PI-Z}
\label{eq:PI-Z}
\end{equation}
 
To highlight their differences, we discuss each of these dynamics in separate examples.

\begin{example}[Stochastic replicator dynamics with aggregate shocks]
	Notice that the drift in \eqref{eq:AS-Z} is deterministic and only depend on the relative difference of noise magnitude. By the strong law of large numbers (\cref{lemma:SLL}), we therefore get the following classification of its behaviors:
	\begin{enumerate}
		\item If $\diffmat_{\play \pure_\play}^2 > \diffmat_{\play \benchz}^2$ for some $\pure_\play \neq \benchz$, then $Z_{\play \pure_\play}^{AS}(\time) \to -\infty$ \as, and so $X_{\play \pure_\play}^{AS}(\time) \to 0$ \as: trajectories are \emph{transient}.
		
		\item If $\diffmat_{\play \pure_\play}^2 < \diffmat_{\play \benchz}^2$ for some $\pure_\play \neq \benchz$, then $Z_{\play \pure_\play}^{AS}(\time) \to \infty$ \as, and so $X_{\play \benchz}^{AS}(\time) \to 0$ \as: trajectories are \emph{transient}.
		
		\item If $\diffmat_{\play \pure_\play} \equiv \diffmat_\play$ for every $\pure_\play \in \pures$, then $Z_{\play \pure_\play}^{AS}(\time) = \diffmat_\play(W_{\play \pure_\play}(\time) - W_{\play \benchz}(\time))$: same behavior as \eqref{eq:FTRL-N}.
	\end{enumerate}
	This classification can also be recovered from results proved in \citep{Imh05,HI09} concerning the extinction of dominated strategies and stability of equilibria under \eqref{eq:SRD-AS}.
\end{example}

\begin{example}[Stochastic replicator dynamics of pairwise imitation]
	Note that if $X^{PI}(\time)$ remains close enough to a pure strategy, say $\benchz$, then the drift of \eqref{eq:PI-Z} is strictly decreasing with order $-\time$ for every $\pure_\play \neq \benchz$. In other words, trajectories have a strong tendency to drift toward pure strategies. 
	Moreover, a similar argument to the one developed in the proof of \cref{thm:club} then shows that each pure strategy is stochastically asymptotically stable under \eqref{eq:FTRL-PI-N}. 
	As a result, trajectories are always transient and attracted strongly toward pure strategies.
	This aligns with the findings of \citet{EP23} in two-player zero-sum games, where pure strategies are attractive in terms of their invariant measures, and trajectories converge to the boundary regardless of the noise level.
\end{example}

From the previous examples, we conclude that both \eqref{eq:SRD-AS} and \eqref{eq:SRD-PI} exhibit a strong bias towards the strategy boundary, even when the payoffs do not favor any particular outcome. In contrast, \eqref{eq:FTRL-stoch} does not show such tendencies and behaves as we would expect in a pure noise setting, namely, similarly to white noise (Brownian motion).

\section*{Acknowledgments}
\begingroup
\small
%
%
This research was supported in part by
the French National Research Agency (ANR) in the framework of
the PEPR IA FOUNDRY project (ANR-23-PEIA-0003),
the ``Investissements d'avenir'' program (ANR-15-IDEX-02),
and
the LabEx PERSYVAL (ANR-11-LABX-0025-01),
under grant IRGA-SPICE (G7H-IRG24E90).
PM is also a member of the Archimedes Research Unit/Athena RC, and was partially supported by project MIS 5154714 of the National Recovery and Resilience Plan Greece 2.0 funded by the European Union under the NextGenerationEU Program.
\endgroup

\bibliographystyle{icml2025}
\bibliography{bibtex/IEEEabrv,bibtex/Bibliography-PM,bibtex/PL_SFTRL}

\begin{thebibliography}{66}
\providecommand{\natexlab}[1]{#1}
\providecommand{\url}[1]{\texttt{#1}}
\expandafter\ifx\csname urlstyle\endcsname\relax
  \providecommand{\doi}[1]{doi: #1}\else
  \providecommand{\doi}{doi: \begingroup \urlstyle{rm}\Url}\fi

\bibitem[Abdou et~al.(2022)Abdou, Pnevmatikos, Scarsini, and Venel]{APSV22}
Abdou, J., Pnevmatikos, N., Scarsini, M., and Venel, X.
\newblock Decomposition of games: {Some} strategic considerations.
\newblock \emph{Mathematics of Operations Research}, 47\penalty0 (1):\penalty0
  176--208, February 2022.

\bibitem[Alvarez et~al.(2004)Alvarez, Bolte, and Brahic]{ABB04}
Alvarez, F., Bolte, J., and Brahic, O.
\newblock Hessian {Riemannian} gradient flows in convex programming.
\newblock \emph{SIAM Journal on Control and Optimization}, 43\penalty0
  (2):\penalty0 477--501, 2004.

\bibitem[Arora et~al.(2012)Arora, Hazan, and Kale]{AHK12}
Arora, S., Hazan, E., and Kale, S.
\newblock The multiplicative weights update method: A meta-algorithm and
  applications.
\newblock \emph{Theory of Computing}, 8\penalty0 (1):\penalty0 121--164, 2012.

\bibitem[Auer et~al.(1995)Auer, Cesa-Bianchi, Freund, and Schapire]{ACBFS95}
Auer, P., Cesa-Bianchi, N., Freund, Y., and Schapire, R.~E.
\newblock Gambling in a rigged casino: The adversarial multi-armed bandit
  problem.
\newblock In \emph{Proceedings of the 36th Annual Symposium on Foundations of
  Computer Science}, 1995.

\bibitem[Auer et~al.(2002)Auer, Cesa-Bianchi, Freund, and Schapire]{ACBFS02}
Auer, P., Cesa-Bianchi, N., Freund, Y., and Schapire, R.~E.
\newblock The nonstochastic multiarmed bandit problem.
\newblock \emph{SIAM Journal on Computing}, 32\penalty0 (1):\penalty0 48--77,
  2002.

\bibitem[Bena{\"\i}m(1999)]{Ben99}
Bena{\"\i}m, M.
\newblock Dynamics of stochastic approximation algorithms.
\newblock In Az{\'e}ma, J., {\'E}mery, M., Ledoux, M., and Yor, M. (eds.),
  \emph{S{\'e}minaire de Probabilit{\'e}s XXXIII}, volume 1709 of \emph{Lecture
  Notes in Mathematics}, pp.\  1--68. Springer Berlin Heidelberg, 1999.

\bibitem[Bena{\"i}m \& Hurth(2022)Bena{\"i}m and Hurth]{BH22}
Bena{\"i}m, M. and Hurth, T.
\newblock \emph{Markov {{Chains}} on {{Metric Spaces}}: {{A Short Course}}}.
\newblock Universitext. Springer International Publishing, Cham, 2022.

\bibitem[Bena{\"\i}m et~al.(2005)Bena{\"\i}m, Hofbauer, and Sorin]{BHS05}
Bena{\"\i}m, M., Hofbauer, J., and Sorin, S.
\newblock Stochastic approximations and differential inclusions.
\newblock \emph{SIAM Journal on Control and Optimization}, 44\penalty0
  (1):\penalty0 328--348, 2005.

\bibitem[Bena{\"\i}m et~al.(2006)Bena{\"\i}m, Hofbauer, and Sorin]{BHS06}
Bena{\"\i}m, M., Hofbauer, J., and Sorin, S.
\newblock Stochastic approximations and differential inclusions, part {II}:
  Applications.
\newblock \emph{Mathematics of Operations Research}, 31\penalty0 (4):\penalty0
  673--695, 2006.

\bibitem[Bena{\"\i}m et~al.(2008)Bena{\"\i}m, Hofbauer, and Sandholm]{BHS08}
Bena{\"\i}m, M., Hofbauer, J., and Sandholm, W.~H.
\newblock Robust permanence and impermanence for stochastic replicator
  dynamics.
\newblock \emph{Journal of Biological Dynamics}, 2\penalty0 (2):\penalty0
  180--195, April 2008.

\bibitem[Bhattacharya(1978)]{Bha78}
Bhattacharya, R.~N.
\newblock Criteria for recurrence and existence of invariant measures for
  multidimensional diffusions.
\newblock \emph{The Annals of Probability}, 6:\penalty0 541--553, 1978.

\bibitem[Boone \& Mertikopoulos(2023)Boone and Mertikopoulos]{BM23}
Boone, V. and Mertikopoulos, P.
\newblock The equivalence of dynamic and strategic stability under regularized
  learning in games.
\newblock In \emph{NeurIPS '23: Proceedings of the 37th International
  Conference on Neural Information Processing Systems}, 2023.

\bibitem[Bravo \& Mertikopoulos(2017)Bravo and Mertikopoulos]{BM17}
Bravo, M. and Mertikopoulos, P.
\newblock On the robustness of learning in games with stochastically perturbed
  payoff observations.
\newblock \emph{Games and Economic Behavior}, 103\penalty0 (John Nash Memorial
  issue):\penalty0 41--66, May 2017.

\bibitem[Cabrales(2000)]{Cab00}
Cabrales, A.
\newblock Stochastic replicator dynamics.
\newblock \emph{International Economic Review}, 41\penalty0 (2):\penalty0
  451--81, May 2000.

\bibitem[Candogan et~al.(2011)Candogan, Menache, Ozdaglar, and Parrilo]{CMOP11}
Candogan, O., Menache, I., Ozdaglar, A., and Parrilo, P.~A.
\newblock Flows and decompositions of games: {Harmonic} and potential games.
\newblock \emph{Mathematics of Operations Research}, 36\penalty0 (3):\penalty0
  474--503, 2011.

\bibitem[Dikin(1967)]{Dik67}
Dikin, I.~I.
\newblock Iterative solution of problems of linear and quadratic programming.
\newblock \emph{Proceedings of the {USSR} Academy of Sciences}, 174\penalty0
  (4):\penalty0 747--748, 1967.

\bibitem[Engel \& Piliouras(2023)Engel and Piliouras]{EP23}
Engel, M. and Piliouras, G.
\newblock A stochastic variant of replicator dynamics in zero-sum games and its
  invariant measures.
\newblock \emph{Physica D: Nonlinear Phenomena}, 456:\penalty0 133940, December
  2023.

\bibitem[Flokas et~al.(2020)Flokas, Vlatakis-Gkaragkounis, Lianeas,
  Mertikopoulos, and Piliouras]{FVGL+20}
Flokas, L., Vlatakis-Gkaragkounis, E.~V., Lianeas, T., Mertikopoulos, P., and
  Piliouras, G.
\newblock No-regret learning and mixed {Nash} equilibria: {They} do not mix.
\newblock In \emph{NeurIPS '20: Proceedings of the 34th International
  Conference on Neural Information Processing Systems}, 2020.

\bibitem[F{\"o}ldes \& Herzog(2023)F{\"o}ldes and Herzog]{FH23}
F{\"o}ldes, J. and Herzog, D.~P.
\newblock The method of stochastic characteristics for linear second-order
  hypoelliptic equations.
\newblock \emph{Probability Surveys}, 20\penalty0 (none):\penalty0 113--169,
  January 2023.

\bibitem[Foster \& Young(1990)Foster and Young]{FY90}
Foster, D. and Young, H.~P.
\newblock Stochastic evolutionary game dynamics.
\newblock \emph{Theoretical Population Biology}, 38:\penalty0 219--232, 1990.

\bibitem[Fudenberg \& Harris(1992)Fudenberg and Harris]{FH92}
Fudenberg, D. and Harris, C.
\newblock Evolutionary dynamics with aggregate shocks.
\newblock \emph{Journal of Economic Theory}, 57\penalty0 (2):\penalty0
  420--441, August 1992.

\bibitem[Giannou et~al.(2021)Giannou, Vlatakis-Gkaragkounis, and
  Mertikopoulos]{GVM21}
Giannou, A., Vlatakis-Gkaragkounis, E.~V., and Mertikopoulos, P.
\newblock Survival of the strictest: {Stable} and unstable equilibria under
  regularized learning with partial information.
\newblock In \emph{COLT '21: Proceedings of the 34th Annual Conference on
  Learning Theory}, 2021.

\bibitem[H{\'e}liou et~al.(2017)H{\'e}liou, Cohen, and Mertikopoulos]{HCM17}
H{\'e}liou, A., Cohen, J., and Mertikopoulos, P.
\newblock Learning with bandit feedback in potential games.
\newblock In \emph{NIPS '17: Proceedings of the 31st International Conference
  on Neural Information Processing Systems}, 2017.

\bibitem[Hofbauer \& Imhof(2009)Hofbauer and Imhof]{HI09}
Hofbauer, J. and Imhof, L.~A.
\newblock Time averages, recurrence and transience in the stochastic replicator
  dynamics.
\newblock \emph{The Annals of Applied Probability}, 19\penalty0 (4):\penalty0
  1347--1368, 2009.

\bibitem[Hofbauer \& Sigmund(1998)Hofbauer and Sigmund]{HS98}
Hofbauer, J. and Sigmund, K.
\newblock \emph{Evolutionary Games and Population Dynamics}.
\newblock Cambridge University Press, Cambridge, UK, 1998.

\bibitem[Hofbauer \& Sigmund(2003)Hofbauer and Sigmund]{HS03}
Hofbauer, J. and Sigmund, K.
\newblock Evolutionary game dynamics.
\newblock \emph{Bulletin of the American Mathematical Society}, 40\penalty0
  (4):\penalty0 479--519, July 2003.

\bibitem[Hsieh et~al.(2021)Hsieh, Mertikopoulos, and Cevher]{HMC21}
Hsieh, Y.-P., Mertikopoulos, P., and Cevher, V.
\newblock The limits of min-max optimization algorithms: {Convergence} to
  spurious non-critical sets.
\newblock In \emph{ICML '21: Proceedings of the 38th International Conference
  on Machine Learning}, 2021.

\bibitem[Imhof(2005)]{Imh05}
Imhof, L.~A.
\newblock The long-run behavior of the stochastic replicator dynamics.
\newblock \emph{The Annals of Applied Probability}, 15\penalty0 (1B):\penalty0
  1019--1045, 2005.

\bibitem[Karatzas \& Shreve(1998)Karatzas and Shreve]{KS98}
Karatzas, I. and Shreve, S.~E.
\newblock \emph{Brownian Motion and Stochastic Calculus}.
\newblock Springer-Verlag, Berlin, 1998.

\bibitem[Karmarkar(1984)]{Kar84}
Karmarkar, N.
\newblock A new polynomial-time algorithm for linear programming.
\newblock \emph{Combinatorica}, 4\penalty0 (4):\penalty0 373--395, 1984.

\bibitem[Karmarkar(1990)]{Kar90}
Karmarkar, N.
\newblock Riemannian geometry underlying interior point methods for linear
  programming.
\newblock In \emph{Mathematical Developments Arising from Linear Programming},
  number 114 in Contemporary Mathematics. American Mathematical Society, 1990.

\bibitem[Khasminskii(2012)]{Kha12}
Khasminskii, R.~Z.
\newblock \emph{Stochastic Stability of Differential Equations}.
\newblock Number~66 in Stochastic Modelling and Applied Probability.
  Springer-Verlag, Berlin, 2 edition, 2012.

\bibitem[Kwon \& Mertikopoulos(2017)Kwon and Mertikopoulos]{KM17}
Kwon, J. and Mertikopoulos, P.
\newblock A continuous-time approach to online optimization.
\newblock \emph{Journal of Dynamics and Games}, 4\penalty0 (2):\penalty0
  125--148, April 2017.

\bibitem[Legacci et~al.(2024{\natexlab{a}})Legacci, Mertikopoulos,
  Papadimitriou, Piliouras, and Pradelski]{LMPP+24}
Legacci, D., Mertikopoulos, P., Papadimitriou, C.~H., Piliouras, G., and
  Pradelski, B. S.~R.
\newblock No-regret learning in harmonic games: {Extrapolation} in the presence
  of conflicting interests.
\newblock In \emph{NeurIPS '24: Proceedings of the 38th International
  Conference on Neural Information Processing Systems}, 2024{\natexlab{a}}.

\bibitem[Legacci et~al.(2024{\natexlab{b}})Legacci, Mertikopoulos, and
  Pradelski]{LMP24}
Legacci, D., Mertikopoulos, P., and Pradelski, B. S.~R.
\newblock A geometric decomposition of finite games: {Convergence} vs.
  recurrence under exponential weights.
\newblock In \emph{ICML '24: Proceedings of the 41st International Conference
  on Machine Learning}, 2024{\natexlab{b}}.

\bibitem[L{\'e}pingle(1978)]{Lep78}
L{\'e}pingle, D.
\newblock {Sur le comportement asymptotique des martingales locales}.
\newblock \emph{S{\'e}minaire de probabilit{\'e}s de Strasbourg}, 12:\penalty0
  148--161, 1978.

\bibitem[Leslie(2004)]{Les04}
Leslie, D.~S.
\newblock \emph{Reinforcement Learning in Games}.
\newblock PhD thesis, University of Bristol, 2004.

\bibitem[Leslie \& Collins(2005)Leslie and Collins]{LC05}
Leslie, D.~S. and Collins, E.~J.
\newblock Individual {$Q$}-learning in normal form games.
\newblock \emph{SIAM Journal on Control and Optimization}, 44\penalty0
  (2):\penalty0 495--514, 2005.

\bibitem[Liptser(1980)]{Lip80}
Liptser, R.
\newblock A strong law of large numbers for local martingales.
\newblock \emph{Stochastics}, 3\penalty0 (1-4):\penalty0 217--228, January
  1980.

\bibitem[Littlestone \& Warmuth(1994)Littlestone and Warmuth]{LW94}
Littlestone, N. and Warmuth, M.~K.
\newblock The weighted majority algorithm.
\newblock \emph{Information and Computation}, 108\penalty0 (2):\penalty0
  212--261, 1994.

\bibitem[Mertikopoulos \& Moustakas(2010)Mertikopoulos and Moustakas]{MM10}
Mertikopoulos, P. and Moustakas, A.~L.
\newblock The emergence of rational behavior in the presence of stochastic
  perturbations.
\newblock \emph{The Annals of Applied Probability}, 20\penalty0 (4):\penalty0
  1359--1388, July 2010.

\bibitem[Mertikopoulos \& Sandholm(2016)Mertikopoulos and Sandholm]{MS16}
Mertikopoulos, P. and Sandholm, W.~H.
\newblock Learning in games via reinforcement and regularization.
\newblock \emph{Mathematics of Operations Research}, 41\penalty0 (4):\penalty0
  1297--1324, November 2016.

\bibitem[Mertikopoulos \& Sandholm(2018)Mertikopoulos and Sandholm]{MerSan18}
Mertikopoulos, P. and Sandholm, W.~H.
\newblock Riemannian game dynamics.
\newblock \emph{Journal of Economic Theory}, 177:\penalty0 315--364, September
  2018.

\bibitem[Mertikopoulos \& Staudigl(2018{\natexlab{a}})Mertikopoulos and
  Staudigl]{MerSta18}
Mertikopoulos, P. and Staudigl, M.
\newblock On the convergence of gradient-like flows with noisy gradient input.
\newblock \emph{SIAM Journal on Optimization}, 28\penalty0 (1):\penalty0
  163--197, January 2018{\natexlab{a}}.

\bibitem[Mertikopoulos \& Staudigl(2018{\natexlab{b}})Mertikopoulos and
  Staudigl]{MerSta18b}
Mertikopoulos, P. and Staudigl, M.
\newblock Stochastic mirror descent dynamics and their convergence in monotone
  variational inequalities.
\newblock \emph{Journal of Optimization Theory and Applications}, 179\penalty0
  (3):\penalty0 838--867, December 2018{\natexlab{b}}.

\bibitem[Mertikopoulos \& Viossat(2016)Mertikopoulos and Viossat]{MV16}
Mertikopoulos, P. and Viossat, Y.
\newblock Imitation dynamics with payoff shocks.
\newblock \emph{International Journal of Game Theory}, 45\penalty0
  (1-2):\penalty0 291--320, March 2016.

\bibitem[Mertikopoulos \& Zhou(2019)Mertikopoulos and Zhou]{MZ19}
Mertikopoulos, P. and Zhou, Z.
\newblock Learning in games with continuous action sets and unknown payoff
  functions.
\newblock \emph{Mathematical Programming}, 173\penalty0 (1-2):\penalty0
  465--507, January 2019.

\bibitem[Mertikopoulos et~al.(2018)Mertikopoulos, Papadimitriou, and
  Piliouras]{MPP18}
Mertikopoulos, P., Papadimitriou, C.~H., and Piliouras, G.
\newblock Cycles in adversarial regularized learning.
\newblock In \emph{SODA '18: Proceedings of the 29th annual ACM-SIAM Symposium
  on Discrete Algorithms}, 2018.

\bibitem[Mertikopoulos et~al.(2019)Mertikopoulos, Lecouat, Zenati, Foo,
  Chandrasekhar, and Piliouras]{MLZF+19}
Mertikopoulos, P., Lecouat, B., Zenati, H., Foo, C.-S., Chandrasekhar, V., and
  Piliouras, G.
\newblock Optimistic mirror descent in saddle-point problems: {Going} the extra
  (gradient) mile.
\newblock In \emph{ICLR '19: Proceedings of the 2019 International Conference
  on Learning Representations}, 2019.

\bibitem[Mertikopoulos et~al.(2024)Mertikopoulos, Hsieh, and Cevher]{MHC24}
Mertikopoulos, P., Hsieh, Y.-P., and Cevher, V.
\newblock A unified stochastic approximation framework for learning in games.
\newblock \emph{Mathematical Programming}, 203:\penalty0 559--609, January
  2024.

\bibitem[Michel \& Pardoux(1990)Michel and Pardoux]{MP90}
Michel, D. and Pardoux, E.
\newblock An introduction to {{Malliavin Calculus}} and some of its
  applications.
\newblock \emph{Recent advances in stochastic calculus (College Park, MD,
  1987)}, pp.\  65--104, 1990.

\bibitem[Monderer \& Shapley(1996)Monderer and Shapley]{MS96}
Monderer, D. and Shapley, L.~S.
\newblock Potential games.
\newblock \emph{Games and Economic Behavior}, 14\penalty0 (1):\penalty0 124 --
  143, 1996.

\bibitem[Nash(1950)]{NashThesis}
Nash, J.~F.
\newblock \emph{Non-cooperative games}.
\newblock PhD thesis, Princeton University, 1950.

\bibitem[{\O}ksendal(2013)]{Oks13}
{\O}ksendal, B.
\newblock \emph{Stochastic Differential Equations}.
\newblock Springer-Verlag, Berlin, 6 edition, 2013.

\bibitem[Piliouras \& Shamma(2014)Piliouras and Shamma]{PS14}
Piliouras, G. and Shamma, J.~S.
\newblock Optimization despite chaos: {Convex} relaxations to complex limit
  sets via {Poincar{\'e}} recurrence.
\newblock In \emph{SODA '14: Proceedings of the 25th annual ACM-SIAM Symposium
  on Discrete Algorithms}, 2014.

\bibitem[Ritzberger \& Weibull(1995)Ritzberger and Weibull]{RW95}
Ritzberger, K. and Weibull, J.~W.
\newblock Evolutionary selection in normal-form games.
\newblock \emph{Econometrica}, 63\penalty0 (6):\penalty0 1371--99, November
  1995.

\bibitem[Rockafellar(1970)]{Roc70}
Rockafellar, R.~T.
\newblock \emph{Convex Analysis}.
\newblock Princeton University Press, Princeton, NJ, 1970.

\bibitem[Rockafellar \& Wets(1998)Rockafellar and Wets]{RW98}
Rockafellar, R.~T. and Wets, R. J.~B.
\newblock \emph{Variational Analysis}, volume 317 of \emph{A Series of
  Comprehensive Studies in Mathematics}.
\newblock Springer-Verlag, Berlin, 1998.

\bibitem[Sandholm(2010)]{San10}
Sandholm, W.~H.
\newblock \emph{Population Games and Evolutionary Dynamics}.
\newblock MIT Press, Cambridge, MA, 2010.

\bibitem[Shalev-Shwartz(2011)]{SS11}
Shalev-Shwartz, S.
\newblock Online learning and online convex optimization.
\newblock \emph{Foundations and Trends in Machine Learning}, 4\penalty0
  (2):\penalty0 107--194, 2011.

\bibitem[Shalev-Shwartz \& Singer(2007)Shalev-Shwartz and Singer]{SSS07}
Shalev-Shwartz, S. and Singer, Y.
\newblock A primal-dual perspective of online learning algorithms.
\newblock \emph{Journal of Machine Learning Research}, 69\penalty0
  (2):\penalty0 115--142, 2007.

\bibitem[Sorin(2009)]{Sor09}
Sorin, S.
\newblock Exponential weight algorithm in continuous time.
\newblock \emph{Mathematical Programming}, 116\penalty0 (1):\penalty0 513--528,
  2009.

\bibitem[Taylor \& Jonker(1978)Taylor and Jonker]{TJ78}
Taylor, P.~D. and Jonker, L.~B.
\newblock Evolutionary stable strategies and game dynamics.
\newblock \emph{Mathematical Biosciences}, 40\penalty0 (1-2):\penalty0
  145--156, 1978.

\bibitem[Vovk(1990)]{Vov90}
Vovk, V.~G.
\newblock Aggregating strategies.
\newblock In \emph{COLT '90: Proceedings of the 3rd Workshop on Computational
  Learning Theory}, pp.\  371--383, 1990.

\bibitem[Wei \& Luo(2018)Wei and Luo]{WL18}
Wei, C.-Y. and Luo, H.
\newblock More adaptive algorithms for adversarial bandits.
\newblock In \emph{COLT '18: Proceedings of the 31st Annual Conference on
  Learning Theory}, 2018.

\bibitem[Weibull(1995)]{Wei95}
Weibull, J.~W.
\newblock \emph{Evolutionary Game Theory}.
\newblock MIT Press, Cambridge, MA, 1995.

\end{thebibliography}

\end{document}